\let\theoremstyle\relax 
\tikzstyle{vertex}=[circle, draw, fill, inner sep=0pt, minimum size=5pt]
\newcommand{\vertex}{\node[vertex]}
\tikzstyle{smvertex}=[circle, draw, fill, inner sep=0pt, minimum size=2pt]
\newcommand{\smvertex}{\node[smvertex]}
\tikzstyle{mvertex}=[circle, draw, fill, inner sep=0pt, minimum size=4pt]
\newcommand{\mvertex}{\node[mvertex]}
\newcommand{\abs}[1]{\left\lvert{#1}\right\rvert}
\newcommand{\f}{\frac}
\newcommand{\x}{\times}
\newcommand{\Z}{\mathbb{Z}}
\newcommand{\cost}{\mathrm{cost}}
\definecolor{purple}{RGB}{128,0,128}
\def\qed{\vbox{\hrule\hbox{\vrule\kern3pt\vbox{\kern6pt}\kern3pt\vrule}\hrule}}
\def\n{\noindent}
\newtheorem{thm}{Theorem}[section]
\newtheorem{cor}[thm]{Corollary}
\newtheorem{rem}[thm]{Remark}
\newtheorem{prop}[thm]{Proposition}
\newtheorem{defn}[thm]{Definition}
\newtheorem{cm}[thm]{Claim}
\newtheorem*{thm*}{Theorem}
\newtheorem*{cor*}{Corollary}
\newtheorem*{lm*}{Lemma}
\newtheorem*{cm*}{Claim}
\newtheorem*{prop*}{Proposition}
\theoremstyle{definition}
\newtheoremstyle%
 {Aside}%
 {}{}%
 {\color{purple}\itshape}
 {}%
 {\color{purple}\bfseries}%
 {\color{purple}.}%
 { }{}
\theoremstyle{Aside}
\def\EMAIL#1{\href{mailto:#1}{#1}}
\def\URL#1{\href{#1}{#1}}         
\begin{document}


\RUNAUTHOR{Gutekunst, Jin, and Williamson}


\RUNTITLE{The Two-Stripe Symmetric Circulant TSP is in P}

\TITLE{The Two-Stripe Symmetric Circulant TSP is in P}

\ARTICLEAUTHORS{%
\AUTHOR{Samuel C. Gutekunst}
\AFF{Bucknell University, \EMAIL{s.gutekunst@bucknell.edu}, \URL{https://samgutekunst.com/}}
\AUTHOR{Billy Jin}
\AFF{Cornell University, \EMAIL{bzj3@cornell.edu}, \URL{https://people.orie.cornell.edu/bzj3/}}
\AUTHOR{David P. Williamson}
\AFF{Cornell University, \EMAIL{davidpwilliamson@cornell.edu}, \URL{http://www.davidpwilliamson.net/work/}}
} 

\ABSTRACT{%
The symmetric circulant TSP is a special case of the traveling salesman problem in which edge costs are symmetric and obey circulant symmetry.   Despite the substantial symmetry of the input, remarkably little is known about the symmetric circulant TSP, and the complexity of the problem has been an often-cited open question.  Considerable effort has been made to  understand the case in which only edges of two lengths are allowed to have finite cost: the two-stripe symmetric circulant TSP.  In this paper, we resolve the complexity of the two-stripe symmetric circulant TSP. To do so, we reduce two-stripe symmetric circulant TSP to the problem of finding certain minimum-cost Hamiltonian paths on \emph{cylindrical graphs}.  We then solve this Hamiltonian path problem.  Our results show that the two-stripe symmetric circulant TSP is in P.  Note that a two-stripe symmetric circulant TSP instance consists of a constant number of inputs (including $n$, the number of cities), so that a polynomial-time algorithm for the decision problem must run in time polylogarithmic in $n$, and a polynomial-time algorithm for the optimization problem cannot output the tour.  We address this latter difficulty by showing that the optimal tour must fall into one of two parameterized classes of tours, and that we can output the class and the parameters in polynomial time.  Thus we make a substantial contribution to the set of polynomial-time solvable special cases of the TSP, and take an important step towards resolving the complexity of the general symmetric circulant TSP.
}%

\KEYWORDS{traveling salesman problem; TSP; two stripe; circulant matrix; computational complexity; polynomial time
}


\maketitle

\section{Introduction}
The traveling salesman problem (TSP) is one of the most famous problems in combinatorial optimization.  An input to the TSP consists of a set of $n$ cities $[n]:=\{1, 2, ..., n\}$ and edge costs $c_{ij}$ for each pair of distinct $i, j \in [n]$ representing the cost of traveling from city $i$ to city $j$.  Given this information, the TSP is to find a minimum-cost tour visiting every city exactly once.    Throughout this paper, we implicitly assume that the edge costs are \emph{symmetric} (so that $c_{ij}=c_{ji}$ for all distinct $i, j\in [n]$)  and  interpret the $n$ cities as vertices of the complete undirected graph $K_n$ with edge costs $c_e=c_{ij}$ for edge $e=\{i, j\}$.  In this setting, the TSP is to find a minimum-cost Hamiltonian cycle on $K_n.$

With just this set-up, the TSP is well known to be NP-hard.  An algorithm that could approximate TSP solutions in polynomial time to within any constant factor $\alpha$ would imply P=NP (see, e.g., Theorem 2.9 in Williamson and Shmoys \cite{DDBook}).   Thus it is common to consider special cases, such as requiring costs to obey the triangle inequality (i.e. requiring costs to be \emph{metric}, so that $c_{ij}+c_{jk} \geq c_{ik}$ for all $i, j, k\in [n]$), or using costs that are distances between points in Euclidean space.  Another special case that has been considered is the $(1, 2)$-TSP which restricts $c_{ij}\in\{1, 2\}$ for every edge $\{i, j\}$ (see, e.g., Papadimitriou and Yannakakis \cite{Pap93}, Berman and Karpinski \cite{Ber08}, Karpinski and Schmied  \cite{Karp12}). 

In this paper, we consider a different class of instances: circulant TSP.  This class can be described by {\bf circulant matrices,}  matrices of the form
\begin{equation}\label{eq:circMat}
\begin{pmatrix} m_0 & m_1 & m_2 & m_3 & \cdots & m_{n-1} \\ m_{n-1} & m_0 & m_1 & m_2 & \cdots & m_{n-2} \\ m_{n-2} & m_{n-1} & m_0 & m_1 & \ddots & m_{n-3} \\ \vdots & \vdots & \vdots & \vdots & \ddots & \vdots \\ m_1 & m_2 & m_3 & m_4 & \cdots & m_0 \end{pmatrix} = \left(m_{(t-s) \text{ mod } n}\right)_{s, t=1}^n.
\end{equation}
In {\bf circulant TSP}, the matrix of edge costs $C=(c_{ij})_{i, j=1}^n$ is circulant; the cost of edge $\{i, j\}$ only depends on $i-j$ mod $n$.  Our assumption that the edge costs are symmetric and that $K_n$ is a simple graph implies that, for symmetric circulant TSP instances, we can write our cost matrix in terms of $ \lfloor \frac{n}{2}\rfloor$ parameters: \begin{equation}\label{eq:ScircMat} C=(c_{(j-i) \text{ mod } n})_{i, j=1}^n=\begin{pmatrix} 0 & c_1 & c_2 & c_3 & \cdots & c_{1} \\ c_{1} & 0 & c_1 & c_2 & \cdots & c_2 \\ c_{2} & c_{1} &0 & c_1 & \ddots & c_{3} \\ \vdots & \vdots & \vdots & \vdots & \ddots & \vdots \\ c_1 & c_2 & c_3 & c_4 & \cdots & 0\end{pmatrix},\end{equation} with $c_0=0$ and  $c_i=c_{n-i}$ for $i=1, ..., \lfloor \frac{n}{2}\rfloor.$  See, e.g., Figure \ref{fig:sym} for an picture of (symmetric) circulant symmetry.  Importantly, in circulant TSP we do not implicitly assume that the edge costs are  metric.  The original motivations for circulant TSP stem from minimizing wallpaper waste (Garfinkel \cite{Gar77}) and  reconfigurable network design (Medova \cite{Med93}).

\begin{figure}[th!]
	\begin{center}
\begin{tikzpicture}
\tikzset{vertex/.style = {shape=circle,draw,minimum size=2.5em}}
\tikzset{edge/.style = {->,> = latex'}}
\node[draw=none,minimum size=8cm,regular polygon,regular polygon sides=12] (a) {};

\foreach [evaluate ={\j=int(mod(\i, 12)+1)}] \i in {1,2, 3, ..., 12}
\draw[line width=1pt] (a.corner \i) -- (a.corner \j);

\foreach [evaluate ={\j=int(mod(\i+1, 12)+1)}] \i in {1,2, 3, ..., 12}
\draw[dotted, line width=1pt, red] (a.corner \i) -- (a.corner \j);

\foreach [evaluate ={\j=int(mod(\i+2, 12)+1)}] \i in {1,2, 3, ..., 12}
\draw[densely dotted, line width=1pt, blue] (a.corner \i) -- (a.corner \j);

\foreach [evaluate ={\j=int(mod(\i+3, 12)+1)}] \i in {1,2, 3, ..., 12}
\draw[loosely dotted, line width=1pt, magenta] (a.corner \i) -- (a.corner \j);

\foreach [evaluate ={\j=int(mod(\i+4, 12)+1)}] \i in {1,2, 3, ..., 12}
\draw[dashed, line width=1pt, lime] (a.corner \i) -- (a.corner \j);

\foreach [evaluate ={\j=int(mod(\i+5, 12)+1)}] \i in {1,2, 3, ..., 12}
\draw[dashdotted, line width=1pt, pink] (a.corner \i) -- (a.corner \j);

\foreach \n [count=\nu from 1, remember=\n as \lastn, evaluate={\nu+\lastn}] in {12, 11, ..., 1} 
\node[vertex, fill=white]at(a.corner \n){$\nu$};

\end{tikzpicture}

	\end{center}
	\caption{Circulant symmetry.  Edges of a fixed length are indistinguishable and have the same cost.  E.g. all edges of the form $\{v, v+1\}$ (where $v+1$ is taken mod $n$) have the same appearance and cost. }\label{fig:sym}\end{figure}
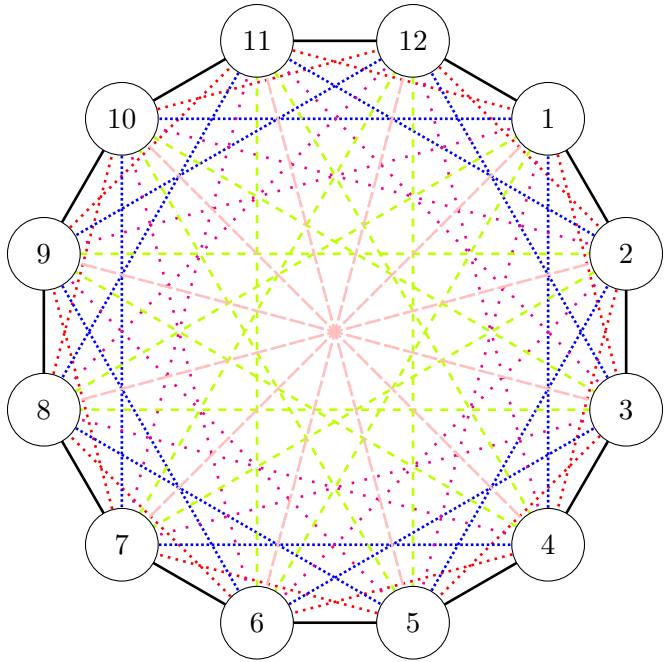

 Circulant TSP is a compelling open problem because of the intriguing structure circulant symmetry broadly provides to  combinatorial optimization problems: It seems to provide just enough structure to make an  ambiguous set of instances, as it is unclear whether or not a given combinatorial optimization problem should remain hard or become easy when restricted to circulant instances.  Some classic combinatorial optimization problems are easy when restricted to circulant instances: in the late 70's, Garfinkel \cite{Gar77} considered a restricted set of circulant TSP instances motivated by minimizing wallpaper waste and argued that, for these instances, the canonical greedy algorithm for TSP (the nearest neighbor heuristic) provides an optimal solution.  
In the late 80's, Burkard and Sandholzer \cite{Burk91} showed that the decidability question for whether or not a symmetric circulant graph (i.e. a graph whose adjacency matrix is circulant) is Hamiltonian can be solved in polynomial time and showed that  bottleneck TSP is polynomial-time solvable on symmetric circulant graphs.  Bach, Luby, and Goldwasser (cited in Gilmore, Lawler, and Shmoys \cite{Gil85}) showed that one could find minimum-cost Hamiltonian paths in (not-necessarily-symmetric) circulant graphs in polynomial time.  In contrast, Codenotti, Gerace, and Vigna \cite{Code98} show that Max Clique and Graph Coloring remain NP-hard when restricted to circulant graphs and do not admit constant-factor approximation algorithms unless P=NP.

Because of this ambiguity, the complexity of circulant TSP has often been cited as an open problem (see, e.g., Burkhard \cite{Burk97}, Burkhard, De\u{\i}neko, Van Dal, Van der Veen, and Woeginger \cite{Burk98}, and Lawler, Lenstra, Rinnooy Kan, and Shmoys  \cite{Law07}). 
It is not known if the circulant TSP is solvable in polynomial-time or is NP-hard. Prior to this work, the complexity of circulant TSP was not understood even when restricted to instances where only two of the edge costs $c_1, ...., c_{\lfloor \frac{n}{2}\rfloor}$ are finite: the \emph{symmetric two-stripe circulant TSP} (which we will henceforth abbreviate as \emph{two-stripe TSP}
) (see Greco and Gerace \cite{Grec07} and Gerace and Greco \cite{Ger08b}).  Yang, Burkard, \c{C}ela, and Woeginger \cite{Yang97} provide a polynomial-time algorithm for \emph{asymmetric} TSP in circulant graphs with only two stripes having finite edge costs.  The symmetric two-stripe circulant TSP is not, however, a special case of the asymmetric two-stripe version.\footnote{In the symmetric case, edges $\{v, v + i\}$ and $\{v, v - i\}$ of cost $c_i$ connect $v$ to both $v + i$ and $v - i$; in the asymmetric case, there are edge costs $c_1, \ldots, c_{n-1}$ and an edge $(v, v + i)$ of cost $c_i$ only connects from $v$ to $v + i$. To encode two general symmetric circulant edges would require four asymmetric circulant edges.}

Despite substantial structure and symmetry, the complexity of two-stripe circulant TSP had previously remained elusive.    General upper- and lower-bounds for circulant TSP stem from Van der Veen, Van Dal, and Sierksma \cite{VDV91}; Greco and Gerace \cite{Grec07} and Gerace and Greco \cite{Ger08b} focus specifically on the two-stripe TSP, and  prove sufficient (but not necessary) conditions for these upper- and lower-bounds to apply.  Van der Veen, Van Dal, and Sierksma \cite{VDV91} and Gerace and Greco \cite{Ger08} give a general heuristic for circulant TSP that provides a tour within a factor of two of the optimal solution; no improvements to this general heuristic have been made when constrained to the two-stripe version.

In this paper, we take the first step toward resolving the polynomial-time solvability of circulant TSP by showing that symmetric two-stripe circulant TSP is solvable in polynomial time.  We need to be clear on what we mean by ``solvable in polynomial time'' for this problem.  The input is the number $n$ (represented in binary by $O(\log n)$ bits), the indices of the two finite cost edges, and the corresponding costs.  So to run in polynomial time in this case, we should run in time \emph{polylogarithmic} in $n$. We will show how to compute the cost of the optimal tour in $O(\log^2 n)$ time, and thus the decision problem of whether the cost of the optimal tour is at most a bound given as input is solvable in polynomial time; this places the decision version of the problem in the class $P$.   Notice, however, time polynomial in the input size is not sufficient to output the complete sequence of $n$ vertices to visit in the tour.  Nevertheless, given two parameterized classes of tours that we will later describe, we are able to compute in polynomial time to which class the optimal tour belongs, as well as the values of the parameters.

Thus our main contribution is to make a substantial addition to the set of polynomial-time solvable special cases of the TSP \cite{Gil85,Kabadi07}, and to take an important step towards resolving the complexity of the symmetric circulant TSP.

In Section \ref{sec:BG}, we begin by providing background on circulant TSP.  This includes previous work on two-stripe TSP, a useful tool from number theory, and results on the structure of circulant graphs.   In Section \ref{sec:reduction}, we introduce cylinder graphs and reduce the two-stripe TSP to finding certain minimum-cost Hamiltonian paths on cylinder graphs.  We also introduce a class of Hamiltonian paths, called GG paths after work by  Gerace and Greco.  We formally state our algorithm in Section \ref{sec:alg}. Assuming a characterization theorem for Hamiltonian paths between the first and last column in cylinder graphs, we prove the correctness of our algorithm and show that it runs in polynomial time. In Section \ref{sec:mainpf}, we prove the aforementioned characterization theorem. We conclude in Section \ref{sec:conclusion}.

\section{Background and Notation}\label{sec:BG}

In this section, we formalize our notation for the two-stripe TSP.  We also describe the pertinent background results we will use.

Throughout this paper, we use $\equiv_n$ to denote equivalence mod $n$.  All calculations are implicitly mod $n$, unless indicated otherwise. For example, we use $v+a_1$ to denote the vertex reachable from $v$ by following an edge of length $a_1$; the label $v+a_1$ is implicitly taken mod $n$. We also frequently rely on the following basic fact about linear congruences.  
See, e.g., Theorem 57 of Hardy and Wright \cite{hardy_wright}.

\begin{prop}\label{prop:solv}
The linear congruence $$ax\equiv_n b$$ has a solution if and only if $\gcd(a, n)$ divides $b$.  Moreover, there are exactly $\gcd(a, n)$ solutions which take the form 
$$x_0 + \lambda~ \f{n}{\gcd(a, n)}, \, \lambda = 0, 1, ..., \gcd(a, n)-1$$ 
for some $0\leq x_0 < \f{n}{\gcd(a, n)}$.
\end{prop}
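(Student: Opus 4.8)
The plan is to follow the standard textbook argument (as in Theorem 57 of Hardy and Wright \cite{hardy_wright}), organized around B\'ezout's identity and one careful counting step. Throughout, write $d = \gcd(a, n)$.

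First I would dispatch the existence criterion. For the ``only if'' direction, suppose $x$ solves $ax \equiv_n b$; then $ax - b = ny$ for some integer $y$, so $b = ax - ny$, and since $d$ divides both $a$ and $n$, it divides $b$. For the ``if'' direction, suppose $d \mid b$ and write $b = dk$. By B\'ezout's identity there are integers $s, t$ with $as + nt = d$; multiplying through by $k$ gives $a(sk) + n(tk) = b$, so $x_1 := sk$ is a solution. This already reduces the remainder of the proof to the case where solutions exist.

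Next I would count the solutions, fixing the solution $x_1$ just produced. An integer $x$ solves $ax \equiv_n b$ if and only if $a(x - x_1) \equiv_n 0$, i.e. $n \mid a(x - x_1)$. Writing $a = d a'$ and $n = d n'$ with $\gcd(a', n') = 1$ (a standard consequence of the definition of the gcd), this is equivalent to $n' \mid a'(x - x_1)$, and since $\gcd(a', n') = 1$, to $n' \mid (x - x_1)$. Hence the solution set is exactly the set of $x$ with $x \equiv x_1 \pmod{n'}$. Reducing these modulo $n = d n'$, they are precisely the values $x_1 + \lambda n'$ for $\lambda = 0, 1, \dots, d - 1$, and these are pairwise distinct mod $n$: if $x_1 + \lambda n' \equiv_n x_1 + \mu n'$ with $0 \le \lambda, \mu < d$, then $n'(\lambda - \mu) \equiv 0 \pmod{d n'}$, forcing $\lambda = \mu$. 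Taking $x_0 \in [0, n')$ to be the reduction of $x_1$ modulo $n'$ puts the solution set in the stated form, with $n' = n / \gcd(a, n)$ and exactly $d = \gcd(a, n)$ solutions.

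There is no substantive obstacle here; the only point requiring care is the counting step — in particular the reduction from $n \mid a'(x - x_1)$ to $n' \mid (x - x_1)$, which uses coprimality of $a'$ and $n'$, and the verification that $\lambda = 0, \dots, d - 1$ enumerates the solutions mod $n$ with neither repetition nor omission. Everything else is a direct application of B\'ezout's identity.
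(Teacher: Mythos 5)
Your proof is correct: the Bézout-based existence argument and the coprimality reduction $n \mid a(x-x_1) \Leftrightarrow n' \mid (x-x_1)$, followed by the distinctness count of $x_1 + \lambda n'$ for $\lambda = 0,\dots,d-1$, is exactly the standard argument. The paper itself offers no proof of this proposition, deferring to Theorem 57 of Hardy and Wright, and your write-up is essentially that classical proof, so there is nothing to add.
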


\subsection{Circulant Graphs}

Recall that a {\bf circulant graph} is a simple graph whose adjacency matrix is circulant.  In circulant TSP, all edges $\{i, j\}$ such that $i-j\equiv_n k$ or $i-j\equiv_n (n-k)$ have the same cost $c_k.$  Such edges are typically referred to as in the {\bf $k$-th stripe}, or as of {\bf length} $k$.  In the two-stripe TSP, only two of the edge costs $c_1, c_2, ..., c_{\lfloor \f{n}{2}\rfloor}$ are finite.  We refer to those two edge lengths as $a_1$ and $a_2$, so that $0\leq c_{a_1} \leq c_{a_2} < \infty,$ and for $i\notin \{a_1, a_2\},$ $c_i = \infty$. 

We use the following definition to describe circulant graphs including exactly the edges associated with some set of stripes $S$.  In the two-stripe TSP, we are generally interested in subsets $S$ of size 1 or 2. 

\begin{defn}
Let $S\subset \{1, ..., \lfloor \f{n}{2}\rfloor\}.$ The {\bf circulant graph $C\langle S\rangle$} is the (simple, undirected, unweighted) graph including exactly the edges associated  with the stripes $S$.  I.e., the graph with adjacency matrix $$A=(a_{ij})_{i, j=1}^n, \hspace{5mm} a_{ij} = \begin{cases} 1, & (i-j) \bmod n \in S \text{ or } (j-i) \bmod n \in S \\ 0, & \text{ else.} \end{cases}$$
\end{defn}

Provided that $C\langle\{a_1, a_2\}\rangle$ is Hamiltonian, the {\bf two-stripe TSP} is to find a minimum-cost Hamiltonian cycle in the graph $C\langle\{a_1, a_2\}\rangle.$  Since $c_{a_1}\leq c_{a_2},$ the problem is to find a Hamiltonian cycle in $C\langle\{a_1, a_2\}\rangle$ using as few edges of length $a_2$ as possible.

Early work from  Burkard and Sandholzer \cite{Burk91} provides necessary and sufficient conditions for $C\langle\{a_1, a_2\}\rangle$  to be Hamiltonian: $\gcd(n, a_1, a_2)$ must equal 1. That this condition is necessary follows directly: If $\gcd(n, a_1, a_2):=g_2>1$, then for any $n, m \in \Z,$ $na_1+ma_2 \equiv_{g_2} 0.$ Thus any combination of edges of length $a_1$ and $a_2$ (starting at vertex 0) will remain within the vertices $\{v\in [n]: v\equiv_{g_2} 0\},$ which is a strict subset of $[n].$  
We state the result of  Burkard and Sandholzer for general circulant TSP below; we are again primarily interested in the cases where $t\in \{1, 2\}$.

\begin{prop}[Burkard and Sandholzer \cite{Burk91}] \label{prop:Ham} 
Let $\{a_1, ..., a_t\}\subset \left[\lfloor \f{n}{2}\rfloor\right]$ and let $\mathcal{G}=\gcd(n, a_1, ..., a_t).$  The circulant graph $C\langle \{a_1, ..., a_t\} \rangle$ has $\mathcal{G}$ components. The $i$th component, for $0\leq i\leq \mathcal{G}-1$, consists of $n/\mathcal{G}$ nodes 
$$\left\{i+\lambda \mathcal{G} \bmod n:~ 0\leq \lambda \leq \f{n}{\mathcal{G}}-1\right\}.$$
$C\langle \{a_1, ..., a_t\} \rangle$  is Hamiltonian if and only if $\mathcal{G}=1.$
\end{prop}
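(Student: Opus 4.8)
The plan is to split the proposition into two claims — the description of the connected components, and the Hamiltonicity criterion — and to observe that one direction of the latter is immediate from the former, while the other direction is where the real work lies.

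\emph{The components.} I would first note that $C\langle\{a_1,\dots,a_t\}\rangle$ is the Cayley graph of the cyclic group $\Z_n$ with connection set $\{\pm a_1,\dots,\pm a_t\}$, so that every translation $v\mapsto v+c$ (mod $n$) is a graph automorphism. In such a graph, two vertices $u$ and $v$ lie in the same connected component if and only if $v-u$ can be written as an integer combination of $a_1,\dots,a_t$ modulo $n$: a walk from $u$ to $v$ contributes $\pm a_j$ at each step, and conversely any such combination is realized as a walk by taking the appropriate number of $\pm a_j$-steps (all intermediate sums being legitimate vertices, since everything is read mod $n$). Hence the component of $0$ is exactly the subgroup $H:=\langle a_1,\dots,a_t\rangle\le\Z_n$. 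Since $\mathcal G=\gcd(n,a_1,\dots,a_t)$ divides every $a_j$, we get $H\subseteq\{0,\mathcal G,2\mathcal G,\dots,n-\mathcal G\}$; and by B\'ezout's identity (equivalently, by iterating Proposition~\ref{prop:solv}) there are integers $u_0,\dots,u_t$ with $u_1a_1+\cdots+u_ta_t+u_0n=\mathcal G$, so $\mathcal G\equiv_n u_1a_1+\cdots+u_ta_t\in H$, and therefore $H=\{0,\mathcal G,2\mathcal G,\dots,n-\mathcal G\}$, a set of size $n/\mathcal G$. Translating by $i$ then shows that the component of vertex $i$ is $i+H=\{i+\lambda\mathcal G\bmod n:0\le\lambda<n/\mathcal G\}$; since the $\mathcal G$ cosets $i+H$ for $i=0,\dots,\mathcal G-1$ partition $\Z_n$, there are exactly $\mathcal G$ components, each of size $n/\mathcal G$, of the stated form.

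\emph{Hamiltonicity.} If $\mathcal G>1$ the graph is disconnected and hence not Hamiltonian. For the converse, assume $\mathcal G=1$; I would exhibit a Hamiltonian cycle. For the cases $t\le 2$ that we actually use this can be done by hand: when $t=1$ and $\gcd(n,a_1)=1$ the graph is the single $n$-cycle $0,a_1,2a_1,\dots$; when $t=2$, set $g=\gcd(n,a_1)$ (so $\gcd(g,a_2)=1$), note that the $a_1$-edges alone split $[n]$ into $g$ cycles $Z_0,\dots,Z_{g-1}$ of length $n/g$ with $Z_i$ the residue class $i$ mod $g$, and that reading the $a_2$-edges mod $g$ gives a single $g$-cycle on $\{0,\dots,g-1\}$; then form a Hamiltonian cycle as a ``spiral'' that traverses a stretch of some $Z_i$, crosses to the next class along an $a_2$-edge, and repeats, returning to $Z_0$ at a fresh vertex every $g$ crossings and closing up once every vertex has been used. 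For general $t$ one can instead argue by induction, applying the $t-1$ case to each component of $C\langle\{a_1,\dots,a_{t-1}\}\rangle$ and patching the resulting cycles together along $a_t$-edges; alternatively one may simply invoke the classical fact that every connected Cayley graph on an abelian group of order at least $3$ is Hamiltonian.

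\emph{Main obstacle.} Everything about the component decomposition is routine manipulation of subgroups of $\Z_n$ together with the translation automorphism. The crux is the reverse implication of the Hamiltonicity statement: arranging that the ``spiral'' (or the inductive patching) closes into a \emph{single} Hamiltonian cycle requires choosing the crossing points so that no vertex is revisited and the final edge lands back at the start — this is the one step that is not mere formality, and it is exactly the construction due to Burkard and Sandholzer \cite{Burk91}.
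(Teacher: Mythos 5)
You should first note that the paper does not prove this proposition at all: it is imported from Burkard and Sandholzer \cite{Burk91}, and the only argument the paper supplies is the two-sentence necessity sketch preceding the statement (any walk from $0$ stays inside a residue class mod $g_2$ when $g_2>1$), which is exactly your disconnectedness observation. So there is no in-paper proof to match; your write-up is an attempt to reprove the cited result. The part you do in full — identifying the component of $0$ with the subgroup $\langle a_1,\dots,a_t\rangle=\langle\mathcal G\rangle\le\Z_n$ via B\'ezout, translating to get the $\mathcal G$ cosets of size $n/\mathcal G$, and deducing non-Hamiltonicity when $\mathcal G>1$ — is correct and complete, and is consistent with (indeed slightly more general than) the paper's sketch and with Claim \ref{cm:mergeCi}.

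The soft spot is your sufficiency sketch for $t=2$. The ``spiral'' — keep crossing columns in the $+a_2$ direction and hope the stretches can be chosen so the walk closes up at the start — is precisely the kind of tour whose existence is delicate: the paper's own analysis (Proposition \ref{prop:getLB} and Theorem \ref{thm:result}) shows that tours which wrap around horizontally close up only under arithmetic conditions on $n,a_1,a_2$, so ``choosing the crossing points'' is not bookkeeping, and as stated this route is not a proof (as you yourself concede). The construction that works unconditionally for $t=2$ is the non-wrapping boustrophedon tour of Proposition \ref{prop:UB}/Figure \ref{fig:UB}: snake up and down successive columns (using the cyclic column order of Claim \ref{cm:mergeCi}) and return along a single row, never using the horizontal wraparound edges. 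Your two other routes are sound: the induction for general $t$ should be phrased as translate-and-merge — take a Hamiltonian cycle in one component of $C\langle\{a_1,\dots,a_{t-1}\}\rangle$, translate it by multiples of $a_t$ to get cycles in the other components, and splice consecutive cycles along a parallel pair of edges joined by two $a_t$-edges, exactly as in Claim \ref{cm:cycle_path}; and invoking the classical theorem that every connected Cayley graph on a finite abelian group of order at least $3$ is Hamiltonian settles the matter at the same level of rigor as the paper's own citation. If you replace the spiral by either of these, the proposal is a complete proof; as written, the $t=2$ ``by hand'' claim is the one genuine gap.
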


Throughout this paper, we let  $g_1 := \gcd(n, a_1)$ and $g_2 := \gcd(n, a_1, a_2).$  By Proposition \ref{prop:Ham}, an instance to the two-stripe TSP has a solution with finite cost if and only if $g_2=1.$  

Circulant graphs have a rich structure that allows us to understand $C\langle\{a_1, a_2\}\rangle$ in terms of $C\langle\{a_1\}\rangle.$ First, by Proposition \ref{prop:Ham}, the graph $C\langle\{a_1\}\rangle$ consists of $g_1$ components.  Each component is a cycle of size $\f{n}{g_1}$: Start at some vertex $i$ for $0\leq i<g_1$, and continue following edges of length $a_1$ until reaching $i+\left(\f{n}{g_1}-1\right)a_1$.  Taking one more length-$a_1$ edge wraps back to vertex $i$, since $$i+\f{n}{g_1}a_1 = i + n\f{a_1}{g_1} \equiv_n i$$ (where the final equivalence follows because $g_1$ divides $a_1$ by definition)\footnote{In the special case that $n$ is even and $a_1=n/2,$ we think of each of the $g_1=\gcd(n, n/2)=n/2$ components of $C\langle\{a_1\}\rangle$ as a cycle on a pair of vertices.}.

Provided $g_2=1$, Proposition \ref{prop:Ham} indicates $C\langle\{a_1, a_2\}\rangle$ is Hamiltonian (and therefore connected).  Thus edges of length $a_2$ connect the components of $C\langle\{a_1\}\rangle$.  They do so, moreover, in  an extremely structured manner. Consider, e.g., Figure \ref{fig:ex}, which shows  $C\langle\{a_1, a_2\}\rangle$ for three different two-stripe instances on $n=12$ vertices.  In all cases, $a_1=3$.  Since $g_1=\gcd(12, 3)=3,$ there are three components of $C\langle\{a_1\}\rangle$ in all instances (specifically, $\{0, 3, 6, 9\}, \{1, 4, 7, 10\},$ and $\{2, 5, 8, 11\}$).  The instances are drawn so that each component is in its own column.  While the columns are drawn in slightly different orders for each instance, and while the specific red length-$a_2$ edges connect different pairs vertices, all three instances are extremely structured.  For example, let $v$ be any vertex in the first column.  Regardless of the instance or which vertex $v$ is within the first column, $v+a_2$ is always in the second column.  Similarly, if $v$ is in the second column, $v+a_2$ is always in the third; if $v$ is in the third column, $v+a_2$ is always in the first.

Claim \ref{cm:mergeCi} below, from Gutekunst and Williamson \cite{Gut19b}, makes this  structure precise: Consider the graph $C\langle\{a_1, a_2\}\rangle$ and contract each component of $C\langle\{a_1\}\rangle$ into a single vertex. Then, provided $g_2=1$, the resulting graph is a cycle.  See Lemma 3.6 in  Gutekunst and Williamson \cite{Gut19b} for a more general statement.

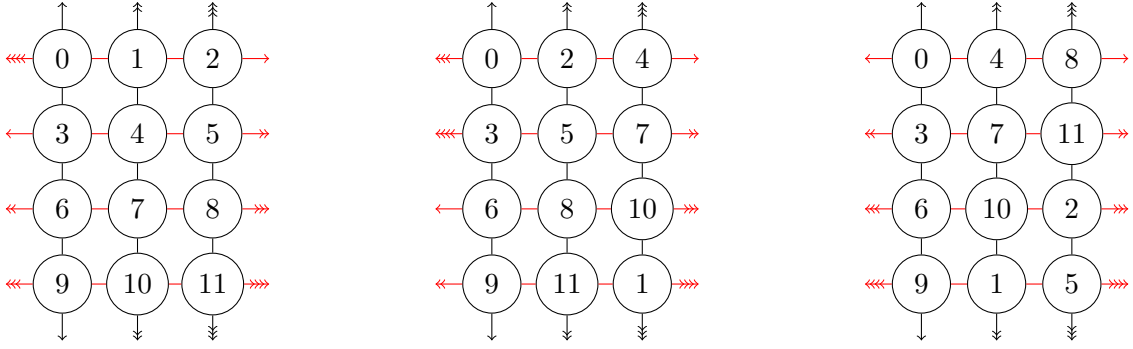
\begin{figure}[t!]
    \centering
        \begin{subfigure}[t]{0.3\textwidth}
        \centering
        \begin{tikzpicture}[scale=1]
        \tikzset{vertex/.style = {shape=circle,draw,minimum size=2em}}
        \vertex (a1) at (0, 3) {$0$};
        \vertex (a2) at (0, 2) {$3$};
        \vertex (a3) at (0, 1) {$6$};
        \vertex (a4) at (0, 0) {$9$};

        \vertex (b1) at (1, 3) {$1$};
        \vertex (b2) at (1, 2) {$4$};
        \vertex (b3) at (1, 1) {$7$};
        \vertex (b4) at (1, 0) {$10$};
        
        \vertex (c1) at (2, 3) {$2$};
        \vertex (c2) at (2, 2) {$5$};
        \vertex (c3) at (2, 1) {$8$};
        \vertex (c4) at (2, 0) {$11$};
        
        \draw (a1) to (a2) to (a3) to (a4);
        \draw (b1) to (b2) to (b3) to (b4);
        \draw (c1) to (c2) to (c3) to (c4);
        
        \draw[red] (a1) to (b1) to (c1);
        \draw[red] (a2) to (b2) to (c2);
        \draw[red] (a3) to (b3) to (c3);
        \draw[red] (a4) to (b4) to (c4);
        
        \draw[->] (a4) to (0, -0.75);
        \draw[->] (a1) to (0, 3.75);
        
        \draw[->>] (b4) to (1, -0.75);
        \draw[->>] (b1) to (1, 3.75);
        
        \draw[->>>] (c4) to (2, -0.75);
        \draw[->>>] (c1) to (2, 3.75);
              
        \draw[->, red] (a2) to (-0.75, 2);
        \draw[->, red] (c1) to (2.75, 3);
              
        \draw[->>, red] (a3) to (-0.75, 1);
        \draw[->>, red] (c2) to (2.75, 2);
                      
        \draw[->>>, red] (a4) to (-0.75, 0);
        \draw[->>>, red] (c3) to (2.75, 1);
        
        \draw[->>>>, red] (a1) to (-0.75, 3);
        \draw[->>>>, red] (c4) to (2.75, 0);
        
        \end{tikzpicture}
        \end{subfigure}
        \hspace{5mm}
 \begin{subfigure}[t]{0.3\textwidth}
        \centering
        \begin{tikzpicture}[scale=1]
        \tikzset{vertex/.style = {shape=circle,draw,minimum size=2em}}
        \vertex (a1) at (0, 3) {$0$};
        \vertex (a2) at (0, 2) {$3$};
        \vertex (a3) at (0, 1) {$6$};
        \vertex (a4) at (0, 0) {$9$};

        \vertex (b1) at (1, 3) {$2$};
        \vertex (b2) at (1, 2) {$5$};
        \vertex (b3) at (1, 1) {$8$};
        \vertex (b4) at (1, 0) {$11$};
        
        \vertex (c1) at (2, 3) {$4$};
        \vertex (c2) at (2, 2) {$7$};
        \vertex (c3) at (2, 1) {$10$};
        \vertex (c4) at (2, 0) {$1$};
        
        \draw (a1) to (a2) to (a3) to (a4);
        \draw (b1) to (b2) to (b3) to (b4);
        \draw (c1) to (c2) to (c3) to (c4);
        
        \draw[red] (a1) to (b1) to (c1);
        \draw[red] (a2) to (b2) to (c2);
        \draw[red] (a3) to (b3) to (c3);
        \draw[red] (a4) to (b4) to (c4);
        
        \draw[->] (a4) to (0, -0.75);
        \draw[->] (a1) to (0, 3.75);
        
        \draw[->>] (b4) to (1, -0.75);
        \draw[->>] (b1) to (1, 3.75);
        
        \draw[->>>] (c4) to (2, -0.75);
        \draw[->>>] (c1) to (2, 3.75);
              
        \draw[->, red] (a3) to (-0.75, 1);
        \draw[->, red] (c1) to (2.75, 3);
              
        \draw[->>, red] (a4) to (-0.75, 0);
        \draw[->>, red] (c2) to (2.75, 2);
                      
        \draw[->>>, red] (a1) to (-0.75, 3);
        \draw[->>>, red] (c3) to (2.75, 1);
        
        \draw[->>>>, red] (a2) to (-0.75, 2);
        \draw[->>>>, red] (c4) to (2.75, 0);
        
        \end{tikzpicture}
        \end{subfigure}
 \hspace{5mm}
 \begin{subfigure}[t]{0.3\textwidth}
        \centering
        \begin{tikzpicture}[scale=1]
        \tikzset{vertex/.style = {shape=circle,draw,minimum size=2em}}
        \vertex (a1) at (0, 3) {$0$};
        \vertex (a2) at (0, 2) {$3$};
        \vertex (a3) at (0, 1) {$6$};
        \vertex (a4) at (0, 0) {$9$};

        \vertex (b1) at (1, 3) {$4$};
        \vertex (b2) at (1, 2) {$7$};
        \vertex (b3) at (1, 1) {$10$};
        \vertex (b4) at (1, 0) {$1$};
        
        \vertex (c1) at (2, 3) {$8$};
        \vertex (c2) at (2, 2) {$11$};
        \vertex (c3) at (2, 1) {$2$};
        \vertex (c4) at (2, 0) {$5$};
        
        \draw (a1) to (a2) to (a3) to (a4);
        \draw (b1) to (b2) to (b3) to (b4);
        \draw (c1) to (c2) to (c3) to (c4);
        
        \draw[red] (a1) to (b1) to (c1);
        \draw[red] (a2) to (b2) to (c2);
        \draw[red] (a3) to (b3) to (c3);
        \draw[red] (a4) to (b4) to (c4);
        
        \draw[->] (a4) to (0, -0.75);
        \draw[->] (a1) to (0, 3.75);
        
        \draw[->>] (b4) to (1, -0.75);
        \draw[->>] (b1) to (1, 3.75);
        
        \draw[->>>] (c4) to (2, -0.75);
        \draw[->>>] (c1) to (2, 3.75);
              
        \draw[->, red] (a1) to (-0.75, 3);
        \draw[->, red] (c1) to (2.75, 3);
              
        \draw[->>, red] (a2) to (-0.75, 2);
        \draw[->>, red] (c2) to (2.75, 2);
                      
        \draw[->>>, red] (a3) to (-0.75, 1);
        \draw[->>>, red] (c3) to (2.75, 1);
        
        \draw[->>>>, red] (a4) to (-0.75, 0);
        \draw[->>>>, red] (c4) to (2.75, 0);
        
        \end{tikzpicture}
        \end{subfigure}
        
    \caption{Three drawings of two-stripe instances on $n=12$ vertices with $a_1=3.$  In the left instance $a_2=1$, in the middle instance $a_2=2,$ and on the right instance $a_2=4$.  Black edges are of length $a_1$ and red edges are of length $a_2$.  Arrows denote horizontal/vertical ``wrapping around,'' and same-colored arrows with the same number of arrowheads wrap to each other.  In all instances, e.g., an edge of length $a_1$ wraps around vertically, connecting $0$ (top-left) to $9$ (bottom-left). Note that the three cases are almost identical, up to the labelling of the vertices and the way that they wrap around horizontally.}
    \label{fig:ex}
\end{figure}

\begin{cm}\label{cm:mergeCi}
Let $C_1,C_2, C_3, ..., C_{g_1}$ denote the connected components of $C\langle\{a_1\}\rangle.$   Consider the directed graph $G'=(V', E')$ on $V'=\left[g_1\right]$  where $(u, v) \in E'$ if and only if there are vertices $x\in C_u$ and $y\in C_v$ with $x-y\equiv_n a_2.$ Then $G'$ is a directed cycle cover. Moreover, if $g_2 = 1$, then $G'$ is a directed cycle.
\end{cm}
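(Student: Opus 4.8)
The plan is to show that $G'$ is nothing but the functional graph of a translation on the cyclic group $\Z/g_1\Z$, from which both assertions fall out immediately.

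First I would make precise the identification of the components of $C\langle\{a_1\}\rangle$ with residue classes modulo $g_1$. By Proposition~\ref{prop:Ham} each component $C_i$ is exactly a set of vertices sharing a fixed residue modulo $g_1$, so after relabeling we may identify $V'=[g_1]$ with $\Z/g_1\Z$ in such a way that $C_u=\{v\in[n]:v\equiv u \pmod{g_1}\}$. The crucial observation is that the component containing a vertex depends only on its residue mod $g_1$, and that this behaves well under shifting by $a_2$: since $g_1\mid n$, we have $(x\pm a_2 \bmod n)\equiv x\pm a_2 \pmod{g_1}$ for every $x$, so $x\in C_u$ forces $x-a_2\in C_w$ with $w\equiv u-a_2\pmod{g_1}$. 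Consequently, whether $(u,v)\in E'$ does not depend on which representatives $x\in C_u$, $y\in C_v$ are chosen: unwinding the definition ($x-y\equiv_n a_2$, i.e. $y=x-a_2$), we get $(u,v)\in E'$ if and only if $v\equiv u-a_2\pmod{g_1}$.

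Next I would read off the conclusions. Each $C_u$ is nonempty (it has $n/g_1\ge 1$ vertices), so for every $u\in[g_1]$ there is exactly one $v$ with $(u,v)\in E'$, namely $v\equiv u-a_2\pmod{g_1}$, and symmetrically exactly one in-neighbor; hence every vertex of $G'$ has in- and out-degree exactly $1$. Thus $G'$ is the functional graph of the permutation $\sigma:u\mapsto u-a_2$ of $\Z/g_1\Z$, which decomposes into vertex-disjoint directed cycles covering $V'$ — a directed cycle cover. For the sharper statement, recall that translation by $-a_2$ on $\Z/g_1\Z$ is a single $g_1$-cycle precisely when $a_2$ generates $\Z/g_1\Z$, i.e. when $\gcd(a_2,g_1)=1$. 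Since $\gcd(a_2,g_1)=\gcd\bigl(a_2,\gcd(n,a_1)\bigr)=\gcd(n,a_1,a_2)=g_2$, the hypothesis $g_2=1$ makes $\sigma$ a single cycle, so $G'$ is one directed cycle on all $g_1$ vertices.

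I do not anticipate a real obstacle: the argument is bookkeeping, and the only points that need care are the identification of components with residues mod $g_1$ (justified by Proposition~\ref{prop:Ham}) and the orientation convention in the definition of $E'$, which makes the out-neighbor map ``subtract $a_2$'' rather than ``add $a_2$'' — harmless, but it should be stated correctly. The degenerate situations $g_1=1$ (a single vertex with a self-loop, where $g_2=1$ automatically) and $n$ even with $a_1=n/2$ are handled by exactly the same reasoning and need no separate treatment.
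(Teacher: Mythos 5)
Your proof is correct, and while the first half coincides with the paper's, the second half takes a genuinely different route. For the cycle-cover claim you do essentially what the paper does: identify each component $C_u$ with a residue class mod $g_1$ (via Proposition~\ref{prop:Ham}), note that shifting by $\pm a_2$ respects residues mod $g_1$ because $g_1\mid n$, and conclude every vertex of $G'$ has in- and out-degree exactly $1$ (your observation that the out-neighbor map is ``subtract $a_2$'' rather than ``add $a_2$'' is the right reading of the edge convention, and is immaterial). Where you diverge is the case $g_2=1$: the paper argues that $G'$ is connected --- inheriting connectivity of $C\langle\{a_1,a_2\}\rangle$ from Proposition~\ref{prop:Ham} when $g_2=1$ --- and then uses ``the only connected directed cycle cover is a directed cycle.'' You instead recognize $G'$ outright as the functional graph of the translation $u\mapsto u-a_2$ on $\Z/g_1\Z$, whose cycle decomposition has exactly $\gcd(g_1,a_2)=\gcd(n,a_1,a_2)=g_2$ cycles, so $g_2=1$ forces a single cycle. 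Your version is self-contained (it does not invoke the Hamiltonicity/connectivity statement of Burkard--Sandholzer for this step) and yields slightly more information for free: in general the cycle cover consists of exactly $g_2$ cycles of length $g_1/g_2$, in the spirit of the more general Lemma 3.6 of Gutekunst and Williamson cited after the claim. The paper's argument buys brevity by leaning on a result it has already stated; yours buys a sharper, purely arithmetic statement.
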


\begin{proof}
First, consider any two vertices $x$ and $y$ in the same component $C_i$.  Then $x\equiv_{g_1} y$.  Moreover, $x+a_2 \equiv_{g_1} y+a_2,$ so that $x+a_2$ and $y+a_2$ are in the same component.   Hence, the vertex $i\in V'$ has a single outgoing edge. Analogously $x-a_2\equiv_{g_1} y-a_2$ so that the vertex $i\in V'$ has a single incoming edge.  These facts establish that every vertex of $G'$ has a single outgoing edge and a single incoming edge, so that $G'$ is a directed cycle cover.  However, if $g_2=1$, $G'$ must also be connected. The only connected, directed cycle cover is a directed cycle.
\end{proof}

Claim \ref{cm:mergeCi} suggests a convention for drawing two-stripe circulant graphs (used in  Figure \ref{fig:ex}), where each column corresponds to a component of $C\langle\{a_1\}\rangle.$  We further arrange the columns (and the ordering of vertices within each column) so that edges of $a_2$ can be generally drawn horizontally.  Specifically, we  take the convention that $0$ is the top-left vertex.  The first column will proceed vertically-down as $0, a_1, 2a_1, 3a_1, ...., \left(\f{n}{g_1}-1\right)a_1$.  Our second column ``translates'' the first column by $a_2$, so that the top vertex is $a_2$, the second vertex is $a_1+a_2$, the third vertex is $2a_1+a_2$, and so on.  Our third column translates by another $a_2$, and so on.  See Figure \ref{fig:conv} for our general labeling scheme.

\begin{figure}[t!]

    \centering
        \begin{tikzpicture}[scale=0.85]
        
        \node (a1) at (0, 4) {$0$};
        \node (a2) at (0, 3) {$a_1$};
        \node (a3) at (0, 2) {$2a_1$};
        \node (a4) at (0, 1) {$\vdots$};
        \node (ab) at (0, 0) {$\left(\f{n}{g_1}-1\right)a_1$};

        \node (a1) at (4, 4) {$0+a_2$};
        \node (a2) at (4, 3) {$a_1+a_2$};
        \node (a3) at (4, 2) {$2a_1+a_2$};
        \node (a4) at (4, 1) {$\vdots$};
        \node (ab) at (4, 0) {$\left(\f{n}{g_1}-1\right)a_1+a_2$};

        \node (a1) at (8, 4) {$0+2a_2$};
        \node (a2) at (8, 3) {$a_1+2a_2$};
        \node (a3) at (8, 2) {$2a_1+2a_2$};
        \node (a4) at (8, 1) {$\vdots$};
        \node (ab) at (8, 0) {$\left(\f{n}{g_1}-1\right)a_1+2a_2$};

        \node (a1) at (11, 4) {$\cdots$};
        \node (a2) at (11, 3) {$\cdots$};
        \node (a3) at (11, 2) {$\cdots$};
        \node (a4) at (11, 1) {$\ddots$};
        \node (ab) at (11, 0) {$\cdots$};
        
        \node (a1) at (14, 4) {$0+(g_1-1)a_2$};
        \node (a2) at (14, 3) {$a_1+(g_1-1)a_2$};
        \node (a3) at (14, 2) {$2a_1+(g_1-1)a_2$};
        \node (a4) at (14, 1) {$\vdots$};
        \node (ab) at (14, 0) {$\left(\f{n}{g_1}-1\right)a_1+(g_1-1)a_2$};
            
        
        \end{tikzpicture}    \caption{Our convention for drawing graphs of two-stripe TSP instances.  All vertex labels should be implicitly taken mod $n$.}
    \label{fig:conv}
\end{figure}

Returning to Figure \ref{fig:ex}, recall that the three instances have $n=12$ and $a_1=3.$   In the left instance $a_2=1$, in the middle instance $a_2=2,$ and in the right instance $a_2=4.$   Up to a re-labelling of vertices, the only structural change occurs in how the right-most column is connected to the left-most column: the edges that wrap around horizontally, connecting the last column back to the first.  On the left, taking an edge of length $a_2$ from a vertex in the last column wraps around to the first column, but one row lower (2 in the top row connects to 3 in the second row, 5 in the second row connects to 6 in the third row, etc).  In the middle, wrapping around shifts down two rows (4 in the top row connects to 6 in the third row, etc).  On the right, edges wrap back to the same row.

\subsection{Cylinder Graphs}
Part of the challenge of two-stripe TSP is the differing ways that horizontal edges wrap around between the last and first column.  Our first result, in Section \ref{sec:reduction}, removes this difficulty, and allows us to work on ``cylinder graphs'': graphs that are similar to those  in Figure \ref{fig:ex}, but without any horizontal edges wrapping around between the last and last column.  
\begin{defn}
Let $n=r\times c.$  A {\bf $r\times c$ cylinder graph} is a graph with $n$ vertices arranged into an $r\times c$ grid.  For $0\leq i\leq r-1$ and $0\leq j\leq c-1$, a vertex in row $i$ and column $j$ is adjacent to:
\begin{itemize}
\item The vertex in row $i$ and column $j+1$, provided $j<c-1$,
\item The vertex in row $i$ and column $j-1$, provided $j>0$,
\item The vertex in row $i-1 \mod r$ and column $j$, and the vertex in row $i+1\mod r$ and column $j$.
\end{itemize}
\end{defn}

\noindent See, e.g., Figure \ref{fig:cyl}.  It will often be helpful to refer to a vertex in row $x$ and column $y$ as $(x, y)$, indexed by its row and column, starting from $0$.  We take the convention that the top-left vertex is $(0, 0)$. Hence, the bottom-right vertex is $(r-1, c-1)$. In general, a vertex $v = xa_1 + ya_2$ where $0 \leq x \leq (r-1)$ and $0 \leq y \leq (c-1)$ will have cylindrical coordinates $(x, y)$. We similarly treat the cylinder graph as inheriting the cost structure of a two-stripe circulant instance. That is, all the vertical edges have cost $c_{a_1}$, and all the horizontal edges have cost $c_{a_2}$. 

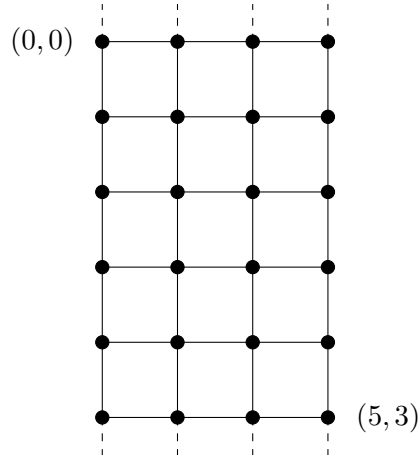
\begin{figure}[t!]
    \centering
        \centering
        \begin{tikzpicture}
        \foreach \x in {0,1,...,3} {
            \foreach \y in {0,1,...,5} {
                \vertex (\x\y) at (\x, \y) {};
            }
        }   
        \foreach \i in {0, 1, 2} {
            \foreach \j in {0, 1, ..., 4} {
            \draw (\i, \j) to (\i, \j+1);
            \draw (\i, \j) to (\i+1, \j);}
        }
        
        \foreach \i in {0, 1, 2} {
         \draw (\i, 5) to (\i+1, 5);}
         
        \foreach \j in {0, 1, ..., 4} {
         \draw (3, \j) to (3, \j+1);}
         
        \foreach \i in {0, 1, 2, 3} {
         \draw[dashed] (\i, 5) to (\i, 5.5);
         \draw[dashed] (\i, 0) to (\i, -0.5);}
         
         \node (b) at (-0.8, 5) {$(0, 0)$};
         \node (b) at (3.8, 0) {$(5, 3)$};
        \end{tikzpicture}

    \caption{A $6\times4$ cylinder graph with vertices $(0, 0)$ and $(5, 3)$ indicated. Dashed edges wrap around vertically to create a cylindrical structure.}
    \label{fig:cyl}
\end{figure}

\subsection{Conventions for the Two-Stripe TSP}
We note several cases where the two-stripe TSP is trivial:
\begin{itemize}
\item If $g_2 >1,$ Proposition \ref{prop:Ham} implies that $G\langle\{a_1, a_2\}\rangle$ is not Hamiltonian (and thus no Hamiltonian cycles exist).
\item If $g_1 =1$, then, by Proposition \ref{prop:Ham}, the graph $G\langle\{a_1\}\rangle$  is Hamiltonian.  Hence, since $c_{a_1} \leq c_{a_2},$ a cheapest Hamiltonian cycle costs $nc_{a_1}$ and consists of a Hamiltonian cycle on $G\langle\{a_1\}\rangle$.  Note that any time $n$ is prime, $g_1=1$, so that two-stripe (and indeed, general circulant) TSP is trivial any time $n$ is prime.
\item If $c_{a_1}=c_{a_2}$ and $g_2=1$, then $G\langle\{a_1, a_2\}\rangle$  is Hamiltonian and any Hamiltonian cycle costs $nc_{a_1}=nc_{a_2}.$ 
\end{itemize}
These observations are also made in  Greco and Gerace \cite{Grec07} and Gerace and Greco \cite{Ger08b}. Hence, we restrict ourselves to the case where $g_1>g_2 = 1$ and $c_{a_1}<c_{a_2}.$  

Following the example of Figure \ref{fig:ex}, we think of the ``expensive'' length-$a_2$ edges as  ``horizontal edges.''  The ``cheap'' length-$a_1$ edges are analogously considered ``vertical edges.''  Then the goal is to minimize the total number of expensive horizontal edges.  From here on we assume without loss of generality that $c_{a_1}=0$ and $c_{a_2}=1$, so every vertical edge has cost 0 and every horizontal edge has cost 1.

\subsection{Main Result}
The main result of our paper, stated below, is a characterization of the optimal tour in any two-stripe circulant TSP instance. Section \ref{sec:main_result_proof} contains its proof, and Section \ref{sec:alg} shows how this result naturally leads to a polynomial-time algorithm for two-stripe TSP. 
\begin{restatable}{thm}{result}
\label{thm:result}
Let $r = \frac{n}{g_1}$ and $c = g_1$. Suppose the cylindrical coordinates of $-a_2$ are $(x, c-1)$. 
Let $m^*$ be the smallest integer value of $m$ such that $m \geq -\frac{c}{2}$, and $x \in \{2m+c, -(2m+c)\} \mod r$.  Then:
\begin{itemize}
    \item If $m^* \leq 0$, the cost of the optimal tour is $c$.
    \item If $0 < 2m^* < c - 2$, the cost of the optimal tour is $c+2m^*$. 
    \item If $2m^* \geq c - 2$ or $m^*$ does not exist, the cost of the optimal tour is $2c - 2$.
\end{itemize}
\end{restatable}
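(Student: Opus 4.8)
The plan is to combine the reduction of Section~\ref{sec:reduction} with the Hamiltonian-path characterization of Section~\ref{sec:mainpf}. First I would unwind what the reduction gives: an optimal two-stripe tour (after the trivial cases already excluded in the Conventions subsection) is one of two types. Either it uses no edge of the ``wraparound'' matching joining column $c-1$ back to column $0$, in which case it is a Hamiltonian cycle lying entirely inside the $r\times c$ cylinder graph; contracting columns turns it into a closed walk on a path with $c$ vertices that covers all of them, so it traverses each of the $c-1$ edges an even (hence positive) number of times, forcing cost at least $2c-2$, and an explicit comb-type construction (Section~\ref{sec:reduction}) shows $2c-2$ is always attained. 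Or it uses a wraparound edge, and one shows it is never worse to use exactly one; deleting it leaves a Hamiltonian path of the cylinder with one end $(p,0)$ in the first column and the other end $(q,c-1)$ in the last column, and the tour costs one more than the number of horizontal edges of this path. Because $-a_2$ has cylindrical coordinates $(x,c-1)$, the unique wraparound edge at $(p,0)$ runs to $(p+x \bmod r,\,c-1)$, so for this second type the ends must satisfy $q-p\equiv x \pmod r$. Hence the optimal cost is $\min\{\,2c-2,\ 1+W(x)\,\}$, where $W(x)$ is the least number of horizontal edges in a Hamiltonian path of the cylinder running from some $(p,0)$ to $(p+x \bmod r,\,c-1)$.

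Second, I would evaluate $W(x)$ using the characterization of Section~\ref{sec:mainpf}. By translation invariance of the cylinder (shifting all row indices) the existence of such a path depends on its ends only through the difference $\delta=q-p$, and by the reflection $i\mapsto -i \bmod r$ only through $\{\delta,-\delta\}$. The characterization, specialized this way, says that $W$ equals its absolute minimum $c-1$ exactly when $x$ is congruent mod $r$ to one of the ``GG shifts'': a Hamiltonian path snaking once through each column has net row shift $\sum_{j=0}^{c-1}\epsilon_j$ with each $\epsilon_j\in\{\pm1\}$, so these shifts range over $\{-c,-c+2,\dots,c-2,c\}$, which is precisely the set $\{\,\pm(2m+c):\ -\tfrac{c}{2}\le m\le 0,\ m\in\Z\,\}$; and each additional excursion (covering one column with an extra subpath) costs two more horizontal edges while enlarging the attainable shift set, so $W(x)=c-1+2m^{*}$ when the smallest feasible index $m^{*}$ is positive, while no Hamiltonian path of the required type exists exactly when $m^{*}$ fails to exist.

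Finally, assembling the cases is bookkeeping. If $m^{*}\le0$, a GG path realizes $W(x)=c-1$ and the optimum is $1+(c-1)=c$. If $m^{*}>0$, the cheapest wraparound tour costs $1+(c-1+2m^{*})=c+2m^{*}$, which beats the comb bound $2c-2$ exactly when $2m^{*}<c-2$, giving the middle case; in the remaining situations---$2m^{*}\ge c-2$, or $m^{*}$ undefined so no wraparound tour exists---the comb tour of cost $2c-2$ is optimal. The main obstacle is the middle paragraph: one must extract from the Section~\ref{sec:mainpf} characterization the exact statement that $W(x)=c-1+2\max(m^{*},0)$, with the precise congruence $x\in\{2m+c,-(2m+c)\}\bmod r$ governing feasibility at level $m$. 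This requires justifying the reflection symmetry, checking that the attainable shift sets grow monotonically with the number of excursions (so that the \emph{smallest} feasible $m$ is the relevant quantity), and verifying the boundary $2m^{*}=c-2$, where the wraparound construction and the comb construction yield the same cost.
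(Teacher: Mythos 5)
Your proposal is correct and follows essentially the same route as the paper: reduce (via Theorem \ref{thm:reduction} together with the upper-bound tour of Proposition \ref{prop:UB}) to the minimum number of horizontal edges $W(x)$ of a Hamiltonian cylinder path from the first column to $(x,c-1)$, evaluate $W(x)=c-1+2\max(m^*,0)$ using Theorem \ref{thm:main} and the explicit description of $A_{r,c,m}$ in Proposition \ref{prop:Arcm}, and then compare against $2c-2$. The only cosmetic differences are your re-derivation of the ``no wraparound edge implies cost at least $2c-2$'' bound via the closed-walk parity argument and your packaging of the conclusion as $\min\{2c-2,\,1+W(x)\}$, which the paper handles by invoking Theorem \ref{thm:reduction} directly in its case analysis.
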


\subsection{Previous Results on Two-Stripe TSP}
Despite the restrictive structure of two-stripe TSP, many of the main previous results for it are inherited from the more-general circulant TSP.   From an approximation algorithms perspective, the state-of-the-art remains a heuristic from Van der Veen, Van Dal, and Sierksma \cite{VDV91}; on any two-stripe instance, this heuristic provides a tour of cost within a factor of two of the optimal cost (see also Gerace and Greco \cite{Ger08}, for a more general 2-approximation algorithm).  The performance guarantee uses a combinatorial lower bound of  Van der Veen, Van Dal, and Sierksma \cite{VDV91}, which when specialized to the two-stripe case states that any Hamiltonian cycle must cost at least $g_1$; this is clear, since there must be at least $c=g_1$ edges to join the $c$ columns in a cycle. If a tour of cost $g_1$ exists, we will refer to it as a ``lower bound tour.''

\begin{prop}[Van der Veen, Van Dal, and Sierksma \cite{VDV91}, specialized to two-stripe TSP]\label{prop:LB}
Any Hamiltonian cycle on a two-stripe instance costs at least $g_1.$
\end{prop}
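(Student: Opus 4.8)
The plan is to exploit the component structure of $C\langle\{a_1\}\rangle$ supplied by Proposition~\ref{prop:Ham} and Claim~\ref{cm:mergeCi}. Recall that we are in the case $g_1 > g_2 = 1$ with $c_{a_1}=0$ and $c_{a_2}=1$, so the cost of any Hamiltonian cycle equals its number of length-$a_2$ edges; it therefore suffices to show that every Hamiltonian cycle of $C\langle\{a_1,a_2\}\rangle$ uses at least $g_1$ edges of length $a_2$.

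First I would record two structural facts. By Proposition~\ref{prop:Ham}, $C\langle\{a_1\}\rangle$ has exactly $g_1$ connected components $C_1,\dots,C_{g_1}$, and the component of a vertex $v$ is determined by $v \bmod g_1$; in particular every length-$a_1$ edge has both endpoints in a single component. Conversely, a length-$a_2$ edge $\{v, v+a_2\}$ joins two \emph{distinct} components: otherwise $g_1 \mid a_2$, and combined with $g_1 \mid n$ and $g_1 \mid a_1$ this would force $g_1 \mid \gcd(n,a_1,a_2) = g_2 = 1$, contradicting $g_1 > 1$. (Equivalently, this is the statement that the directed cycle of Claim~\ref{cm:mergeCi} has no self-loops once $g_1 \geq 2$.)

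Next I would run a standard cut-counting argument. Fix a Hamiltonian cycle $H$. For each $i$, let $\delta_H(C_i)$ denote the set of edges of $H$ with exactly one endpoint in $C_i$. Since $H$ is a cycle through all $n$ vertices, $|\delta_H(C_i)|$ is even, and since $\es \neq C_i \neq [n]$ (here $g_1 > 1$ is used) it is nonzero, hence at least $2$. By the first structural fact, every edge of $\delta_H(C_i)$ has length $a_2$. Summing over $i$ gives $\sum_{i=1}^{g_1} |\delta_H(C_i)| \geq 2g_1$. By the second structural fact, each length-$a_2$ edge of $H$ has its two endpoints in two different components and so is counted in exactly two of the sets $\delta_H(C_i)$; hence the left-hand side equals twice the number of length-$a_2$ edges of $H$. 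Therefore $H$ uses at least $g_1$ edges of length $a_2$, and $\cost(H) \geq g_1$.

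I do not expect a genuine obstacle: the argument is elementary once the component decomposition is in hand, and the only point needing a line of justification is that a length-$a_2$ edge cannot stay within a single component, which is precisely where the hypothesis $g_2 = 1 < g_1$ enters. A fully equivalent alternative would be to contract each $C_i$ to a single vertex, observe that $H$ projects to a closed walk meeting all $g_1$ contracted vertices, and note that such a walk has at least $g_1$ edges, each pulling back to a length-$a_2$ edge of $H$.
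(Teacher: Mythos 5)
Your proof is correct, and it is essentially a rigorous version of the paper's own (one-line) justification: the cycle must cross between the $g_1$ columns (components of $C\langle\{a_1\}\rangle$), and every crossing edge has length $a_2$, so at least $g_1$ such edges are needed. Your cut-parity counting, and the contraction argument you mention at the end, both formalize exactly this observation, so nothing further is needed.
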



We can  exhibit an upper bound of $2(g_1-1)$ on the optimal solution to any two-stripe instance by providing a feasible tour (as usual, provided that $g_2=1$); such tours are illustrated in Figure \ref{fig:UB}. Henceforth, we will refer to these tours  as ``upper bound tours.''  Such tours immediately give a 2-approximation algorithm: they provides a feasible solution costing at most $2(g_1-1)<2g_1,$ while $g_1$ is a lower bound on the optimal cost.

\begin{prop}\label{prop:UB}
There exists a Hamiltonian cycle of cost $2(g_1-1)$ on any two-stripe instance.
\end{prop}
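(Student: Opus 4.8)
The plan is to build the Hamiltonian cycle by the natural ``column-splicing'' construction suggested by Figure~\ref{fig:ex} (and drawn as the upper bound tours of Figure~\ref{fig:UB}): add the $g_1$ columns $C_1,\dots,C_{g_1}$ of $C\langle\{a_1\}\rangle$ one at a time, paying exactly two length-$a_2$ edges for each column after the first. Recall we are already in the case $g_1>g_2=1$ with $c_{a_1}=0$ and $c_{a_2}=1$, so it suffices to exhibit a Hamiltonian cycle using exactly $2(g_1-1)$ length-$a_2$ edges. Using Claim~\ref{cm:mergeCi}, relabel the components so that the directed cycle of that claim reads $1\to 2\to\cdots\to g_1\to 1$; then $v+a_2\in C_{j+1}$ for every $v\in C_j$ (indices mod $g_1$), each $C_j$ is a cycle of length $n/g_1\ge 2$ on the length-$a_1$ edges, and each $C_j$ is closed under $v\mapsto v+a_1$.

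I would prove by induction on $j$ that $C_1\cup\cdots\cup C_j$ has a Hamiltonian cycle $H_j$ using exactly $2(j-1)$ length-$a_2$ edges and containing at least one length-$a_1$ edge with both endpoints in $C_j$. The base case $j=1$ takes $H_1=C_1$. For the inductive step, the hypothesis gives a length-$a_1$ edge $\{w,w+a_1\}$ of $H_{j-1}$ with $w,w+a_1\in C_{j-1}$; delete it and add the two length-$a_2$ edges $\{w,w+a_2\}$ and $\{w+a_1,(w+a_1)+a_2\}$ together with the Hamiltonian path of $C_j$ from $w+a_2$ to $(w+a_1)+a_2$. That path exists because $w+a_2$ and $(w+a_1)+a_2$ both lie in $C_j$ and differ by $a_1$, hence are adjacent on the cycle $C_j$, so deleting that one edge of $C_j$ leaves precisely such a path. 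Deleting an edge of the cycle $H_{j-1}$ turns it into a Hamiltonian path of $C_1\cup\cdots\cup C_{j-1}$ from $w$ to $w+a_1$, and the new edges plus the $C_j$-path reconnect its two ends through all of $C_j$, yielding a single Hamiltonian cycle $H_j$ with $2(j-1)$ length-$a_2$ edges; its portion inside $C_j$ has $n/g_1-1\ge 1$ length-$a_1$ edges, so the hypothesis is maintained. Taking $j=g_1$ gives a Hamiltonian cycle on $[n]$ with exactly $2(g_1-1)$ length-$a_2$ edges and all other edges of length $a_1$, so its cost is $2(g_1-1)$.

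The only nuisance is the degenerate case $a_1=n/2$, where $n/g_1=2$ and $C_1$ is a single edge rather than an honest cycle; there I would start the induction at $j=2$, first merging $C_1$ and $C_2$ into a $4$-cycle using two length-$a_2$ edges and both of their length-$a_1$ edges, after which the splicing step above applies verbatim and the count $2+2(g_1-2)=2(g_1-1)$ is unchanged. I do not anticipate any genuine obstacle: all of the content is in checking that each splice is legal, which is immediate from each $C_j$ being closed under adding $a_1$ and from Claim~\ref{cm:mergeCi} identifying which component $v+a_2$ lands in.
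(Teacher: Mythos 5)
Your proposal is correct. It differs in style from the paper, which does not give a formal proof at all: the paper simply exhibits the serpentine ``upper bound tours'' of Figure~\ref{fig:UB} (with separate pictures for $g_1$ even and odd) and asserts they are feasible Hamiltonian cycles using $2(g_1-1)$ horizontal edges. You instead build the tour by induction on the columns of $C\langle\{a_1\}\rangle$, splicing in one component at a time by deleting a vertical edge $\{w,w+a_1\}$ of the current cycle and adding the two horizontal edges $\{w,w+a_2\}$, $\{w+a_1,w+a_1+a_2\}$ together with the Hamiltonian path of the next column; the bookkeeping (each column is an $a_1$-cycle closed under $v\mapsto v+a_1$, and $v+a_2$ lands in the next column by Claim~\ref{cm:mergeCi}) is exactly what is needed, and your invariant that the newly added column retains an internal $a_1$-edge keeps the induction going, with the $a_1=n/2$ degeneracy handled separately just as the paper's footnote does. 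What your route buys is a rigorous, parity-free argument that never has to reason about how length-$a_2$ edges wrap from the last column back to the first; what the paper's explicit construction buys is a concrete, easily parameterized tour that Section~\ref{sec:alg} later references verbatim when it must output the ``upper bound tour'' in $O(n)$ time, whereas your inductively spliced cycle, while equally cheap, is a slightly different (nested rather than serpentine) tour shape.
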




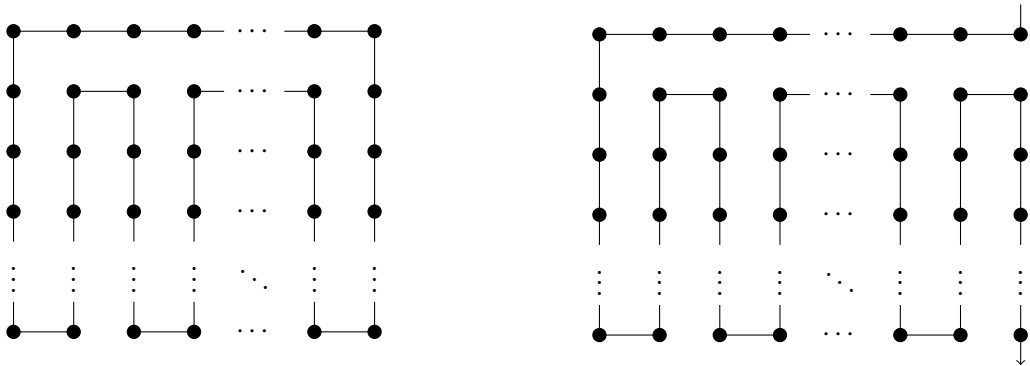
\begin{figure}[t!]
    \centering
        \begin{subfigure}[t]{0.45\textwidth}
        \centering
        \begin{tikzpicture}[scale=0.8]
        \foreach \x in {0,1,2, 3, 5, 6} {
            \foreach \y in {0, 2, 3, 4, 5} {
                \vertex (\x\y) at (\x, \y) {};
            }
        }   
        
        \foreach \x in {0, 1, 2, 3, 5, 6} {\node (\x) at (\x, 1) {$\vdots$};}
        \foreach \y in {0, 2, 3, 4, 5} {\node (\y) at (4, \y) {$\cdots$};}
        \node (d) at (4, 1) {$\ddots$};
        
        \foreach \x in {0, 1, 2, 3, 5, 6} {
        \draw (\x, 4) to (\x, 3) to (\x, 2) to (\x, 1.5);
        \draw (\x, 0.5) to (\x, 0);}
        
        \draw (0, 0) to (1, 0);
        \draw (2, 0) to (3, 0);
        \draw (5, 0) to (6, 0);
        \draw (0, 4) to (0, 5) to (1, 5) to (2, 5) to (3, 5) to (3.5, 5);
        \draw (4.5, 5) to (5, 5) to (6, 5) to (6, 4);
        \draw (1, 4) to (2, 4);
        \draw (3, 4) to (3.5, 4);
        \draw (4.5, 4) to (5, 4);
        \node (filler) at (0, -0.4) {};
        \end{tikzpicture}
        \end{subfigure}
        \hspace{5mm}
         \begin{subfigure}[t]{0.45\textwidth}
        \centering
        \begin{tikzpicture}[scale=0.8]
        \foreach \x in {0,1,2, 3, 5, 6, 7} {
            \foreach \y in {0, 2, 3, 4, 5} {
                \vertex (\x\y) at (\x, \y) {};
            }
        }   
        
        \foreach \x in {0, 1, 2, 3, 5, 6, 7} {\node (\x) at (\x, 1) {$\vdots$};}
        \foreach \y in {0, 2, 3, 4, 5} {\node (\y) at (4, \y) {$\cdots$};}
        \node (d) at (4, 1) {$\ddots$};
        
        \foreach \x in {0, 1, 2, 3, 5, 6, 7} {
        \draw (\x, 4) to (\x, 3) to (\x, 2) to (\x, 1.5);
        \draw (\x, 0.5) to (\x, 0);}
        
        \draw (0, 0) to (1, 0);
        \draw (2, 0) to (3, 0);
        \draw (5, 0) to (6, 0);
        \draw (0, 4) to (0, 5) to (1, 5) to (2, 5) to (3, 5) to (3.5, 5);
        \draw (4.5, 5) to (5, 5) to (6, 5);
        \draw (1, 4) to (2, 4);
        \draw (3, 4) to (3.5, 4);
        \draw (4.5, 4) to (5, 4);
        \draw[->] (7, 0) to (7, -0.5);
        \draw (7, 5.5) to (7, 5);
        \draw (6, 4) to (7, 4);
        \draw (6, 5) to (7, 5);
        \end{tikzpicture}
        \end{subfigure}

    \caption{Feasible Hamiltonian cycles of cost $2(g_1-1)$ when $g_1$ is even (left) and odd (right).}
    \label{fig:UB}
\end{figure}

More specific results on the two-stripe TSP come from  Greco and Gerace \cite{Grec07} and Gerace and Greco \cite{Ger08b}.  Both papers provide a sufficient, number-theoretic condition for  a tour of cost $g_1$ to exist. In Theorem \ref{thm:result}, we strengthen their result to a necessary and sufficient condition for a tour of cost $g_1$ to exist.

\begin{prop}[Greco and Gerace \cite{Grec07} and Gerace and Greco \cite{Ger08b}]\label{prop:getLB}
Consider a two-stripe instance, and suppose that there exists an integer $y$, with $0\leq y\leq g_1$, such that $$(2y-g_1)a_1+g_1a_2 \equiv_n 0.$$  Then the instance admits a Hamiltonian cycle of cost $g_1.$ 
\end{prop}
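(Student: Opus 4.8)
The plan is to construct an explicit Hamiltonian cycle of cost exactly $g_1$ directly from the hypothesized integer $y$. Recall from Proposition~\ref{prop:Ham} that $C\langle\{a_1\}\rangle$ consists of $g_1=c$ components $C_1,\dots,C_{g_1}$, each a cycle of length $r=n/g_1$, and from Claim~\ref{cm:mergeCi} that contracting each component gives a directed cycle on the $g_1$ columns. A cost-$g_1$ tour must use exactly $g_1$ horizontal edges and some number of vertical edges; the $g_1$ horizontal edges must join the $g_1$ columns into a single cycle, so they form (after contraction) exactly the directed cycle of Claim~\ref{cm:mergeCi}. This means the tour visits each column exactly once, traversing within a column a \emph{path} that starts at the vertex where the incoming $a_2$-edge lands and ends at the vertex where the outgoing $a_2$-edge departs; since a column is a cycle of length $r$, such a Hamiltonian path exists in the column if and only if its two endpoints are adjacent in that cycle, i.e.\ differ by $\pm a_1$ (or the endpoints coincide when $r=1$).

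Concretely, I would index the columns in the cyclic order given by repeatedly adding $a_2$, so that column $j$ (for $j=0,1,\dots,g_1-1$) is the component containing $ja_2$, and within column $j$ the tour enters at some vertex and leaves at a vertex $a_2$ further along (into column $j+1$). Walk around all $g_1$ columns: starting at vertex $0$, after traversing a path inside column $0$ we leave via an $a_2$-edge; the key freedom is the \emph{direction} (up or down by $a_1$) in which we traverse the path inside each column before exiting. If inside column $j$ we shift by $\epsilon_j a_1$ with $\epsilon_j\in\{+1,-1\}$ (meaning the entry and exit vertices of that column differ by $\epsilon_j a_1$, hence are cycle-adjacent, hence joined by a length-$(r-1)$ path covering the whole column), then after going around all $g_1$ columns and $g_1$ horizontal edges the total displacement is $\sum_{j=0}^{g_1-1}(\epsilon_j a_1 + a_2) = \big(\sum_j \epsilon_j\big)a_1 + g_1 a_2$. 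For the walk to close into a Hamiltonian cycle we need this $\equiv_n 0$. Choosing $y$ of the $\epsilon_j$'s to be $+1$ and $g_1-y$ of them to be $-1$ gives $\sum_j \epsilon_j = 2y-g_1$, so the closure condition is exactly $(2y-g_1)a_1 + g_1 a_2 \equiv_n 0$, which is the hypothesis. One must also check that the resulting closed walk really is a simple cycle: it uses every vertex (each column is fully covered by its internal path, and the columns partition the vertices) and it uses each vertex once (the internal paths are vertex-disjoint across columns, and the $g_1$ horizontal edges connect distinct column-pairs by Claim~\ref{cm:mergeCi}), so it is a Hamiltonian cycle of cost $g_1$.

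The step I expect to need the most care is verifying that the ``enter, traverse a Hamiltonian path of the column cycle, exit by $a_2$'' construction is globally consistent — i.e.\ that once we fix the sign pattern $(\epsilon_j)$, the entry point of column $j+1$ is forced (it is the $a_2$-translate of column $j$'s exit point) and that the column cycle of length $r$ genuinely admits a Hamiltonian path between two vertices that differ by $\epsilon_j a_1$. The latter is immediate for $r\geq 2$ since in a cycle any two adjacent vertices are the endpoints of the (unique) Hamiltonian path; the degenerate cases $r=1$ (each column a single vertex, forcing $a_1 \equiv_n 0$, excluded since $g_1 > g_2 = 1$ unless $a_1 = n/2$) and the footnoted case $a_1 = n/2$ with $r=2$ should be handled by inspection. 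A secondary point is bookkeeping the cyclic relabeling of columns so that ``add $a_2$'' advances the column index by $1$ each time, which is exactly what Claim~\ref{cm:mergeCi} guarantees; this is what lets the displacements telescope cleanly into $(2y-g_1)a_1 + g_1a_2$.
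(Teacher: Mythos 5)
Your construction is correct and is essentially the paper's own proof: the paper builds exactly this tour by sweeping the columns one at a time (traversing $y$ of them with $-a_1$ edges and the remaining $g_1-y$ with $+a_1$ edges, stepping between columns by $+a_2$), and uses the hypothesized congruence $(2y-g_1)a_1+g_1a_2\equiv_n 0$ to close the Hamiltonian path into a cycle of cost $g_1$, which is the same telescoping-displacement argument you give with the sign pattern $(\epsilon_j)$. Your opening claim that \emph{every} cost-$g_1$ tour must have this contracted-cycle structure is not justified, but it is purely motivational and not needed for the sufficiency argument, so the proof stands.
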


In this proof, and throughout this paper, it is often helpful to treat our tours as arbitrarily directed and starting at 0.  Doing so allows us to differentiate between an edge of length $+a_i$ (proceeding from $v$ to $v+a_i$) or $-a_i$ (proceeding from $v$ to $v-a_i$).  Recall that vertex labels $v \pm a_i$ are implicitly understood to be taken mod $n$.

\begin{proof}
Suppose such a $y$ exists.  We create a Hamiltonian path as follows: Start at $0.$  In each of the first $y$ columns, we take edges of length $-a_1$ until every vertex in that column is visited (using $\f{n}{g_1}-1$ total edges of length $-a_1$ in that column), and then take an edge of length $+a_2$ to move to the next column.  In any columns remaining after the first $y$, take edges of length $+a_1$ until every vertex in that column is visited (using $\f{n}{g_1}-1$ total edges of length $+a_1$), and take an edge of length $+a_2$ to move to the next column.  We finish this process in the last column, after using $g_1-1$ total horizontal edges.  See Figure \ref{fig:getLB}. 

By construction, this path is Hamiltonian.  The final vertex reached is computed as follows: we used $(g_1-1)$ edges of length $+a_2$ to get to the final column, we used $\f{n}{g_1}-1$ edges of length $-a_1$ in each of $y$ columns, and we used  $\f{n}{g_1}-1$ edges of length $+a_1$ in the remaining $g_1-y$ columns.  Hence, we end at:
\begin{align*}
(g_1-1)a_2 - \left(\f{n}{g_1}-1\right)y a_1 + (g_1-y)\left(\f{n}{g_1}-1\right)a_1
&= (g_1-1)a_2 -2y \left(\f{n}{g_1}-1\right) a_1 + g_1 \left(\f{n}{g_1}-1\right)a_1 \\
&=(g_1-1)a_2 +2ya_1 - 2yn\f{a_1}{g_1} + (n-g_1)a_1 
\intertext{Recalling that $g_1$ divides $a_1$:}
&\equiv_n (g_1-1)a_2 +2ya_1  -g_1a_1 \\
&= (g_1-1)a_2 + (2y-g_1)a_1.
\end{align*}
We now extend our Hamiltonian path from vertex $(g_1-1)a_2 + (2y-g_1)a_1$ by taking a horizontal edge $+a_2$.  By the assumed conditions on $y$,
$$(g_1-1)a_2 + (2y-g_1)a_1 + a_2 = g_1 a_2 + (2y-g_1)a_1 \equiv_n 0.$$  Hence, this final edge extends our Hamiltonian path to a Hamiltonian cycle, and since it adds one horizontal edge, the resultant cycle costs $g_1$.
\end{proof}

\begin{figure}[t!]
    \centering
        \centering
        \begin{tikzpicture}[decoration={
    markings,
    mark=at position 0.5 with {\arrow{>}}}]
        
        \foreach \x in {0,1,...,3} {
            \foreach \y in {0,1,...,4} {
                \vertex (\x\y) at (\x, \y) {};
            }
        }   
        \draw[->] (0, 4) to (0, 4.5);
        \draw[->] (1, 4) to (1, 4.5);
        \draw[->] (2, 4) to (2, 4.5);
        
        \draw[postaction={decorate}](0, -0.5) to (0, 0);
        \draw[postaction={decorate}]  (0, 0) to (0, 1);
        \draw[postaction={decorate}]  (0, 1) to (0, 2);
        \draw[postaction={decorate}]  (0, 2) to (0, 3);
        
        \draw[postaction={decorate}]  (0, 3) to (1, 3);
        
        \draw[postaction={decorate}](1, -0.5) to (1, 0);
        \draw[postaction={decorate}]  (1, 0) to (1, 1);        
        \draw[postaction={decorate}]  (1, 1) to (1, 2);  
        \draw[postaction={decorate}]  (1, 3) to (1, 4);

        \draw[postaction={decorate}](2, -0.5) to (2, 0);
        \draw[postaction={decorate}]  (2, 0) to (2, 1);        
        \draw[postaction={decorate}]  (2, 2) to (2, 3);  
        \draw[postaction={decorate}]  (2, 3) to (2, 4);
        
        \draw[postaction={decorate}](1, 2) to (2, 2);
        \draw[postaction={decorate}](2, 1) to (3, 1);
        
        \draw[postaction={decorate}](3, 1) to (3, 0);
        \draw[->] (3, 0) to (3, -0.5);
        \draw[postaction={decorate}](3, 4.5) to (3, 4);
        \draw[postaction={decorate}](3, 4) to (3, 3);
        \draw[postaction={decorate}](3, 3) to (3, 2);                
        \end{tikzpicture}

    \caption{The idea behind Proposition \ref{prop:getLB}.  The integer $y$ corresponds to the number of columns where arrows point up (corresponding to $-a_1$ edges), so that in this picture, $y=3.$}
    \label{fig:getLB}
\end{figure}
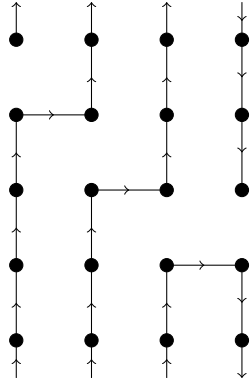

Greco and Gerace \cite{Grec07} and Gerace and Greco \cite{Ger08b} make two other observations that will be useful to us.  These results relate the number of $+a_2$ and $-a_2$ edges used in an optimal Hamiltonian cycle.  Let $a_+$ and $a_-$ respectively denote the number of $+a_2$ and $-a_2$ edges used in an optimal Hamiltonian cycle (again, directed arbitrarily).  Note that the cost of the tour is $a_++a_-.$ 

\begin{prop}[Greco and Gerace \cite{Grec07} and Gerace and Greco \cite{Ger08b}]\label{prop:alpha}
There exists a minimum-cost Hamiltonian cycle in which $a_+ - a_- \in \{0, g_1\}.$
\end{prop}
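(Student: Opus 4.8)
The plan is to show that \emph{every} minimum-cost Hamiltonian cycle, once suitably oriented, already satisfies $a_+ - a_- \in \{0, g_1\}$; no exchange argument or modification of the tour is needed, and the proposition follows at once.

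First I would take an arbitrary minimum-cost Hamiltonian cycle, orient it arbitrarily, and also write $a_1^+$ and $a_1^-$ for the numbers of $+a_1$ and $-a_1$ edges it uses. Because the oriented cycle is a closed walk returning to its start, the sum of its edge displacements is $\equiv_n 0$, i.e. $(a_1^+ - a_1^-)\,a_1 + (a_+ - a_-)\,a_2 \equiv_n 0$. The key move is to reduce this congruence modulo $g_1$: since $g_1 = \gcd(n, a_1)$ divides $a_1$, the first term disappears, leaving $(a_+ - a_-)\,a_2 \equiv_{g_1} 0$. Now $\gcd(g_1, a_2) = \gcd(\gcd(n, a_1), a_2) = \gcd(n, a_1, a_2) = g_2 = 1$ (we are in the nontrivial case $g_1 > g_2 = 1$), so $a_2$ is invertible mod $g_1$ and hence $g_1 \mid (a_+ - a_-)$ by Proposition \ref{prop:solv}. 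Writing $a_+ - a_- = \ell g_1$ with $\ell \in \Z$, and reversing the orientation of the cycle if necessary (this swaps $a_+$ with $a_-$ and $a_1^+$ with $a_1^-$), I may assume $\ell \geq 0$.

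Finally I would invoke the upper bound of Proposition \ref{prop:UB}: the optimal cost is at most $2(g_1 - 1) < 2g_1$, so $\ell g_1 = a_+ - a_- \leq a_+ + a_- \leq 2(g_1 - 1) < 2g_1$, which forces $\ell \leq 1$ and therefore $a_+ - a_- = \ell g_1 \in \{0, g_1\}$.

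There is no genuine obstacle here; the argument is essentially a parity/congruence observation combined with the already-established $2$-approximation bound. The only points needing a little care are correctly unwinding the nested gcd to conclude $\gcd(g_1, a_2) = g_2 = 1$, and noting that it is precisely the freedom to orient the cycle so that $a_+ \geq a_-$ that upgrades the a priori conclusion $a_+ - a_- \in \{-g_1, 0, g_1\}$ to $a_+ - a_- \in \{0, g_1\}$.
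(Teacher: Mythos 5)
Your proof is correct and follows essentially the same route as the paper: establish $a_+ - a_- \equiv_{g_1} 0$, use the cost-$2(g_1-1)$ upper-bound tour of Proposition \ref{prop:UB} to rule out $|a_+ - a_-| \geq 2g_1$, and reverse the orientation if necessary to land in $\{0, g_1\}$. The only cosmetic difference is that you derive the congruence by summing signed edge displacements mod $n$ and reducing mod $g_1$ (making the use of $\gcd(g_1,a_2)=g_2=1$ explicit), whereas the paper gets it by tracking the column shifts of the $\pm a_2$ edges; these amount to the same observation.
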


\begin{proof}
Consider the graph of $C\langle\{a_1\}\rangle.$  Recall that edges of length $\pm a_2$ connect one connected component of this graph to another. Following our convention for drawing graphs of two-stripe instances, each connected components corresponds to a column of the two-stripe graph.  Edges of length $+a_2$ join a column to the column on its right (wrapping around from the last column to the first if need be), whereas edges of length $-a_2$ join a column to  the column on its left (wrapping around from the first column to the last if need be).  Since any Hamiltonian cycle starts and ends in the same column, we have that $a_+ - a_- \equiv_{g_1} 0.$  Any Hamiltonian cycle where $|a_+ - a_-| > g_1$ must cost at least $2g_1$ and cannot be optimal by Proposition \ref{prop:UB}.  Hence, in a minimum-cost Hamiltonan cycle, $a_+ - a_- \in \{0, g_1, -g_1\}.$  If $a_+ - a_- = -g_1,$ the tour can be reversed to attain a tour of the same cost, where $a_+-a_- = g_1.$ 
\end{proof}

\begin{prop}[Greco and Gerace \cite{Grec07} and Gerace and Greco \cite{Ger08b}]\label{prop:forcedUB}
If $a_+ = a_-$ in a minimum-cost Hamiltonian cycle, then the minimum-cost Hamiltonian cycle costs $2(g_1-1).$ 
\end{prop}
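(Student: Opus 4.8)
The plan is to prove the matching lower bound --- that \emph{any} Hamiltonian cycle on a two-stripe instance with $a_+ = a_-$ uses at least $2(g_1-1)$ horizontal edges --- and then invoke Proposition~\ref{prop:UB}. Since the cost of a Hamiltonian cycle equals its number of horizontal ($a_2$) edges, namely $a_+ + a_-$, and since Proposition~\ref{prop:UB} guarantees a minimum-cost cycle costs at most $2(g_1-1)$, this lower bound forces the given minimum-cost cycle with $a_+ = a_-$ to cost exactly $2(g_1-1)$. Note the lower bound itself does not need minimality; only the final ``$=$'' does.

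For the set-up, fix a Hamiltonian cycle $H$ with $a_+ = a_-$ and direct it arbitrarily. Work with the columns of Claim~\ref{cm:mergeCi} (the components of $C\langle\{a_1\}\rangle$, cyclically ordered $0, 1, \dots, g_1-1$ so that $+a_2$ edges run from column $j$ to column $j+1 \bmod g_1$). For each $j$ let $x_j$ be the number of horizontal edges of $H$ joining column $j$ to column $j+1 \bmod g_1$, split as $x_j = x_j^+ + x_j^-$ according to whether $H$ traverses the edge in the increasing-column ($+a_2$) or decreasing-column ($-a_2$) direction. Then $\cost(H) = \sum_j x_j$, $\sum_j x_j^+ = a_+$, and $\sum_j x_j^- = a_-$.

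The core step is flow conservation at each column. Because $H$ is a single directed closed walk, it crosses every vertex-cut equally often in each direction; applying this to the cut around column $j$ --- whose only crossing edges are the $x_{j-1}$ edges to column $j-1$ and the $x_j$ edges to column $j+1$ --- yields $x_{j-1}^+ + x_j^- = x_{j-1}^- + x_j^+$, so that $x_j^+ - x_j^-$ is a constant $w$ (the winding number of $H$ around the cylinder) independent of $j$. Summing over $j$ gives $a_+ - a_- = g_1 w$, so $a_+ = a_-$ forces $w = 0$; hence $x_j^+ = x_j^-$ and $x_j = 2 x_j^+$ is even for every $j$.

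To conclude, note that $H$ is connected and visits every column, so the set of cyclically-adjacent column pairs joined by an edge of $H$ --- equivalently $\{\,j : x_j \ge 1\,\}$ --- forms a connected spanning subgraph of the $g_1$-cycle on the columns, and therefore omits at most one of its $g_1$ edges. Thus at least $g_1-1$ of the $x_j$ are nonzero, and being even are each at least $2$, so $\cost(H) = \sum_j x_j \ge 2(g_1-1)$; with Proposition~\ref{prop:UB} this gives $\cost(H) = 2(g_1-1)$. The step I expect to demand the most care is the flow-conservation bookkeeping --- correctly attributing each $x_j^{\pm}$ to edges entering versus leaving a given column, dealing with the wrap-around between columns $g_1-1$ and $0$, and cleanly justifying the ``equal crossings'' principle --- together with the small topological fact that a connected spanning subgraph of a cycle misses at most one edge; everything else is routine.
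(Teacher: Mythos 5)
Your proof is correct, but it takes a genuinely different route from the paper's. The paper argues via a cancellation scheme (adapted from Claim 4.3 of Gutekunst--Williamson): list the horizontal edges of the tour in order, repeatedly cancel an adjacent $+a_2,-a_2$ (or $-a_2,+a_2$) pair, and observe that each cancellation accounts for at most one newly visited column; this bounds the number of columns visited by $\max\{a_+,a_-\}+1$, so $\max\{a_+,a_-\}\geq g_1-1$, and with $a_+=a_-$ the cost is at least $2(g_1-1)$, with equality from the upper-bound tour of Proposition~\ref{prop:UB}. You instead do per-column flow conservation: the cut around each column gives $x_j^+-x_j^-$ constant in $j$ (a winding number $w$ with $a_+-a_-=g_1w$), so $a_+=a_-$ forces $w=0$ and every inter-column boundary is crossed an even number of times; connectivity of the contracted column cycle (Claim~\ref{cm:mergeCi}) means at most one boundary is uncrossed, so at least $g_1-1$ boundaries carry at least $2$ edges each, and Proposition~\ref{prop:UB} plus minimality finishes as in the paper. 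Your version buys a slightly stronger local structure (evenness of each crossing count, and the congruence $a_+-a_-\equiv_{g_1}0$ underlying Proposition~\ref{prop:alpha} falls out as a byproduct), at the price of the cut/orientation bookkeeping; the paper's version is shorter given the cited machinery and works directly with the global count of columns visited. All the steps you flag as delicate (equal crossings of a cut by a closed directed walk, the wrap-around boundary, and that a connected spanning subgraph of a cycle omits at most one edge) do go through, including in the degenerate case $g_1=2$.
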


\begin{proof}
We analogize an argument from Claim 4.3 of Gutekunst and Williamson \cite{Gut19b}. Consider the columns of our circulant graph (i.e., the components of $C\langle\{a_1\}\rangle$).  Each edge of length $+a_2$ moves one column to the right, while edges of length $-a_2$ move us one column to the left (in both cases, wrapping from column $g_1-1$ to $0$ or vice versa when necessary).   A Hamiltonian tour must visit all columns, and the only way we move between columns is using edges of length $+a_2$ and $-a_2$.  If we view the Hamiltonian cycle as a sequence of edges $e_1, ...., e_n$, we can delete all the edges of length $\pm a_1$ to attain a subsequence $L'=(e_{i_1}, ..., e_{i_{\ell}})$ where $i_1<i_2<\cdots i_{\ell}, e_{i_j}\in \{\pm a_2\}.$ We upper bound the number of columns visited as follows: Set $U=1,$ corresponding to starting at some column.  Until $L'$ is either all $+a_2$s or all $-a_2$s, find an occurrence of a  $+a_2$ followed by a $-a_2$ in $L'$ (or a  $-a_2$ followed by an $+a_2$); delete these two elements and increment $U$ by 1.  Once this process terminates, increment $U$ by $|L'|$ (the number of $+a_2$s or $-a_2$s remaining when $L'$ is either all $+a_2$s or all $-a_2$s).  Note that, at the end, $U=\max\{a_+, a_-\}+1.$  $U$ provides an upper bound on the number of columns visited: Any time an $+a_2$ is followed by a $-a_2$ in $L'$ (or vice versa), the effect is to move to one adjacent column and then move back. Hence we visit at most one new column.  Thus in order to visit the $g_1$ columns, we need $U=g_1$, so if $a_+ = a_-$, then we must have $a_+=a_-=g_1 -1$.
\end{proof}

Based on the above results, Greco and Gerace \cite{Grec07} and Gerace and Greco \cite{Ger08b} are able to identify a few types of instances for which two-stripe TSP is solvable.  For example, if $g_1=2$, then the upper and lower bounds match, so that an upper bound tour is optimal.  Finally, they present their main theorem, which provides a  sufficient (but not necessary) condition for the upper bound to be optimal, and constructs an additional tour. We state it without proof below, as we will provide a full resolution to two-stripe TSP that does not rely on it.  

\begin{prop}[Theorem 4.4 in  Gerace and Greco \cite{Ger08b}]\label{prop:GGFull}
Let $$S=\left\{y: 0\leq y < \f{n}{g_1}, \, (2y-g_1)a_1+g_1a_2 \equiv_n 0\right\}.$$  If $S$ is empty, the upper bound of $2(g_1-1)$ is optimal.  Otherwise, let $y_1=\min\{S\}$ and $y_2 = \max\{S\}.$  Let $m=\min\{y_1 - g_1, \f{n}{g_1}-y_2\}.$  If $m\leq 0$, the conditions of Proposition \ref{prop:getLB} are satisfied and the lower bound of $g_1$ is optimal.  Otherwise, there exists a tour of cost $g_1+2m.$
\end{prop}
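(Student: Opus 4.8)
The plan is to prove the three cases of the conclusion separately, leaning on Propositions~\ref{prop:LB}--\ref{prop:forcedUB} and on the structure (Proposition~\ref{prop:Ham} and Claim~\ref{cm:mergeCi}) that each ``column'' -- each component of $C\langle\{a_1\}\rangle$ -- is a cycle on $r := n/g_1$ vertices, with $+a_2$ edges moving one column to the right around a $g_1$-cycle of columns. The case $m\le 0$ is quick: since $y_2=\max S\le r-1$ we have $r-y_2\ge 1$, so $m\le 0$ forces $y_1=\min S\le g_1$, and then $y_1$ satisfies the hypothesis $0\le y\le g_1$, $(2y-g_1)a_1+g_1a_2\equiv_n 0$ of Proposition~\ref{prop:getLB}, producing a Hamiltonian cycle of cost $g_1$, which is optimal by Proposition~\ref{prop:LB}.

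For the case $S=\emptyset$, Proposition~\ref{prop:UB} already supplies a cycle of cost $2(g_1-1)$, so I only need a matching lower bound. The key claim I would prove is: \emph{if some Hamiltonian cycle has $|a_+-a_-|=g_1$, then $S\ne\emptyset$.} Reversing the cycle if needed, assume $a_+-a_-=g_1$, and let $b_\pm$ count its $\pm a_1$ edges. Zero total displacement gives $(b_+-b_-)a_1\equiv_n -g_1a_2$, so (using $\gcd(a_1,n)=g_1$) one checks $y\in S$ iff $2y\equiv g_1+(b_+-b_-)\pmod r$, which by Proposition~\ref{prop:solv} is solvable iff $\gcd(2,r)$ divides $g_1+(b_+-b_-)$. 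But $b_+-b_-\equiv b_++b_- = n-(a_++a_-)\equiv n-g_1\pmod 2$ (since $a_++a_-\equiv a_+-a_-=g_1$), so $g_1+(b_+-b_-)\equiv n\pmod 2$; and if $r$ is even then $n=g_1r$ is even, so the divisibility holds. Hence $S\ne\emptyset$. Granting the claim: if $S=\emptyset$ then no Hamiltonian cycle has $|a_+-a_-|=g_1$, and since $a_+-a_-\equiv_{g_1}0$ always, every cycle has $a_+=a_-$ or $|a_+-a_-|\ge 2g_1$ -- the latter costing $\ge 2g_1>2(g_1-1)$ and hence non-optimal by Proposition~\ref{prop:UB}. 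So an optimal cycle has $a_+=a_-$, and Proposition~\ref{prop:forcedUB} pins its cost at $2(g_1-1)$.

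For the case $S\ne\emptyset$ with $m>0$, every element of $S$ lies strictly between $g_1$ and $r$, and $m$ is attained at $y^*\in\{y_1,y_2\}$; reversing orientation if needed I may assume $y^*=g_1+m$. I would then construct a Hamiltonian cycle of cost $g_1+2m$ as a ``folded'' version of the construction in Proposition~\ref{prop:getLB}: traverse $g_1-2m$ of the columns in full in the $-a_1$ direction, joined by single $+a_2$ edges, while the remaining $2m$ columns are grouped into $m$ ``folded pairs'' in which a column is split into a length-$2$ arc and its complementary arc, with an extra $+a_2,-a_2$ pair used to leave and re-enter, so that every vertex is still visited, $g_1-1+2m$ horizontal edges are used, and the net vertical displacement works out to $(2y^*-g_1)a_1=(g_1+2m)a_1$. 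The path then ends at $(g_1-1)a_2+(2y^*-g_1)a_1$, and appending one more $+a_2$ edge reaches $g_1a_2+(2y^*-g_1)a_1\equiv_n 0$ precisely because $y^*\in S$, closing a Hamiltonian cycle of cost $g_1+2m$.

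The step I expect to be the main obstacle is this last construction: arranging the arc lengths of the folded columns so that the folds reconnect into a single Hamiltonian path -- a short system of congruences forces most of these arcs to have length $2$, as one already sees working out $g_1=4$, $m=1$ by hand -- and verifying it goes through whenever $m$ is small enough that $g_1+2m$ improves on the upper bound $2(g_1-1)$ (for larger $m$ the upper-bound tour is cheaper, so the claim is only meaningful in that range). The remaining ingredients -- the case split, the parity computation, and the degenerate cases $a_1=n/2$ and $g_1\ge r$ -- should be routine.
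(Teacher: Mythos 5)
Your first two cases are correct. The $m\le 0$ case is exactly the intended argument: $r-y_2\ge 1$ forces $y_1\le g_1$, and Proposition~\ref{prop:getLB} plus Proposition~\ref{prop:LB} finish. Your $S=\emptyset$ case also goes through: the identity $(b_+-b_-)a_1\equiv_n -(a_+-a_-)a_2$ from zero net displacement, the reduction of membership in $S$ to $2y\equiv_r g_1+(b_+-b_-)$ via $\gcd(a_1,n)=g_1$, and the parity check $g_1+(b_+-b_-)\equiv n\pmod 2$ (so that $\gcd(2,r)$ divides it when $r$ is even) correctly establish that a tour with $|a_+-a_-|=g_1$ forces $S\ne\emptyset$; combined with Propositions~\ref{prop:alpha}, \ref{prop:forcedUB} and \ref{prop:UB} this pins the optimum at $2(g_1-1)$. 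This is precisely the algebra the paper attributes to Greco and Gerace (the paper itself states the proposition without proof), and you have filled it in correctly.

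The genuine gap is the third case, and you flag it yourself: the ``folded pairs'' construction of a tour of cost $g_1+2m$ is only a plan, and as described it would not prove the statement. Spreading the $m$ extra horizontal pairs over $2m$ distinct columns needs $2m\le g_1$, which the hypotheses do not guarantee (they only give $2m\le (y_1-g_1)+(r-y_2)\le r-g_1$), and the reconnection of the folds into a single Hamiltonian cycle is exactly the part you leave unverified; restricting to ``$m$ small enough to beat the upper bound'' would also leave the proposition as stated unproven. The construction that works, and that the paper sketches in Appendix~\ref{app:weimplygg} (Claim~\ref{cm:GGequiv2}), concentrates all the extra horizontal edges between the first two columns: take the Proposition~\ref{prop:getLB}-style path in which every vertical step is $-a_1$ (for $m=y_1-g_1$; use $+a_1$ for $m=\frac{n}{g_1}-y_2$, the two cases being exchanged by the negation automorphism $v\mapsto -v$, not by reversing the traversal), but with $2m+1$ horizontal edges alternating between the first two columns, i.e.\ a GG path. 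It uses $m$ edges of $-a_2$, $g_1-1+m$ of $+a_2$, and $n-g_1-2m$ vertical edges, hence ends at $(g_1-1)a_2+(g_1+2m)a_1\equiv_n -a_2$ precisely because $y^*=g_1+m$ (resp.\ $\frac{n}{g_1}-m$) lies in $S$; one closing $+a_2$ edge gives a Hamiltonian cycle of cost $g_1+2m$. Feasibility is automatic since $2m+1\le r-1$, so no case split on the size of $m$ is needed. Replacing your folded construction with this one closes the gap.
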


Note that the congruence defining $S$ is the same as in Proposition \ref{prop:getLB}.  Hence, if $\f{n}{g_1}-1\leq g_1$, this theorem implies that the optimal solution will always either be the upper-bound tour (if $S$ is empty) or the lower-bound tour (if $S$ is nonempty, in which case any element must be at most $g_1$ and $m$ will be non-positive).  

To prove sufficiency of the upper-bound condition, Greco and Gerace do a small amount of algebra (in the style of Proposition \ref{prop:getLB}'s proof): For the upper bound to be non-optimal, there must exist a tour where $a_+-a_- = g_1$. Greco and Gerace argue that any such tour implies that $S$ is non-empty. 

Ultimately, the main results of  Greco and Gerace \cite{Grec07} and Gerace and Greco \cite{Ger08b} provide sufficient (but not necessary) condition for the upper- and lower-bounds in Propositions \ref{prop:LB} and \ref{prop:UB}. Thus they provide a new, algebraic test that can identify certain instances where the upper- or lower-bounds occur, but leave as open fully characterizing either extreme, as well as determining  the optimal solution to any instance where the optimal value is between those bounds. 

Finally,  Greco and Gerace \cite{Grec07} and Gerace and Greco \cite{Ger08b} conjecture that, in cases where the lower bound is not achievable, the optimal solution is either the upper bound or their tours of cost $g_1+2m$ identified in their main theorem.  Our resolution of the two-stripe TSP confirms this conjecture.  We show how Theorem \ref{thm:main} implies Proposition \ref{prop:GGFull} in Appendix \ref{app:weimplygg}.

\section{GG Paths and a Reduction to Hamiltonian Paths on Cylinder Graphs}
\label{sec:reduction}
In this section, we first introduce specific Hamiltonian paths on cylinder graphs, which we will call \emph{GG paths}. We name these paths after Gerace and Greco (\cite{Ger08b}, \cite{Grec07}), who conjectured that they are sufficient to describe an optimal tour in settings where the upper bound is not optimal. We then present our first main theorem, Theorem \ref{thm:reduction}. This theorem will be the first step in solving the two-stripe TSP. Specifically, Theorem \ref{thm:reduction} reduces the two-stripe TSP to finding a minimum-cost Hamiltonian path (between two specified vertices) on an associated cylinder graph. It will then remain to argue that the optimal Hamiltonian path (between those two specified vertices) on that cylinder graph will always be a GG path.

Theorem \ref{thm:reduction} shows that, to solve two-stripe TSP instances, it suffices to study Hamiltonian paths on cylinder graphs that start in the first column and end in the last column. Without loss of generality, we will assume (and draw) such paths as starting at the top-left vertex: the vertex $(0, 0).$  The cost of such a path will be determined by the number of horizontal edges used.  Theorem \ref{thm:main} will then be used to argue that, among such all 
Hamiltonian  paths (on a cylindrical graph, starting at $(0, 0)$ and ending at a particular vertex in the last column), there is always an extremely structured optimal Hamiltonian path: a \emph{GG path}.  

We will formally define GG paths in Section \ref{s:GG}.  Informally speaking, a GG path is a path where ``all of the funky business'' happens between the first and second column:  For a Hamiltonian path to start in the first column and end in the last column, it must use an odd number of edges between every pair of consecutive columns.  In a GG path, an odd number of horizontal edges are used between the first and second column, but then a single edge is used between every remaining pair successive columns.  Moreover, if a GG path uses $2m+1$ horizontal edges between the first and second column, it will first use $r-2m-1$ vertical edges, and will then alternate between vertical and horizontal edges until all vertices in the first column have been visited.  See Figure \ref{fig:GG3}.

\begin{figure}[t!]

    \centering
        \begin{subfigure}[t]{0.45\textwidth}
        \centering
        \begin{tikzpicture}[scale=0.65]
        \foreach \x in {0,1} {
            \foreach \y in {0,1,...,8} {
                \vertex (\x\y) at (\x, \y) {};
            }
        }   
        \foreach \i in {0, 1} {
            \foreach \j in {3, 4, ..., 7} {
            \draw (\i, \j) to (\i, \j+1);}
        }
        
        \foreach \j in {0, 1, 2} {
        \draw (0, \j) to (1, \j);}
        
        \draw (0, 2) to (0, 3);
        \draw (0, 0) to (0, 1);
        \draw (1, 1) to (1, 2);
        \draw [->]  (1, 0) to (1, -0.5);
        \draw    (1, 8) to (1, 8.5);
        
        \end{tikzpicture}
        \end{subfigure}
        \hspace{5mm}
        \centering
        \begin{subfigure}[t]{0.45\textwidth}
        \centering
        \begin{tikzpicture}[scale=0.65]
        \foreach \x in {0,1} {
            \foreach \y in {6, 7,...,14} {
                \vertex (\x\y) at (\x, \y) {};
            }
        }   
        \foreach \i in {0, 1} {
            \foreach \j in {6, 7, ..., 9} {
            \draw (\i, \j) to (\i, \j+1);}
        }
        
        \foreach \j in {11, 12, 13} {
        \draw (0, \j) to (1, \j);}
        
        \draw (0, 10) to (0, 11);
        \draw [->] (0, 14) to (0, 14.5);
        \draw (0, 6) to (0, 5.5);
        \draw (0, 12) to (0, 13);
        \draw [->] (1, 14) to (1, 14.5);
        \draw (1, 6) to (1, 5.5);       
        \draw (1, 11) to (1, 12);
        \draw (1, 13) to (1, 14);

        \end{tikzpicture}
        \end{subfigure}
        
    \caption{The start of GG paths using 3 horizontal edges between the first and second column.  Note that there are two choices: the first edge goes from $(0, 0)$ to $(1, 0)$ or wraps vertically from $(0, 0)$ to $(r-1, 0)$.}
    \label{fig:GG3}
\end{figure}
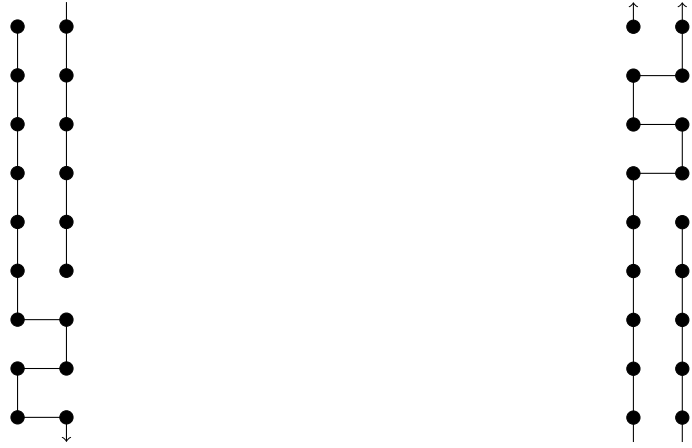

\subsection{Reduction of 2-stripe problem to cylinder graph problem}\label{s:red1}

We first reduce two-stripe TSP instances to Hamiltonian paths on cylinder graphs that start in the first column and end in the last column.  By Proposition \ref{prop:UB}, we can always attain a tour of cost $2(g_1-1)$. Theorem \ref{thm:reduction} says that we can find a cheaper tour costing $k<2(g_1-1)$ if and only if there there is a corresponding  Hamiltonian path on a cylinder graph using $k-1$ horizontal edges.  In the theorem, we use $x$ to denote the row of the vertex $-a_2$; a Hamiltonian path from $0$ to $-a_2$ on the cylinder graph can readily be converted to a Hamiltonian cycle by adding a single length-$a_2$ edge.


\begin{thm}
\label{thm:reduction}
Let $k<2(g_1-1)$ and let $x$ be the unique solution to $$xa_1 \equiv_n -g_1 a_2$$ in $\{0, 1, ..., \f{n}{g_1}-1\}.$  Consider a 2-stripe instance with $n$ vertices and edges of length $a_1$ and $a_2.$  Then there exists a Hamiltonian cycle costing $k$  if and only if there is a Hamiltonian path on an $r \times c$ cylinder graph with $c=g_1$ and $r=\f{n}{g_1}$ using $k-1$ horizontal edges, starting at $(0, 0)$, and ending at $(x, c-1).$
\end{thm}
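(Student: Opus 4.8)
The plan is to establish a bijection-style correspondence between Hamiltonian cycles in the two-stripe instance $C\langle\{a_1,a_2\}\rangle$ and Hamiltonian paths in the cylinder graph, so that the cost of a cycle corresponds exactly to one more than the number of horizontal edges in the associated path. The starting point is Claim \ref{cm:mergeCi}: contracting each component of $C\langle\{a_1\}\rangle$ to a vertex yields a directed cycle on the $g_1$ columns, so every length-$a_2$ edge moves between cyclically-consecutive columns. I would first set up the cylinder graph with $r=\frac{n}{g_1}$ rows and $c=g_1$ columns, using the labeling convention of Figure \ref{fig:conv}: column $j$ consists of the vertices $\{\,xa_1+ja_2 \bmod n : 0\le x\le r-1\}$, and the vertex in row $x$, column $j$ is $xa_1+ja_2$. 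Vertical edges (length $a_1$) stay within a column and wrap cyclically in the row index mod $r$; horizontal edges (length $a_2$) go from column $j$ to column $j+1$ \emph{without wrapping}, for $0\le j\le c-2$.

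The key structural observation is what happens to the ``missing'' horizontal edges — those of length $a_2$ that in the circulant graph wrap from the last column back to the first. Consider any Hamiltonian cycle $H$ in $C\langle\{a_1,a_2\}\rangle$ of cost $k<2(g_1-1)$. By Proposition \ref{prop:alpha} we may assume $a_+-a_-\in\{0,g_1\}$; since $k<2(g_1-1)$, Proposition \ref{prop:forcedUB} rules out $a_+=a_-$, so $a_+-a_-=g_1$, meaning $H$ uses exactly $g_1$ more $+a_2$ edges than $-a_2$ edges. I would argue that, traversing $H$ as a directed closed walk starting at $0$, the net column-displacement is $0$ mod $g_1$ but the \emph{total} signed column movement is $+g_1$ (one full loop) — equivalently, $H$ crosses the ``seam'' between column $c-1$ and column $0$ a net of exactly once. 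Cut $H$ at that single seam-crossing edge $e^*$ (a $+a_2$ edge from some vertex $w$ in column $c-1$ to $w+a_2$ in column $0$). Removing $e^*$ leaves a Hamiltonian path $P$ in $C\langle\{a_1,a_2\}\rangle$ that uses no seam-crossing edges, hence lives entirely in the cylinder graph, from $w+a_2$ to $w$. Now relabel so the path starts at $(0,0)$: since $w+a_2$ is in column $0$, say $w+a_2 = x_0 a_1$, translate all vertices by $-x_0 a_1$ (this is a vertical shift, an automorphism of the cylinder), so the path runs from $(0,0)$ to the image of $w$. Since $w = (w+a_2)-a_2 \equiv -a_2 + x_0a_1$, its image is $-a_2$, which by definition has cylindrical coordinates $(x,c-1)$ where $x$ solves $xa_1\equiv_n -g_1a_2$ — wait, more directly: $-a_2$ lies in column $c-1$ because $-a_2 \equiv (c-1)a_2 + (\text{something})a_1 \pmod n$ using $c a_2 \equiv -(2y-g_1)a_1$-type relations from Claim \ref{cm:mergeCi}; I'd verify that the row of $-a_2$ is precisely the stated $x$. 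The path $P$ has $k-1$ horizontal edges since we deleted one of the $k$ length-$a_2$ edges of $H$.

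For the converse, given a Hamiltonian path $P$ in the $r\times c$ cylinder graph from $(0,0)$ to $(x,c-1)$ with $k-1$ horizontal edges, reinterpret its vertices as elements of $[n]$ via $(i,j)\mapsto ia_1+ja_2$; vertical and horizontal cylinder edges become length-$a_1$ and length-$a_2$ circulant edges respectively, so $P$ is a Hamiltonian path in $C\langle\{a_1,a_2\}\rangle$ from $0$ to $-a_2$. Add the length-$a_2$ edge from $-a_2$ to $0$ to close it into a Hamiltonian cycle of cost $(k-1)+1=k$. The only thing to check is well-definedness of the vertex identification — that the map $(i,j)\mapsto ia_1+ja_2 \bmod n$, for $0\le i\le r-1$, $0\le j\le c-1$, is a bijection onto $[n]$ — which follows from Proposition \ref{prop:Ham}/Claim \ref{cm:mergeCi}: the $c=g_1$ columns partition $[n]$ into the components of $C\langle\{a_1\}\rangle$, and within each component the $r=n/g_1$ vertices $ia_1+ja_2$ are distinct. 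I also need the condition $k<2(g_1-1)$ on the forward direction only (to invoke Propositions \ref{prop:alpha}–\ref{prop:forcedUB}); the backward direction produces a valid cycle of cost $k$ for any $k$, which is consistent.

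The main obstacle I anticipate is the ``single seam-crossing'' argument: carefully showing that an optimal (or merely cost-$<2(g_1-1)$) Hamiltonian cycle, after the reduction to $a_+-a_-=g_1$, crosses the last-column-to-first-column seam \emph{exactly once} rather than, say, three times (net one, but $+2$ extra round trips). A priori a cycle with $a_+-a_- = g_1$ could cross the seam $t+1$ times forward and $t$ times backward for some $t\ge 0$; cutting at one forward crossing would still leave seam-crossing edges in the path, so it would not embed in the cylinder. To handle this I would either (i) show via a counting/exchange argument in the spirit of Proposition \ref{prop:forcedUB} that any cycle of cost $<2(g_1-1)$ can be modified, without increasing cost, into one that crosses the seam exactly once — this is the natural place the hypothesis $k<2(g_1-1)$ does real work — or (ii) more robustly, observe that we are free to choose \emph{which} seam to cut: among the $c=g_1$ cyclically-consecutive column-pairs, the net number of crossings summed over all pairs is $g_1$ (one per "lap"), so by averaging some adjacent pair $(j,j+1)$ is crossed a net $+1$ time with no backward crossings at all — then rotate the column labeling so that pair becomes the seam. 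Option (ii) is cleaner and avoids a delicate exchange argument, so that is the route I would take; the remaining work is then just bookkeeping the vertex relabeling to land the path endpoints at $(0,0)$ and $(x,c-1)$.
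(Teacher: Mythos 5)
Your overall route is essentially the paper's: force $a_+-a_-=g_1$ via Propositions \ref{prop:alpha} and \ref{prop:forcedUB}, locate a pair of cyclically consecutive columns crossed by exactly one horizontal edge, use the circulant shifts by multiples of $a_2$ and $a_1$ (Propositions \ref{prop:a2shift} and \ref{prop:a1shift}) to make that pair the wrap-around seam with the unique crossing edge running from $-a_2$ to $0$, delete that edge to obtain the cylinder path, and reverse the construction by re-adding one length-$a_2$ edge; the endpoint computation $xa_1\equiv_n-g_1a_2$ and the converse direction are handled just as in the paper.

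The one genuine problem is the justification of the pivotal step in your option (ii). You argue: ``the net number of crossings summed over all pairs is $g_1$, so by averaging some adjacent pair is crossed net $+1$ with no backward crossings.'' Since $a_+-a_-=g_1$ forces winding number $1$, \emph{every} consecutive pair is crossed net exactly $+1$, so averaging the nets gives no information whatsoever about backward crossings: a tour that crosses every seam three times (two forward, one backward) satisfies your premise, yet no seam is crossed exactly once. Indeed, the hypothesis $k<2(g_1-1)$ --- which you yourself flagged as the place where real work must happen --- never enters option (ii), and without it the claim is simply false. The correct count, which is what the paper does, averages the \emph{total} number of horizontal edges: the $k<2(g_1-1)$ horizontal edges are distributed over the $g_1$ consecutive pairs, at most one pair can receive zero of them (otherwise some column is unreachable; equivalently, winding number $1$ forces each pair to be crossed at least once), so if every crossed pair carried at least two edges the tour would cost at least $2(g_1-1)>k$; hence some pair carries exactly one horizontal edge, and that is the seam to cut. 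With this one-line repair your argument coincides with the paper's proof; the remaining bookkeeping (row of $-a_2$ via Proposition \ref{prop:solv}, bijectivity of $(i,j)\mapsto ia_1+ja_2$) is as you describe.
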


The idea of the proof is that a tour of cost less than $2(g_1-1)=2(c-1)$ must have some pair of consecutive columns such that only one horizontal edge of the tour crosses between the two columns.  We can ``shift'' the tour so that this edge wraps around from column $c-1$ to 0, and so that this edge passes through $(0,0)$.  Removing the wraparound edge leaves a Hamiltonian path from (0,0) to $(x,c-1)$, where the endpoint is equivalent to the vertex $-a_2$. The next two propositions make this precise.


\begin{prop}\label{prop:a1shift}
Let $H=(v_0, v_1, v_2, ..., v_{n-1}, v_n=v_0)$ be a Hamiltonian cycle in $C\langle\{a_1, a_2\}\rangle.$  Let $H'=(u_0, u_1, ..., u_{n-1}, u_n=0)$ be the Hamiltonian cycle where $u_i=v_i+a_1.$  Then  for any two columns $C_s$ and $C_{s+1\mod g_1}$, the number of horizontal edges between $C_s$ and $C_{s+1 \mod g_1}$ is the same in both $H$ and $H'$.  Moreover, if a horizontal edge is in row $k$ of $H$, then it is in row $k+1 \mod r$ of $H'$.
\end{prop}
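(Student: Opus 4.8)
The plan is to realize $H'$ as the image of $H$ under a single graph automorphism of $C\langle\{a_1,a_2\}\rangle$ and then read both assertions off from how that automorphism acts. The automorphism I would use is translation by $a_1$: define $\phi(v)=v+a_1$ (labels mod $n$, as always). Since $\phi(u)-\phi(v)\equiv_n u-v$, the map $\phi$ is a bijection of $[n]$ that sends every edge of length $\ell$ to an edge of length $\ell$; in particular it is an automorphism of $C\langle\{a_1,a_2\}\rangle$ that preserves the partition of the edges into horizontal edges (length $a_2$) and vertical edges (length $a_1$). Because $u_i=v_i+a_1=\phi(v_i)$, the cycle $H'$ is exactly $\phi(H)$ --- hence again a Hamiltonian cycle --- and $\phi$ restricts to a length-preserving bijection from the edge set of $H$ onto the edge set of $H'$. (The normalization $u_n=0$ in the statement just amounts to starting the cyclic listing of $H$ at the vertex $-a_1$, which is harmless and plays no role below.)

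Next I would determine how $\phi$ moves vertices relative to the drawing convention of Figure~\ref{fig:conv}. By Proposition~\ref{prop:Ham}, two vertices lie in the same component of $C\langle\{a_1\}\rangle$ --- equivalently, in the same column --- precisely when they are congruent mod $g_1$; since $g_1\mid a_1$ we have $\phi(v)\equiv_{g_1}v$, so $\phi$ maps each column onto itself. Within column $j$ the vertex in row $i$ is $ia_1+ja_2$ for $0\le i\le r-1$, and $\phi(ia_1+ja_2)=(i+1)a_1+ja_2$; this is the row-$(i+1)$ vertex of column $j$ when $i<r-1$, and when $i=r-1$ it equals $ra_1+ja_2=n\cdot\frac{a_1}{g_1}+ja_2\equiv_n ja_2$, the row-$0$ vertex of column $j$. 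So $\phi$ acts on every column as the cyclic shift that sends row $i$ to row $(i+1)\bmod r$.

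With these two facts the proposition follows quickly. Fix an edge $e=\{u,v\}$ of $H$. Since $\phi$ preserves edge lengths, $e$ is horizontal if and only if $\phi(e)$ is horizontal; since $\phi$ fixes columns setwise, $e$ joins columns $C_s$ and $C_t$ if and only if $\phi(e)$ joins $C_s$ and $C_t$. Hence $e\mapsto\phi(e)$ is a bijection from the horizontal edges of $H$ between $C_s$ and $C_{s+1\bmod g_1}$ to the horizontal edges of $H'$ between that same pair of columns, which is the first claim. For the second claim, a horizontal edge lying in row $k$ has --- for an edge that does not wrap between column $g_1-1$ and column $0$, at both of --- its endpoints in row $k$, so by the row-shift its image under $\phi$ has the corresponding endpoint(s) in row $(k+1)\bmod r$; that is, $\phi(e)$ is a horizontal edge in row $(k+1)\bmod r$ of $H'$. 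The one place that needs genuine care is this last row computation: one must use $g_1\mid a_1$ both to see that $\phi$ fixes columns and to get $ra_1\equiv_n 0$, which is exactly what makes the row wraparound be mod $r$; and for the single horizontal edge that wraps from column $g_1-1$ to column $0$, whose two endpoints lie in different rows, one should note that $\phi$ shifts each of those rows by one, so the assertion is unambiguous however one chooses to call ``the row'' of that edge.
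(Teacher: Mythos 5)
Your proof is correct and uses essentially the same argument as the paper: translation by $a_1$ preserves edge lengths (hence the horizontal/vertical classification), fixes each column since $g_1\mid a_1$, and shifts each row index by one mod $r$, from which both claims follow. The extra care you take with the column-wraparound edge and the explicit computation $ra_1\equiv_n 0$ only spells out details the paper leaves to its drawing convention.
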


See, e.g, Figure \ref{fig:a1shifts}, where  every vertex in left cycle $H$ has its label shifted by $a_1$ to get the right cycle (e.g. the edge between 15 and 18 is now between 20 and 23, one row lower).  Recall also that the columns of $C\langle\{a_1, a_2\}\rangle$ are the components of $C\langle\{a_1\}\rangle.$

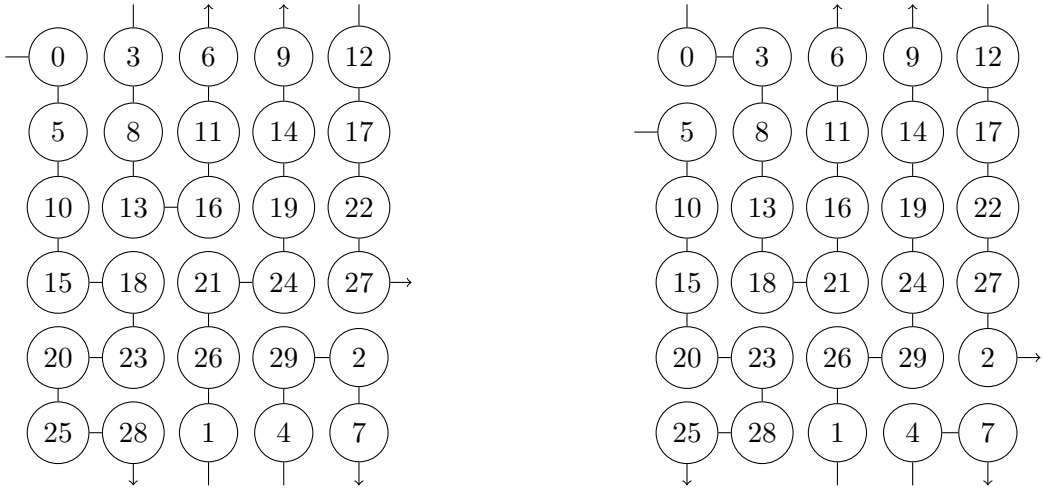
\begin{figure}[t!]
    \centering
    \begin{subfigure}[t]{0.4\textwidth}
        \centering
        \begin{tikzpicture}[scale=1]
        \tikzset{vertex/.style = {shape=circle,draw,minimum size=2em}}
        \foreach \x in {0,1,...,4} {
            \foreach[evaluate = {\l = int(mod(\x*3+(5-\y)*5, 30))}] \y in {0,1,...,5}{
                \vertex (\x\y) at (\x, \y) {\l};
            }
        }

        \draw[->] (10) to (1, -0.7);
        \draw (1, 5.7) to (15);        
        \draw[->] (40) to (4, -0.7);
        \draw (4, 5.7) to (45);
        
        \draw (2, -0.7) to (20);
        \draw[->] (25) to (2, 5.7);
        \draw (3, -0.7) to (30);
        \draw[->] (35) to (3, 5.7);
        
        \draw (05) to (04) to (03) to (02) to (12) to (11) to (01) to (00) to (10);
        \draw (15) to (14) to (13) to (23) to (24) to (25);
        \draw (20) to (21) to (22) to (32) to (33) to (34) to (35);
        \draw (30) to (31) to (41) to (40);
        \draw (45) to (44) to (43) to (42);
        \draw[->] (42) to (4.7, 2);
        \draw (-0.7, 5) to (05);
       
        \end{tikzpicture}
        \end{subfigure}        
        \hspace{15mm}
        \begin{subfigure}[t]{0.4\textwidth}
        \centering
        \begin{tikzpicture}[scale=1]
        \tikzset{vertex/.style = {shape=circle,draw,minimum size=2em}}

        \foreach \x in {0,1,...,4} {
            \foreach[evaluate = {\l = int(mod(\x*3+(5-\y)*5, 30))}] \y in {0,1,...,5}{
                \vertex (\x\y) at (\x, \y) {\l};
            }
        }   
        
        \draw (05) to (15) to (14) to (13) to (12) to (22) to (23) to (24) to (25);
        \draw[->] (25) to (2, 5.7);
        \draw (2, -0.7) to (20);
        \draw (20) to (21) to (31) to (32) to (33) to (34) to (35);
        \draw[->] (35) to (3, 5.7);
        \draw (3, -0.7) to (30);
        \draw (30) to (40);
        \draw[->] (40) to (4, -0.7);
        \draw (4, 5.7) to (45);
        \draw (45) to (44) to (43) to (42) to (41);
        \draw[->] (41) to (4.7, 1);
        \draw (-0.7, 4) to (04);
        \draw (04) to (03) to (02) to (01) to (11) to (10) to (00);
        \draw[->] (00) to (0, -0.7);
        \draw (0, 5.7) to (05);
        \end{tikzpicture}
        \end{subfigure}        
    \caption{The left image is a Hamiltonian cycle in a two-stripe instance with $n=30, a_1=5$ and $a_2=3.$  The right shows the effect of adding $a_1$ to every vertex label (so that, e.g., the edge between $15$ and $18$ becomes an edge between $20$ and $23$). }
    \label{fig:a1shifts}
\end{figure}

\begin{proof}
The first statement follows from the fact that, $$u_i = v_i + a_1 \equiv_{g_1} v_i,$$ since $g_1$ divides $a_1.$  Hence $u_i$ and $v_i$ are in the same column.  Further, by our drawing convention, $u_i$ is exactly one row below $v_i$ (wrapping around from row $r-1$ to row $0$ if need be). Note also that any edge from $u_i$ to $u_{i+1}$ has the same length as the edge from $v_i$ to $v_{i+1}$, so that this shift does not change the total number of horizontal/vertical edges.
\end{proof}

\begin{prop}\label{prop:a2shift}
Let $H=(v_0, v_1, v_2, ..., v_{n-1}, v_n=v_0)$ be a Hamiltonian cycle in $C\langle\{a_1, a_2\}\rangle.$  Let $H'=(u_0, u_1, ..., u_{n-1}, u_n=0)$ be the Hamiltonian cycle where $u_i=v_i+a_2.$  Then  for any two columns $C_s$ and $C_{s+1\mod g_1}$, the number of horizontal edges between $C_s$ and $C_{s+1 \mod g_1}$ in $H$ equals
the number of horizontal edges between $C_{s+1 \mod g_1}$ and $C_{s+2 \mod g_1}$ in $H'$.
\end{prop}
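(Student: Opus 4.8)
The plan is to mirror the proof of Proposition \ref{prop:a1shift}, replacing the vertical (row) structure of $C\langle\{a_1\}\rangle$ used there with the horizontal (column) structure captured by Claim \ref{cm:mergeCi}. First I would observe that $H'$ is indeed a Hamiltonian cycle: the map $v \mapsto v + a_2 \bmod n$ is a graph automorphism of $C\langle\{a_1, a_2\}\rangle$ because it preserves the length of every edge, so it carries $H$ to $H'$. Moreover, for each $i$ the edge $\{u_i, u_{i+1}\}$ has the same difference $u_{i+1} - u_i = v_{i+1} - v_i$ as $\{v_i, v_{i+1}\}$, hence the same length; so the shift sends horizontal edges to horizontal edges and vertical edges to vertical edges.

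The key step is to track how the shift by $a_2$ permutes the columns. Since $g_2 = 1$, Claim \ref{cm:mergeCi} tells us that the $g_1$ columns (the components of $C\langle\{a_1\}\rangle$) are arranged, with respect to $\pm a_2$ edges, in a single directed cycle; under our drawing convention a vertex in column $C_s$ is sent by $+a_2$ to a vertex in column $C_{s+1 \bmod g_1}$ (and by $-a_2$ to one in $C_{s-1 \bmod g_1}$). Consequently, if $\{v_i, v_{i+1}\}$ is a horizontal edge of $H$ with endpoints in $C_s$ and $C_{s+1 \bmod g_1}$, then its image $\{u_i, u_{i+1}\} = \{v_i + a_2,\, v_{i+1} + a_2\}$ has one endpoint in $C_{s+1 \bmod g_1}$ and the other in $C_{s+2 \bmod g_1}$, i.e.\ it is a horizontal edge of $H'$ between $C_{s+1 \bmod g_1}$ and $C_{s+2 \bmod g_1}$. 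The correspondence $\{v_i, v_{i+1}\} \mapsto \{u_i, u_{i+1}\}$ is a bijection from the edges of $H$ to the edges of $H'$ (its inverse is the shift by $-a_2$), and by the previous paragraph it restricts to a bijection between the horizontal edges of $H$ between $C_s$ and $C_{s+1 \bmod g_1}$ and those of $H'$ between $C_{s+1 \bmod g_1}$ and $C_{s+2 \bmod g_1}$. Counting both sides gives the stated equality.

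I do not expect a genuine obstacle here; the only points needing care are (i) noting that a $-a_2$ edge, like a $+a_2$ edge, joins cyclically consecutive columns, so that ``horizontal edge between $C_s$ and $C_{s+1 \bmod g_1}$'' is well defined independently of how the tour is oriented (again immediate from Claim \ref{cm:mergeCi}), and (ii) being careful with indices mod $g_1$ when $s$ is near $g_1 - 1$. I would close with a remark contrasting this with Proposition \ref{prop:a1shift}: the $a_1$-shift keeps each edge between the same pair of columns and merely moves it down one row, whereas the $a_2$-shift rotates the whole column cycle by one, which is precisely why the horizontal-edge count between $C_s$ and $C_{s+1 \bmod g_1}$ reappears between $C_{s+1 \bmod g_1}$ and $C_{s+2 \bmod g_1}$.
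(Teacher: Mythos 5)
Your proposal is correct and follows essentially the same route as the paper: the shift by $a_2$ preserves edge lengths (so horizontal edges map to horizontal edges) and sends each vertex in column $C_s$ to one in $C_{s+1 \bmod g_1}$, which is exactly the paper's two-line argument, just spelled out with the explicit edge bijection. No gaps.
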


\begin{proof}
Again, any edge from $u_i$ to $u_{i+1}$ has the same length as the edge from $v_i$ to $v_{i+1}$, so that this shift does not change the total number of horizontal/vertical edges.  By construction, if $v_i$ is in column $C_s$, then $u_i=v_i+a_2$ is in column $C_{s+1 \mod g_1}$. 
\end{proof}

\begin{proof}[Proof (of Theorem \ref{thm:reduction})]
First, by Proposition \ref{prop:UB}, there always exists a tour of cost $2(g_1-1)$.  Now suppose we have a tour of cost $k<2(g_1-1)$. We treat this tour as starting at 0 and following in some arbitrary direction, so that it proceeds $v_0=0, v_1, v_2, ..., v_{n-1}, v_n=0$ where $v_i-v_{i-1}\in\{a_1, -a_1, a_2, -a_2\}$.  Let $a_+$ and $a_-$ respectively denote the number of $a_2$ and $-a_2$ edges in this tour. By Proposition \ref{prop:alpha}, we can assume that $a_+ - a_- \in \{0, g_1\}.$  By Proposition \ref{prop:forcedUB}, for our tour to have cost strictly less than $2(g_1-1)$, we must have that $a_+-a_- = g_1.$

Our tour starts in column 0.  Each $+a_2$ edge moves us one column to the right, and each $-a_2$ edge moves us one column to the left (wrapping around from column $0$ or $g_1-1$ if need be).  Since $a_+ - a_- = g_1,$  this tour ``wraps around'' horizontally: our tour starts in the first column, moving through all columns until the last column, and then uses one more horizontal edge to wrap around to the first column.

We note that there is at most one pair of consecutive columns (possibly the last and first column) between which the tour can use zero horizontal edges: if the tour starts in column 0, didn't use any horizontal edges between columns $i, i+1$ and between $j, j+1$ with $i<j$, the tour would never be able to reach vertices in column $i+1$.  Hence, there are at least $g_1-1$ pairs of consecutive columns where the tour uses at least one edge.  In one of these pairs, the tour must use exactly one edge: if not, the tour would cost at least $2(g_1-1)>k.$


By Proposition \ref{prop:a2shift}, we can shift vertex labels by an arbitrary multiple of $a_2$ so that this pair of columns is the first and last: it wraps from column $g_1-1$ to column $0$.  By Proposition \ref{prop:a1shift}, we can further assume that this edge wraps from $-a_2$ (in the last column) to $0$ (in the first column): we can shift all vertex labels by some multiple of $a_1$, until its endpoint in column $g_1-1$ is the vertex $-a_2.$  Note that shifting the labels of all vertices by the same constant does not change the cost of the tour, by circulant symmetry.

We have now argued that if there is a Hamiltonian cycle costing $k<2(g_1-1)$, there must be a Hamiltonian cycle of cost $k$ with exactly one edge between the last and first column, going from vertex $-a_2$ to vertex $0$.  Deleting this edge yields a Hamiltonian path on a cylinder graph (with $c=g_1$ and $r=\f{n}{g_1}$), starting in the first column (at vertex $(0, 0)$) and ending in the last.  Moreover, this Hamiltonian path uses one fewer horizontal edge, and so costs  $k-1.$  To determine the final vertex $-a_2$ of this Hamiltonian path in terms of the cylindrical coordinates, we recall that vertices in the last column and row $x$ have a label of the form $ (g_1-1)a_2 + x a_1$.   Hence, 
$$-a_2 \equiv_n (g_1-1)a_2 + x a_1.$$  
Rearranging, 
$$xa_1 \equiv_n -g_1 a_2.$$  
By Proposition \ref{prop:solv}, there is a unique solution $x$ to this equivalence in $\{0, 1, ..., \f{n}{g_1}-1\}.$  Thus, in cylindrical coordinates, the final vertex is as stated: $(x, c-1).$  

In all, we have shown that any optimal Hamiltonian tour of cost $k<2(g_1-1)$  gives rise to a Hamiltonian path on a cylinder graph with $c=g_1$ and $r=\f{n}{g_1}$ edges using $k-1$ horizontal edges, starting at $(0, 0)$, and ending at $(x, c-1)$.

Completing the proof requires showing that a Hamiltonian path on that cylinder graph, between vertex $0$ and $-a_2$, of cost $k-1 < 2(g_1-1)-1$, gives rise to a Hamiltonian cycle on the circulant instance of cost $k$.  This claim is more direct: our arguments above show that, treating the cylindrical graph as a circulant graph, it has a Hamiltonian path from $0$ to $-a_2$ of cost $k-1.$  Adding an edge of length $a_2$ turns this path into a Hamiltonian tour on a circulant graph of cost $k$, as desired.
\end{proof}

\begin{rem}
We are only considering non-trivial instances of two-stripe TSP.  Hence we assume that $g_1>1$, which implies that $c\geq 2.$  Similarly, because $g_1=\gcd(n, a_1)$ and $a_1 \leq \lfloor\f{n}{2}\rfloor,$ we have that $g_1<n$ and so $r\geq 2.$
\end{rem}

\subsection{GG Paths}\label{s:GG}
Theorem \ref{thm:reduction} reduces two-stripe TSP to the problem of finding a minimum-cost Hamiltonian path between 0 and $-a_2$ in a cylinder graph. Next, we introduce a specific class of Hamiltonian paths: \emph{GG paths}. We will later use Theorem \ref{thm:main} to argue that, if there is a Hamilonian path from $(0, 0)$ to a vertex in the last column, there is also a GG path from $(0, 0)$ to that vertex (and the GG path is of no greater cost).

GG paths from (0,0) to the last column are highly structured paths where ``all the funky business'' happens between the first two columns.  See, e.g., Figure \ref{fig:GG3}: GG paths start with a sequence of vertical edges, use all the extra pairs of horizontal edges to alternate between the first two columns, and then use one horizontal edge between every remaining pair of consecutive columns.  More formally:

\begin{defn}\label{defn:GG}
A {\bf GG path} with $2k+1+(c-2)$ horizontal edges in a cylinder graph is a Hamiltonian path that 
\begin{enumerate}
\item Starts at $(0, 0)$.
\item Then uses $r-2k-1$ consecutive vertical edges, moving along $(0, 0), (1, 0), ..., (r-2k-1, 0)$.
\item Then alternates horizontal and vertical edges, using $2k+1$ total horizontal edges, moving along $(r-2k-1, 0), (r-2k-1, 1), (r-2k, 1), (r-2k, 0), (r-2k+1, 0), ..., (r-1, 0), (r-1, 1)$ and ending at $(r-1, 1)$.
\item Then traverses the remaining vertices in the graph, by traversing a column, taking a horizontal edge to the next column, then traversing that column, and so on until all the vertices have been visited. This part of the path ends in the last column and uses $c-2$ additional horizontal edges (one between every remaining pair of consecutive columns).
\end{enumerate}
or
\begin{enumerate}
\item Starts at $(0, 0)$.
\item Then uses $r-2k-1$ consecutive vertical edges, moving along $(0, 0), (r-1, 0), (r-2, 0), ... (2k+1, 0)$.
\item Then alternates horizontal and vertical edges, using $2k+1$ total horizontal edges, moving along $(2k+1, 0), (2k+1, 1), (2k, 1), (2k, 0), ..., (1, 0), (1, 1)$ and ending at $(1, 1)$.
\item Then traverses the remaining vertices in the graph, by traversing a column, taking a horizontal edge to the next column, then traversing that column, and so on until all the vertices have been visited. This part of the path ends in the last column and uses $c-2$ additional horizontal edges (one between every remaining pair of consecutive columns).
\end{enumerate}  
See Figure \ref{fig:GG3}.
\end{defn}

In a GG path with $2k+1$ horizontal edges between the first and second column, and where $k\geq 1$, there are exactly two options for how to start the tour: the first edge goes from $(0, 0)$ to $(1, 0)$ or wraps vertically from $(0, 0)$ to $(r-1, 0)$.  Once that choice is made, the edges in the first and second column are fully determined: After following the edges  in Definition \ref{defn:GG}, the prescribed part of the path ends at either $(1, 1)$ or $(r-1, 1)$.  Suppose the sequence ends at $(r-1, 1),$ as in the left of Figure \ref{fig:GG3}.  Then, since $k\geq 1$, the vertex $(r-2, 1)$ must have degree 2 and the only possible edge goes from $(r-1, 1)$ to $(0, 1).$  If the path ends at $(1, 1)$, the situation is analogous.  When $k=0,$ there are four options as shown in Figure \ref{fig:GG1}.  Two of these (the first and third in Figure \ref{fig:GG1}) end at the same vertex $(0, 1)$.

\begin{figure}[t!]

    \centering
        \begin{subfigure}[t]{0.2\textwidth}
        \centering
        \begin{tikzpicture}[scale=0.7]
        \foreach \x in {0,1} {
            \foreach \y in {0,1,...,8} {
                \vertex (\x\y) at (\x, \y) {};
            }
        }   
        \foreach \i in {0, 1} {
            \foreach \j in {0, 1, ..., 6} {
            \draw (\i, \j) to (\i, \j+1);}
        }
        
        \draw   (0, -0.5) to (0, 0);
        \draw  [->]  (0, 8) to (0, 8.5);
        \draw (0, 7) to (1, 7);
        
        \draw [->]  (1, 0) to (1, -0.5);
        \draw    (1, 8) to (1, 8.5);
        
        \end{tikzpicture}
        \end{subfigure}
        \hspace{5mm}
        \begin{subfigure}[t]{0.2\textwidth}
        \centering
        \begin{tikzpicture}[scale=0.7]
        \foreach \x in {0,1} {
            \foreach \y in {0,1,...,8} {
                \vertex (\x\y) at (\x, \y) {};
            }
        }   
        \foreach \i in {0, 1} {
            \foreach \j in {0, 1, ..., 5} {
            \draw (\i, \j) to (\i, \j+1);}
        }
        
        \draw [->]  (0, -0.5) to (0, 0);
        \draw  [->]  (0, 8) to (0, 8.5);
        \draw (0, 7) to (1, 7);
        \draw (0, 6) to (0, 7);
        
        \draw [->]  (1, -0.5) to (1, 0);
        \draw  [->]  (1, 8) to (1, 8.5);
        \draw (1, 7) to (1, 8);
        
        \end{tikzpicture}
        \end{subfigure}
\hspace{5mm}
        \begin{subfigure}[t]{0.2\textwidth}
        \centering
        \begin{tikzpicture}[scale=0.7]
        \foreach \x in {0,1} {
            \foreach \y in {0,1,...,8} {
                \vertex (\x\y) at (\x, \y) {};
            }
        }   
        \foreach \i in {0, 1} {
            \foreach \j in {0, 1, ..., 7} {
            \draw (\i, \j) to (\i, \j+1);}
        }
        
        \draw (0, 0) to (1, 0);

        \end{tikzpicture}
        \end{subfigure}
        \hspace{5mm}
        \begin{subfigure}[t]{0.2\textwidth}
        \centering
        \begin{tikzpicture}[scale=0.7]
        \foreach \x in {0,1} {
            \foreach \y in {0,1,...,8} {
                \vertex (\x\y) at (\x, \y) {};
            }
        }   
        \foreach \i in {0, 1} {
            \foreach \j in {1, 2, ..., 7} {
            \draw (\i, \j) to (\i, \j+1);}
        }
        \draw (0, 0) to (0, 1);
        \draw (0, 0) to (1, 0);
        \draw [->] (1, 0) to (1, -0.5);
        \draw (1, 8.5) to (1, 8);

        \end{tikzpicture}
        \end{subfigure}

    \caption{GG paths using 1 horizontal edge between the first and second column.}
    \label{fig:GG1}
\end{figure}
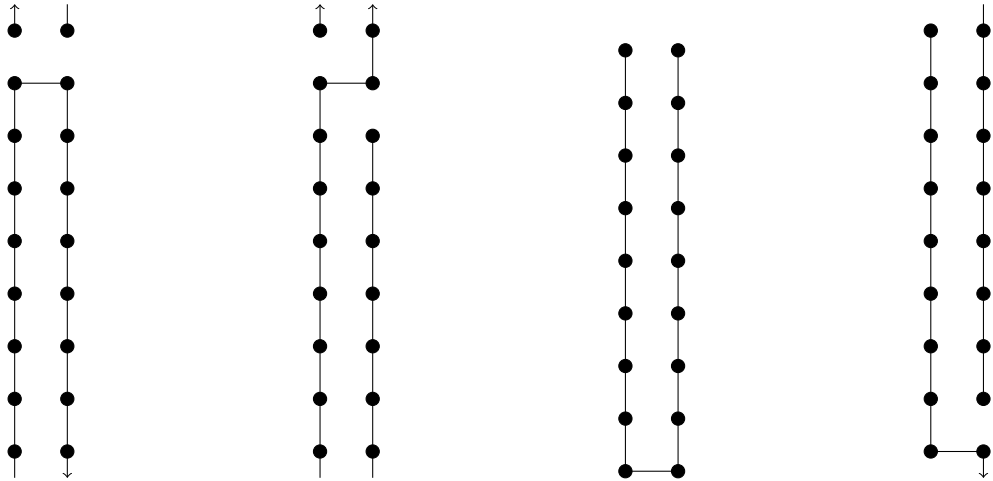

Because GG paths only use one horizontal edge between consecutive columns starting from the second column onward, they are extremely structured.  Moreover, when moving to a new column (starting at the third column), a GG path has exactly two options:  the first edge in that column vertically is either vertically ``up'' (going from $(i, j)$ to $(i-1, j)$) or vertically ``down'' (going from $(i, j)$ to $(i+1, j)$).  The net effect is that, if the last vertex visited in the second column is $(i, j-1)$, the last vertex visited in the third column will be $(i+1, j)$ or $(i-1, j)$.  See Figure \ref{fig:GG3cont}.

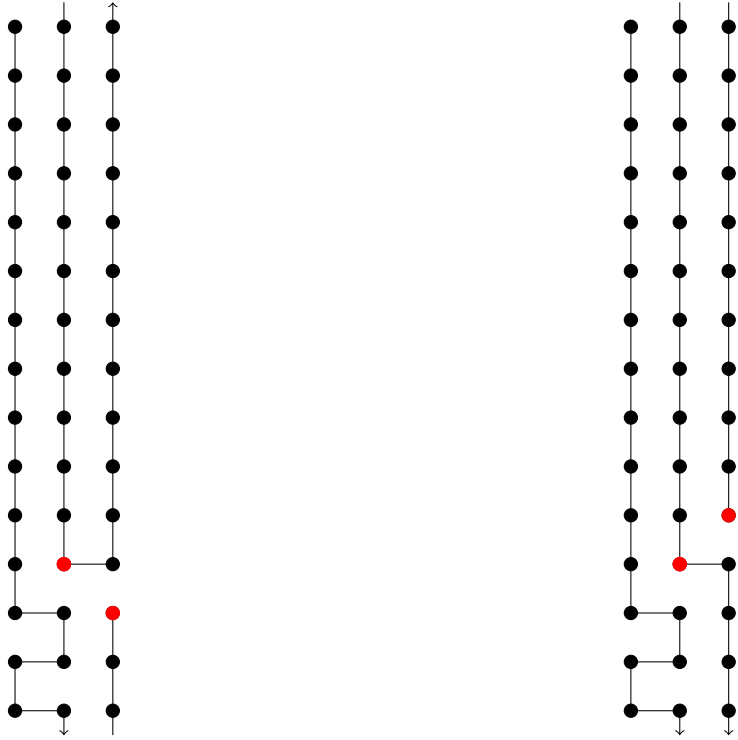
\begin{figure}[t!]

    \centering
        \begin{subfigure}[t]{0.45\textwidth}
        \centering
        \begin{tikzpicture}[scale=0.65]
        \foreach \x in {0,1, 2} {
            \foreach \y in {0,1,...,14} {
                \vertex (\x\y) at (\x, \y) {};
            }
        }   
        \foreach \i in {0, 1} {
            \foreach \j in {3, 4, ..., 13} {
            \draw (\i, \j) to (\i, \j+1);}
        }
        
        \foreach \j in {0, 1, 2} {
        \draw (0, \j) to (1, \j);}
        
        \draw (0, 2) to (0, 3);
        \draw (0, 0) to (0, 1);
        \draw (1, 1) to (1, 2);
        \draw [->]  (1, 0) to (1, -0.5);
        \draw    (1, 14) to (1, 14.5);
        
        \vertex[red] (r1) at (1, 3) {};
        
        \foreach \j in {3, 4, ..., 13} {
        \draw (2, \j) to (2, \j+1);}

        \foreach \j in {0, 1} {
        \draw (2, \j) to (2, \j+1);}
        
        \draw (1, 3) to (2, 3);
        \draw[->] (2, 14) to (2, 14.5);
        \draw (2, -0.5) to (2, 0);
        
        \vertex[red] (r1) at (1, 3) {};
        \vertex[red] (r2) at (2, 2) {};
        
        \end{tikzpicture}
        \end{subfigure}
        \hspace{5mm}
        \begin{subfigure}[t]{0.45\textwidth}
        \centering
        \begin{tikzpicture}[scale=0.65]
        \foreach \x in {0,1, 2} {
            \foreach \y in {0,1,...,14} {
                \vertex (\x\y) at (\x, \y) {};
            }
        }   
        \foreach \i in {0, 1} {
            \foreach \j in {3, 4, ..., 13} {
            \draw (\i, \j) to (\i, \j+1);}
        }
        
        \foreach \j in {0, 1, 2} {
        \draw (0, \j) to (1, \j);}
        
        \draw (0, 2) to (0, 3);
        \draw (0, 0) to (0, 1);
        \draw (1, 1) to (1, 2);
        \draw [->]  (1, 0) to (1, -0.5);
        \draw    (1, 14) to (1, 14.5);
        
        \vertex[red] (r1) at (1, 3) {};
        
        \foreach \j in {4, 5, ..., 13} {
        \draw (2, \j) to (2, \j+1);}

        \foreach \j in {0, 1, 2} {
        \draw (2, \j) to (2, \j+1);}
        
        \draw (1, 3) to (2, 3);
        \draw (2, 14) to (2, 14.5);
        \draw[->] (2, 0) to (2, -0.5);
        
        \vertex[red] (r1) at (1, 3) {};
        \vertex[red] (r2) at (2, 4) {};
        
        \end{tikzpicture}
        \end{subfigure}
    \caption{Extending a GG path on 2 columns to a GG path on 3 columns.  Notice that, when entering the third column, there are exactly two choices: the first vertical edge in the third column either moves up or down.  From there, the third column is completely determined.  The last vertex visited in the second and third columns are shown in red.  Note that the red vertex in the third column is exactly one row above or below the the red vertex in the second column. }
    \label{fig:GG3cont}
\end{figure}

This structure of GG paths applies inductively. Starting in the third column, each time a GG path enters a new column, it has the exact same two options.  Tracing out this process allows us to quickly determine where a GG path can end, based on its path through the first two columns.  See Figure \ref{fig:GG3full}.

\begin{figure}[t!]

    \centering
        \begin{subfigure}[t]{0.45\textwidth}
        \centering
        \begin{tikzpicture}[scale=0.65]
        \foreach \x in {0,1, ..., 5} {
            \foreach \y in {0,1,...,14} {
                \vertex (\x\y) at (\x, \y) {};
            }
        }   
        \foreach \i in {0, 1} {
            \foreach \j in {3, 4, ..., 13} {
            \draw (\i, \j) to (\i, \j+1);}
        }
        
        \foreach \j in {0, 1, 2} {
        \draw (0, \j) to (1, \j);}
        
        \draw (0, 2) to (0, 3);
        \draw (0, 0) to (0, 1);
        \draw (1, 1) to (1, 2);
        \draw [->]  (1, 0) to (1, -0.5);
        \draw    (1, 14) to (1, 14.5);
        \draw (1, 3) to (2, 3);
        \vertex[red] (r2) at (1, 3) {};
        
        \vertex[red] (r2) at (2, 2) {};
        \vertex[red] (r2) at (2, 4) {};
        
        \vertex[red] (r2) at (3, 3) {};
        \vertex[red] (r2) at (3, 1) {};
        \vertex[red] (r2) at (3, 5) {};
        
        \vertex[red] (r2) at (4, 0) {};
        \vertex[red] (r2) at (4, 2) {};
        \vertex[red] (r2) at (4, 4) {};
        \vertex[red] (r2) at (4, 6) {};
        
        \vertex[red] (r2) at (5, 14) {};
        \vertex[red] (r2) at (5, 1) {};
        \vertex[red] (r2) at (5, 3) {};
        \vertex[red] (r2) at (5, 5) {};
        \vertex[red] (r2) at (5, 7) {};
        
        \end{tikzpicture}
        \end{subfigure}
        \hspace{5mm}
        \centering
        \begin{subfigure}[t]{0.45\textwidth}
        \centering
        \begin{tikzpicture}[scale=0.65]
        \foreach \x in {0,1, ..., 5} {
            \foreach \y in {0,1,...,14} {
                \vertex (\x\y) at (\x, \y) {};
            }
        }   
        \foreach \i in {0, 1} {
            \foreach \j in {0, 1, ..., 9} {
            \draw (\i, \j) to (\i, \j+1);}
        }
        
        \foreach \j in {11, 12, 13} {
        \draw (0, \j) to (1, \j);}
        
        \draw (0, 10) to (0, 11);
        \draw [->] (0, 14) to (0, 14.5);
        \draw (0, 0) to (0, -0.5);
        \draw (0, 12) to (0, 13);
        \draw [->] (1, 14) to (1, 14.5);
        \draw (1, 0) to (1, -0.5);       
        \draw (1, 11) to (1, 12);
        \draw (1, 13) to (1, 14);
            
        \draw (1, 10) to (2, 10);       
        
        \vertex[red] (r2) at (1, 10) {};
        
        \vertex[red] (r2) at (2, 11) {};
        \vertex[red] (r2) at (2, 9) {};
        
        \vertex[red] (r2) at (3, 8) {};
        \vertex[red] (r2) at (3, 10) {};
        \vertex[red] (r2) at (3, 12) {};
        
        \vertex[red] (r2) at (4, 7) {};
        \vertex[red] (r2) at (4, 9) {};
        \vertex[red] (r2) at (4, 11) {};
        \vertex[red] (r2) at (4, 13) {};
        
        \vertex[red] (r2) at (5, 6) {};
        \vertex[red] (r2) at (5, 8) {};
        \vertex[red] (r2) at (5, 10) {};
        \vertex[red] (r2) at (5, 12) {};
        \vertex[red] (r2) at (5, 14) {};

        \end{tikzpicture}
        \end{subfigure}
        
    \caption{Possible ending vertices for a GG path with 3 edges between the first two columns.}
    \label{fig:GG3full}
\end{figure}
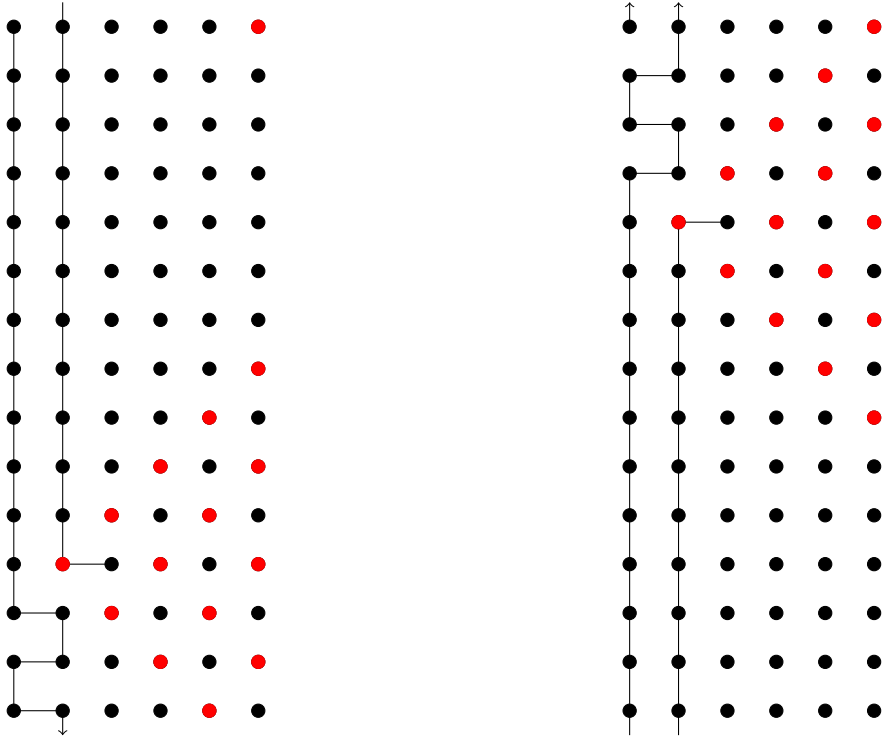

We will argue (in Theorem \ref{thm:main}) that, in solving the minimum-cost Hamiltonian path problem on a cylinder graph between $(0, 0)$ and a vertex in the last column, it suffices to consider GG paths: if a minimum-cost path between $(0, 0)$ and $(i, c-1)$ requires $k$ horizontal edges for any $0\leq i\leq r-1$, there is a GG path from $(0, 0)$ to $(i, c-1)$ using at most $k$ horizontal edges.  Hence, we characterize exactly where a GG path, with some fixed number of horizontal edges, can end up.  We define this set below.

\begin{defn}
Let $A_{r, c, m}$ denote the set of row indices of vertices in the last column reachable on a GG path in an $r\times c$ cylinder graph using {\bf at most} $2m+(c-1)$ total horizontal edges.
\end{defn}
In the definition of $A_{r, c, m},$ it is useful to think of the $m$ as the number of extra ``expensive pairs'' of horizontal edges used between the first and second column.  
Note also that a GG path can have at most $\lfloor\f{r-1}{2}\rfloor$ extra horizontal pairs, because the extra horizontal edges are all between the first two columns. The following proposition characterizes $A_{r,c,m}$. 

\begin{prop}
\label{prop:Arcm} Let $0 \leq m\leq \lfloor \f{r-1}{2}\rfloor.$  Then
$$A_{r, c, m} = \{c+2m-2i \mod r: 0\leq i\leq c+2m\}.$$
\end{prop}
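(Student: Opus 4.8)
The plan is to prove both inclusions of the claimed identity
\[
A_{r, c, m} = \{c+2m-2i \bmod r: 0\leq i\leq c+2m\}
\]
by tracking, column-by-column, the set of possible ``last vertex'' row indices that a GG path can reach. As the discussion surrounding Figures \ref{fig:GG3cont} and \ref{fig:GG3full} makes explicit, the structure of a GG path is essentially rigid: once the number $m$ of extra horizontal pairs between the first two columns is fixed (and one of the two starting choices is made), the path through the first two columns is completely determined, and upon entering each subsequent column the only freedom is whether the first vertical edge in that column goes ``up'' or ``down.'' So I would set up an induction on the column index $j$, where the inductive object is the set $R_j$ of row indices at which a GG path (with exactly this budget) can have its last-visited vertex in column $j$.

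First I would handle the base case of the induction: I would compute the row index of the last vertex in column $1$ (the second column). Reading off Definition \ref{defn:GG}, in the ``downward'' branch the path ends the second column at $(r-1,1)$, and after step 4's single horizontal edge between columns $1$ and $2$ it enters column $2$ at row $0$; in the ``upward'' branch it ends at $(1,1)$ and enters column $2$ at row $r-2 \equiv -2$. More convenient is to set up the induction so that $R_j$ records the row of the last vertex \emph{visited} in column $j$: one checks that the two branches give last-in-column-$1$ rows differing by a controlled amount, and I would verify that after accounting for the single horizontal edge to column $2$ the reachable set in column $2$ matches the claimed formula with $j = 2$ (i.e. is $\{c+2m-2i \bmod r: \text{small range of } i\}$, with the range growing by one in each direction per column). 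The cleanest formulation: prove by induction on $j \geq 2$ that $R_j = \{c' + 2m - 2i \bmod r : 0 \le i \le c' + 2m\}$ for an appropriately shifted $c'$ depending on $j$, so that $j = c$ gives exactly $A_{r,c,m}$.

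For the inductive step I would use the key local fact already established in the text (Figure \ref{fig:GG3cont}): if the last vertex visited in column $j-1$ is in row $\rho$, then after taking the horizontal edge to column $j$ and traversing column $j$, the last vertex visited in column $j$ is in row $\rho+1$ or $\rho-1$ (mod $r$) — the only choice being up vs.\ down. Hence $R_j = (R_{j-1} + 1) \cup (R_{j-1} - 1) \pmod r$. Applying this to the inductive hypothesis that $R_{j-1}$ is an arithmetic-progression-with-step-$2$ interval $\{a - 2i \bmod r : 0 \le i \le \ell\}$ shows $R_j = \{(a+1) - 2i \bmod r : 0 \le i \le \ell+1\}$ — the interval grows by one term at each end and the step-$2$ structure is preserved because adding $1$ then subtracting from the step-$2$ set interleaves correctly. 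Iterating from column $2$ to column $c$ adds $c-2$ to the ``index range'' and the parity/offset bookkeeping yields precisely $\{c+2m-2i \bmod r : 0 \le i \le c+2m\}$. I would also need to double-check the two $m=0$ corner cases from Figure \ref{fig:GG1} (where there are four rather than two starting configurations, two of which coincide in endpoint) to confirm the $j=2$ base case still holds, and to confirm the constraint $m \le \lfloor\frac{r-1}{2}\rfloor$ is exactly what makes step 2 of Definition \ref{defn:GG} (using $r-2m-1 \ge 0$ vertical edges) feasible.

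The main obstacle I expect is the bookkeeping at the two ends — pinning down the exact starting offset $c+2m$ (as opposed to $c+2m\pm1$ or $c+2m-1$) and making sure the "reachable set grows by exactly one at each end per column" claim is watertight modulo $r$, including the degenerate situations where the set wraps all the way around and the upper bound $2m+(c-1)$ on total horizontal edges (equivalently the range $0 \le i \le c+2m$ rather than something larger) is genuinely binding. In particular, when $c + 2m \ge r$ the set $\{c+2m - 2i \bmod r\}$ may be all of $\Z/r\Z$ (or all even/all odd residues of an appropriate coset), and I would want to confirm the formula degrades gracefully in that regime and that no endpoint outside this range is secretly reachable by some non-obvious GG path — this requires knowing that the *minimum* number of horizontal edges to reach a given last-column row is achieved by the monotone up/down traversals counted here, which follows from the rigidity of GG paths but should be stated carefully.
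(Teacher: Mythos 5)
Your proposal is correct and takes essentially the same route as the paper: first determine the reachable last-visited rows in the second column (the paper organizes this as an induction on $m$, you as a direct enumeration of the two branches for each $k\le m$ plus the $m=0$ configurations of Figure \ref{fig:GG1}), then propagate column-by-column via the rule that each new column's reachable set is the previous one shifted by $\pm 1$ mod $r$, which is exactly the paper's step $A_{r,c+1,m}=\{x\pm 1: x\in A_{r,c,m}\}$. One small slip to correct in the write-up: in the downward branch the path leaves the second column from its last visited vertex $(r-2k-2,1)$, not from row $0$, but your switch to tracking the last-visited row in each column already handles this.
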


Note that, by our definition of $A_{r, c, m}$, this means that the set of vertices reachable in a GG path on an $r\times c$ cylinder graph using at most $2m+(c-1)$ total horizontal edges is:
$$\{(c+2m-2i \; \text{({mod} $r$)}, \,\, c-1): 0\leq i\leq c+2m\}.$$

The proof of Proposition \ref{prop:Arcm} follows the structure of GG paths discussed above: we need only identify the vertices reachable when $c=2.$  Then we follow the ``triangular growth'' of the red vertices in  Figure \ref{fig:GG3full}, where a GG path ending in some row $k$ of column $c$ corresponds to a GG path ending in column $c+1$ in row $k-1$ or $k+1$ (taken mod $r$). 

Here and throughout, we use the notation  $\{a_1,a_2,\ldots,a_k\}$ mod $r$ as a shorthand for $\{${$a_1$ mod $r$, $a_2$ mod $r$, \ldots, $a_k$ mod $r$}$\}$.

\begin{proof}
Our proof proceeds using two rounds of induction.  By definition, $A_{r, c, m} \subset A_{r, c, m+1}$.  We first induct on $m$ to characterize $A_{r, 2, m}.$  Then we induct on $c$ to be fully general.

Hence, consider $A_{r, 2, 0}.$  These are exactly the paths in Figure \ref{fig:GG1}. We can thus see that $$A_{r, 2, 0} =\{0, 2, r-2\}.$$  This matches the proposition statement.  

Now we induct on $m$.  For $m\geq 1,$  by definition $A_{r, 2, m}$ consists of the vertices in $A_{r, 2, m-1}$, together with the vertices that can be reached using a GG path with exactly $2m+1$ horizontal edges.  Here it is useful to view Figure  \ref{fig:GG3}.  There are exactly two GG paths with $2m$ extra  horizontal edges that we need to consider, corresponding to the first edge going ``down'' from $(0, 0)$ to $(1, 0)$, or to the first edge going ``up'' and wrapping vertically from $(0, 0)$ to $(r-1, 0).$  In the former, we use all $2m+1$ horizontal edges at rows $r-1, r-2, ..., r-(2m+1)$ and thus end at  $(r-2m-2, 1)$.  In the latter, we use horizontal edges at rows $1, 2, ..., 2m+1$ and end at $(2m+2, 1).$  Hence
\begin{align*}
A_{r, 2, m} &= A_{r, 2, m-1} \cup \{ r-2m-2 \} \cup \{ 2m+2\} \\
&= \{ 2+2(m-1)-2i \mod r : 0\leq i \leq 2+2(m-1)\}\cup  \{ r-2m-2\} \cup \{ 2m+2 \}\\
&= \{2m, 2m-2, \ldots, -2m+2, -2m\} \cup \{-(2m+2)\} \cup \{2m+2\} \mod r \\
&= \{ 2+2m-2i \mod r : 0\leq i \leq 2+2m\}.
\end{align*}

Now that we have established the proposition for $A_{r, 2, m}$, we induct on $c$ to prove the proposition for $A_{r,c,m}$. For $c\geq 2,$ we have that
$$A_{r, c+1, m} = \{x\pm 1: x \in A_{r, c, m}\}.$$  This is the ``triangular growth'' of the red vertices in  Figure \ref{fig:GG3full}, i.e.,  because we can only use one horizontal edge between columns $c-1$ and $c$.  Thus by induction, since $A_{r, c, m} = \{c+2m-2i \mod r: 0 \leq i \leq c+2m\}$, we have that
$$A_{r, c+1, m} = \{ c+2m-2i \pm 1 \mod r: 0 \leq i \leq c+2m\}.$$
We now trace through the values of $c+2m-2i \pm 1$ as $i$ ranges from $0$ to $c+2m$.  Since $c+2m-2i -1 = c + 2m -2(i+1) + 1$, we see that the values trace through $$c+2m -2(0)+1, c+2m-2(0)-1, c+2m-2(1)-1, c+2m-2(2)-1,...., c+2m-2(c+2m)-1.$$  That is, $$(c+1)+2m-2(0), (c+1)+2m-2, (c+1)+2m-4, ..., (c+1)+2m-2(c+1+2m).$$  That is, $\{(c+1)+2m-2i: 0 \leq i \leq c+1+2m\}.$  This inductively establishes that $A_{r, c+1, m}$ is correctly stated.
\end{proof}
The above characterization of $A_{r,c,m}$ immediately implies the following corollary, which essentially states that for each additional pair of horizontal edges used, the number of vertices reachable in the last column increases by at most 2. Furthermore, we can characterize exactly which new vertices (if any), become reachable. 
\begin{cor}\label{cor:newrows}
We have $A_{r,c,m} \setminus A_{r,c,m-1} \subseteq \{c+2m, -(c+2m) \} \mod r$. 
\end{cor}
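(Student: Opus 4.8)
The plan is to read both sides of the inclusion straight off the closed form for $A_{r,c,m}$ established in Proposition \ref{prop:Arcm}, so essentially no new idea is required beyond bookkeeping. Working (as the surrounding discussion implicitly does) in the regime $1 \le m \le \lfloor \tfrac{r-1}{2}\rfloor$, both $m$ and $m-1$ lie in the range to which Proposition \ref{prop:Arcm} applies, and it gives
$$A_{r,c,m} = \{c+2m-2i \bmod r : 0 \le i \le c+2m\}, \qquad A_{r,c,m-1} = \{c+2(m-1)-2i \bmod r : 0 \le i \le c+2(m-1)\}.$$
First I would re-index the second set via the substitution $j = i+1$, rewriting it as $\{c+2m-2j \bmod r : 1 \le j \le c+2m-1\}$. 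Now both sets are indexed by the same arithmetic progression $i \mapsto c+2m-2i$, and the index range for $A_{r,c,m}$ differs from that for $A_{r,c,m-1}$ only by additionally permitting $i=0$ and $i=c+2m$. Evaluating the progression at those two extra indices gives $c+2m$ and $c+2m-2(c+2m) = -(c+2m)$, so $A_{r,c,m} = A_{r,c,m-1}\cup\big(\{c+2m,\,-(c+2m)\}\bmod r\big)$, and taking the set difference yields $A_{r,c,m}\setminus A_{r,c,m-1} \subseteq \{c+2m,\,-(c+2m)\}\bmod r$, which is the claim.

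The only point worth flagging — and the reason the statement is phrased as an inclusion rather than an equality — is that reduction modulo $r$ may cause $c+2m$ or $-(c+2m)$ to coincide with each other or with an element already present in $A_{r,c,m-1}$; any such coincidence merely removes elements from the left-hand side and so cannot break the containment. I do not anticipate a genuine obstacle here: the entire content is in invoking Proposition \ref{prop:Arcm} and carrying out the re-indexing carefully.
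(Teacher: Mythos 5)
Your proof is correct and is essentially the paper's argument: the paper treats the corollary as an immediate consequence of the closed form in Proposition \ref{prop:Arcm}, and your re-indexing simply spells out why the only indices available to $A_{r,c,m}$ but not to $A_{r,c,m-1}$ are $i=0$ and $i=c+2m$, yielding $c+2m$ and $-(c+2m)$ modulo $r$. Your remarks about the relevant range of $m$ and about possible coincidences after reduction mod $r$ are fine and do not affect the containment.
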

Finally, we show that GG paths ``capture" all Hamiltonian paths that only use extra pairs of horizontal edges in the first two columns, in the sense made precise below. 

\begin{restatable}{prop}{ggall}
\label{prop:ggall}
Let $P$ be a Hamiltonian path in an $r$ by $c$ cylinder graph, starting at $(0, 0)$ and ending at $(x, c-1)$ for some row $x$.  Suppose that $P$ uses $2m+1$ horizontal edges between the first pair of columns, and then 1 horizontal edge between every subsequent pair of columns.  Then $x \in A_{r, c, m}$.
\end{restatable}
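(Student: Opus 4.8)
The plan is to show that any Hamiltonian path $P$ of the specified shape is essentially forced to be one of the two GG-path templates of Definition \ref{defn:GG}, after which membership $x \in A_{r,c,m}$ follows from Proposition \ref{prop:Arcm}. The key observation is that once we know $P$ uses exactly one horizontal edge between every pair of consecutive columns starting from the second column, the structure of $P$ restricted to columns $1$ through $c-1$ is rigidly determined up to a small number of local choices. So the argument splits naturally into two parts: (i) analyze what $P$ can look like on the first two columns given that it uses $2m+1$ horizontal edges there, and (ii) propagate that through the remaining columns.

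First I would handle the first two columns. Since $P$ is a Hamiltonian path starting at $(0,0)$, and the only horizontal edges incident to columns $0$ and $1$ that are used are the $2m+1$ between them (column $0$ has no other neighbors; column $1$'s other horizontal edges go to column $2$, of which exactly one is used — the ``exit'' edge), the subgraph of $P$ confined to these two columns, together with the single exit edge into column $2$, must be a Hamiltonian path on the $r \times 2$ sub-cylinder from $(0,0)$ to wherever the exit edge leaves. I claim this forces the alternating ``comb'' structure of items 2–3 in Definition \ref{defn:GG}: after leaving $(0,0)$ — either down to $(1,0)$ or up (wrapping) to $(r-1,0)$ — the path must run straight along column $0$ until it first takes a horizontal edge, and from then on, because both columns are vertical cycles that must be covered without revisiting vertices, each horizontal hop must be immediately followed by a vertical step that continues in the same direction, so the only way to use an odd number $2m+1$ of horizontal edges and cover all $2r$ vertices of the two columns is to run $r-2m-1$ vertical edges and then alternate. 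This is the combinatorial heart of the argument; I would phrase it as: a vertex of column $0$ (resp. column $1$) that is not an endpoint of $P$ and not incident to a used horizontal edge has both its column-incident edges in $P$, which pins down long vertical runs, and then count parity of horizontal edges to locate the single run. The conclusion is that $P$ on the first two columns coincides with one of the two templates, ending at $(r-1,1)$ or $(1,1)$ respectively, i.e., the last column-$1$ vertex visited is at row $c + 2m - (c-1) \bmod r$... more precisely, it matches $A_{r,2,m}$ as computed in the proof of Proposition \ref{prop:Arcm}.

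Next I would propagate across the remaining columns by induction on the column index, exactly as in the ``triangular growth'' step of Proposition \ref{prop:Arcm}'s proof. Suppose inductively that when $P$ finishes column $j-1$ (for $2 \le j-1 \le c-2$) the last vertex visited in that column is $(y, j-1)$ for some $y$ in the appropriate reachable set. Since exactly one horizontal edge crosses from column $j-1$ to column $j$ and it is the one incident to $(y, j-1)$, the path enters column $j$ at row $y$; column $j$ is a vertical $r$-cycle all of whose vertices are still unvisited except possibly none, and the path must cover it entirely and leave via a single horizontal edge to column $j+1$, so it goes all the way around the cycle, meaning it enters at $(y,j)$, immediately steps either up or down, traverses the whole cycle, and the last vertex visited in column $j$ is $(y-1, j)$ or $(y+1, j)$ (mod $r$) — the neighbor of $(y,j)$ on the other side. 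This is precisely the recursion $A_{r,j,m} \mapsto A_{r,j+1,m} = \{x \pm 1 : x \in A_{r,j,m}\}$, and iterating from $j = 2$ to $j = c-1$ shows the final vertex $(x, c-1)$ satisfies $x \in A_{r,c,m}$.

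The main obstacle I anticipate is the rigidity claim for the first two columns: one must carefully rule out configurations where the path ``bounces'' between columns $0$ and $1$ in a more complicated pattern, or uses horizontal edges that are not contiguous. The clean way to do this is to note that (a) each of column $0$ and column $1$ induces a cycle $C_r$ in the cylinder graph, (b) a Hamiltonian path restricted to the union of two vertex-disjoint $r$-cycles joined by some matching of horizontal edges decomposes into subpaths each lying within one cycle, (c) within a cycle $C_r$ a simple path is just an arc, and (d) the endpoints of these arcs are exactly the endpoints of $P$ (namely $(0,0)$) plus the endpoints of used horizontal edges. Counting: column $0$ contributes one arc-endpoint from $(0,0)$ and $2m+1$ from horizontal edges, total $2m+2$, so column $0$ is split into $m+1$ arcs; similarly column $1$ has one endpoint into column $2$ plus $2m+1$ horizontal, so $m+1$ arcs; but these $2(m+1)$ arcs must chain together through the $2m+1$ horizontal edges into a single path, which (a short parity/interval-covering argument) forces them to be the ``comb'' pattern. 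Once this is pinned down, everything else is the routine induction above and an appeal to Proposition \ref{prop:Arcm}.
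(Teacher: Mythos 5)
Your column-by-column propagation (each intermediate column is entered once, swept entirely, and exited at a row adjacent to the entry row, giving the recursion $A_{r,j+1,m}=\{x\pm1: x\in A_{r,j,m}\}$) is sound and matches the structure the paper relies on. The gap is in your base case: the rigidity claim that $P$ restricted to the first two columns must coincide with one of the two GG ``comb'' templates is false, and your arc-counting sketch does not force the comb. Concretely, take $r=7$, $m=1$, $c\geq 3$, and let $P$ begin $(0,0),(1,0),(2,0),(3,0),(4,0),(4,1),(3,1),(2,1),(1,1),(0,1),(6,1),(6,0),(5,0),(5,1)$, exiting to column $2$ from $(5,1)$ and sweeping each later column with one crossing. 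This uses exactly $2m+1=3$ horizontal edges between the first two columns, visits all of columns $0$ and $1$, decomposes into $m+1=2$ arcs per column exactly as in your counting argument, yet after the first crossing the path turns \emph{against} the direction of the initial vertical run, so it is not a GG path; similar non-comb configurations arise when the path turns back immediately after its first crossing, or when its very first edge is horizontal (so the arc of column $0$ containing $(0,0)$ is a single vertex). So ``the arcs must chain into the comb pattern'' is simply not true, and your proof as written asserts a false intermediate statement.

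The proposition nevertheless holds for these configurations, but for a different reason than the one you give: in the example above the exit row from column $1$ is $r-2=5$, which lies in $A_{r,2,m}$ only because $A$ is defined with ``at most'' $2m+1$ crossings -- row $5$ is the endpoint of a GG path using just \emph{one} horizontal edge. This is exactly how the paper's proof (Appendix \ref{app:ggall}) proceeds: it enumerates the possible first-two-column behaviours (initial vertical run followed by a comb in either direction; turn-back after the first crossing; first edge horizontal), shows each non-GG behaviour is forced into a specific shape, and then performs an explicit edge exchange replacing that portion by a GG-type configuration with two fewer horizontal edges ending at the same exit vertex of column $1$, so the endpoint lands in $A_{r,c,m-1}\subseteq A_{r,c,m}$. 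Your proposal needs either this exchange argument or some other direct proof that the exit row of every admissible two-column configuration lies in $A_{r,2,m}$; without it, the central step is missing.
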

\begin{proof}
We have moved the proof of this proposition to Appendix \ref{app:ggall} because it is long, yet relatively unenlightening. Intuitively, it considers all possible configurations of $P$ in the first 2 columns, and shows that in each case, the set of reachable vertices in the last column is contained in $A_{r,c,m}$. 
\end{proof}

\subsection{Roadmap for Rest of the Paper}
Before moving on, we step back to briefly summarize what we have shown so far, and outline the rest of the paper. 
So far, we have argued that for two-stripe TSP, the minimum-cost tour is either 1) the upper-bound tour, or 2) a minimum-cost Hamiltonian path from 0 to $-a_2$, plus the wraparound edge from $-a_2$ to 0. In Theorem \ref{thm:main}, we will show that the Hamiltonian path in case 2) can always be taken to be a GG path. Since GG paths are extremely structured, this leads us to the characterization of the minimum-cost tour in Theorem \ref{thm:result}. 

In Section \ref{sec:main_result_proof}, we prove Theorem \ref{thm:result} assuming Theorem \ref{thm:main} holds. Then in Section \ref{sec:alg}, we show how the characterization in Theorem \ref{thm:result} leads to a polynomial-time algorithm for determining the tour. Finally Section \ref{sec:mainpf} contains the proof of Theorem \ref{thm:main}. 

\section{Proof of Main Result}
\label{sec:main_result_proof}
In this section, we prove our main result, Theorem \ref{thm:result}, which characterizes the cost of an optimal tour for the two-stripe TSP. As a reminder, the notation $\{a_1,a_2,\ldots,a_k\}$ mod $r$ is shorthand for $\{${$a_1$ mod $r$, $a_2$ mod $r$, \ldots, $a_k$ mod $r$}$\}$.

\result*
Our proof of Theorem \ref{thm:result} relies on Theorem \ref{thm:main}. This is one of the main technical results of the paper, and we devote Section \ref{sec:mainpf} to its proof. We state it without proof below. 

\begin{restatable}{thm}{main}
\label{thm:main}
Consider a cylinder graph on $n=r\times c$ vertices, with $r$ rows and $c$ columns.  Suppose we have a Hamiltonian path, starting at 0 and ending in the last column, and suppose it uses at most $(c-1)+2m$ horizontal edges.  Then it must end at a row in $A_{r, c, m}$.
\end{restatable}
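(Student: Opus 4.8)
The plan is to induct on the number of columns $c$. First, some harmless reductions. The number of horizontal edges $P$ uses is always of the form $(c-1)+2j$, since $P$ must cross each of the $c-1$ ``vertical cuts'' an odd number of times, being on the left of every cut at its start $(0,0)$ and on the right at its terminus in column $c-1$; and if $j\le m$ then $A_{r,c,j}\subseteq A_{r,c,m}$, so it suffices to show the terminus lies in $A_{r,c,j}$. Hence I may assume $P$ uses exactly $(c-1)+2m$ horizontal edges and ends at $(x,c-1)$, writing $h_i\ge 1$ (odd) for the number crossing the cut between columns $i-1$ and $i$, so $\sum_{i=1}^{c-1}(h_i-1)=2m$. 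I may also assume $m\le\lfloor(r-1)/2\rfloor$: for larger $m$ the set $A_{r,c,m}$ already contains every row of the correct parity when $r$ is even (and is all of $\mathbb{Z}_r$ when $r$ is odd), and the parity claim is the standard $2$-coloring argument on the bipartite cylinder. The base case $c=2$ is immediate from Proposition \ref{prop:ggall}: a Hamiltonian path on an $r\times 2$ cylinder from $(0,0)$ to column $1$ trivially uses ``one horizontal edge between every subsequent pair of columns'' (there are none), so its terminus lies in $A_{r,2,m}$.

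For the inductive step ($c\ge 3$), the easy case is $h_{c-1}=1$: once $P$ enters the last column it never leaves, so $P$ decomposes as a Hamiltonian path of the $r\times(c-1)$ cylinder from $(0,0)$ to some $(b,c-2)$, followed by the single crossing edge, followed by a Hamiltonian path of the cycle $C_r$ on column $c-1$ starting at row $b$. The first piece uses $(c-2)+2m$ horizontal edges, so by the inductive hypothesis $b\in A_{r,c-1,m}$; the $C_r$-piece ends at $b\pm1\pmod r$; and by the ``triangular growth'' identity $A_{r,c,m}=\{y\pm1 : y\in A_{r,c-1,m}\}$ established inside the proof of Proposition \ref{prop:Arcm}, we conclude $x\in A_{r,c,m}$.

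Everything therefore comes down to the case $h_{c-1}\ge 3$, which is the technical heart of the argument and, I expect, why the proof merits its own section. The goal is to show that devoting part of the horizontal budget to the last cut is never advantageous: one wants to transform $P$, keeping the same terminus $(x,c-1)$ and without increasing the total number of horizontal edges, into a path whose last cut carries a single crossing, and then invoke the easy case. The natural mechanism is a local exchange that trades two of the $h_{c-1}$ crossings of the last cut for two extra crossings of the second-to-last cut --- intuitively, merging two cyclically consecutive arcs of $C_r$ in column $c-1$ across the vertical edge between them, while rerouting the two freed crossing edges into column $c-2$ --- and checking that the result is still a single Hamiltonian path with the same endpoints. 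Iterating pushes all ``excess'' crossings to the first cut, where Proposition \ref{prop:ggall} applies. An equivalent route peels column $c-1$ off directly, producing a system of $k+1$ vertex-disjoint paths covering the $r\times(c-1)$ cylinder (one with an endpoint at $(0,0)$, the other $2k+1$ endpoints lying in column $c-2$), and proves a strengthened statement about such path systems by a nested induction, simultaneously tracking where the column-$(c-2)$ endpoints can sit and how they are matched to the $k+1$ arcs of $C_r$ in column $c-1$.

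The main obstacle is exactly this last case, and I do not expect a single clean move to suffice: whether two last-cut crossings can be ``merged'' depends sensitively on the local configuration of $P$ near columns $c-2$ and $c-1$ --- whether two crossings occur in adjacent rows, and which of the intervening vertical edges in those two columns $P$ actually uses --- so a genuine enumeration of these configurations is needed, with each reroute backed by a global check that no cycle is created and the path stays connected. The small and degenerate cases --- $r=2$, the four short GG configurations of Figure \ref{fig:GG1} when $m=0$, and the transition back to $c=2$ --- must be verified separately as well. It is this case analysis, rather than any one idea, that makes the proof long.
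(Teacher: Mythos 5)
Your setup is sound as far as it goes: the parity observation that every cut is crossed an odd number of times, the reduction to exactly $(c-1)+2m$ horizontal edges, the base case $c=2$ via Proposition \ref{prop:ggall}, and the case $h_{c-1}=1$ (decompose into a Hamiltonian path on the $r\times(c-1)$ cylinder plus a traversal of the last column, then apply induction and the ``triangular growth'' relation) are all correct; the last of these is essentially how the paper disposes of the same situation in Claim \ref{cm:3edges}, using minimality in $r+c$ rather than explicit induction on $c$.

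But the proposal has a genuine gap exactly where you say the ``technical heart'' lies: the case $h_{c-1}\ge 3$ is never proved, only described. The single concrete mechanism you offer --- swap two adjacent crossings of the last cut for the two opposite vertical edges and iterate leftward --- does not in general preserve the property of being a Hamiltonian path: as the paper's Claim \ref{cm:swap} shows, the exchange may split the path into a path plus a disjoint cycle, and there is no local way to reabsorb that cycle. Handling this failure is what occupies essentially all of Section \ref{sec:mainpf}: the paper does not run a constructive ``push the excess to the first cut'' procedure at all, but instead takes a counterexample minimal with respect to $(r+c,\,r,\,\text{reverse-lexicographic order of } (h_{c-2},\dots,h_0))$, propagates the created cycle column-by-column to the left (each step strictly decreasing the reverse-lex order), proves by a forced-edge analysis (split by the parity of $r$) that every intermediate cycle must be a doubled vertical edge, and finally derives a contradiction by deleting the two rows carrying those 2-cycles and invoking minimality of $r+c$ together with property 5 of Proposition \ref{prop:AG}. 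Your alternative suggestion --- peel off the last column and prove a strengthened statement about systems of $k+1$ disjoint paths with prescribed endpoints --- is likewise only named, with no formulation of the strengthened invariant or proof of its inductive step, and it is not at all clear such an invariant closes under induction. Since you explicitly concede that ``a genuine enumeration of these configurations is needed'' without carrying it out, the proposal is an outline of a plausible strategy rather than a proof; the missing content is precisely the cycle-propagation and 2-cycle classification machinery (or some substitute for it) that the paper supplies.
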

\begin{proof}[Proof of Theorem \ref{thm:result}]
Suppose $m^* \leq 0$. This implies $x \in \{-c, -c+2, \ldots, c-2, c\}$ mod $r$. Note that this set is equal to $A_{r,c,0}$ by Proposition \ref{prop:Arcm}. By definition of $A_{r,c,0}$, there exists a GG path, using exactly $c-1$ horizontal edges, from 0 to $-a_2$. Appending the edge from $-a_2$ to 0 gives a Hamiltonian cycle with cost $c$. This cycle is optimal, since $c$ is a lower bound on the cost of any tour. 

Next, suppose $0 < 2m^* < c-2$. By Proposition \ref{prop:Arcm}, we have that $x \in A_{r,c,m^*}$. Moreover, the minimality of $m^*$ implies that $m^*$ is the \emph{smallest} value of $m$ for which $x \in A_{r,c,m}$. By definition of $A_{r,c,m^*}$, there is a GG path from 0 to $-a_2$ with cost $(c-1) + 2m^*$, and appending the wraparound edge from $-a_2$ to 0 gives a tour of cost $c + 2m^*$. To show that this is optimal, note first that since $2m^* < c-2$, we have $c+2m^* < 2(c-1)$. Therefore by Theorem \ref{thm:reduction}, the optimal tour consists of a cheapest Hamiltonian path $P$ from 0 to $-a_2$, plus the wraparound edge from $-a_2$ to 0. Now, Theorem \ref{thm:main} says that we can always take $P$ to be a GG path. Since $m^*$ is the smallest value of $m$ for which $x \in A_{r,c,m}$, it follows that the cost of $P$ is $(c-1) + 2m^*$, so we are done. 

Finally, suppose that $2m^* \geq c-2$ or $m^*$ does not exist. By Proposition \ref{prop:Arcm}, this implies that $x \not\in A_{r,c,m}$ for any $m$ with $0 \leq 2m < c-2$. This, together with Theorems \ref{thm:reduction} and \ref{thm:main}, implies that the upper bound tour is optimal, the cost of which is $2c-2$. 
\end{proof}

Figure \ref{fig:lbggub} shows the three types of tours that correspond to the three cases in Theorem \ref{thm:result}. 

    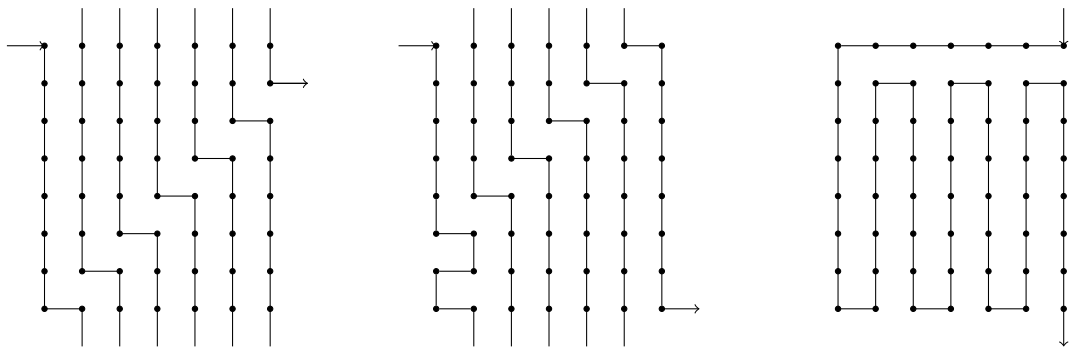
\begin{figure}
    \centering
    \begin{subfigure}[t]{0.3\textwidth}
        \centering
        \begin{tikzpicture}[scale=0.5]
        \foreach \x in {0,1,...,6} {
            \foreach \y in {0,...,7} {
                \smvertex (\x\y) at (\x, \y) {};
            }
        }
        
        \foreach \x in {0,1,...,6} {
            \foreach \y in {\x,...,6} {
                \draw (\x, \y) -- (\x, \y+1);
            }
            \draw (\x, \x) -- (\x+1, \x);
        }
        
        \foreach \x in {1,2,...,6} {
            \draw (\x, 8) -- (\x, 7);
            \foreach \y in {1,...,\x} {
                \draw (\x, \y-2) -- (\x, \y-1);
            }
        }
        \draw[->] (-1,7) -- (0, 7);
        \draw[->] (6,6) -- (7,6);
        
        \end{tikzpicture}
        \caption{A lower bound tour, which has cost $c$.}
        \label{subfig:lb}
    \end{subfigure}%
    ~ 
    \begin{subfigure}[t]{0.3\textwidth}
        \centering
        \begin{tikzpicture}[scale=0.5]
        \foreach \x in {0,1,...,6} {
            \foreach \y in {0,...,7} {
                \smvertex (\x\y) at (\x, \y) {};
            }
        }
        
        \draw (07) -- (06)--(05)--(04)--(03)--(02);
        \draw (17) -- (16)--(15)--(14)--(13);
        \draw (27) --(26)--(25)--(24);
        \draw (37) --(36)--(35);
        \draw (47) --(46);
        
        \draw (12)--(11)--(01)--(00)--(10)--(1,-1);
        
        \draw (60) -- (67);
        
        \foreach \x in {0,1,...,5} {
            \draw (\x, \x+2) -- (\x+1, \x+2);
        }
        
        \foreach \x in {1,2,...,5} {
            \draw (\x,8) -- (\x,7);
        }
        
        \foreach \x in {2,3,...,5} {
            \foreach \y in {-1,...,\x} {
                \draw (\x, \y) -- (\x, \y+1);
            }
        }
        \draw[->] (-1,7) -- (0, 7);
        \draw[->] (6,0) -- (7,0);
        \end{tikzpicture}
        \caption{A tour that consists of a GG path using $m$ additional horizontal pairs, plus the wraparound edge to 0. Here, $m = 1$.}
        \label{subfig:gg}
    \end{subfigure}
    ~ 
    \begin{subfigure}[t]{0.3\textwidth}
        \centering
        \begin{tikzpicture}[scale=0.5]
        \foreach \x in {0,1,...,6} {
            \foreach \y in {0,...,7} {
                \smvertex (\x\y) at (\x, \y) {};
            }
        }
        
        \foreach \x in {0,1,...,6} {
            \draw (\x,6) -- (\x, 0);
        }
        
        \foreach \x in {1,3,5} {
            \draw (\x,6) -- (\x+1, 6);
            \draw (\x,0) -- (\x-1,0);
        }
        
        \draw (06) -- (07) -- (67);

        \draw[->] (6,8) -- (6, 7);
        \draw[->] (6,0) -- (6,-1);

        \end{tikzpicture}
        \caption{An upper bound tour, which has cost $2c-2$.}
        \label{subfig:ub}
    \end{subfigure}
    \caption{An illustration of the three types of tours in Theorem \ref{thm:result}.}
    \label{fig:lbggub}
\end{figure}

\section{The Algorithm}\label{sec:alg}
We turn now to showing how Theorem \ref{thm:result} naturally leads to an algorithm for two-stripe circulant TSP. In this section, we will describe the algorithm formally and prove that it runs in polynomial time. As mentioned in the introduction, it is possible to represent the input to two-stripe circulant TSP with 3 numbers: 
\begin{itemize}
	\item $n$, the number of nodes,
	\item $a_1, a_2 \leq \frac{n}{2}$, the lengths of the edges with finite cost (i.e. the cost of edge $\{i, j\}$ is finite iff $\min\{n-\abs{i-j}, \abs{i-j}\} \in \{a_1, a_2\}$).
\end{itemize}
(Recall that we assume without loss of generality that $c_{a_1} = 0$ and $c_{a_1} = 1$.)

It is important to discuss what we mean by ``polynomial time." We will take ``polynomial time" to mean polynomial time in the bit size of the input, which in our case is \emph{polylogarithmic} in $n$. Note then that we cannot, strictly speaking, output the entire tour as a sequence of vertices, because this would require listing $n$ numbers, taking $\Omega(n)$ time. Instead, we will show a guarantee on our algorithm that is similar in spirit to the statement of Theorem \ref{thm:result}: There are two parametrized classes of tours to which the optimal tour can belong (an upper bound tour or a GG path plus an edge), and we can determine the class as well as output a set of parameters that describe the tour in polynomial time. (In particular, we are able to compute the \emph{cost} of the optimal tour in polynomial time.) Given the class and the corresponding parameters, it is easy to output the sequence of vertices in the tour in $O(n)$ time. 

Below is the algorithm. It mimics the statement of Theorem \ref{thm:result}. Throughout the remainder of this section, we will let $r := \frac{n}{g_1}$ and $c := g_1$. 
\begin{enumerate}
\item Calculate the row index of $-a_2$. That is, calculate the value of $x \in \{0,1,\ldots, r-1\}$ such that the cylindrical coordinates of $-a_2$ are $(x, c-1)$. (We will explain how to do this step in $O(\log^2n)$ time.)
\item Compute $m^*$, the smallest integer value of $m$ such that $m \geq -\frac{c}{2}$ and $x \in \{c+2m, -(c+2m)\} \mod r$. (We will explain how to do this step in $O(\log ^2n)$ time.)
\item If $2m^* < c-2$, then the cost of the optimal tour is $c + \max\{0, 2m^*\}$. The optimal tour is achieved by taking a cheapest GG path from 0 to $-a_2$ (which will use exactly $2\max\{m^*, 0\} + (c-1)$ horizontal edges), and appending the wraparound edge from $-a_2$ to 0. 
\item Otherwise, if $2m^* \geq c-2$ or $m^*$ does not exist, the cost of the optimal tour is $2c-2$. In this case, the upper bound tour (see Proposition \ref{prop:UB} and Figure \ref{fig:UB}), is optimal. 
\end{enumerate}



\begin{thm}[Correctness of the Algorithm]
\label{thm:alg_correctness}
The algorithm correctly determines a minimum-cost tour. 
\end{thm}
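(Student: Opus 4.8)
The plan is to verify that each of the four steps of the algorithm faithfully implements the corresponding case of Theorem~\ref{thm:result}, so that correctness reduces entirely to correctness of that theorem (which we are permitted to assume) together with the correctness of the two computational subroutines in Steps 1 and 2. First I would observe that Step~1 computes the row index $x$ of the vertex $-a_2$ in the cylinder graph, i.e.\ the unique $x \in \{0,1,\ldots,r-1\}$ with $xa_1 \equiv_n -g_1 a_2$, which is exactly the quantity appearing in Theorem~\ref{thm:result} and Theorem~\ref{thm:reduction}; correctness here is just Proposition~\ref{prop:solv} guaranteeing uniqueness of this solution, and the detailed complexity analysis of \emph{how} to compute it in $O(\log^2 n)$ time is deferred (the correctness statement only needs that the value is computed correctly). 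Likewise Step~2 computes $m^*$ exactly as defined in the statement of Theorem~\ref{thm:result}, so nothing new is needed beyond checking that the algorithm's description of $m^*$ (smallest integer $m \geq -c/2$ with $x \in \{c+2m, -(c+2m)\} \bmod r$) matches the theorem's verbatim; I would note that the theorem writes the condition as $x \in \{2m+c, -(2m+c)\} \bmod r$, which is identical.

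Next I would do the case analysis matching Steps~3--4 against the three bullets of Theorem~\ref{thm:result}. If $m^* \leq 0$: then $2m^* \leq 0 < c-2$ (using $c \geq 2$; in fact we need $c-2 > 0$, so I should be slightly careful about the boundary $c=2$, but when $c=2$ the upper and lower bounds coincide at $2c-2 = c = 2$ and the algorithm still outputs the right cost $c + \max\{0,2m^*\} = c$), so Step~3 fires and outputs cost $c + \max\{0, 2m^*\} = c + 0 = c$, agreeing with the first bullet. If $0 < 2m^* < c-2$: Step~3 fires and outputs $c + \max\{0,2m^*\} = c + 2m^*$, agreeing with the second bullet. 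If $2m^* \geq c-2$ or $m^*$ does not exist: Step~4 fires and outputs $2c-2$, agreeing with the third bullet. So in every case the algorithm outputs the cost asserted by Theorem~\ref{thm:result}, which by that theorem is the cost of the optimal tour.

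Then I would argue that the algorithm also produces a valid \emph{description} of an optimal tour, not merely its cost. In the Step~3 case, Theorem~\ref{thm:main} guarantees that among Hamiltonian paths from $(0,0)$ to $(x,c-1)$ there is a GG path using at most $(c-1) + 2\max\{m^*,0\}$ horizontal edges (since $x \in A_{r,c,\max\{m^*,0\}}$ by Proposition~\ref{prop:Arcm}, as $m^*$ is by definition the smallest $m$ with $x \in A_{r,c,m}$ when $m^* > 0$, and $x \in A_{r,c,0}$ when $m^* \leq 0$), and by the minimality of $m^*$ no GG path with fewer horizontal edges exists; appending the wraparound edge from $-a_2$ to $0$ (valid since $(x,c-1)$ is the vertex $-a_2$ by Theorem~\ref{thm:reduction}) yields a tour of exactly the claimed cost, and since $c + 2\max\{m^*,0\} < 2(c-1)$ here, Theorem~\ref{thm:reduction} shows this equals the optimum. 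In the Step~4 case, Proposition~\ref{prop:UB} provides an explicit upper bound tour of cost $2(c-1) = 2c-2$, and Theorems~\ref{thm:reduction} and~\ref{thm:main} (via Proposition~\ref{prop:Arcm}) show no cheaper tour exists because $x \notin A_{r,c,m}$ for any $m$ with $0 \leq 2m < c-2$. Thus in both cases the algorithm identifies the correct class and the parameters ($m^*$, or ``upper bound tour'') from which the tour can be written out in $O(n)$ time.

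I do not expect any real obstacle here: the theorem statement was essentially engineered to mirror the algorithm, so the proof is a bookkeeping exercise of lining up the cases. The one place demanding a little care is the interaction of the strict inequality $2m^* < c-2$ in the algorithm with the non-strict $2m^* \geq c-2$ and with the boundary value $c=2$ (where $c-2 = 0$); I would handle this by noting explicitly that when $c=2$ the lower and upper bounds are equal, so it does not matter which branch is taken, and otherwise the trichotomy $2m^* < c-2$ vs.\ $2m^* \geq c-2$ vs.\ ``$m^*$ undefined'' exactly partitions the possibilities covered by the three bullets of Theorem~\ref{thm:result}. The genuinely substantive content — that GG paths suffice (Theorem~\ref{thm:main}) and that the reduction to cylinder-graph Hamiltonian paths is valid (Theorem~\ref{thm:reduction}) — is assumed, and the running-time claims for Steps 1 and 2 are addressed separately in the section.
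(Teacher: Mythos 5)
Your proposal is correct and follows essentially the same route as the paper, which simply observes that correctness is immediate from Theorem \ref{thm:result} and its proof (the GG-path-plus-wraparound and upper-bound-tour constructions appearing there); your write-up just spells out the case-by-case bookkeeping, including the harmless $c=2$ boundary where the branches coincide.
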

\begin{proof}
This directly follows from Theorem \ref{thm:result} and its proof.
\end{proof}

\begin{thm}[Runtime of the Algorithm]
\label{thm:alg_runtime}
The algorithm calculates the cost of the optimal tour in $O(\log^2n)$ time. Moreover, it can output a set of parameters that describe an optimal tour in an additional $O(\log^2 n)$ time, and it can return the sequence of $n$ vertices in the tour in $O(n)$ time.
\end{thm}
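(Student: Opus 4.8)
The plan is to go through the four steps of the algorithm and bound the running time of each, assuming throughout that arithmetic operations on $O(\log n)$-bit integers take polylogarithmic time (addition/subtraction in $O(\log n)$, multiplication and Euclidean reduction in $O(\log^2 n)$ with schoolbook methods, and $\gcd$ via the Euclidean algorithm in $O(\log^2 n)$). First I would observe that $g_1 = \gcd(n, a_1)$, $g_2 = \gcd(n, a_1, a_2)$, $r = n/g_1$ and $c = g_1$ can all be computed in $O(\log^2 n)$ time, and that checking the trivial cases ($g_2 > 1$, $g_1 = 1$, $c_{a_1} = c_{a_2}$) from Section \ref{sec:BG} is immediate; so we may assume $g_1 > g_2 = 1$.

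For Step 1 I need to solve the linear congruence $x a_1 \equiv_n -g_1 a_2$ for the unique $x \in \{0, 1, \ldots, r-1\}$ (as in Theorem \ref{thm:reduction}); equivalently, since $g_1 = \gcd(n, a_1)$ divides $g_1 a_2$, Proposition \ref{prop:solv} guarantees exactly $g_1$ solutions mod $n$, spaced $n/g_1 = r$ apart, and hence a single one in $\{0, \ldots, r-1\}$. I would compute it with the extended Euclidean algorithm: write $d = \gcd(a_1, n) = g_1$ together with Bézout coefficients $s, t$ with $s a_1 + t n = g_1$; then $x_0 = s \cdot (-g_1 a_2 / g_1) = -s a_2$ is a solution mod $n$, and reducing $x_0$ mod $r$ gives the desired $x$. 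Each of these is a constant number of $O(\log n)$-bit multiplications and reductions, so Step 1 runs in $O(\log^2 n)$ time.

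For Step 2 I need the smallest integer $m \ge -c/2$ with $x \equiv_r c + 2m$ or $x \equiv_r -(c+2m)$. Set $u = c + 2m$; as $m$ ranges over integers $\ge -c/2$, $u$ ranges over nonnegative integers of the same parity as $c$. So I must find the smallest such $u \ge 0$, $u \equiv c \pmod 2$, with $u \equiv x \pmod r$ or $u \equiv -x \pmod r$, and then report $m^* = (u - c)/2$ (declaring ``$m^*$ does not exist'' only if no such $u$ arises, which by Proposition \ref{prop:Arcm}'s bound $m \le \lfloor (r-1)/2 \rfloor$ I can detect by checking whether the candidate exceeds that range). Concretely: the condition ``$u \equiv x \pmod r$ and $u \equiv c \pmod 2$'' is a pair of congruences to a modulus that is $r$ or $2r$ depending on the parity of $r$ and $x$; by CRT (or a two-line case analysis on the parity of $r$) it either has no solution or picks out a single residue class mod $\mathrm{lcm}(2,r)$, whose least nonnegative representative I compute directly; I do the same for the class ``$u \equiv -x \pmod r$'' and take the smaller of the (at most two) resulting values of $u$. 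Each sub-case is $O(1)$ arithmetic operations on $O(\log n)$-bit numbers, so $m^*$ (or its nonexistence) is found in $O(\log^2 n)$ time; Steps 3 and 4 are then a single comparison $2m^* \lessgtr c - 2$ and an arithmetic formula for the cost, taking $O(\log n)$ time. This proves the first sentence of the theorem; correctness of what is computed is exactly Theorem \ref{thm:result} together with Theorem \ref{thm:alg_correctness}.

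Finally, for the output claims: the optimal tour is, by Theorem \ref{thm:result}, either an upper-bound tour (Proposition \ref{prop:UB}, Figure \ref{fig:UB}) or a GG path from $0$ to $-a_2$ using $2\max\{m^*,0\} + (c-1)$ horizontal edges plus the wraparound edge. In the first case the parameters are just ``upper-bound tour'' together with $g_1$; in the second case a GG path is fully specified (Definition \ref{defn:GG}) by the number $2k+1$ of horizontal edges in the first column pair — i.e. $k = \max\{m^*, 0\}$ — and the binary choice of whether its first edge goes ``down'' to $(1,0)$ or ``up'' to $(r-1,0)$, plus, in each subsequent column, the binary ``up/down'' choice; so the parameter set is $O(c) = O(g_1) = O(n)$ bits, but the essential descriptor (class, $k$, and the first-column-pair direction) is $O(\log n)$ bits, and I would argue that the per-column up/down choices are themselves determined by a short rule — namely, walk the ``triangular growth'' of Proposition \ref{prop:Arcm}/Corollary \ref{cor:newrows} from the reachable endpoint set back to the target row $x$ — which can be encoded in $O(\log n)$ space. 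Hence the descriptive parameters come out in an additional $O(\log^2 n)$ time. Expanding this description into the literal sequence of $n$ vertices is then a direct traversal: emit the first column according to Definition \ref{defn:GG}, then for each of the remaining $c-2$ columns emit $r$ vertices following the recorded up/down choice, a total of $O(n)$ elementary steps. The main obstacle I anticipate is not any single estimate but the bookkeeping in Step 2 — getting the parity/CRT case analysis for the two congruence classes exactly right, and correctly delineating when ``$m^*$ does not exist'' versus when $2m^* \ge c-2$, so that the three-way branch matches Theorem \ref{thm:result} on the nose.
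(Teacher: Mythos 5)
Your treatment of the first claim (computing the cost in $O(\log^2 n)$ time) is correct and essentially the paper's argument: Step 1 is one application of the extended Euclidean algorithm to the congruence $a_1x\equiv_n -g_1a_2$, and Step 2 amounts to solving the two congruences $2m\equiv_r x-c$ and $2m\equiv_r -(x+c)$ and taking the smaller admissible candidate; your reformulation via $u=c+2m$ and CRT modulo $\mathrm{lcm}(2,r)$ is only cosmetic. One remark: your parenthetical that nonexistence of $m^*$ is detected ``by checking whether the candidate exceeds $\lfloor (r-1)/2\rfloor$'' is off --- nonexistence is exactly infeasibility of both congruences, and the range check is unneeded (it happens to be harmless for the cost, but only because a candidate with $m^*>\lfloor(r-1)/2\rfloor$ forces $2m^*\geq c-2$, which you would have to argue).

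The genuine gap is in the second claim, that a set of parameters describing the optimal tour can be output in an additional $O(\log^2 n)$ time. In the GG-path case you correctly observe that the natural description (the start direction plus an up/down choice in each of the remaining $c-2$ columns) has $\Theta(c)$ bits, which can be $\Theta(n)$, and you then assert that these choices are ``determined by a short rule --- walk the triangular growth \ldots back to the target row $x$ --- which can be encoded in $O(\log n)$ space.'' As stated, that walk is one step per column, i.e.\ $O(c)=O(n)$ time, and no polylogarithmic procedure is actually given. The missing idea, which the paper supplies, is that only the \emph{number} $k$ of columns (from the third onward) whose first vertical edge goes up affects the endpoint: if the path starts upward, the final row is $2m^*+2+k-(c-2-k) \bmod r$, so $k$ is obtained from the single congruence $2k\equiv_r x+c-4-2m^*$, solved by the extended Euclidean algorithm in $O(\log^2 n)$ time; if that congruence is infeasible or yields $k>c-2$, the path starts downward and one solves $2k\equiv_r x+2m^*+c$ instead. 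The pair (start direction, $k$) --- together with the separate handling of the $m^*\leq 0$ case, where there are three first-column options --- is the $O(\log n)$-bit parameter set, and expanding it (say, $k$ up-columns followed by $c-2-k$ down-columns) gives the $O(n)$-time vertex listing. Without this reduction to a second linear congruence, the $O(\log^2 n)$ bound for the parameter output is not established.
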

\begin{proof}
To implement the algorithm, we need to first describe how to find the row index $x$ of $-a_2$.  By Theorem \ref{thm:reduction}, we know that $0 \leq x < \frac{n}{g_1}$ is the unique solution to $a_1x \equiv - g_1a_2 \mod n$. Let us see how to compute $x$ in $O(\log^2 n)$ time. Since $g_1 = \gcd(a_1, n)$, we can divide both sides of the above relation by $g_1$ to obtain the equivalent relation
$$\left(\frac{a_1}{g_1}\right)x \equiv -a_2 \mod \frac{n}{g_1}.$$
Since $\gcd(\frac{a_1}{g_1}, \frac{n}{g_1}) = 1$, this congruence has a unique solution $x \in \{0, 1, \ldots, \frac{n}{g_1}\}$, and it can be found using the extended Euclidean algorithm in $O(\log^2 n)$ time. (See, for example, Theorem 44 in \cite{shoup09}.)


In the second step of the algorithm, we need to compute $m^*$, the smallest integer value of $m$ such that $m \geq -\frac{c}{2}$ and $x \in \{c+2m, -(c+2m)\} \mod r$.  As a technical point, we don't actually need to precisely compute $m^*$ if it is nonpositive, since in that case the optimal tour will have cost $c + \max\{0, 2m^*\}=c+0=c$.  Similarly, $m^*$ may not exist. Thus, we will determine if $m^*$ is nonpositive, positive, or does not exist, and we only need to find the explicit value of $m^*$ if it is positive and exists.

To find the value of $m^*$ (or determine that it is nonpositive or does not exist), we solve two congruences for $m$, which we can do with the extended Euclidean algorithm in $O(\log^2 n)$ time.  First, we find the smallest value of $m$ such that $m \geq -\frac{c}{2}$ and $x \equiv_r c+2m$.  Second, we find the smallest value of $m$ such that $m \geq -\frac{c}{2}$ and $x \equiv_r -(c+2m)$. Recall that we computed $x$ in the first step of the algorithm, so $m$ is the only unknown in each congruence.

Lets consider the first congruence, that $x \equiv_r c+2m$. Equivalently, we want to find $m$ where $2m \equiv_r (x-c)$. Let $$r' = \frac{r}{\gcd(r, 2)} = \begin{cases} r, & r \text{ odd} \\ r/2, & r \text{ even}. \end{cases}$$  By Proposition \ref{prop:solv}, the congruence $2m \equiv_r (x-c)$ either has no solution for $m$ or there is a unique solution $m_0$ where $0\leq m_0 < r'$ and all other solutions are of the form $m_0 + \lambda r'$ for $\lambda \in \Z.$  Thus we can use the extended Euclidean algorithm in $O(\log^2 n)$ time to either establish that this congruence is infeasible, or find $m_0$.  Once we find $m_0$, then 
we will have one candidate for $m^*$: (1) $m^*=m_0$ if $m_0 - r' < -\frac{c}{2}$, or (2) $m^*=m_0 - r'$ if $m_0-r' \geq -\frac{c}{2}$.  In the first case where $m_0-r'<-\frac{c}{2}$, we know that any candidate  solution of the form $m_0 + \lambda r'$ for $\lambda \leq -1$ will not satisfy the constraint that it be at least $-\frac{c}{2}$.  In the second case, where $m_0-r' \geq -\frac{c}{2}$, we will have that $m^*=m_0 - r'< 0$ (since $m_0 < r'$), and thus will have established that $m^*$ is negative. 


The situation for the second congruence is analogous: applying the extended Euclidean algorithm again either certifies infeasiblility or gives us another  candidate value for $m^*$. To find the true value of $m^*$, it suffices to take the minimum of the two candidate values. If both congruences are infeasible, then $m^*$ does not exist, and we are in the case where the upper bound tour is optimal.

So far, we have shown how to compute $x$ and $m^*$ in $O(\log^2n)$ time. Note that by Theorem \ref{thm:result}, this is already enough to calculate the cost of the optimal tour. Now, let us see how to output a set of parameters that uniquely describe the tour in $O(\log^2n)$ time, and how to output the sequence of $n$ vertices that comprise the tour in $O(n)$ time.
\begin{enumerate}
	\item First, suppose $2m^* \geq c-2$ or $m^*$ does not exist. Then, as shown in the proof of Theorem \ref{thm:result}, the optimal tour is the upper bound tour, which uses $2g_1 - 1$ horizontal edges. This tour is described in Proposition \ref{prop:UB}, and does not need any additional parameters to describe. The exact sequence of vertices in the upper bound tour can be listed in $O(n)$ time.
	\item On the other hand, suppose $2m^* < c-2$. Then, the optimal tour consists of a cheapest GG path $P$ from 0 to $-a_2$ (which will use exactly $2m^* + c - 1$ horizontal edges), plus the wraparound edge from $-a_2$ to 0. It remains for us to find parameters that describe this GG path, and show that the path can be listed in $O(n)$ time. 
	
	First, suppose $m^* \geq 1$. In Section \ref{sec:reduction}, we essentially showed that any GG path can be described with 2 parameters: 
	\begin{enumerate}
	    \item The way it moves in the first column (either starting by going vertically up or vertically down). Note that this also uniquely defines how the path moves in the second column. 
	    \item In each of the remaining $c-2$ columns, whether the first vertical edge of $P$ in the column is going up or down. Note that when considering the endpoint of $P$, the only thing that matters here is the \emph{number} of columns (from the third column onward) for which the first vertical edge of $P$ in the column is going down. We will let $k$ denote the number of columns from the third column onward for which the first vertical edge of $P$ in the column is going up. 
	\end{enumerate} 
	If $P$ starts in the first column by going vertically up, then (as in the proof of Proposition \ref{prop:Arcm}), the last vertex visited by $P$ in the second column is $(2m^*+2, 1)$. Then, the row index of the last vertex of $P$ (in the last column), is $2m^*+2 + k - (c-2-k) \mod r$. To see where this comes from, let $x_i$ denote the row index of the last vertex that is visited in column $i$, where $i \geq 2$. Then, any column $i \geq 3$ in which the first vertical edge goes down, we have $x_i = x_{i-1} - 1$, and if the first vertical edge goes up, we have $x_i = x_{i-1} + 1$. Therefore, if exactly $k$ of the columns from the third column onward have their first vertical edge going up (and hence $(c-2-k)$ of them have their first vertical edge going down), we know that the row index of the last vertex visited in the last column is $2m^*+2 + k - (c-2-k)$.
	To solve for $k$, we {find the smallest non-negative integer} $k$ that makes the above expression congruent to $x$ (mod $r$):
	$$ 2k \equiv_r x +c-4-2m^*.$$
	 We can solve the congruence (or certify that no solution exists), using the extended Euclidean algorithm in $O(\log^2n)$ time. If a solution $k$ is found {that satisfies $k \leq c-2$}, then we are done: We know that an optimal GG path starts in the first column by moving up, and starts by going down in $k$ of the columns from the third column onward. This is enough information to output the sequence of $n$ vertices of the tour in $O(n)$ time.
	
	On the other hand, if the linear congruence is infeasible {or $k > c-2$}, it must mean that $P$ starts in the first column by going vertically down. In this case (as in the proof of Proposition \ref{prop:Arcm}), the last vertex visited by $P$ in the second column is $(r-2m^*-2, 1)$. Then, the row index of the last vertex of $P$ is $(r-2m^*-2) + k - (c-2-k) \mod r$. As before, we can solve for $k$ by solving the linear congruence
	$$2k \equiv x + 2m^* + c\mod r,$$
	for the smallest non-negative integer solution $k$, which takes $O(\log^2n)$ time using the extended Euclidean algorithm.  With the value of $k$ known, we know that such GG path starts in the first column by moving down, and in the columns from the third onward, starts in $k$ of them by moving down. Again, this allows us to generate the sequence of vertices in the tour in $O(n)$ time. 
	
	Note that (at least) one of the two congruences above must always have a solution, since we know that a GG path of cost $2m^* + (c-1)$ exists.
	
	Finally, we still need to consider the case $m^* \leq 0$. In this case, the optimal GG path uses 1 edge between every pair of consecutive columns. This case is very similar to the case $m^* \geq 1$, so we will omit the details here. The only difference is that when $m^* \leq 0$, there are 3 options for where the GG path can end in the second column (as opposed to 2 options when $m^* \geq 1$). These options are described after Definition \ref{defn:GG} in Section \ref{sec:reduction}. 
\end{enumerate}
\end{proof}

\section{Proof of Theorem \ref{thm:main}}
\label{sec:mainpf}

This section contains the final main proof of the paper, which is the proof of Theorem \ref{thm:main}. Recall the statement of the theorem:

\main*

Our proof of Theorem \ref{thm:main} involves a minimal counterexample-style argument. We begin by sketching the main ideas below; Sections \ref{subsec:cycleprop} to  \ref{subsec:contradiction} fill in the details.  Specifically, we will consider a  counterexample to Theorem \ref{thm:main} that is minimal in two senses.  Suppose that there exists an $r\times c$ cylinder graph with a Hamiltonian path $P$ from $(0, 0)$ to the last column.  Suppose further that $P$ uses $(c-1)+2m$ horizontal edges, but does not end at a row in $A_{r, c, m}.$  Among all such Hamiltonian cylinder graphs and corresponding Hamiltonian paths, we specifically consider an instantiation where:
\begin{enumerate}
\item $r$ and $c$ are minimal with respect to $r+c$, then $r$.  We assume that $r, c\geq 2.$
\item Second, among all counterexamples with $r$ rows and $c$ columns, consider a counterexample path $P$ that is minimal with respect to the {\bf reverse-lexicographic ordering} $(h_{c-2}, h_{c-3}, ..., h_1, h_0)$, where $h_i$ is the number of horizontal edges used in $P$ between the $i$th and $(i+1)$st columns.  
\end{enumerate}

As base cases, we note that if $r=2$ or $c=2$, then Theorem \ref{thm:main} holds: If $r=2,$ then $P$ cannot use any extra horizontal pairs and so must be a GG path.  If $c=2$, then all horizontal edges must be between the first and second column, and this case is handled by Proposition \ref{prop:ggall}.  Finally, for any $r, c$, we note that minimal counterexample $P$ cannot have reverse-lexicographic order $(1, 1, 1, \ldots, 2m-1)$ again by Proposition \ref{prop:ggall}.  

The intuition for considering a counterexample that is minimal with respect to the reverse-lexicographic ordering is as follows: Among all Hamiltonian paths that use $(c-1)+2m$ horizontal edges, GG paths are minimal with respect to this ordering. Our strategy for arriving at a contradiction builds on this intuition. At a high level, the idea of the proof is to devise a sequence of transformations that are applied to $P$, with the intent of creating a Hamiltonian path $P'$ with $\cost(P') \leq \cost(P)$, and such that the reverse lexicographic order of $P'$ is smaller than that of $P$. 

Our transformations (see Section \ref{subsec:cycleprop}) will generate a sequence of subgraphs $P = H_0, H_1, \ldots,H_k$, such that $H_{i+1}$ is obtained by applying a transformation on $H_i$. By properties of the transformations, we will be able to show that $H_i$ satisfies the following three invariants: (1) $H_{i}$ uses at most $(c-1)+2m$ horizontal edges, (2) $H_{i}$ is either a $0$-$v$ Hamiltonian path or the disjoint union of a $0$-$v$ path and a cycle, and (3) $H_{i}$ has a smaller reverse-lexicographic order than that of $P$. 

Observe that if any $H_i$ is a 0-$v$ Hamiltonian path, then this immediately contradicts the minimality of $P$. Hence, we may assume that after applying these transformations, each $H_i$ is the disjoint union of a $0$-$v$  path $P_i$ and a cycle $C_i$.

Let $(P_1, C_1), \ldots, (P_k, C_k)$ be the sequence of path/cycle pairs generated by the sequence of transformations. The next part of the proof (see Sections \ref{subsec:cyclecase1} and \ref{subsec:cyclecase2}) shows that each $C_i$ must be a doubled vertical edge (i.e. a 2-cycle). We do this via a ``backward induction'' argument: We first show that $C_k$ is a 2-cycle. Then, we show inductively that assuming $C_k$, $C_{k-1}, \ldots, C_{i+1}$ are 2-cycles, $C_i$ must also be a 2-cycle. This inductive step is itself a mini proof by contradiction: Assuming $C_i$ is \emph{not} a 2-cycle, we show that the structure of the minimal counterexample implies that that many edges in the graph of $(P_i, C_i)$ are forced. We then use the presence of these forced edges to derive a contradiction. 

Finally, the last step of the proof (see Section  \ref{subsec:contradiction}) completes the argument by showing that in a minimal counterexample, it is not possible for $C_1, \ldots, C_k$ to all be 2-cycles. This is the place where we use the assumption that $r+c$ is minimal: By deleting two specific rows in the cylinder graph, we reduce to a graph with two fewer rows. Then, the fact that Theorem \ref{thm:main} holds on this smaller instance will enable us to show that, in fact the row index of $v$ is in $A_{r,c,m}$ after all. 

Section \ref{s:prelim} gathers some properties of $A_{r,c,m}$ that will be useful. Sections \ref{subsec:cycleprop} to Section \ref{subsec:contradiction} will contain the main proof of Theorem \ref{thm:main}.


\subsection{Preliminaries}
\label{s:prelim}

First, we note some properties of $A_{r,c,m}$ that will be useful later. These largely summarize structural observations of GG paths. See, e.g., Figure \ref{fig:GG3full}.\\
\begin{restatable}{prop}{propAG}
\label{prop:AG}
\begin{enumerate} Let $r, c\geq 2.$  $A_{r,c,m}$ satisfies the following properties.
\item $A_{r, c, m}\subset A_{r, c, m+1}$
\item $ i \in A_{r, c, m} \implies  i \pm 2 \mod r \in A_{r, c, m+1}$
\item $ i \in A_{r, c, m} \implies  i\pm 1 \mod r \in A_{r, c+1, m}.$
\item Let $m'$ be the the largest value of $m$ such that $c+2m < r$. Then $A_{r,c,m} = A_{r,c,m'}$ for all $m \geq m'$. (If $c >r,$ we take $m'=0.$)
\item If $x \in A_{r, c, m}$, then both $x$ and $x +2$ are in $A_{r+2, c, m+1}$. 
\end{enumerate}
\end{restatable}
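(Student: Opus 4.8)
The plan is to read all five items off the explicit description of $A_{r,c,m}$ from Proposition~\ref{prop:Arcm},
$$A_{r,c,m} = \{\, c+2m-2i \bmod r : 0 \le i \le c+2m \,\} \qquad (0 \le m \le \lfloor (r-1)/2 \rfloor),$$
together with the one-column growth identity $A_{r,c+1,m} = \{\, x \pm 1 \bmod r : x \in A_{r,c,m}\,\}$ established in that proof. It helps to view $A_{r,c,m}$ as the reduction modulo $r$ of the finite arithmetic progression $Q_{c,m} := \{\, -(c+2m),\, -(c+2m)+2,\, \dots,\, c+2m-2,\, c+2m \,\}$ --- a block of $c+2m+1$ integers, symmetric about $0$, all congruent to $c$ modulo $2$. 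I would prove the parts in the order (1), (3), (4), (2), (5), since (2) and (5) need (4) to cover the regime where the parameter is near $\lfloor(r-1)/2\rfloor$.

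Parts (1) and (3) should be immediate. For (1), a GG path using at most $(c-1)+2m$ horizontal edges uses at most $(c-1)+2(m+1)$ of them, so its terminal row also lies in $A_{r,c,m+1}$ (equivalently $Q_{c,m}\subseteq Q_{c,m+1}$). For (3), one application of the growth identity gives $i\in A_{r,c,m}\Rightarrow i\pm1\bmod r\in A_{r,c+1,m}$: a GG path on $c+1$ columns uses a single horizontal edge between the last two columns, and the last vertex it visits in the new column is one row above or below the last vertex it visits in column $c-1$.

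The substance is in (4), which I would prove by a short counting argument. Reduction modulo $r$ sends $Q_{c,m}$ onto $c+2m+1$ consecutive elements (spaced by $2$) of the cyclic subgroup $H := \langle 2\rangle \le \Z_r$, which equals $\Z_r$ when $r$ is odd and the order-$r/2$ subgroup of residues $\equiv c\pmod 2$ when $r$ is even. As soon as $c+2m+1 \ge |H|$ this image is all of $H$, and no larger $m$ can change it --- extra terms remain in $H$, which is already exhausted --- so $A_{r,c,m}$ is eventually constant. It remains to see that the $m'$ in the statement is already past this threshold, which follows from $c+2m' \ge r-1 \ge |H|-1$. (The case $c>r$ is just the case where $Q_{c,0}$ already has more than $r$ terms. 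One subtlety: to guarantee $c+2m'\ge r-1$ one should take $m'$ to be the largest $m$ with $c+2m\le r$; with the strict inequality as literally written, a single residue can be missing from $A_{r,c,m'}$ when $r$ is odd.)

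Finally, (2) and (5) become arithmetic once (4) handles the boundary. For (2): if $m+1\le\lfloor(r-1)/2\rfloor$, then from $i\equiv c+2m-2j\pmod r$ with $0\le j\le c+2m$ we get $i+2\equiv c+2(m+1)-2j$ and $i-2\equiv c+2(m+1)-2(j+1)$, and the indices $j,j+1$ both lie in $\{0,\dots,c+2(m+1)\}$, so $i\pm2\bmod r\in A_{r,c,m+1}$; if $m+1>\lfloor(r-1)/2\rfloor$, then by (4) both sets equal the stabilized set $H$, which is closed under $x\mapsto x\pm2$. For (5): take $x\in\{0,\dots,r-1\}$ with $x\in A_{r,c,m}$. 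If $c+2m\le r$, the reduction of $Q_{c,m}$ modulo $r$ is essentially injective, so either $x\le c+2m$ (hence $x$, which has the right parity, lies in $Q_{c,m}$) or $x\ge r-(c+2m)$ (hence $x-r\in Q_{c,m}$); in the first case $x$ and $x+2$ both lie in $Q_{c,m+1}=Q_{c,m}\cup\{\pm(c+2m+2)\}$ and are less than $r+2$, so both lie in $A_{r+2,c,m+1}$, and in the second case $x\equiv(x-r)-2$ and $x+2\equiv x-r\pmod{r+2}$ with both $x-r$ and $(x-r)-2$ in $Q_{c,m+1}$, giving the same conclusion. If instead $c+2m>r$, then (4) makes $A_{r,c,m}=H$, while $Q_{c,m+1}$ already exhausts the corresponding subgroup of $\Z_{r+2}$, so both $x$ and $x+2$ (both of the correct parity) lie in $A_{r+2,c,m+1}$. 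I expect the only steps that need genuine care to be the stabilization threshold in (4) and, in (5), the bookkeeping across the two moduli $r$ and $r+2$; the remainder is mechanical.
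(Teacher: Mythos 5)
Your proposal is correct, and in places it takes a genuinely different route from the paper. Parts (1) and (3) are argued exactly as in the paper (monotonicity by definition, and the one-column growth of GG paths). The main divergence is part (2): the paper proves it combinatorially, reducing to the two-column sets $A_{r,2,m}$ and doing a case analysis on $k$ and on $r-(2k+1)\in\{0,1\}$ with explicit re-routings of GG paths, whereas you read it off the closed form of Proposition \ref{prop:Arcm} by shifting the index ($j\mapsto j,\ j+1$) and dispatch the boundary regime $m+1>\lfloor (r-1)/2\rfloor$ via stabilization (4). Your coset-counting proof of (4) is also tidier than the paper's explicit ``subtract a suitable multiple of $r$ (or $2r$)'' computation (one cosmetic point: for $r$ even and $c$ odd the relevant set is the coset $c+\langle 2\rangle$ rather than the subgroup $\langle 2\rangle$ itself, which does not affect the count), and your (5) is essentially the paper's two-case computation modulo $r+2$. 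What the paper's combinatorial (2) buys is independence from the closed form and its range of validity; what your version buys is brevity, provided one is careful exactly where you were careful (the boundary handled by (4), and the fact that $A_{r,c,m}$ stops growing past $m=\lfloor(r-1)/2\rfloor$ by the definition of GG paths).

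The subtlety you flag in (4) is a genuine defect of the statement as written, not of your argument. With $m'$ defined by $c+2m'<r$ one only gets $c+2m'\ge r-2$, and when $r$ is odd and $c\equiv r\pmod 2$ the claimed equality can fail: for $r=7$, $c=3$ the statement's $m'$ equals $1$, and $A_{7,3,1}=\{1,2,3,4,5,6\}$ while $A_{7,3,2}=\mathbb{Z}_7$ (row $0$ of the last column is reached by the GG path using five horizontal edges between the first two columns), so $A_{r,c,m}=A_{r,c,m'}$ already fails at $m=m'+1$. The paper's own proof of the odd-$r$ case asserts $2c+4m'+2\ge 2r$, i.e.\ $c+2m'\ge r-1$, which is exactly what fails in this situation. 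Your repair---taking $m'$ to be the largest $m$ with $c+2m\le r$---is the right one: it makes (4) true by your counting argument, and it is compatible with the later uses of (4) (for instance, in the proof of (5) one then reduces to $c+2m\le r$, and the single additional boundary value $c+2m-2i=\pm r$, i.e.\ $x\equiv 0$, is easily checked directly).
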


The proofs of these properties are not especially illuminating and are thus detailed proofs are deferred to Appendix \ref{app:AG}.  Property 1 follows because any vertex reachable in a GG path using at most $m$ extra horizontal pairs is also reachable using at most $m+1$ extra horizontal pairs.  Property 2 summarizes the fact that adding one extra horizontal pair in a GG path allows us to shift the ending row up or down by 2.  Property 3 follows because, when a GG path enters a column past the second, it ends one row above or below where it ended in the previous column.  Property 4 indicates a threshold $m'$ beyond which extra horizontal pairs will not allow us to reach any new vertices in GG paths.  Finally, property 5 summarizes what happens when we both add two rows and one horizontal pair, and will be used in the last step of our proof (see Section  \ref{subsec:contradiction}), where we delete two specific rows in a cylinder graph.
     
Note also that, as a result of our chosen notion of minimality and the observations of Proposition \ref{prop:AG}, we quickly get the following structural result.

\begin{cm} 
\label{cm:3edges}In a minimal counterexample to Theorem \ref{thm:main}, there must be at least 3 edges between the penultimate and final column.  
\end{cm}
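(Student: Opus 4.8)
The plan is to argue by contradiction: suppose $P$ is a minimal counterexample to Theorem \ref{thm:main} in an $r \times c$ cylinder graph, using $(c-1)+2m$ horizontal edges and ending at row $v \notin A_{r,c,m}$, and suppose that $P$ uses at most $2$ horizontal edges between the penultimate column $c-2$ and the final column $c-1$. Since any Hamiltonian path from the first column to the last column must use an odd number of edges between each consecutive pair of columns, ``at most $2$'' forces exactly $1$ horizontal edge between columns $c-2$ and $c-1$; that is, $h_{c-2} = 1$.

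First I would handle the degenerate situation and then reduce dimensions. Because $h_{c-2}=1$, the single horizontal edge between the last two columns joins some vertex $(x_{c-2}, c-2)$ to $(x_{c-1}, c-1)$ where $x_{c-1} = x_{c-2}$ (horizontal edges preserve the row). Deleting the final column and its unique connecting edge leaves a Hamiltonian path $P'$ on the $r \times (c-1)$ cylinder graph, starting at $(0,0)$ and ending at $(x_{c-2}, c-2)$, which uses $(c-2)+2m$ horizontal edges. (Here I use $c \geq 3$, which holds since the base case $c=2$ of Theorem \ref{thm:main} was already dispatched via Proposition \ref{prop:ggall}.) Now $r + (c-1) < r + c$, so by the minimality of our counterexample with respect to $r+c$, Theorem \ref{thm:main} holds for the $r \times (c-1)$ graph: hence $x_{c-2} \in A_{r, c-1, m}$. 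But then Property 3 of Proposition \ref{prop:AG} gives $x_{c-2} \pm 1 \bmod r \in A_{r, c, m}$, and since the last column vertex sits at row $v = x_{c-1} = x_{c-2}$...

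The subtlety to resolve is that deleting the last column leaves the endpoint at the \emph{same} row, not a shifted one, so Property 3 as literally stated does not immediately close the gap. I would instead invoke the definition of $A_{r,c,m}$ directly: Property 3 says a GG path in $c-1$ columns ending at row $x_{c-2}$ extends to a GG path in $c$ columns ending at row $x_{c-2} \pm 1$. What I actually want is that $x_{c-2} \in A_{r,c,m}$ as well; this follows because $A_{r,c-1,m} \subseteq A_{r,c,m}$ is \emph{not} generally true, so the cleaner route is: from $x_{c-2} \in A_{r,c-1,m}$, Proposition \ref{prop:Arcm} gives $x_{c-2} \in \{(c-1)+2m - 2i \bmod r : 0 \le i \le (c-1)+2m\}$, and one checks this set is contained in $\{c+2m-2i \bmod r : 0 \le i \le c+2m\} = A_{r,c,m}$ — indeed every element $c-1+2m-2i$ equals $c+2m - 2(i) - 1$, and pairing $i$ with the index giving $c+2m-2(i+1)+1$ shows containment by the same parity-trace computation used in the proof of Proposition \ref{prop:Arcm}. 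Since $v = x_{c-2}$, we conclude $v \in A_{r,c,m}$, contradicting that $P$ is a counterexample.

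The main obstacle I anticipate is exactly this set-containment bookkeeping: making sure that "ending in the same row after deleting a column" really does land us in $A_{r,c,m}$ rather than only in $A_{r,c,m}$ shifted by one. If the direct containment $A_{r,c-1,m} \subseteq A_{r,c,m}$ fails for parity reasons (it does, since the two sets have opposite parity of $c+2m$), then the argument must instead observe that the deleted horizontal edge could equally well have been routed to row $x_{c-2} \pm 1$ had we chosen a GG-style extension — but since $P$ is fixed, what we really need is just that $x_{c-2}$, \emph{as the endpoint row of a length-$((c-2)+2m)$ Hamiltonian path in $c-1$ columns}, forces the final endpoint $v$ into $A_{r,c,m}$; and this is precisely Property 3 applied in the contrapositive direction once we note the final horizontal edge keeps the row constant while $A_{r,c,m} \supseteq \{x \pm 1 : x \in A_{r,c-1,m}\}$ and also $\supseteq A_{r,c-1,m}$ after accounting for the $+2$ slack available from one fewer forced column. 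I would double-check this step carefully against Proposition \ref{prop:Arcm}, since everything hinges on it.
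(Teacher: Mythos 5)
There is a genuine gap, and it sits exactly at the point you flagged but could not resolve. Your reduction step is fine: with only one horizontal edge between columns $c-2$ and $c-1$, deleting the last column together with that edge leaves a Hamiltonian path $P'$ on an $r\times(c-1)$ cylinder graph ending at the row of that edge, call it $w$, and minimality of $r+c$ gives $w \in A_{r,c-1,m}$. The error is your identification $v = w$. The endpoint $v$ of $P$ is \emph{not} the vertex at which $P$ enters the last column: since the single crossing edge is the only way into column $c-1$, all $r$ vertices of the last column must be visited in one contiguous stretch immediately after crossing it, using only vertical edges. That stretch is a Hamiltonian path of the $r$-cycle formed by the last column, starting at row $w$, and such a path necessarily ends at row $w+1$ or $w-1 \pmod r$. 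Hence $w \equiv v\pm 1 \pmod r$, not $w=v$ (indeed $v=w$ is impossible for $r\ge 2$). With this corrected, Property 3 of Proposition \ref{prop:AG} applies directly: $w\in A_{r,c-1,m}$ implies $w\mp 1 = v \in A_{r,c,m}$, contradicting that $P$ is a counterexample. This is precisely the paper's proof of Claim \ref{cm:3edges}.

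Your attempted repair does not work: as you yourself note, $A_{r,c-1,m}\subseteq A_{r,c,m}$ is false for parity reasons (the defining values $(c-1)+2m-2i$ and $c+2m-2i$ have opposite parities before reduction mod $r$), and the concluding remarks about ``$+2$ slack'' do not supply a valid substitute containment. The parity clash you encountered is in fact the symptom of the wrong assumption: the path cannot terminate in the same row as the crossing edge, and once you track where it actually terminates inside the last column, the off-by-one shift is exactly what Property 3 needs and the argument closes cleanly.
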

This claim says that, in the ordering $(h_{g_1-2}, h_{g_1-3}, ..., h_1, h_0),$ the number of edges $h_{g_1-2} \geq 3$.

\begin{proof}
Suppose not, and let $P$ be a minimal counterexample.  Then $P$ is a Hamiltonian path from $(0, 0)$ to the last column on an $r\times c$ cylinder graph,  $P$ uses $(c-1)+2m$ horizontal edges, and $P$ ends at a vertex $v \not\in A_{r, c, m}.$  Moreover, $c\geq 3.$  Further, $P$ uses exactly one edge between columns $c-2$ and $c-1.$  This edge must be from $(v-1 \mod r, c-2)$ to $(v-1 \mod r, c-1)$ or from $(v+1 \mod r, c-2)$ to $(v+1 \mod r, c-1)$.  In either case, deleting that edge and the last column yields a Hamiltonian path $P'$ on an $r\times (c-1)$ cylinder graph, from $(0, 0)$ to $(v+1 \mod r, c-2)$  or to $(v-1 \mod r, c-2)$.  We assume it is the former, but both cases proceed analogously.  By minimality of $r+c$, $ v+1 \mod r\in A_{r, c-1, m}.$  But by the third statement of Proposition \ref{prop:AG}, this implies that $ v\in A_{r, c, m},$ contradicting our assumptions on $P$. 
\end{proof}

\subsection{Main Argument: Propagating Cycles to the Left} \label{subsec:cycleprop}
With Proposition \ref{prop:Arcm} and Claim \ref{cm:3edges} in hand, we are ready to begin the main proof of Theorem \ref{thm:main}. 
Recall the proof setup described at the beginning of the section: We are assuming a minimal counterexample to Theorem \ref{thm:main} with respect to (1) The number of rows plus columns, then (2) the number of rows, then (3) the reverse lexicographic order of the number of horizontal edges used between each pair of adjacent columns. Throughout this section, we will abbreviate an edge $\{x, y\}$ as $xy$.  



Let $P^*$ be the Hamiltonian path in the minimal counterexample. Then $P^*$ starts at 0 (the top left vertex), and ends at some vertex $v = (v_1, c-1)$ in the last column. Since $P^*$ is a counterexample to the theorem, we have $v_1 \not\in A_{r,c,m}$, where $m$ is the number of extra pairs of horizontal edges used by $P^*$. Our goal is to show that such a minimal counterexample cannot exist by deriving a contradiction. To do this, we use a ``cycle propagating" argument, which is described below. Throughout the course of the argument, we will be gradually transforming $P^*$ into structures that more and more resemble GG paths. The intuition behind the argument is to try to push the horizontal edges used by $P^*$ toward the first two columns, all without increasing the cost. If at any point we end up with a Hamiltonian path $P$ that is either (1) cheaper than $P^*$ or (2) has a smaller reverse-lexicographic ordering than $P^*$, then we will have reached our desired contradiction.

Consider the last horizontal edge in $P^*$: Call it $x_1y_1$, where $x_1$ is the in second-to-last column and $y_1$ is in the last column. At least one of the two horizontal edges immediately above or below $x_1y_1$ must also be in $P^*$. Otherwise, {if both of these horizontal edges were not in $P^*$, then }we would have either (1) $\{y_1, y_1+a_1\} \in P^*$ and $v = y_1-a_1$  or (2) $\{y_1, y_1-a_1\} \in P^*$ and $v = y_1 + a_1$.\footnote{{Since if $\{y_1, y_1 + a_1\} \in P^*$, then $y_1 - a_1$ must have degree 1 since the horizontal edge entering it from the left is not in $P^*$ by assumption; thus $y_1 - a_1 = v$. Similarly if $\{y_1, y_1 - a_1\} \in P^*$, then $y_1 + a_1$ has degree 1 in $P^*$, and so must equal $v$.}} Since $x_1y_1$ is the last horizontal edge in $P^*$,  both cases would imply that there is exactly 1 edge between the last two columns, which contradicts Claim \ref{cm:3edges}. 

So, let $x_2y_2$ be a horizontal edge that is immediately above or below $x_1y_1$, where $x_2$ is in the second-to-last column and $y_2$ is in the last column. Delete the two adjacent horizontal edges $x_1y_1$ and $x_2y_2$, and replace them with the two adjacent vertical edges $x_1x_2$ and $y_1y_2$. (See Figure \ref{fig:cycleprop_start}.) 
\begin{figure}

    \centering
    \begin{tikzpicture}[scale=0.6]
    \foreach \x in {0,1} {
            \foreach \y in {0,1,...,6} {
                \vertex (\x\y) at (\x, \y) {};
            }
        }
    
    \vertex (v) at (1,1) [color=red, label=right:$v$] {};
    
    \draw (14) -- (13) -- (12) -- (11);
    
    \draw (15) -- (16);
    
    \draw[dashed] (04) -- (14);
    \draw[dashed] (05) -- (15);
    
    \draw[blue] (05) -- (04);
    \draw[blue] (15) -- (14);
    
    \vertex (x1) at (05) [label=left:$x_2$] {};
    \vertex (x2) at (04) [label=left:$x_1$] {};
    \vertex (y1) at (15) [label=right:$y_2$] {};
    \vertex (y2) at (14) [label=right:$y_1$] {};
    
    \end{tikzpicture}
    \caption{Here, $x_1y_1$ is the last horizontal edge in $P^*$, and $x_2y_2$ is a horizontal edge in $P^*$ immediately above/below it. To begin the cycle propagating argument, we delete $x_1y_1$ and $x_2y_2$ (dashed) from $P^*$. Then, we add the vertical edges $x_1x_2$ and $y_1y_2$ (blue). The result uses two fewer horizontal edges, and is either a Hamiltonian path or the union of a path and a cycle.}
    \label{fig:cycleprop_start}
\end{figure}
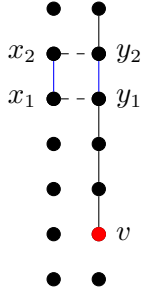

The result is that we have used 2 fewer horizontal edges. Moreover, by Claim \ref{cm:swap} below, we end up with either an 1) $0$-$v$ Hamiltonian path, or 2) the disjoint union of a $0$-$v$ path and a cycle. Note that if $x_1x_2$ was an edge in $P^*$ originally, then we have now used $x_1x_2$ twice; in other words, it is a doubled edge. It is important for the reader to keep in mind that, from now on, the graphs created by our transformation can contain doubled edges. However, because our transformations will preserve the degrees of every node as they were in $P^*$ (so $0$ and $v$ will always have degree 1 and every other node will have degree 2), if a doubled edge is ever created, it will be its own component (i.e. a 2-cycle). 

In Claim \ref{cm:swap} below and in what follows, we say a pair of edges $p_1p_2$, $q_1q_2$ is ``adjacent parallel" if they are either both horizontal and one is immediately above the other, or are both vertical and one is immediately to the right of the other. In other words, $p_1p_2$ and $q_1q_2$ are adjacent parallel if either they are both horizontal and $p_1 = q_1+ a_1$, $p_2 = q_2 + a_1$ (or $p_1 = q_1 - a_1$, $p_2 = q_2 - a_1$), or are both vertical and $p_1 = q_1 + a_2$, $p_2 = q_2 + a_2$ (or $p_1 = q_1 - a_2$, $p_2 = q_2 - a_2$).
 \begin{cm}
    \label{cm:swap}
    Let $P$ be an $s$-$t$ Hamiltonian path in a cylinder graph $G$. Suppose $P$ contains a pair of adjacent parallel edges.   Deleting this pair of edges, and replacing by the opposite pair of adjacent edges yields either an $s$-$t$ Hamiltonian path, or the disjoint union of an $s$-$t$ path and a cycle.   
    \end{cm}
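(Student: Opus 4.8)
The plan is to argue purely combinatorially about what happens to the degree sequence and the connectivity of the edge set when we perform the swap. Let $P$ be the $s$-$t$ Hamiltonian path, and suppose the adjacent parallel pair in $P$ is $\{p_1p_2, q_1q_2\}$ (say both horizontal, with $p_i = q_i + a_1$; the vertical case is symmetric). Let $P'$ be the edge set obtained by deleting $p_1p_2, q_1q_2$ and adding $p_1q_1, p_2q_2$. First I would observe that $P'$ has exactly the same degree at every vertex as $P$: each of $p_1, p_2, q_1, q_2$ loses exactly one incident edge and gains exactly one, and all other vertices are untouched. Hence in $P'$ every vertex has degree $2$ except $s$ and $t$, which have degree $1$. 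A graph with this degree sequence is necessarily a vertex-disjoint union of one $s$-$t$ path together with some number of cycles (where a doubled edge counts as a $2$-cycle). So the only thing left to prove is that the number of cycles is at most one.

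The key step is a local analysis of how the swap changes the number of connected components. Think of $P$ as traversed from $s$ to $t$. The two deleted edges $p_1p_2$ and $q_1q_2$ each appear once in this traversal; deleting both of them breaks $P$ into three subpaths (or two, if the two edges happen to be consecutive in the traversal, but adjacent parallel edges cannot be consecutive since they share no endpoint — so it is exactly three subpaths $R_1, R_2, R_3$ in traversal order, with $s \in R_1$ and $t \in R_3$). The four "loose ends" created are exactly $p_1, p_2, q_1, q_2$, two of them being endpoints of $R_2$ and one each being a new endpoint of $R_1$ and $R_3$. Now I add back $p_1q_1$ and $p_2q_2$. Since $p_1, p_2$ are the endpoints of the edge $p_1p_2$ and $q_1, q_2$ are the endpoints of $q_1q_2$, and these were distinct edges, the set $\{p_1,p_2\}$ of loose ends split across the three pieces in one of a small number of ways. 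I would enumerate these: either (i) $\{p_1,p_2\}$ are the two endpoints of the middle piece $R_2$ and $\{q_1,q_2\}$ are the two new loose ends of $R_1$ and $R_3$, or (ii) vice versa, or (iii) one of $p_1,p_2$ is an endpoint of $R_2$ and the other is the loose end of $R_1$ (or $R_3$), with $q_1,q_2$ filling the remaining two loose ends. In case (i), adding $p_1q_1$ and $p_2q_2$ re-attaches $R_2$'s two ends to the free ends of $R_1$ and $R_3$ respectively, reassembling a single $s$-$t$ path. In cases (ii) and (iii), adding the two new edges reconnects $R_1$ to $R_3$ into a single $s$-$t$ path and closes $R_2$ into a cycle. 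In no case do we get two or more cycles. (One must also double-check the degenerate sub-case where $R_2$ is a single vertex, i.e.\ the two deleted edges share... they don't share a vertex, so $R_2$ has at least the one internal vertex between $p$'s and $q$'s occurrences; if $R_2$ is a single vertex $w$ then $w \in \{p_1,p_2\}\cap\{q_1,q_2\}$, impossible; so $R_2$ is a genuine subpath, possibly a single edge — none of this affects the count.)

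The main obstacle I anticipate is the bookkeeping of the case analysis in the previous paragraph: being careful about which loose end belongs to which piece, and making sure the cylinder's wraparound (a vertical edge from row $r-1$ to row $0$) doesn't introduce a degenerate configuration I've missed — e.g.\ when $r=2$ a "doubled vertical edge" and an "adjacent parallel pair of vertical edges" can coincide. I would handle wraparound simply by noting that adjacency on the cylinder is just adjacency of the underlying $4$-cycle of vertices $\{p_1,p_2,q_1,q_2\}$, and that $P$ using two of the four edges of this $4$-cycle and $P'$ using the other two is exactly the statement being analyzed; the argument above never used planarity, only the edge/vertex incidence structure, so it goes through verbatim. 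A clean way to present the whole thing is: first establish the degree sequence (one line), then invoke that such graphs are a path-plus-cycles, then do the three-piece decomposition and the short case check to bound the number of cycles by one.
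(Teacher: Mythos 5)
Your proposal is correct in substance and follows essentially the same route as the paper: view $P$ as traversed from $s$ to $t$, delete the two parallel edges to obtain three subpaths, and check how the two replacement edges reattach the pieces -- the paper's proof is exactly this case split (whether $x_2$ is visited before $y_2$ or vice versa), preceded by the same degree observation. One bookkeeping slip worth noting: your cases (i) and (ii) cannot actually occur, since the middle piece $R_2$ always has one endpoint from each deleted edge; the only real case is your (iii), and it splits into two sub-cases depending on which endpoint of $q_1q_2$ lies on $R_2$ -- in one sub-case the added edges chain $R_1$--$R_2$--$R_3$ into a single $s$-$t$ path, in the other they join $R_1$ to $R_3$ and close $R_2$ into a cycle -- so your blanket assertion that case (iii) always yields a path plus a cycle is inaccurate, though both outcomes lie within the claimed disjunction and the conclusion is unaffected.
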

    \begin{proof}
    Suppose the pair of edges is horizontal. (The case where they are vertical is similar.) Let the edges be $x_1y_1$ and $x_2y_2$, where $x_1,x_2$ are in the same column and $y_1,y_2$ are in the same column.
    
    We wish to prove that $Q := P - \{x_1y_1, x_2y_2\} + \{x_1x_2, y_1y_2\}$ is either an $s$-$t$ Hamiltonian path, or the union of an $s$-$t$ path and a cycle. 
    
    Assume, without loss of generality, that in the $s$-$t$ path $P$, $x_1$ is visited before $y_1$, which is visited before $x_2$ or $y_2$. Referring to Figure \ref{fig:cm_swap}, 
    \begin{itemize}
        \item If $x_2$ is visited before $y_2$, then $Q$ is an $s$-$t$ Hamiltonian path, and 
        \item If $y_2$ is visited before $x_2$, then $Q$ is the disjoint union of an $s$-$t$ path and a cycle. 
    \end{itemize} 
    \end{proof}

    \begin{figure}
    \centering
    \begin{subfigure}[t]{0.45\textwidth}
        \centering
        \begin{tikzpicture}
        \vertex(s) at (0, 0) [label=below left:$s$] {};
        \vertex(t) at (6, 1) [label=below right:$t$] {};
        \vertex(x1) at (1.5, 0.5) [label=below:$x_1$]{};
        \vertex(y1) at (2, 1) [label=below right:$y_1$]{};
        \vertex(x2) at (4, 2) [label=above:$x_2$]{};
        \vertex(y2) at (4.5, 1.5) [label=above right:$y_2$]{};

        
        
        \draw[black] (s) to[out=-45, in=-135] (x1);
        \draw (y1) to[out=45, in=135] (x2); \draw (y2) to[out=-45,in=-135] (t);
        
        \draw[dashed] (x1) -- (y1);
        \draw[dashed] (x2) -- (y2);
        
        \draw[red] (x1) to[out=90,in=90] (x2);
        \draw[red] (y1) to[out=10,in=160] (y2);

        \end{tikzpicture}
        \caption{The case where $x_2$ is visited before $y_2$ in $P$. The black lines (including the two dashed edges) represent $P$. The solid lines (including the two red edges) represent $Q$. In this case, $Q$ is an $s$-$t$ Hamiltonian path.}
    \end{subfigure}%
    ~ \hspace{5mm}
    \begin{subfigure}[t]{0.45\textwidth}
        \centering
        \begin{tikzpicture}
        \vertex(s) at (0, 0) [label=below left:$s$] {};
        \vertex(t) at (6, 1) [label=below right:$t$] {};
        \vertex(x1) at (1.5, 0.5) [label=below:$x_1$]{};
        \vertex(y1) at (2, 1) [label=below right:$y_1$]{};
        \vertex(y2) at (4, 2) [label=above:$y_2$]{};
        \vertex(x2) at (4.5, 1.5) [label=above right:$x_2$]{};

        
        
        \draw[black] (s) to[out=-45, in=-135] (x1);
        \draw (y1) to[out=45, in=135] (y2); \draw (x2) to[out=-45,in=-135] (t);
        
        \draw[dashed] (x1) -- (y1);
        \draw[dashed] (y2) -- (x2);
        
        \draw[red] (x1) to[out=-10,in=-90] (x2);
        \draw[red] (y1) to[out=-30,in=-110] (y2);

        \end{tikzpicture}
        \caption{The case where $y_2$ is visited before $x_2$ in $P$. In this case, $Q$ is the disjoint union of an $s$-$t$ path and a cycle.}
    \end{subfigure}
    \caption{An illustration of the proof of Claim \ref{cm:swap}.}
    \label{fig:cm_swap}
\end{figure}

Observe that the result cannot be a 0-$v$ Hamiltonian path $P$. This is because compared to $P^*$, $P$ uses two fewer horizontal edges and has a smaller reverse-lexicographic ordering. Hence, if $P$ were a 0-$v$ Hamiltonian path, then this would contradict the minimality of $P^*$. So, the result must be a disjoint union of a 0-$v$ path $P$ and a cycle $C$. 

 If a horizontal edge of $C$ is adjacent to a horizontal edge of $P$,  deleting that pair and replacing it with the corresponding pair of vertical edges yields a 0-$v$ Hamiltonian path (see Claim \ref{cm:cycle_path}, which proves this fact in a more general context where $C$ and $P$ are a cycle and path in any graph -- not necessarily cylinder). This Hamiltonian path uses two fewer horizontal edges and has a smaller reverse-lexicographic ordering than $P^*$, which contradicts the minimality of $P^*$. So from now on, we may assume without loss of generality that $C$ and $P$ share no adjacent horizontal edges.

    
    \begin{cm}
    \label{cm:cycle_path}
    Let $P$ be an $s$-$t$ path and $C$ be a cycle in some graph $G$, such that $P$ and $C$ are vertex-disjoint and every vertex of $G$ is contained in either $P$ or $C$. Let $p_1p_2$ be an edge of $P$ and let $q_1q_2$ be an edge of $C$. Assuming that the edges $p_1q_1$ and $p_2q_2$ exist in $G$, deleting the edges $p_1p_2, q_1q_2$ from $P \cup C$ and replacing with the edges $p_1q_1$ and $p_2q_2$ yields an $s$-$t$ Hamiltonian path in $G$.
    \end{cm}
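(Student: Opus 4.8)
The plan is to prove Claim~\ref{cm:cycle_path} by a direct ``splice'' argument: cut $P$ at the edge $p_1p_2$ and cut $C$ at the edge $q_1q_2$, producing four loose ends, then reconnect them using the two new edges $p_1q_1$ and $p_2q_2$ and check that the result is a single spanning path from $s$ to $t$. No properties of cylinder graphs are needed; the statement is a purely combinatorial gluing fact in an arbitrary host graph $G$.

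First I would fix an orientation of $P$ from $s$ to $t$ and, relabeling $p_1\leftrightarrow p_2$ together with $q_1\leftrightarrow q_2$ if necessary (which preserves the hypothesis that $p_1q_1$ and $p_2q_2$ are edges of $G$), assume that $p_1$ precedes $p_2$ along $P$. Then deleting $p_1p_2$ from $P$ splits it into two vertex-disjoint subpaths $P_1$ (from $s$ to $p_1$) and $P_2$ (from $p_2$ to $t$) with $V(P_1)\cup V(P_2)=V(P)$; one or both of these may be a single vertex in the degenerate cases $p_1=s$ or $p_2=t$, which causes no trouble. Deleting $q_1q_2$ from the cycle $C$ leaves a path $C'$ with endpoints $q_1,q_2$ and $V(C')=V(C)$. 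Now consider the walk obtained by concatenating $P_1$, the edge $p_1q_1$, $C'$ traversed from $q_1$ to $q_2$, the edge $q_2p_2$, and $P_2$; this is a walk from $s$ to $t$ using exactly the edge set $(P\cup C)\setminus\{p_1p_2,q_1q_2\}$ together with $p_1q_1$ and $p_2q_2$, which is precisely the graph described in the claim.

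It remains to argue this walk is in fact a Hamiltonian path. Since $V(P)$ and $V(C)$ are disjoint by hypothesis, the vertex sets visited by $P_1$, $P_2$ and $C'$ are pairwise disjoint, and within each of these pieces the vertices are distinct because each is a subpath; hence every vertex of $G$ lying in $P$ or $C$ is visited exactly once, and by hypothesis $V(P)\cup V(C)=V(G)$, so the walk visits every vertex of $G$ exactly once. Thus it is an $s$-$t$ Hamiltonian path, as claimed. I do not expect any real obstacle here: the only care required is the bookkeeping that ensures we attach $p_1$ to $q_1$ and $p_2$ to $q_2$ consistently with the orientation of $P$ (handled by the relabeling step) and a brief mention that the degenerate cases where $P_1$ or $P_2$ is a lone vertex are covered by the same argument.
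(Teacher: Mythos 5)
Your proposal is correct and follows essentially the same argument as the paper: delete $p_1p_2$ and $q_1q_2$ to obtain the subpaths $s$–$p_1$, $p_2$–$t$, and the $q_1$–$q_2$ path from the opened cycle, then splice them back together with $p_1q_1$ and $p_2q_2$ and note vertex-disjointness gives a spanning $s$–$t$ path. The only cosmetic difference is that you make the "without loss of generality" relabeling and the degenerate endpoint cases explicit, which the paper handles implicitly.
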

    \begin{proof}
    Without loss of generality, suppose that $p_1$ is visited before $p_2$ in $P$. Then deleting the edges $p_1p_2$ and $q_1q_2$ from $P \cup C$ yields the disjoint union of three paths: 1) An $s$-$p_1$ path $P_1$, 2) an $p_2$-$t$ path $P_2$, and 3) a $q_1$-$q_2$ path $Q$. Adding the edge $p_1q_1$ connects $P_1$ and $Q$ into an $s$-$q_2$ path, and adding the edge $p_2q_2$ connects this path and $P_2$ into a $s$-$t$ Hamiltonian path. For an illustration, see Figure \ref{fig:cm_cyclepath}.
    \end{proof}
    
    \begin{figure}
        \centering
        \begin{tikzpicture}
        \vertex(s) at (0, 0) [label=below left:$s$] {};
        \vertex(t) at (6, 1) [label=below right:$t$] {};
        \vertex(x1) at (1.5, 0.5) [label=above left:$p_1$]{};
        \vertex(y1) at (2, 1) [label=above left:$p_2$]{};
        \coordinate (x2) at (4, 2);
        \coordinate (y2) at (4.5, 1.5);
        
        \vertex(c1) at (2, 0) [label = below:$q_1$] {};
        \vertex(c2) at (2.5, 0.5) [label = below:$q_2$] {};
        \coordinate (c4) at (2.5, -1.5);
        \node (c3) at (4.5, 0.5) {};
        \node (C) at (3, -0.5) [label = $C$]{};
        \node (P) at (4.6, 1.6) [label=$P$]{};

        
        
        \draw[dashed] (c1) -- (c2);
        
        \draw (c2) to[out=45, in=120] (c3) to[out=-60,in=45] (c4) to [out=-135, in=-135] (c1);
        
        \draw[black] (s) to[out=-45, in=-135] (x1);
        \draw (y1) to[out=45, in=135] (x2); 
        \draw (y2) to[out=-45,in=-135] (t);
        
        \draw[dashed] (x1) -- (y1);
        \draw (x2) -- (y2);
        
        \draw[red] (x1) -- (c1);
        \draw[red] (y1) -- (c2);

        \end{tikzpicture}
        \caption{Illustration of the proof of Claim \ref{cm:cycle_path}.}
        \label{fig:cm_cyclepath}
    \end{figure}
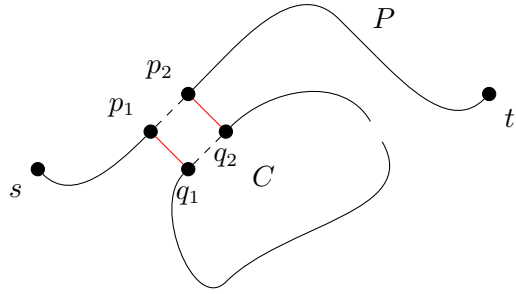


Consider the leftmost vertical edge in $C$: Call it $v_1v_2$. Our cycle propagating argument will terminate when this edge is in the leftmost column, so we assume here that $v_1v_2$ is not in the leftmost column. Note that $C$ has at least one vertical edge, because $x_1x_2$ is one. If there are multiple leftmost vertical edges, break ties by choosing the ``topmost" one. (That is, the one that contains the vertex with the smallest row index. Our convention will be that the wraparound edge from the first vertex of a column to the last vertex is the most top edge.) Henceforth, we will refer to this step as selecting the leftmost, topmost vertical edge of $C$. 

Referring to Figure \ref{subfig:cycle_iter_before}, let $u_1 = v_1 - a_2$, $t_1 = v_1 - 2a_2$, and $w_1 = u_1 - a_1$. Also, let $u_2 = v_2 - a_2$, $t_2 = v_2 - 2a_2$, and $w_2 = u_2 + a_1$. Note that $u_1v_1$ cannot be an edge in $P \cup C$: If it was, it must be an edge of $C$ since $v_1v_2 \in C$. However, since $u_1$ is to the left of $v_1$ and $v_2$, this would imply that $C$ has some vertical edge that is further to the left compared to $v_1v_2$, which contradicts our choice of $v_1v_2$ as a leftmost edge of $C$. 

Furthermore, $u_1u_2$ cannot be an edge in $P \cup C$. If it were, then it must be in $P$ since it is a vertical edge that is to the left of $v_1v_2$. In that case, by Claim \ref{cm:cycle_path}, deleting the two vertical edges $u_1u_2, v_1v_2$ and replacing them with the two horizontal edges $u_1v_1$, $u_2v_2$ results in a 0-$v$ Hamiltonian path. Moreover, this Hamiltonian path has a smaller reverse-lexicographic order than $P^*$,\footnote{{The reason why this Hamiltonian path has a strictly smaller reverse-lexicographic order than $P^*$ is because $v_1v_2$ is not in the last column. (Since $x_1x_2$ is a vertical edge in C that is not in the last column, $v_1v_2$ being the leftmost such edge, cannot be in the last column either.)}} which contradicts the minimality of $P^*$. 

So far, we have established that the edges $u_1v_1$ and $u_1u_2$ cannot be present in $P \cup C$. Consider the vertex $u_1$: It cannot be 0 or $v$, so it must have degree 2 in $P \cup C$. (It cannot be $v$ because it is not in the last column. It cannot be 0 because if it were, then $u_2u_1$ and $u_2w_2$ must be edges in $P$ because (1) $u_2$ has degree 2 and (2) $v_1v_2$ is a leftmost edge in $C$. {This already contradicts the fact that $u_1u_2$ cannot be in $P \cup C$.}) Since $u_1$ has degree 2 in $P \cup C$, this implies that the two edges $u_1w_1$ and $u_1t_1$ are both in $P \cup C$. Moreover, we must have $u_1w_1, u_1t_1 \in P$, because $v_1v_2$ is the leftmost edge of $C$. 

The same argument shows that $u_2w_2, u_2t_2 \in P$. 

Now, perform the following operation: Delete the three edges $v_1v_2, t_1u_1, t_2u_2$, and replace them with the three edges $t_1t_2, u_1v_1, u_2v_2$.  (Refer to Figure \ref{subfig:cycle_iter_after}.) We will refer to this operation as the ``cycle propagating operation".

    \begin{figure}
    \centering
    \begin{subfigure}[t]{0.45\textwidth}
        \centering
        \begin{tikzpicture}
        \vertex(v1) at (4, 4) [label=right:$v_1$] {};
        \vertex(v2) at (4, 3) [label=right:$v_2$] {};
        \vertex(u1) at (3, 4) [label= right:$u_1$] {};
        \vertex(u2) at (3, 3) [label= right:$u_2$]{};
        \vertex(u0) at (3, 5) [label=right:$w_1$]{};
        \vertex(u3) at (3, 2) [label=right:$w_2$]{};
        \vertex(t1) at (2, 4) [label= left:$t_1$] {};
        \vertex(t2) at (2, 3) [label= left:$t_2$] {};
        \draw (v1) -- (v2);
        \draw (u0) -- (u1) -- (t1);
        \draw (u3) -- (u2) -- (t2);
        \end{tikzpicture}
        \caption{Before. $v_1v_2$ is the leftmost, topmost edge of the cycle $C$. The other edges in the figure are part of the path $P$.}
        \label{subfig:cycle_iter_before}
    \end{subfigure}%
    ~ 
    \begin{subfigure}[t]{0.45\textwidth}
        \centering
        \begin{tikzpicture}
        \vertex(v1) at (4, 4) [label=right:$v_1$] {};
        \vertex(v2) at (4, 3) [label=right:$v_2$] {};
        \vertex(u1) at (3, 4) [label= left:$u_1$] {};
        \vertex(u2) at (3, 3) [label= left:$u_2$]{};
        \vertex(u0) at (3, 5) [label=right:$w_1$]{};
        \vertex(u3) at (3, 2) [label=right:$w_2$]{};
        \vertex(t1) at (2, 4) [label= left:$t_1$] {};
        \vertex(t2) at (2, 3) [label= left:$t_2$] {};
        \draw (t1) -- (t2);
        \draw (u0) -- (u1) -- (v1);
        \draw (u3) -- (u2) -- (v2);
        \end{tikzpicture}
        \caption{After. The three edges $\{v_1v_2, t_1u_1, t_2u_2\}$ have been deleted and replaced with $\{t_1t_2, u_1v_1, u_2v_2\}$. The result is either a 0-$v$ Hamiltonian path or the disjoint union of a 0-$v$ path and a cycle.}
        \label{subfig:cycle_iter_after}
    \end{subfigure}
    \caption{An illustration of one step of the cycle propagating procedure.}
    \label{fig:cycle_iter}
\end{figure}

Observe that the resulting subgraph uses the same number of horizontal edges as before, and has a strictly smaller reverse-lexicographic order (since the operation pushed two horizontal edges one column to the left). Furthermore, by Claim \ref{cm:cycle_prop}, the resulting structure is either 1) a 0-$v$ Hamiltonian path, or 2) the union of an 0-$v$ path with a cycle.

\begin{cm}
    \label{cm:cycle_prop}
    After one iteration of the cycle propagating procedure, the resulting subgraph is either 1) a 0-$v$ Hamiltonian path, or 2) the disjoint union of a 0-$v$ path with a cycle. 
    \end{cm}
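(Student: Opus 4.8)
The plan is to follow the cycle-propagating operation edge by edge, show directly that the resulting subgraph (call it $Q$) has the degree sequence of ``one path plus some cycles,'' and then finish with a short component count that forces there to be at most one cycle.

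\emph{Step 1: $Q$ is a $0$–$v$ path together with vertex-disjoint cycles.} First I would record that the six vertices $v_1,v_2,t_1,t_2,u_1,u_2$ are pairwise distinct, that none of them is $v$ (they all lie strictly to the left of column $c-1$, since $C$ already contains the vertical edge $x_1x_2$ in column $c-2$, so its leftmost vertical edge $v_1v_2$ lies in some column $\le c-2$), and that the only one of them that could equal $0$ is $t_1$ or $t_2$. The operation deletes $\{v_1v_2,\,t_1u_1,\,t_2u_2\}$ and adds $\{t_1t_2,\,u_1v_1,\,u_2v_2\}$, and a direct check shows that at each of these six vertices exactly one incident edge is removed and exactly one added, while every other vertex is untouched. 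Hence every vertex has the same degree in $Q$ as in $P\cup C$: $0$ and $v$ have degree $1$, all others degree $2$. A graph of maximum degree $2$ with exactly two degree-$1$ vertices is the disjoint union of a single path (joining those two vertices, here $0$ and $v$) and zero or more vertex-disjoint cycles, so it remains only to show $Q$ has at most one cycle.

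\emph{Step 2: reduce the cycle count to a $4$-vertex multigraph.} Set $\tilde C := C - v_1v_2$, a $v_1$–$v_2$ path on $V(C)$. Since $t_1u_1$ and $t_2u_2$ are disjoint edges of the path $P$ (using the edges shown to lie in $P$ just before the claim), $P-t_1u_1-t_2u_2$ splits into three sub-paths — one containing $0$, one containing $v$, and a middle one — whose six endpoints are $0,v,t_1,u_1,t_2,u_2$, with $t_1,u_1$ in different sub-paths and $t_2,u_2$ in different sub-paths. Then $Q$ is exactly these four vertex-disjoint sub-paths ($\tilde C$ plus the three pieces of $P$) together with the three new edges $t_1t_2,\,u_1v_1,\,u_2v_2$. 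Contract each of the four sub-paths to a point to obtain a multigraph $\Gamma$ on $4$ vertices with precisely these three edges. Contracting a tree changes neither the number of connected components nor the cyclomatic number $|E|-|V|+(\#\text{components})$, so the number of cycle-components of $Q$ equals the cyclomatic number of $\Gamma$, which is $3-4+(\#\text{components of }\Gamma)=(\#\text{components of }\Gamma)-1$. Thus it suffices to prove $\Gamma$ has at most two components.

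\emph{Step 3: $\Gamma$ has at most two components.} I would read off the degrees of $\Gamma$. The vertex $N_C$ coming from $\tilde C$ is incident only to $u_1v_1$ (at $v_1$) and to $u_2v_2$ (at $v_2$): it cannot meet $t_1t_2$ because $t_1,t_2\in V(P)$ are disjoint from $V(C)$, and neither incident edge is a loop because $u_1,u_2\in V(P)$. So $\deg_\Gamma N_C=2$ with no loop. A brief bookkeeping (a new edge meets a $P$-piece only at the endpoint that is not $0$ or $v$) shows the vertices $N_0,N_v$ from the $0$- and $v$-pieces have $\Gamma$-degree $1$, hence the vertex $N_{mid}$ from the middle piece has degree $2$. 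In a $4$-vertex multigraph with degree sequence $(1,1,2,2)$ and no loop at the degree-$2$ vertex $N_C$, none of $N_0,N_v,N_C$ can be isolated, so at most one vertex — necessarily $N_{mid}$, and only when $t_1t_2$ is a loop there — is isolated; checking the two possibilities (all four vertices non-isolated, or $N_{mid}$ isolated via the loop $t_1t_2$) shows $\Gamma$ has at most two components in each case. With Step 2 this gives at most one cycle in $Q$, and with Step 1 the claim follows. The genuine work is all in Step 3: getting the degree sequence of $\Gamma$ right forces careful use of the fact that $v_1v_2$ was chosen as a \emph{leftmost} vertical edge of $C$ (so no $t_i,u_i$ lies on $C$), and one must also dispatch the degenerate configurations where a sub-path is a lone vertex or a single edge — e.g.\ $t_1=0$, or $C$ itself a doubled vertical edge so that $\tilde C$ is a single edge — though none of these causes real difficulty.
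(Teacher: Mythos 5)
Your proof is correct, and its first half is exactly the paper's: the operation removes and adds one edge at each of $v_1,v_2,t_1,t_2,u_1,u_2$, so all degrees are preserved and the result is a $0$--$v$ path plus vertex-disjoint cycles; it then remains to bound the number of components by two. Where you diverge is in how that bound is obtained. The paper argues on $Q' := P \cup C \cup \{u_1v_1,u_2v_2,t_1t_2\}$, which is connected, and deletes the three old edges one at a time: $t_1u_1$ lies on the $6$-cycle $t_1u_1v_1v_2u_2t_2$ and $v_1v_2$ lies on $C$, so neither is a bridge, and the final deletion of $t_2u_2$ creates at most one new component --- three sentences, no case analysis. You instead decompose $Q$ explicitly into the four sub-paths $\tilde C$, the $0$-piece, the middle piece, and the $v$-piece together with the three new edges, contract the sub-paths to a $4$-vertex multigraph $\Gamma$, and read the component count off the degree sequence $(1,1,2,2)$ with no loop at $N_C$ (using cyclomatic-number invariance under contracting trees). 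This works --- the only vertex of $\Gamma$ that can form its own component is $N_{mid}$, via a loop, and otherwise every component has at least two vertices, so $\Gamma$ has at most two components --- but it costs you the bookkeeping of where $t_1,u_1,t_2,u_2$ land among the pieces, the loop/doubled-edge case, and the degenerate pieces (e.g.\ $t_1=0$), all of which the paper's bridge argument sidesteps. One small inaccuracy in your setup: you justify that $v_1v_2$ lies in a column $\le c-2$ by asserting $x_1x_2 \in C$, but after the initial swap the cycle may contain $y_1y_2$ (in the last column) rather than $x_1x_2$; this is harmless, since the only fact you actually use --- that none of the six vertices is $v$ --- already follows from $v_1,v_2 \in C$ and $u_i,t_i$ lying strictly left of the last column. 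In exchange for its extra length, your argument does give slightly finer information, namely exactly which placements of $t_1,u_1,t_2,u_2$ yield one component versus two.
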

    \begin{proof}
    Let $Q$ be the resulting subgraph. Note that the cycle propagating operation does not change the degree of any vertex. Hence, in $Q$, the degrees of 0 and $v$ are 1, and the degree of every other vertex is 2. This implies that $Q$ is the disjoint union of a 0-$v$ path and zero or more cycles. To prove the claim, we will show that $Q$ consists of at most 2 connected components. 
    
    Let $Q' = P \cup C \cup \{u_1v_1, u_2v_2, t_1t_2\}$, so that $Q = Q' - \{t_1u_1, t_2u_2, v_1v_2\}$. (Refer to Figure \ref{fig:cycle_iter} for an illustration of the location of these vertices in the cylinder graph.) Observe that $Q'$ consists of one connected component, because the edges $u_1v_1$ and $u_2v_2$ connect $P$ and $C$. Now, imagine deleting the edges $\{t_1u_1, v_1v_2, t_2u_2\}$ one by one from $Q'$ to obtain $Q$. $t_1u_1$ is not a bridge\footnote{A bridge in a graph $G$ is an edge whose removal increases the number of components of $G$.} in $Q'$, because it is contained in the cycle $t_1u_1v_1v_2u_2t_2t_1$. Hence $Q' - \{t_1u_1\}$ is connected. 
    
    Next, $v_1v_2$ is not a bridge in $Q' - \{t_1u_1\}$, because it is contained in the cycle $C$. Hence $Q' - \{t_1u_1, v_1v_2\}$ is connected. 
    Finally, $t_2u_2$ may be a bridge edge in $Q' - \{t_1u_1, v_1v_2\}$, but deleting it can create at most one new component. Thus $Q$ consists of at most 2 components, which completes the proof. 
        \end{proof}
    
If the result is a 0-$v$ Hamiltonian path $P$, then we are done because $P$ uses two fewer horizontal edges than $P^*$, and has a smaller reverse-lexicographic ordering.

Thus, after one step of the cycle propagating procedure, we may assume that the result is the disjoint union of a 0-$v$ path and a cycle. Moreover, observe that the leftmost vertical edge of the new cycle is further to the left than that of the old cycle. (Referring to Figure \ref{fig:cycle_iter}, at least one of the vertical edges $t_1t_2$, $u_1w_1$ or $u_2w_2$ are in the new cycle. All three of these edges are further to the left than $v_1v_2$, which was a leftmost vertical edge in the old cycle.) Hence, iterating this process will either result in a 0-$v$ Hamiltonian path at some step (in which case we are done), or we get a sequence of paths and cycles $(P_1, C_1), \ldots, (P_k, C_k)$ that satisfy the following properties.
\begin{prop}
    \label{prop:cycle_path}
    The sequence of paths and cycles $(P_1, C_1), \ldots, (P_k, C_k)$ resulting from the cycle propagating procedure satisfy the following properties.
    \begin{enumerate}[P1.]
        \item For each $i$, $P_i$ is a 0-$v$ path and $C_i$ is a cycle,
        \item For each $i$, $P_i, C_i$ are vertex-disjoint and together
        cover all the vertices of the graph,
        \item For each $i$, $P_i$ and $C_i$ have no horizontal adjacent edges. (Otherwise by Claim \ref{cm:cycle_path}, we could delete them and replace them with the corresponding pair of vertical edges to get a Hamiltonian path $P$ that (1) uses 4 fewer horizontal edges than $P^*$ and (2) has a smaller reverse lexicographic order.)
        \item $(P_{i+1}, C_{i+1})$ is obtained from $(P_i, C_i)$ by applying the cycle propagating procedure on the leftmost, topmost edge of $C_i$, and 
        \item The last cycle, $C_k$, has a vertical edge in the first column. 
    \end{enumerate}
    Moreover, we have the following property:
    \begin{enumerate}[P1.]
        \setcounter{enumi}{5}
        \item For each $i$, let $b_i$ be the index of the column that contains the leftmost vertical edge of $C_i$. Suppose $e = \{(a, b_i), (a+1, b_i)\} \in C_i$ is some vertical edge in $C_i$ that is in column $b_i$. Let $f = \{(a, b_i-1), (a+1, b_i-1)\}$ be the vertical edge that is immediately to the left of $e$. Then $f$ cannot be in $P_i$. 
    \end{enumerate}
\end{prop}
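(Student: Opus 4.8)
The plan is as follows. Properties P1--P5 are recorded directly by the construction preceding the proposition, so I would only restate the one-line reasons: P1 holds because Claim~\ref{cm:cycle_prop} guarantees that each step of the procedure yields either a $0$-$v$ Hamiltonian path, which would contradict the minimality of $P^*$ (it uses two fewer horizontal edges and has a smaller reverse-lexicographic order), or the disjoint union of a $0$-$v$ path and a cycle; P2 holds because every transformation preserves the degree of each vertex ($0$ and $v$ keep degree $1$, all others degree $2$), so the current subgraph is always a spanning union of a $0$-$v$ path and some cycles, and by P1 it is exactly one cycle; P3 holds because an adjacent horizontal edge of $P_i$ together with one of $C_i$ could be swapped, via Claim~\ref{cm:cycle_path}, for the corresponding pair of vertical edges, producing a $0$-$v$ Hamiltonian path with strictly fewer horizontal edges and a smaller reverse-lexicographic order than $P^*$; P4 is the definition of the cycle propagating step; and P5 holds because each step strictly decreases the column index of the leftmost vertical edge of the cycle, so (absent an intervening Hamiltonian path) the procedure terminates with the cycle's leftmost vertical edge in the first column. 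The substantive new content is P6, which I would prove by contradiction.

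For P6, fix $i$ with $b_i \geq 1$; when $b_i = 0$, which by P5 happens only for $i=k$, there is no vertical edge ``immediately to the left'' of $e$ and the statement is vacuous. Let $e = \{(a,b_i),(a+1,b_i)\}\in C_i$ and suppose toward a contradiction that $f = \{(a,b_i-1),(a+1,b_i-1)\}\in P_i$. Since $b_i$ is the leftmost column touched by $C_i$, the edge $f$ cannot lie in $C_i$, so $e$ and $f$ form a pair of adjacent parallel vertical edges, one on the cycle and one on the path. The horizontal edges $\{(a,b_i-1),(a,b_i)\}$ and $\{(a+1,b_i-1),(a+1,b_i)\}$ exist in the cylinder graph, so Claim~\ref{cm:cycle_path} applies: deleting $e$ from $C_i$ and $f$ from $P_i$ and adding these two horizontal edges produces a $0$-$v$ Hamiltonian path $P'$ on the same $r\times c$ cylinder graph. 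Since $P_i\cup C_i$ uses two fewer horizontal edges than $P^*$ (the initial transformation removed two and no subsequent step changed the count) and this swap trades two vertical edges for two horizontal ones, $P'$ uses exactly the same number of horizontal edges as $P^*$; moreover $P'$ still ends at $v=(v_1,c-1)$ with $v_1\notin A_{r,c,m}$. Thus $P'$ is itself a counterexample to Theorem~\ref{thm:main}, on the same cylinder graph and with the same horizontal-edge count.

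It remains to show that $P'$ comes strictly before $P^*$ in the reverse-lexicographic order, which contradicts the minimality of $P^*$. I would argue this by following the single coordinate $h_{c-2}$, the number of horizontal edges between the last two columns, which is the most significant coordinate of the tuple $(h_{c-2},h_{c-3},\ldots,h_0)$. The initial transformation deleted two horizontal edges between columns $c-2$ and $c-1$ and added only vertical edges, so it decreased $h_{c-2}$ by $2$; this is legitimate because $h_{c-2}(P^*)\geq 3$ in the minimal counterexample by Claim~\ref{cm:3edges}. Every cycle propagating step acts only on columns $b_j-2$, $b_j-1$, $b_j$, and we always have $b_j\leq c-2$: the first cycle $C_1$ contains the vertical edge $x_1x_2$, which lies in column $c-2$, so the leftmost column of $C_1$ is at most $c-2$, and the leftmost column of the cycle only moves further left as the procedure runs. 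Consequently no cycle propagating step changes $h_{c-2}$, and the P6 swap above adds horizontal edges only between columns $b_i-1\leq c-3$ and $b_i\leq c-2$, so it does not change $h_{c-2}$ either. Therefore $h_{c-2}(P') = h_{c-2}(P^*) - 2 < h_{c-2}(P^*)$, which places $P'$ strictly before $P^*$ in the reverse-lexicographic order, the desired contradiction.

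I expect the main obstacle to be exactly the bookkeeping in the previous paragraph: one must verify that \emph{no} operation used anywhere in the construction --- the initial transformation, every cycle propagating step, and the final swap of P6 --- ever increases $h_{c-2}$ back toward its original value, equivalently that the propagating cycle never reaches the last column. This reduces to the single structural fact that the cycle produced by the initial transformation contains the vertical edge $x_1x_2$ sitting in the second-to-last column; once that is pinned down, the rest is routine degree counting and index chasing.
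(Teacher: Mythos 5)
Your proof is correct and takes essentially the same route as the paper: P1--P5 are read off from the construction, and P6 is proved by swapping $e\in C_i$ and $f\in P_i$ for the two connecting horizontal edges via Claim~\ref{cm:cycle_path}, producing a $0$-$v$ Hamiltonian path with the same number of horizontal edges as $P^*$ but with $h_{c-2}$ still two below that of $P^*$ (every operation after the initial one only alters $h_j$ for $j\le c-3$ because $b_i\le c-2$), contradicting the reverse-lexicographic minimality of $P^*$; your explicit bookkeeping simply spells out what the paper compresses into the remark that $b_i\le c-2$. The structural fact you flag, that the cycle created by the initial transformation contains $x_1x_2$ in column $c-2$, is the same fact the paper asserts, and it does hold because $x_1y_1$ is the \emph{last} horizontal edge of $P^*$ (so the final segment $y_1\text{--}v$ stays in the path and the segment closed into a cycle is the one between $x_2$ and $x_1$, closed by $x_1x_2$).
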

\begin{proof}
Properties P1 -- P5 are either self-explanatory or have been proved earlier. 

Property P6 is true, because otherwise, we could delete $e$ and $f$, and replace them with the corresponding pair of horizontal edges. This would give a Hamiltonian path that 1) uses the same number of horizontal edges as $P^*$, and 2) has a smaller reverse lexicographic order than $P^*$. This contradicts our choice of minimal counterexample. Note that 2) is true because $b_i$ was the leftmost column containing a vertical edge of $C_i$, so in particular $b_i \leq c-2$.
\end{proof}
    Note that any or all of the cycles $C_i$ can be 2-cycles (i.e. a doubled edge.) Referring to Figure \ref{fig:cycle_iter}, the only way a doubled edge can be created by the cycle propagating procedure is if before some iteration (see Figure \ref{subfig:cycle_iter_before}), the edge $t_1t_2$ is already present in $P \cup C$. Then, after the iteration (see Figure \ref{subfig:cycle_iter_after}), the edge $t_1t_2$ is doubled. In particular, this means that only vertical edges can be doubled; a doubled horizontal edge is not possible.
 
We now look at the sequence of cycles $C_1, C_2, \ldots, C_k$, and consider cases depending on which of them are 2-cycles. In Section \ref{subsec:cyclecase1}, we will show that $C_k$ must be a 2-cycle. In Section \ref{subsec:cyclecase2}, we will extend the argument to show that, in fact, every cycle must be a 2-cycle. Finally, we use this structure in Section \ref{subsec:cyclecase3} to arrive at a contradiction.

\subsection{Last cycle is a 2-cycle}
\label{subsec:cyclecase1}

\begin{cm}
\label{cm:cyclecase1}
The last cycle $C_k$ must be a 2-cycle.
\end{cm}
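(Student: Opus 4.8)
I would prove Claim \ref{cm:cyclecase1} by contradiction, extracting a local violation of property P3 of Proposition \ref{prop:cycle_path} (that $P_k$ and $C_k$ share no adjacent parallel horizontal edges) by looking at how $C_k$ meets $P_k$ inside the first column. Recall first that, as noted after the cycle propagating procedure, any doubled edge forms its own 2-cycle component; so if $C_k$ is \emph{not} a 2-cycle then $C_k$ has no doubled edge, every vertex of $C_k$ has two distinct neighbours in $C_k$, and the first column of the cylinder splits into maximal runs of rows alternately belonging to $P_k$ and to $C_k$. Also note $0=(0,0)$ is an endpoint of $P_k$ (P1), hence not in $C_k$ (P2), and we may assume $r\ge 3$ since $r=2$ is a base case.

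The plan: using P5, fix a vertical edge $\{(a,0),(a+1,0)\}\in C_k$ in column $0$; since neither endpoint is $(0,0)$, we get $1\le a\le r-2$. Let $A=\{p,p+1,\dots,q\}$ be the maximal run of consecutive rows (which does not wrap, since it avoids row $0$) with $\{a,a+1\}\subseteq A$ and $(i,0)\in C_k$ for all $i\in A$, so $1\le p\le a$, $a+1\le q\le r-1$, and the rows $p-1\bmod r$ and $q+1\bmod r$ of column $0$ lie in $P_k$. The key local fact is at the top boundary: at $(p,0)$ the ``up'' edge leads to $(p-1\bmod r,0)\in P_k$, so it is not a $C_k$-edge; hence the two $C_k$-edges at $(p,0)$ are its ``down'' and ``right'' edges, and in particular the horizontal edge $\{(p,0),(p,1)\}$ lies in $C_k$. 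Symmetrically $\{(q,0),(q,1)\}\in C_k$.

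Now I would chase the $P_k$-side of each boundary. Consider $(p-1\bmod r,0)\in P_k$. If $p\ge 2$, this vertex is neither $0$ nor $v$ (it sits in column $0\ne c-1$), so it has degree $2$ in $P_k$; its ``down'' edge would go to $(p,0)\in C_k$ and so cannot be a $P_k$-edge, forcing $\{(p-1,0),(p-1,1)\}\in P_k$ — an edge adjacent parallel to $\{(p,0),(p,1)\}\in C_k$, contradicting P3. If $p=1$, the neighbour is $0=(0,0)$, of degree $1$ in $P_k$; its single edge is not the ``down'' edge (that would hit $(1,0)\in C_k$), so it is either the ``right'' edge $\{(0,0),(0,1)\}$, which is adjacent parallel to $\{(1,0),(1,1)\}\in C_k$ and again contradicts P3, or the vertical wraparound edge to $(r-1,0)$, in which case $(r-1,0)\in P_k$. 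This last situation is the only one left: $p=1$ and $(0,0)$ joined up to $(r-1,0)$. Here I repeat the boundary argument at the bottom using $(q+1\bmod r,0)\in P_k$: if $q\le r-2$ this is an interior vertex of $P_k$ whose ``up'' edge hits $(q,0)\in C_k$, forcing $\{(q+1,0),(q+1,1)\}\in P_k$, adjacent parallel to $\{(q,0),(q,1)\}\in C_k$ — contradicting P3; and if $q=r-1$ then $(q+1\bmod r,0)=(0,0)$, while $(q,0)=(r-1,0)$ lies in $C_k$ and also, by the previous sentence, in $P_k$ — contradicting vertex-disjointness P2. Every case yields a contradiction, so $C_k$ must be a 2-cycle.

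I expect the main obstacle to be the bookkeeping around the wraparound of column $0$: keeping straight when the boundary vertices $(p-1\bmod r,0)$ and $(q+1\bmod r,0)$ coincide with $0=(0,0)$, and in particular handling the ``up'' wraparound edge at $(0,0)$, which is the one configuration where P3 does not apply directly and one must instead follow the path/cycle around column $0$ to land on a P2 violation. Everything else is a short forced-edge argument once the right boundary vertex of the column-$0$ $C_k$-block is singled out.
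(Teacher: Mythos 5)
Your proof is correct and is essentially the paper's argument viewed from the complementary side: the paper takes the maximal run of first-column vertices containing $(0,0)$ that avoids $C_k$ and shows both of its endpoints would be forced to have degree $1$ in $P_k$ (impossible, since $0$ is the unique degree-$1$ vertex in that column), while you take the maximal $C_k$-run around the guaranteed vertical edge and extract the contradiction directly from property P3 (and from P2 in the wraparound sub-case at $(0,0)$). Both versions rest on the same two ingredients—the non-2-cycle assumption forcing horizontal $C_k$-edges at the boundary of the first-column block, and the prohibition on adjacent parallel horizontal path/cycle edges—so this is the same approach with slightly different bookkeeping, and it has no gaps.
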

\begin{proof}
Suppose for a contradiction that $C_k$ is not a 2-cycle. Let $S$ be the maximal contiguous set of vertices in the first column such that
\begin{enumerate}
    \item $0 \in S$, and
    \item No vertex in $S$ is in $C_k$. 
\end{enumerate}
Let $x$ and $y$ be the endpoints of $S$. Let $x'$ and $y'$ be the vertices in the first column, not in  $S$, that are adjacent to $x$ and $y$ respectively. (See Figure \ref{fig:cm_cyclecase1}.) Note that $x', y'$ exist, are distinct, and belong to $C_k$. 

\begin{figure}
    \centering
    \begin{tikzpicture}[scale=0.8]
    \foreach \x in {0,1} {
            \foreach \y in {0,1,...,9} {
                \vertex (\x\y) at (\x, \y) {};
            }
        }
    \foreach \y in {0, 1, 2, 9, 8, 7} {
        \vertex (\y) at (0, \y) [color=red] {};
    }
    
    \vertex (x) at (0, 7) [color=red, label = left:$x$] {};
    \vertex (y) at (0, 2) [color=red, label = left:$y$] {};
    
    \vertex (x') at (0, 6) [label=left: $x'$] {};
    \vertex (y') at (0, 3) [label=left: $y'$] {};
    
    \draw[blue] (04) -- (y') -- (13);
    \draw[blue] (05) -- (x') -- (16);
    
    \end{tikzpicture}
    \caption{An illustration of the proof of Claim \ref{cm:cyclecase1}. The depicted nodes are the first two columns of the graph. The red nodes are the nodes in $S$. The blue edges are in $C_k$. Since the path and the cycle cannot have any adjacent horizontal edges, this implies that $x$ and $y$ both have degree 1 in the path, which is impossible.}
    \label{fig:cm_cyclecase1}
\end{figure}
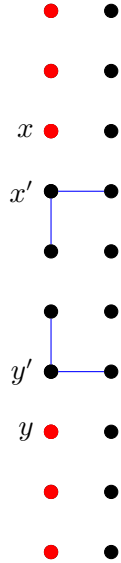

Since $C_k$ is a cycle, both $x'$ and $y'$ have degree 2. Now $x'$ cannot be incident to $x$ in $C_k$, so it must be incident to $x' + a_2$ and $x' + a_1$ in $C_k$. (Here, we use the assumption that $C_k$ is not a 2-cycle.)

Consider $x$. If $x$ has degree 2 in $P_k$, then it would be incident to $x + a_2$. Then $C_k$ and $P_k$ would have a pair of adjacent horizontal edges, which is leads to a contradiction via Claim \ref{cm:cycle_path}. Therefore $x$ has degree 1 in $P_k$. The same argument shows that $y$ must have degree 1 in $P_k$. This is a contradiction, since $P_k$ has exactly one vertex of degree 1 in the first column, namely vertex 0.

\end{proof}

\subsection{Every cycle is a 2-cycle}
\label{subsec:cyclecase2}
In the previous subsection, we showed that the last cycle $C_k$ must be a 2-cycle. Next, we will show that in fact every cycle must be a 2-cycle. Throughout this section, when we write the coordinates of a point in a cylinder graph, we will use $(i, j)$ to implicitly mean $(i \mod r, j \mod c)$. This is to make the notation less cluttered.
\begin{cm}
\label{cm:caseb2}
For all $1\leq i \leq k$, $C_i$ is a 2-cycle.
\end{cm}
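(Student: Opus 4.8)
We prove Claim~\ref{cm:caseb2} by downward induction on $i$, from $i=k$ down to $i=1$. The base case $i=k$ is Claim~\ref{cm:cyclecase1}. For the inductive step, assume $C_{i+1},C_{i+2},\dots,C_k$ are all $2$-cycles and suppose, toward a contradiction, that $C_i$ is a genuine cycle. Write $v_1v_2$ for the leftmost, topmost vertical edge of $C_i$, and adopt the notation of Figure~\ref{fig:cycle_iter}: $u_j=v_j-a_2$, $t_j=v_j-2a_2$, $w_1=u_1-a_1$, $w_2=u_2+a_1$, so that $(P_{i+1},C_{i+1})$ is obtained from $(P_i,C_i)$ by deleting $\{v_1v_2,t_1u_1,t_2u_2\}$ and adding $\{t_1t_2,u_1v_1,u_2v_2\}$.

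The first step is to locate the doubled edge. The cycle-propagating operation adds only one vertical edge, namely $t_1t_2$, and (as recorded after Proposition~\ref{prop:cycle_path}) every $2$-cycle produced in the sequence is a doubled vertical edge; hence the hypothesis that $C_{i+1}$ is a $2$-cycle forces $C_{i+1}=\{t_1t_2\text{ doubled}\}$, and in particular $t_1t_2$ was already an edge of $P_i$ before the operation (it cannot have been in $C_i$, whose leftmost vertical edge lies in the column of $v_1v_2$). The second step records the edges forced by the operation itself: as shown in Section~\ref{subsec:cycleprop} in the paragraphs preceding the definition of the cycle-propagating operation, because $v_1v_2$ is the leftmost edge of $C_i$, because $u_1,u_2\notin\{0,v\}$, and because $C_i$ is not a $2$-cycle, the edges $u_1w_1,u_1t_1,u_2w_2,u_2t_2$ all lie in $P_i$. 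Together with $t_1t_2\in P_i$, the path $P_i$ therefore contains the edges $w_1u_1,u_1t_1,t_1t_2,t_2u_2,u_2w_2$ in sequence (a ``ladder'' occupying the two columns of $v_1v_2$'s left neighbours), with both edges at $t_1$ and both at $t_2$ internal to this ladder; moreover $C_i$ has no vertical edge in the column of $t_1t_2$ nor in the column of $u_1u_2$.

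The plan from here is to propagate this locked ladder one more column to the left, repeating the reasoning of the main argument of Section~\ref{subsec:cycleprop}: at $w_1$ and $w_2$, and then at their left neighbours, a vertical edge of $C_i$ adjacent to a vertical edge of $P_i$ is excluded by property P6 of Proposition~\ref{prop:cycle_path}; a horizontal adjacent-parallel pair split between $P_i$ and $C_i$ is excluded by Claim~\ref{cm:cycle_path} (it would produce a $0$-$v$ Hamiltonian path with strictly smaller reverse-lexicographic order than $P^*$); and a horizontal adjacent-parallel pair inside $P_i$ is excluded by Claim~\ref{cm:swap} together with the minimality of $P^*$. Iterating, the forced ladder reaches the first column; there, exactly as in the proof of Claim~\ref{cm:cyclecase1}, the top and bottom ends of the forced block are each forced to have degree $1$ in $P_i$. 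Since $P_i$ is a $0$-$v$ path whose only vertex of degree $1$ in the first column is $0$ itself, this is the desired contradiction, and so $C_i$ must be a $2$-cycle, completing the induction.

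The main obstacle is making the ``propagate the ladder'' step fully rigorous. This requires a careful case analysis that tracks the wrap-around at the top and bottom of each column, the interaction of the growing ladder with the remaining edges of $C_i$ (all of which lie in columns strictly to the right of the ladder), and the low-index boundary cases: the honest-interior-column description of $t_1,t_2$ needs $b_i\geq 2$, which holds when $i\leq k-2$, so the case $i=k-1$ with the leftmost vertical edge of $C_{k-1}$ already in the second column must be handled by a short separate argument modeled on Claim~\ref{cm:cyclecase1}. I expect essentially all of the work of the proof to be in this step, which is a more elaborate version of the argument leading to Claim~\ref{cm:cycle_prop}.
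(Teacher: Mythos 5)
Your skeleton (downward induction with base case Claim \ref{cm:cyclecase1}, and the observation that if $C_{i+1}$ is a 2-cycle then the doubled edge must be $t_1t_2$, so $t_1t_2\in P_i$) is fine, but the heart of your argument --- the leftward ``ladder propagation'' --- has a genuine gap, and one of the three forcing rules you base it on is simply false. You claim that ``a horizontal adjacent-parallel pair inside $P_i$ is excluded by Claim \ref{cm:swap} together with the minimality of $P^*$.'' It is not: swapping such a pair for two vertical edges loses two horizontal edges but may disconnect $P_i$ into a path plus a cycle (that is exactly the second outcome of Claim \ref{cm:swap}), in which case there is no contradiction with minimality --- indeed this is precisely how the cycles $C_i$ arose in the first place. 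Worse, the structure that is actually forced on $P_i$ by the hypothesis that $C_{i+1},\dots,C_k$ are 2-cycles is the staircase in the leftmost $2(k-i)$ columns, whose blocks contain two stacked horizontal edges of $P_i$ (in rows $a$ and $a+1$ between columns $2j$ and $2j+1$); so any propagation scheme that forbids horizontal pairs inside $P_i$ contradicts the true configuration. Since to the left of $C_i$'s leftmost column every edge belongs to $P_i$, the two legitimate rules (P3 and P6 of Proposition \ref{prop:cycle_path}, which each involve one edge of the cycle) give you nothing there, so with the false rule removed your propagation has no engine. You acknowledge that ``essentially all of the work'' lies in this step, and as set up it cannot be carried out.

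The paper avoids this by using much more than $t_1t_2\in P_i$: because \emph{all} of $C_{i+1},\dots,C_k$ are 2-cycles, unrolling the cycle-propagating operations determines $P_i$ explicitly on the leftmost $b=2(k-i)$ columns (the staircase), anchored in the first column where $C_k$ sits; the unique degree-1 vertex $(0,0)$ then forces $a=1$. The contradiction is then obtained by forcing edges to the \emph{right} of column $b$ and \emph{downward} through all the rows, terminating at the bottom of the cylinder with a parity-of-$r$ obstruction: if $r$ is even, a horizontal edge of $P_i$ adjacent to a horizontal edge of $C_i$ (contradicting P3), and if $r$ is odd, a vertex that cannot attain degree 2. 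Your proposed endgame --- reaching the first column and finding two degree-1 vertices ``exactly as in Claim \ref{cm:cyclecase1}'' --- does not transfer, because that claim's argument uses the fact that $C_k$ has vertices in the first column, whereas for $i<k$ the cycle $C_i$ lives in column $2(k-i)\ge 2$ and the first column is entirely covered by $P_i$. So both the mechanism and the target of your induction step need to be replaced along the lines of the paper's argument.
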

\noindent\textit{Proof.}
Suppose for a contradiction that some cycle is not a 2-cycle. 
We consider the last cycle in the sequence that is not a 2-cycle. That is, let $i$ be the index such that 
\begin{enumerate}
    \item $C_i$ is not a 2-cycle, and
    \item $C_{i+1}, \ldots, C_k$ are all 2-cycles. 
\end{enumerate}
Note that by the previous claim, $i \leq k-1$. We will show this case cannot happen using an argument similar to the one used in the proof of the previous claim, but that is more involved. Essentially, we will argue that many connections in the graph of $(P_i, C_i)$ are forced; that is, there are certain edges we can be sure are present in $P_i$ or $C_i$. We will then show that the presence of these forced edges implies that two vertices in the first column have degree one in $P_i$ if $C_i$ is not a 2-cycle, which will give our desired contradiction. 


Let the two vertices of $C_k$ be $(a, 0)$ and $(a+1, 0)$, where $0\leq a \leq r-1$. Note that in fact $1 \leq a \leq r-2$, because $(0, 0)$ cannot be a vertex in $C_k$; $(0, 0)$ is in $P_k$. Because $C_k, C_{k-1}, \ldots, C_{i+1}$ are all 2-cycles, it follows from the way we propagated cycles in Section \ref{subsec:cycleprop} that for all $0 \leq j \leq k-i-1$, the following vertices are consecutive on the path $P_i$ (see Figure \ref{subfig:case_b2a} for an illustration):
$$(a-1, 2j+1), \,(a, 2j+1),\,(a, 2j),\, (a+1, 2j),\, (a+1, 2j+1),\, (a+2, 2j+1)$$
Also, $2(k-i)$ is the index of the leftmost column of $C_i$, and the edge between the vertices $(a, 2(k-i))$ and $(a+1, 2(k-i))$ is the leftmost, topmost vertical edge of $C_i$. (Again, see Figure \ref{subfig:case_b2a}.)

We proceed to argue that certain edges in the graph are forced to be in either $P_i$ or $C_i$. For convenience, let $b = 2(k-i)$ be the index of the column containing the leftmost edge of $C_i$. Then the leftmost, topmost vertical edge of $C_i$ is between $(a, b)$ and $(a+1, b)$. (This is the blue edge in Figure \ref{subfig:case_b2a}.) First, we will show that $a = 1$.

\begin{wrapfigure}[8]{r}{0.3\textwidth}
\label{wrap-fig:1}
\centering
\begin{tikzpicture}[scale=0.5]
    \foreach \x in {0,1,...,6} {
            \foreach \y in {0,...,5} {
                \smvertex (\x\y) at (\x, \y) {};
            }
        }
    
        \foreach \i in {1,3} {
            \draw (\i,4) -- (\i, 3) -- (\i-1,3) -- (\i-1, 2) -- (\i, 2) -- (\i, 1);
        }
        \draw[blue] (53) -- (43) -- (42) -- (52);
        
        
\end{tikzpicture}
\end{wrapfigure} 
To begin, observe that $(a, b)$ cannot be connected to $(a-1, b)$ in $C_i$; otherwise, if $\{(a, b), (a-1, b)\}$ were an edge in $C_i$, then it would be adjacent to the vertical edge $\{(a, b-1), (a-1,b-1)\}$ in $P_i$.  This is impossible by property P6 in Proposition \ref{prop:cycle_path}, because  $\{(a,b-1), (a-1,b-1)\}$ is in $P_i$. Hence, in $C_i$, if $C_i$ is not a 2-cycle, then $(a, b)$ must have a horizontal edge to $(a, b+1)$. A similar argument shows that $(a+1, b)$ must have a horizontal edge to $(a+1, b+1)$ in $C_i$.

\noindent\begin{wrapfigure}[8]{r}{0.3\textwidth}
\label{wrap-fig:2}
\centering
\begin{tikzpicture}[scale=0.5]
\foreach \x in {0,1,...,6} {
            \foreach \y in {0,...,5} {
                \smvertex (\x\y) at (\x, \y) {};
            }
        }
    
        \foreach \i in {1,3} {
            \draw (\i,4) -- (\i, 3) -- (\i-1,3) -- (\i-1, 2) -- (\i, 2) -- (\i, 1);
        }
        \draw[blue] (53) -- (43) -- (42) -- (52);
        
        
        \draw[dotted] (44) -- (54);
        
\end{tikzpicture}
\end{wrapfigure} 
Next, we claim that the edge between $(a-1, b)$ and $(a-1, b+1)$ (the dotted edge in the figure on the right) cannot exist in either $P_i$ or $C_i$. Indeed, if that edge existed in $P_i$, then $P_i$ and $C_i$ would share an adjacent pair of horizontal edges, a contradiction. On the other hand, if that edge existed in $C_i$, then so must the edge between $(a-1, b)$ and $(a-2, b)$, which (since $a \geq 1$), contradicts the fact that $\{(a, b), (a+1, b)\}$ is the leftmost topmost vertical edge in $C_i$. 

\begin{wrapfigure}[8]{r}{0.3\textwidth}
\label{wrap-fig:3}
\centering
\begin{tikzpicture}[scale=0.5]
\foreach \x in {0,1,...,6} {
            \foreach \y in {0,...,5} {
                \smvertex (\x\y) at (\x, \y) {};
            }
        }
    
        \foreach \i in {1,3} {
            \draw (\i,4) -- (\i, 3) -- (\i-1,3) -- (\i-1, 2) -- (\i, 2) -- (\i, 1);
        }
        \draw[blue] (53) -- (43) -- (42) -- (52);
        
        \draw (34) -- (44) -- (4,5);
        
        \smvertex (x) at (4, 4) [color=red] {};
        
\end{tikzpicture}
\end{wrapfigure} 

Now, note that $(a-1, b)$ (the red vertex in the diagram to the right) must have degree 2 in $P_i \cup C_i$. Since $(a-1, b)$ cannot have an edge to the vertex to its right (by the previous paragraph), or to the vertex below it (since that vertex already has degree 2), it must have edges to the vertices above it and to its left. In other words, $(a-1, b)$ must have edges to $(a-2, b)$ and $(a-1, b-1)$. Note that the edges $\{(a-1, b), (a-2, b)\}$ and $\{(a-1, b), (a-1, b-1)\}$ must both belong to $P_i$, because the vertex $(a-1, b-1)$ is in $P_i$.  

\begin{wrapfigure}[10]{r}{0.3\textwidth}
\label{wrap-fig:3.5}
\centering
\begin{tikzpicture}[scale=0.5]
\foreach \x in {0,1,...,6} {
            \foreach \y in {0,...,5} {
                \smvertex (\x\y) at (\x, \y) {};
            }
        }
    
        \foreach \i in {1,3} {
            \draw (\i,4) -- (\i, 3) -- (\i-1,3) -- (\i-1, 2) -- (\i, 2) -- (\i, 1);
        }
        \draw[blue] (53) -- (43) -- (42) -- (52);
        
        \draw (34) -- (44) -- (4,5);
        \draw (14) -- (24) -- (25);
        
        \smvertex (x) at (2, 4) [color=red] {};
        
\end{tikzpicture}
\end{wrapfigure} 

Now, consider the vertex $(a-1, b-2)$ (the red vertex in the diagram to the right). If $b > 2$, then $(a-1, b-2)$ has degree 2 in $P_i \cup C_i$. Since its right neighbor $(a-1, b-1)$ and bottom neighbor $(a, b-2)$ already have degree 2, it must have edges to its top neighbor and left neighbor. As in the previous paragraph, these two edges $\{(a-1, b-2), (a-2, b-2)\}$ and $\{(a-1, b-2), (a-1, b-3)\}$ must both belong to $P_i$, because the vertex $(a-1, b-3)$ is in $P_i$. Applying this argument repeatedly, we get that for all $j \in \{b-1, b-3, \ldots, 1\}$, the vertex $(a-1, j)$ has edges to $(a-2, j)$ and $(a-1, j-1)$, and that these edges are all in $P_i$.

\begin{wrapfigure}[12]{r}{0.3\textwidth}
\label{wrap-fig:3.6}
\centering
\begin{tikzpicture}[scale=0.5]
\foreach \x in {0,1,...,6} {
            \foreach \y in {0,...,5} {
                \smvertex (\x\y) at (\x, \y) {};
            }
        }
    
        \foreach \i in {1,3} {
            \draw (\i,4) -- (\i, 3) -- (\i-1,3) -- (\i-1, 2) -- (\i, 2) -- (\i, 1);
        }
        \draw[blue] (53) -- (43) -- (42) -- (52);
        
        \draw (34) -- (44) -- (4,5);
        \draw (14) -- (24) -- (25);
        \draw (04) -- (05);
        
        \smvertex (x) at (0, 4) [color=red] {};
        
\end{tikzpicture}
\end{wrapfigure} 
We are now ready to prove that $a=1$. Consider the vertex $(a-1, 0)$ (red in the diagram to the right). Since it is in the first column, it cannot have any edge to its left. Moreover, its right neighbor $(a-1, 1)$ already has degree 2 by the arguments in the previous paragraph, and its bottom neighbor $(a, 0)$ also already has degree 2 (we knew this at the start of this proof). Therefore, the only possible edge out of $(a-1, 0)$ is the edge to its top neighbor. This implies that $(a-1, 0)$ has degree 1 in $P_i \cup C_i$. This implies $a = 1$, because $(0, 0)$ is the only vertex in the first column with degree 1. Previously we drew our diagrams with $a > 1$ in order to be general, but now that we have shown $a = 1$, we will draw our diagrams with $a=1$ from now on.

\noindent\begin{wrapfigure}[8]{r}{0.3\textwidth}
\label{wrap-fig:4}
\centering
\begin{tikzpicture}[scale=0.5]
\foreach \x in {0,1,...,6} {
            \foreach \y in {0,...,7} {
                \smvertex (\x\y) at (\x, \y) {};
            }
        }
    
        \foreach \i in {1,3} {
            \foreach \j in {7} {
                \draw (\i,\j) -- (\i, \j-1) -- (\i-1,\j-1) -- (\i-1, \j-2) -- (\i, \j-2) -- (\i, \j-3) -- (\i-1, \j-3) -- (\i-1, \j-4);
                \draw (\i, \j) -- (\i+1, \j) -- (\i+1, \j+1);
            }
        }
        \draw[blue] (56) -- (46) -- (45) -- (55);
        
        \draw (07) -- (0,8);
        
        \foreach \i in {0,2,4} {
            \draw (\i,-1) -- (\i, 0);
        }
        
        
        \smvertex (x) at (0, 4) [color=red] {};
        \smvertex (x') at (2, 4) [color=red] {};
        
\end{tikzpicture}
\end{wrapfigure} 
Now, consider the vertex $(3, 0)$. If $r > 3$, then it must have degree 2 in $P_i \cup C_i$. Since it cannot have an edge to $(2, 0)$, it must have edges to $(3, 1)$ and $(4, 0)$. Next, consider the vertex $(3, 2)$. Again, it must have degree 2 in $P_i \cup C_i$. It cannot have edges to $(3, 1)$ or $(2, 2)$, because these already have degree 2. Hence, it must have edges to $(4, 1)$ and $(3, 3)$. Iteratively applying this argument, we get that for all $j \in \{0,2,\ldots, b-2\}$, the vertex $(3, j)$ (red in the diagram) has edges to $(3, j+1)$ and $(4, j)$. 

\noindent\begin{wrapfigure}[8]{r}{0.3\textwidth}
\label{wrap-fig:4.5}
\centering
\begin{tikzpicture}[scale=0.5]
\foreach \x in {0,1,...,6} {
            \foreach \y in {0,...,7} {
                \smvertex (\x\y) at (\x, \y) {};
            }
        }
    
        \foreach \i in {1,3} {
            \foreach \j in {7} {
                \draw (\i,\j) -- (\i, \j-1) -- (\i-1,\j-1) -- (\i-1, \j-2) -- (\i, \j-2) -- (\i, \j-3) -- (\i-1, \j-3) -- (\i-1, \j-4);
                \draw (\i, \j) -- (\i+1, \j) -- (\i+1, \j+1);
            }
        }
        \draw[blue] (56) -- (46) -- (45) -- (55);
        \draw[blue] (54) -- (44) -- (43);
        
        \draw (07) -- (0,8);
        
        \foreach \i in {0,2,4} {
            \draw (\i,-1) -- (\i, 0);
        }
        
        
        \smvertex (x'') at (4, 4) [color=red] {};
\end{tikzpicture}
\end{wrapfigure} 
Now, consider the vertex $(3, b)$ (red). It must have degree 2 in $P_i \cup C_i$. However, it cannot have an edge to its top neighbor $(2,b)$ or to its left neighbor $(3, b-1)$, because these already have degree 2. Thus, it must have edges to $(3, b+1)$ and $(4, b)$. If these two edges belonged to $P_i$, then the horizontal edge $\{(3, b), (3, b+1)\}$ in $P_i$ would be adjacent to the horizontal edge $\{(2, b), (2, b+1)\}$ in $C_i$, which contradicts property P3 in Proposition \ref{prop:cycle_path}. Therefore, these two edges must belong to $C_i$. 

\noindent\begin{wrapfigure}[11]{r}{0.3\textwidth}
\label{wrap-fig:4.6}
\centering
\begin{tikzpicture}[scale=0.5]
\foreach \x in {0,1,...,6} {
            \foreach \y in {0,...,7} {
                \smvertex (\x\y) at (\x, \y) {};
            }
        }
    
        \foreach \i in {1,3} {
            \foreach \j in {7} {
                \draw (\i,\j) -- (\i, \j-1) -- (\i-1,\j-1) -- (\i-1, \j-2) -- (\i, \j-2) -- (\i, \j-3) -- (\i-1, \j-3) -- (\i-1, \j-4);
                \draw (\i, \j) -- (\i+1, \j) -- (\i+1, \j+1);
            }
        }
        \draw[blue] (56) -- (46) -- (45) -- (55);
        \draw[blue] (54) -- (44) -- (43) -- (53);
        
        \draw (07) -- (0,8);
        
        \foreach \i in {0,2,4} {
            \draw (\i,-1) -- (\i, 0);
        }
        
        
        \smvertex (x'') at (4, 3) [color=red] {};
\end{tikzpicture}
\end{wrapfigure} 
Consider the the vertex $(4, b)$ (red). It is in $C_i$ and must have degree 2. It cannot have an edge going to the left to $(4, b-1)$, since otherwise $C_i$ would have an edge further to the left than its leftmost edge. Also, it cannot have an edge going down to $(5, b)$: If $\{(4, b), (5, b)\}$ were an edge in $C_i$, then the the edge $\{(4, b-1), (5, b-1)\}$, immediately left to it, would be forced to be in $P_i$. This would mean that a vertical edge in the leftmost column of $C_i$ is adjacent to a vertical edge of $P_i$, which contradicts property P6 in Proposition \ref{prop:cycle_path}. Therefore $(4, b)$ must have an edge going to the right to $(4, b+1)$, and this edge is in $C_i$ because $(4,b)$ is in $C_i$. 

\noindent\begin{wrapfigure}[7]{r}{0.3\textwidth}
\label{wrap-fig:4.7}
\centering
\begin{tikzpicture}[scale=0.5]
\foreach \x in {0,1,...,6} {
            \foreach \y in {0,...,7} {
                \smvertex (\x\y) at (\x, \y) {};
            }
        }
    
        \foreach \i in {1,3} {
            \foreach \j in {7} {
                \draw (\i,\j) -- (\i, \j-1) -- (\i-1,\j-1) -- (\i-1, \j-2) -- (\i, \j-2) -- (\i, \j-3) -- (\i-1, \j-3) -- (\i-1, \j-4);
                \draw (\i, \j) -- (\i+1, \j) -- (\i+1, \j+1);
                \draw (\i-1, \j-4) -- (\i, \j-4) -- (\i, \j-5);
            }
        }
        \draw[blue] (56) -- (46) -- (45) -- (55);
        \draw[blue] (54) -- (44) -- (43) -- (53);
        
        \draw (07) -- (0,8);
        
        \foreach \i in {0,2,4} {
            \draw (\i,-1) -- (\i, 0);
        }
        
        
        \smvertex (x) at (1,3) [color=red] {};
        \smvertex (x') at (3,3) [color=red] {};
\end{tikzpicture}
\end{wrapfigure} 
Consider the the vertex $(4, b-1)$. It must have degree 2, and its top and right neighbors are already have degree 2, which implies it has edges to its left neighbor $(4, b-2)$ and its bottom neighbor $(5, b-1)$. These edges are in $P_i$, because $(4, b-2)$ is. Repeating this argument, we get that for all $j \in \{b-1, b-3, \ldots, 1\}$, the vertex $(4, j)$ (red in the diagram), has edges to $(4, j-1)$ and $(5, j)$, and these edges are all in $P_i$. 

\noindent\begin{wrapfigure}[11]{r}{0.3\textwidth}
\label{wrap-fig:4.8}
\centering
\begin{tikzpicture}[scale=0.5]
\foreach \x in {0,1,...,6} {
            \foreach \y in {0,...,7} {
                \smvertex (\x\y) at (\x, \y) {};
            }
        }
    
        
        \foreach \i in {1,3} {
            \foreach \j in {7,5,3} {
                \draw (\i,\j) -- (\i, \j-1) -- (\i-1,\j-1) -- (\i-1, \j-2) -- (\i, \j-2);
            }
            \draw (\i, 7) -- (\i+1, 7) -- (\i+1, 8);
        }
        
        \foreach \j in {6,4,2} {
            \draw[blue] (5, \j) -- (4, \j) -- (4, \j-1) -- (5, \j-1);
        }
        
        \draw (07) -- (0,8);
        \draw (10) -- (11);
        \draw (30) -- (31);
        
        \foreach \i in {0,2,4} {
            \draw (\i,-1) -- (\i, 0);
        }
        
        
\end{tikzpicture}
\end{wrapfigure} 
Repeating the argument in the previous 4 paragraphs, we can propagate this structure all the way down the rows. The exact way this structure ends depends on the parity of $r$. First, suppose $r$ is even. Then the structure is depicted in the diagram on the right. To be precise, we have that for all $l \in \{1, 3, 5, \ldots, r-3\}$, the vertices $(l, b+1), (l, b), (l+1, b), (l+1, b+1)$ are consecutive in $C_i$. (These are the blue paths of length 3 in the diagram.) Also, for all $l \in \{1,3,5, \ldots, r-3\}$ and $j \in \{1,3,\ldots, b-1\}$, the vertices $(l-1, j), (l, j), (l, j-1), (l+1, j-1), (l+1, j), (l+2, j)$ are consecutive in $P_i$. (These form the zig-zagging subpaths of $P_i$ going downward.)

\noindent\begin{wrapfigure}[12]{r}{0.3\textwidth}
\label{wrap-fig:4.9}
\centering
\begin{tikzpicture}[scale=0.5]
\foreach \x in {0,1,...,6} {
            \foreach \y in {0,...,7} {
                \smvertex (\x\y) at (\x, \y) {};
            }
        }
    
        
        \foreach \i in {1,3} {
            \foreach \j in {7,5,3} {
                \draw (\i,\j) -- (\i, \j-1) -- (\i-1,\j-1) -- (\i-1, \j-2) -- (\i, \j-2);
            }
            \draw (\i, 7) -- (\i+1, 7) -- (\i+1, 8);
        }
        
        \foreach \j in {6,4,2} {
            \draw[blue] (5, \j) -- (4, \j) -- (4, \j-1) -- (5, \j-1);
        }
        
        \draw (07) -- (0,8);
        \draw (10) -- (11);
        \draw (30) -- (31);

        \foreach \i in {0,2,4} {
            \draw (\i,-1) -- (\i, 0);
            \draw (\i, 0) -- (\i+1, 0);
        }
        
        
        \smvertex (x) at (0,0) [color=red] {};
        \smvertex (x') at (2,0) [color=red] {};
        \smvertex (x'') at (4, 0) [color=red] {};
\end{tikzpicture}
\end{wrapfigure} 
At this point, we have almost exactly specified all the edges in $P_i \cup C_i$ in the first $b$ columns in the case where $r$ is even. To complete the picture, observe that the vertex $(r-1, 0)$ must have degree 2. It cannot be connected to its top neighbor $(r-2, 0)$, because $(r-2, 0)$ already has degree 2. Thus, it is connected to its right neighbor, $(r-1, 1)$. Similarly, the vertex $(r-1, 2)$ has degree 2, so it must be connected to $(r-1, 3)$. Repeating this argument, we get that for all $j \in \{0, 2, \ldots, b\}$, the vertex $(r-1, j)$ (red in the diagram to the right), has an edge to $(r-1, j+1)$. These edges are all in $P_i$. But now, the horizontal edge $\{(r-1, b), (r-1, b+1)\}$ is in $P_i$ and is adjacent to the horizontal edge $\{(r-2, b), (r-2, b+1)\}$ in $C_i$. This is a contradiction. {The final figure is also depicted in \Cref{subfig:case_b2b}.}

\noindent\begin{wrapfigure}[14]{r}{0.3\textwidth}
\label{wrap-fig:4.10}
\centering
\begin{tikzpicture}[scale=0.5]
\foreach \x in {0,1,...,6} {
            \foreach \y in {-1, 0,...,7} {
                \smvertex (\x\y) at (\x, \y) {};
            }
        }
    
        
        \foreach \i in {1,3} {
            \foreach \j in {7,5,3} {
                \draw (\i,\j) -- (\i, \j-1) -- (\i-1,\j-1) -- (\i-1, \j-2) -- (\i, \j-2);
            }
            \draw (\i, 7) -- (\i+1, 7) -- (\i+1, 8);
        }
        
        \foreach \j in {6,4,2} {
            \draw[blue] (5, \j) -- (4, \j) -- (4, \j-1) -- (5, \j-1);
        }
        
        \draw (07) -- (0,8);
        \draw (10) -- (11);
        \draw (30) -- (31);
        \draw (0,-1) -- (00) -- (10);

        \foreach \i in {0,2,4} {
            \draw (\i,-2) -- (\i, -1);
        }
        
        
        \smvertex (x) at (1,-1) [color=green] {};
        \smvertex (x') at (0,0) [color=red] {};
\end{tikzpicture}
\end{wrapfigure} 
The same kind of argument works if $r$ is odd. If $r$ is odd, the resulting structure is depicted on the diagram to the right. To be precise, for all $l \in \{1, 3, 5, \ldots, r-4\}$, the vertices $(l, b+1), (l, b), (l+1, b), (l+1, b+1)$ are consecutive in $C_i$. (These are the blue paths of length 3 in the diagram.) Also, for all $l \in \{1,3,\ldots,r-4\}$ and $j\in\{1,3,\ldots,b-1\}$, the vertices  $(l-1, j), (l, j), (l, j-1), (l+1, j-1), (l+1, j), (l+2, j)$ are consecutive in $P_i$. (These form the zig-zagging subpaths of $P_i$ going downward.) Consider the vertex $(r-2, 0)$ (red in the diagram). Since it has degree 2, it must have an edge to the right to $(r-2, 1)$ and an edge down to $(r-1, 0)$. Now consider the vertex $(r-1, 1)$ (green in the diagram). It must have degree 2, but the only neighbor of the green vertex that does not have degree 2 yet is $(r-1, 2)$ to its right. Hence $(r-1, 1)$ can have degree at most 1, which is a contradiction.  {The final figure is also depicted in \Cref{subfig:case_b2c}.}$\square$

\begin{figure}[t!]
    \centering
    \begin{subfigure}[t]{0.3\textwidth}
        \centering
        \raisebox{6mm}{
        \begin{tikzpicture}[scale=0.7]
        \foreach \x in {0,1,...,6} {
            \foreach \y in {0,1,...,9} {
                \mvertex (\x\y) at (\x, \y) {};
            }
        }
    
        \foreach \i in {1,3} {
            \draw (\i,9) -- (\i, 8) -- (\i-1,8) -- (\i-1, 7) -- (\i, 7) -- (\i, 6);
        }
        \draw[blue] (48) -- (47);
        
        \end{tikzpicture}
        }
        \caption{The picture of $(P_i, C_i)$. The black edges are in $P_i$ and the blue edge is the leftmost, topmost vertical edge of $C_i$.}
        \label{subfig:case_b2a}
    \end{subfigure}%
    \hfill
    \begin{subfigure}[t]{0.3\textwidth}
        \centering
        \begin{tikzpicture}[scale=0.7]
        \foreach \x in {0,1,...,6} {
            \foreach \y in {0,1,...,9} {
                \mvertex (\x\y) at (\x, \y) {};
            }
        }
    
        \foreach \i in {1,3} {
            \foreach \j in {9, 7, 5, 3} {
                \draw (\i,\j) -- (\i, \j-1) -- (\i-1,\j-1) -- (\i-1, \j-2) -- (\i, \j-2);
            }
        }
        
        \foreach \j in {8, 6, 4, 2} {
            \draw[blue] (5, \j) -- (4, \j) -- (4, \j-1) -- (5, \j-1);
        }
        
        \draw (11) -- (10) -- (00) -- (0,-1);
        \draw (31) -- (30) -- (20) -- (2,-1);
        \draw (09) -- (0,10);
        \draw (19) -- (29) -- (2,10);
        \draw (39) -- (49) -- (4,10);
        \draw (4,-1) -- (40) -- (50);
        
        
        
        \end{tikzpicture}
        \caption{The forced connections if $r$ is even. Blue edges are in the cycle, black edges are in the path. At the bottom right, a blue horizontal edge is adjacent to a black horizontal edge, which is a contradiction.}
        \label{subfig:case_b2b}
    \end{subfigure}
    \hfill
    \begin{subfigure}[t]{0.3\textwidth}
        \centering
        \begin{tikzpicture}[scale=0.7]
        \foreach \x in {0,1,...,6} {
            \foreach \y in {-1,0,1,...,9} {
                \mvertex (\x\y) at (\x, \y) {};
            }
        }
    
        \foreach \i in {1,3} {
            \foreach \j in {9, 7, 5, 3} {
                \draw (\i,\j) -- (\i, \j-1) -- (\i-1,\j-1) -- (\i-1, \j-2) -- (\i, \j-2);
            }
        }
        
        \foreach \j in {8, 6, 4, 2} {
            \draw[blue] (5, \j) -- (4, \j) -- (4, \j-1) -- (5, \j-1);
        }
        
        \draw (11) -- (10) -- (00) -- (0,-1) -- (0, -2);
        \draw (2, -2) -- (2, -1);
        \draw (31) -- (30);
        \draw (09) -- (0,10);
        \draw (19) -- (29) -- (2,10);
        \draw (39) -- (49) -- (4,10);
        
        
        \vertex (x) at (1, -1) [color=green] {};
        
        \end{tikzpicture}
        \caption{The forced connections if $r$ is odd. Blue edges are in the cycle, black edges are in the path. At the bottom left, the green vertex must have degree 2, but only has one available neighbor to connect to.}
        \label{subfig:case_b2c}
    \end{subfigure}
    
        
        
    \caption{An illustration of the proof of Claim \ref{cm:caseb2}}
    \label{fig:case_b2}
\end{figure}
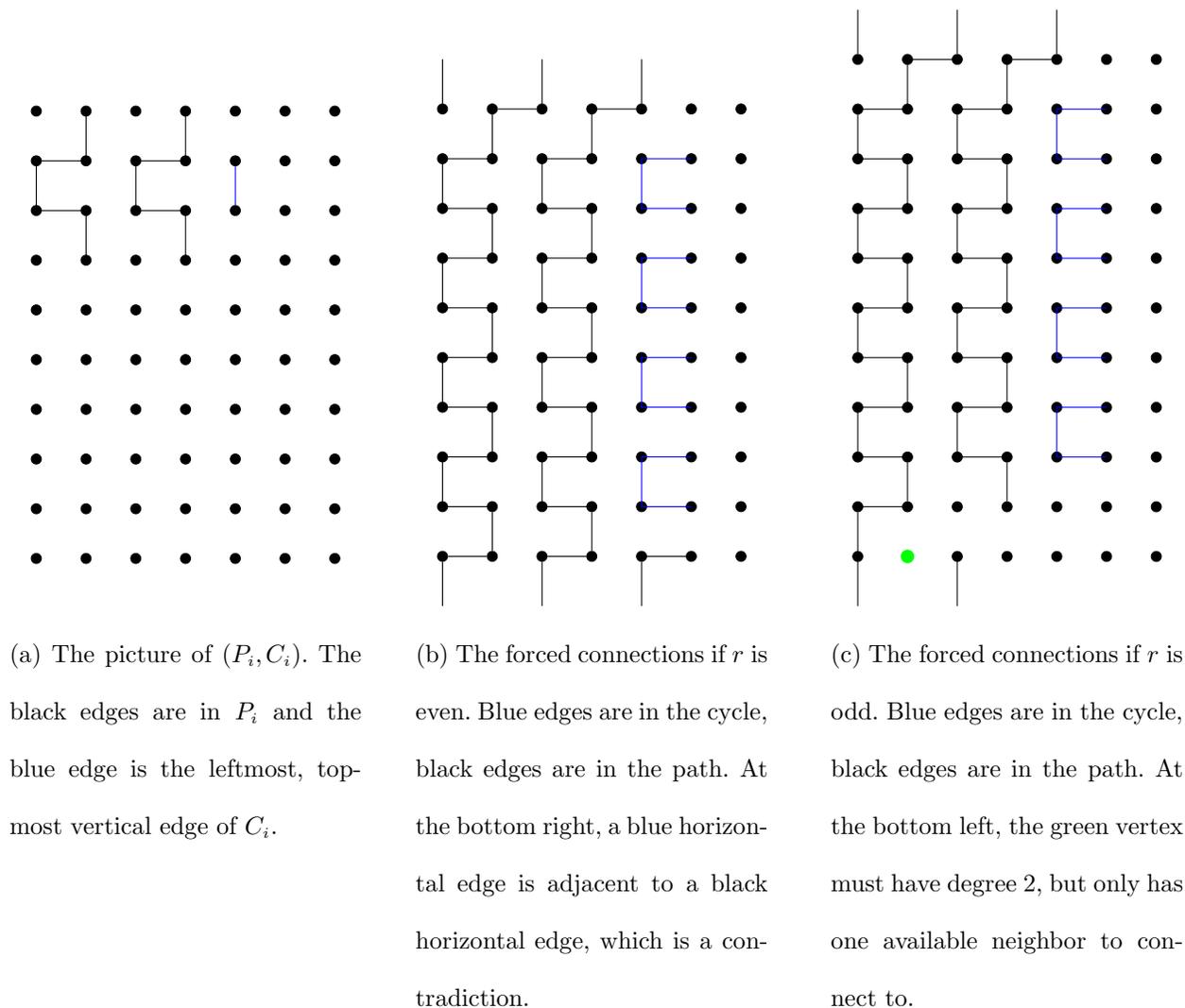

\subsection{Deriving the final contradiction} \label{subsec:contradiction}
Together, Claims \ref{cm:cyclecase1} and \ref{cm:caseb2} show that the only remaining case is for the one where all the cycles $C_1, \ldots, C_k$ are 2-cycles. We complete the proof of Theorem \ref{thm:main} by showing that this case is not possible in a minimal counterexample. 
\label{subsec:cyclecase3}
\begin{cm}
\label{cm:caseb3}
In a minimal counterexample, it is impossible for all the cycles $C_1, \ldots, C_k$ to be 2-cycles. 
\end{cm}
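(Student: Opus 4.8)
First I would suppose, for contradiction, that all of the cycles $C_1, \ldots, C_k$ produced by the cycle-propagating procedure are 2-cycles. A preliminary observation: we may assume $m \geq 1$, since if $m = 0$ then $P^*$ uses exactly $c-1$ horizontal edges, hence exactly one between each consecutive pair of columns, and Proposition \ref{prop:ggall} would force its endpoint into $A_{r,c,0}$, contradicting that $P^*$ is a counterexample. Next I would track horizontal edges through the procedure: the initial swap of Figure \ref{fig:cycleprop_start} decreases the horizontal-edge count by $2$ and every cycle-propagating operation preserves it, so $P_k \cup C_k$ uses $(c-1) + 2(m-1)$ horizontal edges; since $C_k$ is a doubled vertical edge it contributes none, so $P_k$ is a $0$-$v$ Hamiltonian path on the cylinder graph with the two vertices of $C_k$ removed, using exactly $(c-1)+2(m-1)$ horizontal edges. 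By property P5 of Proposition \ref{prop:cycle_path}, $C_k$ lies in the first column; write $C_k = \{(a,0),(a+1,0)\}$, and note $1 \leq a \leq r-2$ because $(0,0) \in P_k$.

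The main idea is then to exploit the minimality of $r+c$ by passing to a cylinder graph with two fewer rows. I would delete rows $a$ and $a+1$, obtaining an $(r-2) \times c$ cylinder graph $\tilde G$ (in which the vertices formerly in rows $a-1$ and $a+2$ become adjacent), and argue that $P_k$ descends to a Hamiltonian path $\tilde P$ on $\tilde G$ from $(0,0)$ to a suitable last-column vertex $\tilde v$, using at most $(c-1)+2(m-1)$ horizontal edges. Constructing $\tilde P$ requires understanding, column by column, how $P_k$ traverses the deleted vertices $(a,j),(a+1,j)$ for $j \geq 1$ and how it meets the doubled edge $C_k$ in column $0$. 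Here the hypothesis that \emph{every} $C_i$ is a 2-cycle is essential: arguing as in the proof of Claim \ref{cm:caseb2}, it forces a large family of edges of $P_k$ near rows $a$ and $a+1$, so that in each column the local picture falls into a short list of configurations, each of which can be excised and rerouted within $\tilde G$ without introducing new horizontal edges. I expect this clean reduction to be the main obstacle; the bookkeeping is somewhat delicate, in particular when $v$ itself lies in one of the deleted rows.

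Granting the reduction, the conclusion is immediate. Since $(r-2)+c < r+c$, Theorem \ref{thm:main} applies to $\tilde G$, giving $\tilde v \in A_{r-2,c,m-1}$. Property 5 of Proposition \ref{prop:AG}, applied with $r$ replaced by $r-2$ and $m$ by $m-1$, then puts both $\tilde v$ and $\tilde v+2$ in $A_{r,c,m}$. Finally, tracking how row indices change under the deletion of rows $a$ and $a+1$ (indices below $a$ are unchanged, indices above $a+1$ decrease by $2$, all mod $r$), one sees that the terminal row $v_1$ of $P^*$ satisfies $v_1 \in \{\tilde v, \tilde v + 2\} \bmod r \subseteq A_{r,c,m}$, contradicting $v_1 \notin A_{r,c,m}$. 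This contradiction shows that not all of $C_1, \ldots, C_k$ can be 2-cycles, which proves Claim \ref{cm:caseb3}; combined with Claims \ref{cm:cyclecase1} and \ref{cm:caseb2}, it completes the proof of Theorem \ref{thm:main}.
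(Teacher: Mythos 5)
Your overall strategy is the same as the paper's: exploit minimality of $r+c$ by deleting the two rows that carry the 2-cycles, apply Theorem \ref{thm:main} to the resulting $(r-2)\times c$ cylinder graph, and lift the conclusion back via property 5 of Proposition \ref{prop:AG}, tracking how the terminal row index shifts by $0$ or $2$. Your bookkeeping of horizontal edges, the location $C_k=\{(a,0),(a+1,0)\}$ with $1\le a\le r-2$, and the final lifting step are all consistent with the paper's argument (the paper phrases the lift as $c(T)\le c(T')+2$ for GG paths rather than through $A_{r-2,c,m-1}$ directly, but these are the same statement).

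The genuine gap is the step you explicitly defer: the construction of the Hamiltonian path in the row-deleted graph. This is not routine bookkeeping to be postponed -- it is the content of the claim -- and the mechanism you point to for supplying the needed structure does not apply. The forcing argument of Claim \ref{cm:caseb2} is predicated on the cycle under consideration \emph{not} being a 2-cycle (that hypothesis is what forces the cycle to leave its leftmost column horizontally and drives the whole cascade), so ``arguing as in Claim \ref{cm:caseb2}'' gives you nothing here. The correct source of structure is different: run the cycle-propagating procedure \emph{backwards}. The requirement that every step produced a doubled edge forces, step by step, exactly which edges of $P^*$ (equivalently of each $P_i$) sit in the two special rows, and one finds that every visit of $P^*$ to those rows is the fixed six-vertex zigzag $x,\,x+a_1,\,x+a_1-a_2,\,x+2a_1-a_2,\,x+2a_1,\,x+3a_1$ (so in particular $c$ is even); see Figure \ref{subfig:case_b3a}. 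With that explicit picture, the reduction is immediate: delete the two rows and replace each zigzag by the single vertical edge $\{x,x+3a_1\}$, yielding a Hamiltonian path $P'$ in the smaller graph with strictly fewer horizontal edges and with endpoint row $v_1$ or $v_1-2$, after which your concluding paragraph goes through. Without establishing this forced structure (including settling whether the endpoint $v$ can lie in one of the deleted rows, which you flag but leave open), the claimed descent of $P_k$ to a Hamiltonian path $\tilde P$ on $\tilde G$, and hence the whole contradiction, is unproven.
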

\begin{proof}
If we are in the case where cycles $C_1,\ldots,C_k$ are all 2-cycles, the picture must look like Figure \ref{fig:case_b3}.
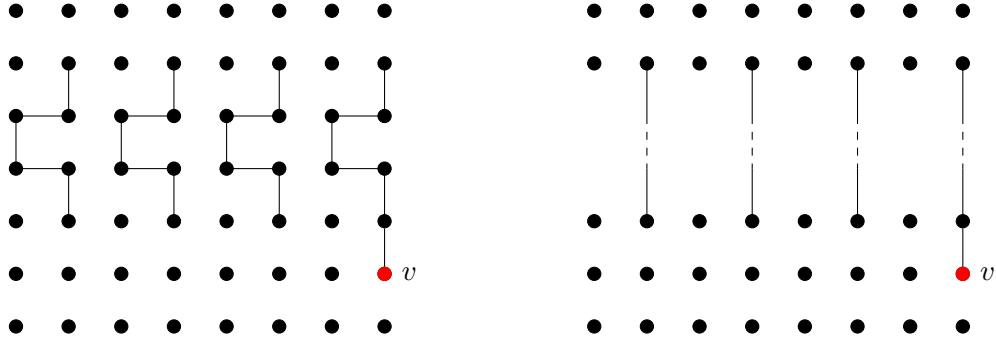
\begin{figure}
    \centering
    \begin{subfigure}[t]{0.45\textwidth}
        \centering
        \begin{tikzpicture}[scale=0.7]
    \foreach \x in {0,1,...,7} {
            \foreach \y in {0,1,...,6} {
                \vertex (\x\y) at (\x, \y) {};
            }
        }
    
        \foreach \i in {1,3,5,7} {
            \draw (\i,5) -- (\i, 4) -- (\i-1,4) -- (\i-1, 3) -- (\i, 3) -- (\i, 2);
        }
        \draw (72)--(71);
        \vertex[red] (71') at (7,1) [label=right:$v$] {};
    \end{tikzpicture}
        \caption{The picture of $P^*$ in the case where every cycle is a 2-cycle.}
        \label{subfig:case_b3a}
    \end{subfigure}%
    ~ 
    \begin{subfigure}[t]{0.45\textwidth}
        \centering
        \begin{tikzpicture}[scale=0.7]
    \foreach \x in {0,1,...,7} {
            \foreach \y in {0,1,2,5,6} {
                \vertex (\x\y) at (\x, \y) {};
            }
        }
    
        \foreach \i in {1,3,5,7} {
            \draw (\i,5) -- (\i, 4);
            \draw[dashed] (\i, 4) -- (\i, 3);
            \draw (\i, 3) -- (\i, 2);
        }
        \draw (72)--(71);
        \vertex[red] (71') at (7,1) [label=right:$v$] {};
    \end{tikzpicture}
        \caption{Reconnecting to a Hamiltonian path in the graph after deleting the two rows.}
        \label{subfig:case_b3b}
    \end{subfigure}
    \caption{An illustration of the proof of Claim \ref{cm:caseb3}.}
    \label{fig:case_b3}
\end{figure}
Note that, in particular, there are an even number of columns. 

To complete the proof, we will show that there is a GG path $T$ from 0 to $v$ with cost at most the cost of $P^*$. To do this, we will use the minimal counterexample assumption to find a GG path $T'$ in the cylinder graph with two rows deleted. We will then show that we can obtain our desired path $T$ from $T'$.

Suppose rows $i$ and $i+1$ are the two rows that contain the 2-cycles.
Consider the graph $G'$ obtained by deleting those two rows. Observe that we can turn $P^*$ into a Hamiltonian path $P'$ in $G'$ by doing the following:
\begin{itemize}
    \item Each time path $P$ visits the two deleted rows, it must be via a subpath of the form $x, x+a_1, x+a_1-a_2, x+2a_1-a_2, x+2a_1, x+3a_1$ (possibly in the reverse order.) \item To obtain $P'$ from $P^*$, replace all occurrences of the above subpath with the single vertical edge $\{x, x+3a_1\}$ (see Figure \ref{subfig:case_b3b}).
\end{itemize}
Recall that we defined $v := (v_1, c-1)$ to be the endpoint of $P^*$ in the last column of $G$. Similarly, define $v' := (v_1', c-1)$ to be the endpoint of $P'$ in the last column of $G'$. Then $v_1'$ is either $v_1$ or $v_1-2$: It is $v_1$ if $v_1 < i$, and it is $v_1-2$ if $v_1 \geq i$. 

Moreover, $P'$ uses exactly $2g_1$ fewer horizontal edges than $P$. By minimality of our counterexample, we know the cheapest 0 -- $v'$ Hamiltonian path in $G'$ is attained by a GG path. Hence, in $G'$, there exists a GG path $T'$ from 0 to $v'$ with cost at most the cost of $P'$. 

Let $T$ be the cheapest $0$-$v$ GG path in $G$. For a path $S$, let $c(S)$ denote its cost (which in our case is the number of horizontal edges it uses). Then, by property 5 of Proposition \ref{prop:AG}, we have that $c(T) \leq c(T') + 2$. Since $c(T') \leq c(P') = c(P) - 2g_1$, (and $g_1 \geq 1)$, it follows that $c(T) \leq c(P)$, as claimed.
\end{proof}


\section{Conclusion}
\label{sec:conclusion}
Circulant edge costs present an intriguing special case of the TSP.  Despite the substantial structure and symmetry, and despite being frequently stated as an open question, remarkably little has been known about the complexity of circulant TSP.  By giving  a polynomial-time algorithm for the two-stripe symmetric circulant TSP, we provide the first non-trivial complexity result. 

The natural next question is to consider general symmetric circulant TSP, where all the edge lengths can potentially have finite cost. For that problem, a 2-approximation algorithm is known, but it is open whether it is polynomial-time solvable. As an intermediate step, one might consider a variant with some constant number of edge lengths having finite costs. One might wonder, e.g., if there is a dynamic programming approach that extends work from this paper to the constant-stripe case.

We also note that there are many open polyhedral questions for circulant TSP.  Circulant TSP is one of the few non-trivial settings where the integrality gap of the subtour LP is exactly known (its integrality gap is 2; see Gutekunst and Williamson \cite{Gut19b}).  Gutekunst and Williamson \cite{gut20} prove a facet-defining inequality using circulant symmetry, motivated by a class of two-stripe instances where the subtour LP's integrality gap is arbitrarily close to 2.  One might wonder is there is an exact LP formulation for circulant TSP (in general or with a constant number of stripes), or as an intermediate step, a strengthening of the subtour LP.

\section*{Acknowledgements}

This material is based upon work supported by the National Science Foundation Graduate
Research Fellowship Program under Grant No. DGE-1650441,  by National Science Foundation Grants No. CCF-1908517, CCF-2007009, and CCF-2153331, and by Natural Sciences and Engineering Research Council of Canada fellowship PGSD3-532673-2019.  

\begin{APPENDICES}

\section{Proof of Proposition \ref{prop:ggall}}
\label{app:ggall}

\ggall*

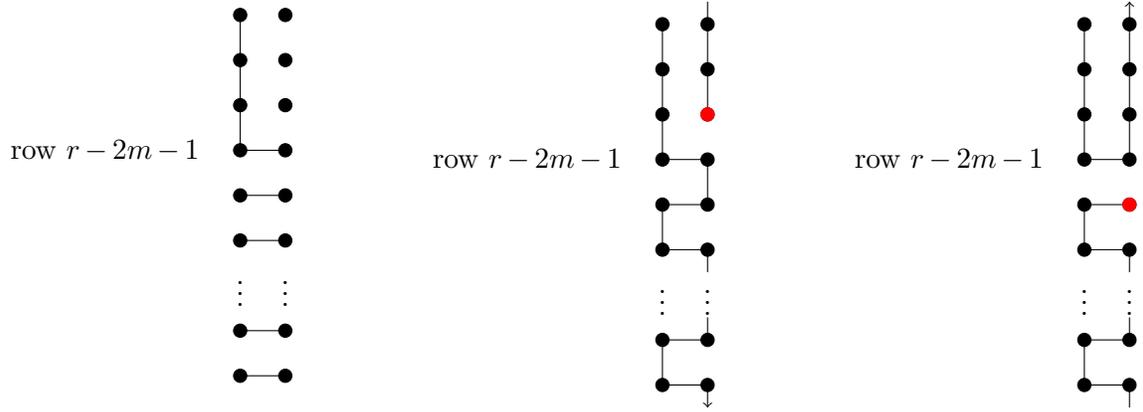
\begin{figure}[t!]

    \centering
        \begin{subfigure}[t]{0.3\textwidth}
        \centering
        \begin{tikzpicture}[scale=0.6]
        \foreach \x in {0,1} {
            \foreach \y in {0,1} {
                \vertex (\x\y) at (\x, \y) {};
            }
        }  
        
        \foreach \x in {0,1} {
            \foreach \y in {3, 4, ..., 8} {
                \vertex (\x\y) at (\x, \y) {};
            }
        }  
        \foreach \i in {0} {
            \foreach \j in {5, 6, 7} {
            \draw (\i, \j) to (\i, \j+1);}
        }
        
        \foreach \j in {0, 1, 3, 4, 5} {
        \draw (0, \j) to (1, \j);}
        
        \node (d1) at (0, 2) {$\vdots$};
        \node (d2) at (1, 2) {$\vdots$};
        \node (l1) at (-3, 5) {row $r-2m-1$};
        \node (blabk) at (0, -0.5) {};
        \node (blank) at (0, 8.5) {};
        \end{tikzpicture}
        \end{subfigure}
        \hspace{4mm}
                \begin{subfigure}[t]{0.3\textwidth}
        \centering
        \begin{tikzpicture}[scale=0.6]
        \foreach \x in {0,1} {
            \foreach \y in {0,1} {
                \vertex (\x\y) at (\x, \y) {};
            }
        }  
        
        \foreach \x in {0,1} {
            \foreach \y in {3, 4, ..., 8} {
                \vertex (\x\y) at (\x, \y) {};
            }
        }  
        \foreach \i in {0} {
            \foreach \j in {5, 6, 7} {
            \draw (\i, \j) to (\i, \j+1);}
        }
        
        \foreach \j in {0, 1, 3, 4, 5} {
        \draw (0, \j) to (1, \j);}
        
        \node (d1) at (0, 2) {$\vdots$};
        \node (d2) at (1, 2) {$\vdots$};
        \node (l1) at (-3, 5) {row $r-2m-1$};
        
        \draw (1, 5) to (1, 4);
        \draw (0, 4) to (0, 3);
        \draw (1, 3) to (1, 2.5);
        \draw (1, 1.5) to (1, 1);
        \draw (0, 1) to (0, 0);
        \draw[->] (1, 0) to (1, -0.5);
        \draw (1, 8.5) to (1, 8);
        \draw (1, 8) to (1, 7);
        \draw (1, 7) to (1, 6);
        \vertex[red] (r2) at (1, 6) {};
        
        \end{tikzpicture}
        \end{subfigure}
                \hspace{4mm}
                \begin{subfigure}[t]{0.3\textwidth}
        \centering
        \begin{tikzpicture}[scale=0.6]
        \foreach \x in {0,1} {
            \foreach \y in {0,1} {
                \vertex (\x\y) at (\x, \y) {};
            }
        }  
        
        \foreach \x in {0,1} {
            \foreach \y in {3, 4, ..., 8} {
                \vertex (\x\y) at (\x, \y) {};
            }
        }  
        \foreach \i in {0} {
            \foreach \j in {5, 6, 7} {
            \draw (\i, \j) to (\i, \j+1);}
        }
        
        \foreach \j in {0, 1, 3, 4, 5} {
        \draw (0, \j) to (1, \j);}
        
        \node (d1) at (0, 2) {$\vdots$};
        \node (d2) at (1, 2) {$\vdots$};
        \node (l1) at (-3, 5) {row $r-2m-1$};

        \draw (0, 4) to (0, 3);
        \draw (1, 3) to (1, 2.5);
        \draw (1, 1.5) to (1, 1);
        \draw (0, 1) to (0, 0);
        \draw (1, 0) to (1, -0.5);
        \draw[->] (1, 8) to (1, 8.5);
        \draw (1, 8) to (1, 7);
        \draw (1, 7) to (1, 6);
        \draw (1,5 ) to (1, 6);
        \vertex[red] (r2) at (1, 4) {};
        
        \end{tikzpicture}
        \end{subfigure}
    \caption{Argument that any Hamiltonian Path $P$ in Proposition \ref{prop:ggall} starting with $r-2m-1$ vertical edges is either a GG path or can be replaced by a GG path with fewer edges.  The leftmost image is for a generic path, including the vertical edges and forced horizontal edges between the first two columns. The middle and right image show the two possible ways to complete $P$'s path through the first two columns.}
    \label{fig:ForcedGG}
\end{figure}

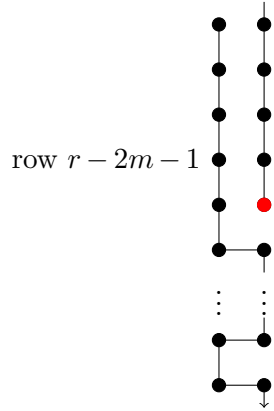
\begin{figure}[th!]

    \centering
        \begin{subfigure}[t]{0.3\textwidth}
        \centering
        \begin{tikzpicture}[scale=0.6]
        \foreach \x in {0,1} {
            \foreach \y in {0,1} {
                \vertex (\x\y) at (\x, \y) {};
            }
        }  
        
        \foreach \x in {0,1} {
            \foreach \y in {3, 4, ..., 8} {
                \vertex (\x\y) at (\x, \y) {};
            }
        }  
        \foreach \i in {0} {
            \foreach \j in {5, 6, 7} {
            \draw (\i, \j) to (\i, \j+1);}
        }
        
        \foreach \j in {0, 1, 3} {
        \draw (0, \j) to (1, \j);}
        
        \node (d1) at (0, 2) {$\vdots$};
        \node (d2) at (1, 2) {$\vdots$};
        \node (l1) at (-2.5, 5) {row $r-2m-1$};
        
        \draw (1, 5) to (1, 4);
        \draw (0, 4) to (0, 3);
        \draw (1, 3) to (1, 2.5);
        \draw (1, 1.5) to (1, 1);
        \draw (0, 1) to (0, 0);
        \draw[->] (1, 0) to (1, -0.5);
        \draw (1, 8.5) to (1, 8);
        \draw (1, 8) to (1, 7);
        \draw (1, 7) to (1, 6);
        \draw (1, 6) to (1, 5);
        \draw (0, 5) to (0, 4);
        \vertex[red] (r2) at (1, 4) {};
        
        \end{tikzpicture}
        \end{subfigure}

    \caption{Replacing the rightmost path in Figure \ref{fig:ForcedGG} with a GG path using two fewer horizontal edges.}
    \label{fig:ForcedGG2}
\end{figure}

\begin{proof}
If $P$ is a GG path, $x \in A_{r,c, m}$ definitionally. Because $P$ uses  1 horizontal edge between every  pair of columns after the first, it is only within the first two columns that it could have a non-GG path structure.   

\begin{cm}
If $m=0$, $P$ is a GG path.
\end{cm}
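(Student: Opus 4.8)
The plan is to exploit the extreme rigidity of the $m=0$ case. Since $2m+1 = 1$ here, $P$ uses \emph{exactly one} horizontal edge between every pair of consecutive columns, and all horizontal edges of $P$ are of this form. I would show that this alone forces $P$, column by column, into the shape of Definition \ref{defn:GG} with $k=0$.

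First I would establish the following structural fact: $P$ visits the columns in the order $0, 1, \ldots, c-1$, and within each column $P$ traverses a contiguous Hamiltonian sub-path of that column's $r$-cycle, entering column $j$ (for $j \geq 1$) at the vertex horizontally adjacent to the vertex at which it left column $j-1$. For column $0$: the vertex $(0,0)$ is an endpoint of $P$ and lies in column $0$, whose only incident horizontal edge of $P$ is the single one joining it to column $1$; hence $P$ cannot leave and re-enter column $0$, so the initial segment of $P$ stays within column $0$, and (since $r, c \geq 2$, so there are vertices outside column $0$) it leaves column $0$ exactly once, necessarily after visiting all of it. Thus this initial segment is a Hamiltonian sub-path of the cycle $C_0$ starting at $(0,0)$, and since such a sub-path of an $r$-cycle is forced after its first step, it is either $(0,0),(1,0),\ldots,(r-1,0)$ or $(0,0),(r-1,0),\ldots,(1,0)$; $P$ then takes the unique horizontal edge and enters column $1$ at $(r-1,1)$ or $(1,1)$ respectively. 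For $0 < j < c-1$: column $j$ contains no endpoint of $P$ and has exactly two incident horizontal edges of $P$ (one to column $j-1$ and one to column $j+1$), so $P$ enters and leaves column $j$ exactly once; consequently $P$ must traverse all of column $j$ in one contiguous stretch, i.e.\ a Hamiltonian sub-path of $C_j$ starting at the entry vertex, with exactly two possible traversal directions, before crossing to column $j+1$. Column $c-1$ is identical except that $P$ ends there instead of leaving.

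Finally I would check that this description is verbatim a GG path with $k = 0$: the two options for column $0$ are precisely steps 2--3 of the two variants of Definition \ref{defn:GG} when $r-2k-1 = r-1$ (so the alternating ``zig-zag'' of step 3 degenerates to the single horizontal edge from $(r-1,0)$ to $(r-1,1)$, respectively from $(1,0)$ to $(1,1)$), and the subsequent contiguous column traversals joined by single horizontal edges are exactly what step 4 prescribes. Hence $P$ is a GG path, and in particular its endpoint row lies in $A_{r,c,0}$. I do not anticipate a real obstacle; the only points needing care are ruling out that $P$ starts by immediately crossing to column $1$ (which would strand the $r-1 \geq 1$ remaining vertices of column $0$), and correctly interpreting the degenerate $k=0$ instance of step 3 of Definition \ref{defn:GG}.
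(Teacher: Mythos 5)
Your proposal is correct and follows essentially the same route as the paper: the single crossing per pair of consecutive columns forces $P$ to exhaust each column contiguously (the paper states this forcing for the first column and appeals to Figure \ref{fig:GG1} for the two resulting options, which is exactly your column-$0$ dichotomy), after which the structure coincides with Definition \ref{defn:GG} with $k=0$. Your write-up is simply a more explicit, column-by-column version of the paper's terse argument, including the same observation that $P$ cannot cross to column $1$ before finishing column $0$.
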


\begin{proof}
First consider the case when $m=0$, the path $P$ must visit every vertex in the first column before moving to the second column.  Since $P$ starts at $(0, 0)$, it either starts going vertically down (from $(0, 0)$ to $(1, 0)$) or wrapping up (from $(0, 0)$ to $(r-1, 0)$). In either case, it must be one of the tour types shown in Figure \ref{fig:GG1}, i.e., a GG path.  
\end{proof}

If $m\geq 1,$ there is at least one extra pair of horizontal edges between the first two columns.  We proceed by considering the first horizontal edge. 

\begin{cm}
If $m\geq 1$ and the first $r-2m-1$ edges of $P$ are vertical,  $x \in A_{r, c, m}$.
\end{cm}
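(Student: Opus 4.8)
The plan is to show that the hypothesis forces the structure of $P$ inside the first two columns to be one of only two shapes, one of which is literally a GG path and the other of which can be locally rerouted into a GG path using two fewer horizontal edges. First I would normalize: since the first $r-2m-1$ edges of $P$ are vertical and $P$ begins at $(0,0)$, they form a monotone run, which — after reflecting the rows, an operation preserving both the hypotheses and membership in $A_{r,c,m}$ — we may take to be downward, so $P$ visits $(0,0),(1,0),\ldots,(r-2m-1,0)$ in order. Now $(0,0)$ has degree $1$ in $P$ with its unique edge vertical, and each of $(1,0),\ldots,(r-2m-2,0)$ already has degree $2$ inside this run, so none of these $r-2m-1$ vertices carries a horizontal edge. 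Hence every horizontal edge between columns $0$ and $1$ has its column-$0$ endpoint in $\{(r-2m-1,0),\ldots,(r-1,0)\}$, a set of size $2m+1$, and each such vertex has exactly one neighbour in column $1$, so carries at most one horizontal edge. Since $P$ uses exactly $2m+1$ horizontal edges between these two columns, each of these $2m+1$ vertices carries exactly one. In particular $(r-2m-1,0)$ is joined only to $(r-2m-2,0)$ and $(r-2m-1,1)$, which forces the vertical edges of $P$ inside column $0$ to pair up the remaining rows as $\{(r-2m+2j,0),(r-2m+1+2j,0)\}$ for $j=0,\ldots,m-1$ (using also that $(0,0)$ is saturated, so there is no edge $(r-1,0)$-$(0,0)$); consequently $P$ contains the forced ``detours'' $(r-2m+2j,1)$-$(r-2m+2j,0)$-$(r-2m+1+2j,0)$-$(r-2m+1+2j,1)$ for each such $j$. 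This is the leftmost picture of Figure~\ref{fig:ForcedGG}.

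Next I would verify that, given these forced edges, there are exactly two ways to complete $P$'s route through the first two columns, namely the middle and right pictures of Figure~\ref{fig:ForcedGG}. The middle completion is precisely the prescribed shape of a GG path with $m$ extra pairs of horizontal edges (Definition~\ref{defn:GG}), so in that case $P$ is a GG path and $x\in A_{r,c,m}$ by Proposition~\ref{prop:Arcm}. In the right completion I would apply the surgery of Figure~\ref{fig:ForcedGG2}: reroute only the portion of $P$ lying in the first two columns so that it becomes the GG-type shape using two fewer horizontal edges between columns $0$ and $1$, keeping the start vertex $(0,0)$ fixed and keeping the vertex at which $P$ leaves column $1$ for column $2$ unchanged (so the rest of $P$, lying in columns $2,\ldots,c-1$, is untouched). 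The result is a Hamiltonian path $P'$ from $(0,0)$ to $(x,c-1)$ that still uses one horizontal edge between every pair of consecutive columns after the first, now uses $2(m-1)+1$ horizontal edges between the first two columns, and whose first $r-2(m-1)-1$ edges are vertical.

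I would then close by induction on $m$. The base case $m=0$ is the preceding claim, which gives that $P$ must be a GG path. For $m\ge 1$, the middle completion gives $x\in A_{r,c,m}$ directly, and in the right completion the path $P'$ constructed above satisfies the hypotheses of this claim with $m-1$ in place of $m$, so the inductive hypothesis gives $x\in A_{r,c,m-1}$, whence $x\in A_{r,c,m}$ by property~1 of Proposition~\ref{prop:AG}. In either case $x\in A_{r,c,m}$, as claimed. I expect the main obstacle to be the bookkeeping in the second paragraph: confirming that once the forced edges are in place only the two stated completions can occur, and that the reroute of Figure~\ref{fig:ForcedGG2} genuinely produces a connected Hamiltonian path with the advertised endpoint and horizontal-edge count. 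Everything else is a short degree count together with the monotonicity $A_{r,c,m-1}\subseteq A_{r,c,m}$.
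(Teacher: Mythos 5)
Your proposal matches the paper's argument: after normalizing the direction of the initial vertical run, you derive the same forced horizontal edges and zigzag detours, identify the same two possible completions through the first two columns (Figure \ref{fig:ForcedGG}), conclude directly in the GG case, and perform the same two-edge reroute (Figure \ref{fig:ForcedGG2}) in the other case. The only cosmetic difference is that you wrap the reroute in an induction on $m$, whereas the paper observes directly that the rerouted path is a GG path with two fewer horizontal edges, giving $x \in A_{r,c,m-1} \subset A_{r,c,m}$; the content is the same.
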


\begin{proof}
Suppose that the first $r-2m-1$ edges of $P$ are vertical.  By cylindrical symmetry, it is without loss of generality to assume that $P$ starts by following $(0, 0), (1, 0), ..., (r-2m-1, 0).$  To include $2m+1$ horizontal edges between the first two columns,  we must have horizontal edges between $(0, i)$ and $(1, i)$ for $r-2m-1 \leq i \leq r-1.$  See Figure \ref{fig:ForcedGG}.  Note that there are two possible ways to complete $P$'s path through the first two columns.  If, from vertex $(1, r-2m-1),$ the path continues going down to $(1, r-2m)$ (in the same direction as the original edges $(0, 0), (1, 0), ..., (r-2m-1, 0)$), then $P$ is a GG path (the middle image of Figure \ref{fig:ForcedGG}).  Otherwise the path goes up from $(r-2m-1, 1)$ to $(r-2m-2, 1)$, and must be completed as in the rightmost image of Figure \ref{fig:ForcedGG}.  The first two columns of such a path can be replaced, as in Figure \ref{fig:ForcedGG2}, covering $P$ into a GG path, using two fewer horizontal edges, so that $(x, c-1) \in A_{r, c, m-1} \subset A_{r, c, m}.$  Formally, we delete the horizontal edges from $(r-2m-1, 0)$ to $(r-2m-1, 1)$ and from $(r-2m, 0)$ to $(r-2m, 1)$, and replace them with vertical edges from $(r-2m-1, 1)$ to $(r-2m, 1)$ and from $(r-2m-1, 1)$ to $(r-2m, 1).$
\end{proof}

At this point, what remains are paths $P$ that start at $(0, 0)$ and where at least one of the first $r-2m-1$ edges is horizontal.  We consider what happens if the first edge is vertical.  As above, it is without loss of generality to assume that the first edge is from $(0, 0)$ to $(1, 0).$

Suppose that the first horizontal edge is from $(i, 0)$ to $(i, 1)$ with $i\geq 1,$ so that $P$ contains the sequence $(i-1, 0), (i, 0), (i, 1).$  Since $P$ eventually returns to the first column (and uses one edge between the second and third columns), the next edge in this sequence is either $(i+1, 1)$ or $(i-1, 1)$.  

\begin{cm}\label{cm:PisGG}
Suppose that $P$ contains the sequence $(i-1, 0), (i, 0), (i, 1), (i+1, 1).$  Then $P$ must be a GG path and so $x \in A_{r, c, m}$.
\end{cm}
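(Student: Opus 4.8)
The plan is to push the degree-forcing argument behind Figure~\ref{fig:ForcedGG} down both of the first two columns, starting from the vertex $(i,0)$ at which $P$ first leaves column~$0$. Write $P = v_0, v_1, v_2, \ldots$ with $v_0 = (0,0)$. By the hypotheses (first edge of $P$ vertical, first horizontal edge $(i,0)(i,1)$, and $v_{i+2} = (i+1,1)$), the path begins $v_0, \ldots, v_i = (0,0),(1,0),\ldots,(i,0)$, then $v_{i+1} = (i,1)$ and $v_{i+2} = (i+1,1)$. Since $c \ge 2$, the only degree-$1$ vertices of $P$ are $(0,0)$ and the endpoint in column $c-1$; as $1 \le i \le r-2m-2$, neither $(i+1,0)$ nor $(i+1,1)$ is one of those, so both have degree $2$ in $P$.

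The basic forcing move: the only neighbours of $(i+1,0)$ are $(i,0)$, $(i+2,0)$, and $(i+1,1)$, and $(i,0)$ already carries its two $P$-edges $(i-1,0)(i,0)$ and $(i,0)(i,1)$, so the two $P$-edges at $(i+1,0)$ must be $(i+1,0)(i+2,0)$ and $(i+1,0)(i+1,1)$. Hence the two $P$-edges at $(i+1,1)$ are $(i,1)(i+1,1)$ and $(i+1,1)(i+1,0)$, and $P$ continues $\ldots,(i,1),(i+1,1),(i+1,0),(i+2,0),\ldots$, exactly the start of the zigzag in clause~(3) of Definition~\ref{defn:GG} based at row~$i$. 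I would then show, by induction on the row, that the zigzag is forced all the way down: once the part of $P$ through rows $\le j$ has the zigzag shape, the degree-$2$ conditions at $(j+1,0)$ and $(j+1,1)$ leave a single admissible continuation, because every other choice either requires a horizontal edge between columns~$0$ and~$1$ after the budget of $2m+1$ such edges is used up, or strands an arc of not-yet-visited vertices of columns~$0$ and~$1$ that $P$ can never reach again (the path crosses from column~$1$ to column~$2$ exactly once and thereafter never re-enters column~$1$, hence never column~$0$). The only remaining freedom is how the zigzag closes up at the bottom rows into a single path from $(0,0)$ ending in the last column, and this is forced exactly as in the $m=0$ discussion after Definition~\ref{defn:GG} and in Figures~\ref{fig:GG1} and~\ref{fig:GG3}. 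So $P$ has the shape of Definition~\ref{defn:GG}, and since its zigzag uses one horizontal edge in each of rows $i, i+1, \ldots, r-1$, matching the count $r-i$ against the $2m+1$ horizontal edges $P$ uses between columns~$0$ and~$1$ gives $i = r-2m-1$; thus $P$ is a GG path with $k = m$, and $x \in A_{r,c,m}$ by the definition of $A_{r,c,m}$. (Since the current case assumes $1\le i\le r-2m-2$, the equality $i=r-2m-1$ shows that no such $P$ actually exists, so the conclusion also holds vacuously; either way there is nothing more to prove.)

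The crux is the inductive step: at each row one must enumerate the two or three ways $P$ could break the zigzag pattern and verify, case by case, that each overspends the budget of horizontal edges between the first two columns or isolates an unvisitable sub-arc of those columns. Tracking the vertical wraparound edge in row $r-1$ and splitting on the parity of $r$, as in the concluding part of the proof of Claim~\ref{cm:caseb2}, is the fiddly bookkeeping; the engine in every case is the degree-forcing used in the basic move.
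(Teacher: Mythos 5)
Your outline is the same as the paper's (degree-forcing, a forced zigzag between the first two columns, then a count of horizontal edges), and your ``basic forcing move'' at $(i+1,0)$ is exactly the paper's first step. The problem is the inductive step, which is precisely what you defer (``I would then show\dots''), and the justification you sketch for it is not the one that closes the argument. You assert that any deviation from the zigzag ``requires a horizontal edge between columns~0 and~1 after the budget of $2m+1$ such edges is used up, or strands an arc of not-yet-visited vertices.'' At the first possible deviation, namely continuing $(i+2,0)\to(i+3,0)$ instead of taking the horizontal edge at row $i+2$, neither holds locally: with $m\ge 1$ only two of the $2m+1$ first-gap edges have been spent, and nothing is yet stranded (the path could, say, run down column~0, cross at row $r-1$, sweep up column~1, and only much later get stuck), so ruling this out by ``stranding'' needs a global chase through several branches that you have not carried out. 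The ingredient you are missing -- the one the paper uses, phrased as ``$(i-1,1)$ and $(i+2,1)$ cannot both have degree two'' -- is that the hypothesis already \emph{pins} the unique horizontal edge between columns~1 and~2 at row $i-1$: since $(i,1)$ is saturated by $(i,0)$ and $(i+1,1)$, and $(i-1,0)$ is saturated (or is $(0,0)$, which already carries its single edge), the vertex $(i-1,1)$ is forced to take both $(i-2,1)$ and $(i-1,2)$, consuming the one column-1-to-column-2 edge allowed by Proposition~\ref{prop:ggall}. Consequently no other column-1 vertex may exit to column~2, and this is what makes every step of the zigzag a purely local forcing: if the horizontal edge at an even-offset row $j$ were skipped, $(j,1)$ would have only $(j+1,1)$ and $(j,2)$ available and would need a second column-1/2 edge, a contradiction; at odd-offset rows the forcing at $(j,0)$ is just your basic move again.

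Once the pinned edge is in place, the induction is immediate and the endgame is only the budget count: the forced zigzag spends one first-gap horizontal edge per row from row $i$ downward, so either $i=r-2m-1$ and $P$ is literally a GG path of Definition~\ref{defn:GG} (the subsequent sweep of column~1 up to $(i-1,1)$ and the exit at $(i-1,2)$ are then forced as well), or $i\le r-2m-2$ and the forcing demands a $(2m+2)$nd first-gap edge, contradicting the hypothesis -- so your ``vacuous'' reading of that subcase is fine. The parity-of-$r$ and wraparound bookkeeping you anticipate (by analogy with Claim~\ref{cm:caseb2}) does not in fact arise here; that you expected it is another sign the inductive step was not worked through, and identifying the pinned column-1/2 edge is the concrete piece needed to fill the gap.
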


\begin{proof}
First, the sequence must continue  $(i-1, 0), (i, 0), (i, 1), (i+1, 1), (i+1, 0),$ as otherwise the vertex $(i+1, 0)$ cannot possibly have degree two (see red vertex in the left picture in Figure \ref{fig:PisGG}).  Analogously, $P$ must continue alternating and proceed as $(i-1, 0), (i, 0), (i,1), (i+1, 1), (i+1, 0), (i+2, 0).$  Otherwise the nodes $(i-1, 1)$ and $(i+2, 1)$ cannot both have degree two (see the red nodes in Figure \ref{fig:PisGG}); note that $(i-1, 0)$ is fully saturated: either $i=1$ and $(0, 0)$ has degree 1 in a Hamiltonian path, or $i>1$ and our assumption that $(i, 0)$ to $(i, 1)$ is the first horizontal edge in $P$ implies that $(i-1, 0)$ is also adjacent to $(i-2, 0)$ and has so $(i-1, 0)$ has degree 2.

These arguments continue iteratively, showing that $P$ must alternate horizontal and vertical edges, proceeding until all $2m+1$ horizontal edges have been used.  Since $P$ is Hamiltonian and visits every vertex in the first column, we must have that $P$ is a GG path of the form where the first edge goes from $(0, 0)$ to $(1, 0)$.
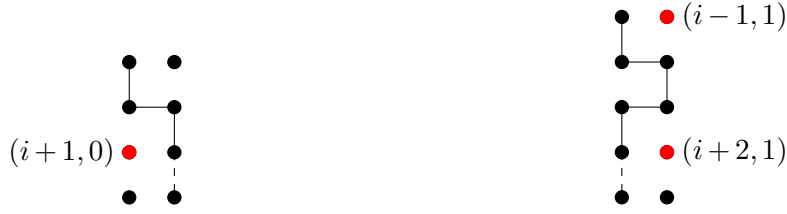
\begin{figure}[t!]
    \centering
        \begin{subfigure}[t]{0.45\textwidth}
        \centering
        \begin{tikzpicture}[scale=0.6]
        \foreach \x in {0,1} {
            \foreach \y in {0, 1, ..., 3} {
                \vertex (\x\y) at (\x, \y) {};
            }
        }  
        
        \draw (0, 3) to (0, 2);
        \draw (0, 2) to (1, 2);
        \draw (1, 2) to (1, 1);
        \draw[dashed] (1, 1) to (1, 0);
        \node (l1) at (-1.5, 1) {$(i+1, 0)$};
       
        \vertex[red] (r2) at (0, 1) {};
        \end{tikzpicture}
        \end{subfigure}
        \hspace{5mm}
         \begin{subfigure}[t]{0.45\textwidth}
        \centering
         \begin{tikzpicture}[scale=0.6]
        
        \foreach \x in {0,1} {
            \foreach \y in {-1, 0, ..., 3} {
                \vertex (\x\y) at (\x, \y) {};
            }
        }  
        
        \draw (0, 3) to (0, 2);
        \draw (0, 2) to (1, 2);
        \draw (1, 2) to (1, 1);
        \draw (1, 1) to (0, 1);
        \draw (0, 1) to (0, 0);
        \draw[dashed] (0, 0) to (0, -1);
        \node (l1) at (2.5, 0) {$(i+2, 1)$};
        \node (l2) at (2.5, 3) {$(i-1, 1)$};
       
        \vertex[red] (r2) at (1, 0) {};
        \vertex[red] (r2) at (1, 3) {};
        \end{tikzpicture}
        \end{subfigure}

    \caption{Illustration of the proof of Claim \ref{cm:PisGG}.}
    \label{fig:PisGG}
\end{figure}

\end{proof}

\begin{cm}
\label{cm:PisGG2}
Suppose that $P$ contains the sequence $(i-1, 0), (i, 0), (i, 1), (i-1, 1).$  Then  $x \in A_{r, c, m}$.
\end{cm}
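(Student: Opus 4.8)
The plan is to reduce the configuration $(i-1,0),(i,0),(i,1),(i-1,1)$ to one with one fewer ``extra'' horizontal pair via an adjacent‑parallel swap, and then induct on $m$ (the base case $m=0$, where $P$ is forced to be a GG path, is already handled). Throughout I would use that in this sub‑case $1\le i\le r-2m-2$, so $i,i+1,i+2$ are distinct rows and none of the vertices below is $(0,0)$ or the endpoint $v=(x,c-1)$.

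First I would observe a forced edge: since $(i,0)$ already has degree $2$ in $P$ (its incident edges are $\{(i-1,0),(i,0)\}$ and $\{(i,0),(i,1)\}$), the interior vertex $(i+1,0)$ can only reach degree $2$ by being incident to both $\{(i+1,0),(i+2,0)\}$ and $\{(i+1,0),(i+1,1)\}$; hence $P$ contains both. So $\{(i,0),(i,1)\}$ and $\{(i+1,0),(i+1,1)\}$ form a pair of adjacent parallel horizontal edges, both in $P$, while the replacement vertical edges $\{(i,0),(i+1,0)\}$ and $\{(i,1),(i+1,1)\}$ are not in $P$ (again because $(i,0),(i,1)$ are saturated). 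Thus Claim~\ref{cm:swap} applies: the swapped subgraph $Q$ is either (a) a $(0,0)$--$v$ Hamiltonian path, or (b) the disjoint union of a $(0,0)$--$v$ path and a cycle; in both cases $Q$ uses two fewer horizontal edges between the first two columns, and still exactly one horizontal edge between every later pair of consecutive columns. In case (a), $Q$ satisfies the hypotheses of Proposition~\ref{prop:ggall} with $m$ replaced by $m-1$, so by the inductive hypothesis its endpoint row $x$ lies in $A_{r,c,m-1}\subseteq A_{r,c,m}$ (part~1 of Proposition~\ref{prop:AG}), and we are done.

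It remains to rule out case (b). The case analysis inside Claim~\ref{cm:swap} forces $P$ to contain the consecutive subpath $(i+2,1),(i+1,1),(i+1,0),(i+2,0)$ --- the path must reach $(i+1,1)$ before $(i+1,0)$, and can only do so from $(i+2,1)$, since entering $(i+1,1)$ from column~$2$ would require a second horizontal edge between columns~$1$ and~$2$. I would then chase forced edges: once $(i+2,0)$ is reached, each vertex $(i+2,0),(i+3,0),\dots$ has its column-$1$ neighbor already occupied, so $P$ is forced straight down column~$0$ all the way to $(r-1,0)$, whose only non‑saturated neighbor is $(0,0)$; but $(0,0)$ already has degree $1$ as the start of $P$, so $(r-1,0)$ becomes a second degree‑one vertex inside column~$0$, which is impossible.

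The main obstacle is making the last paragraph airtight: I must show that in case (b) the portion of $P$ from $(i,1)$ to $(i+1,1)$ genuinely sweeps all of column~$1$ before it can branch into column~$0$ --- using that $P$ is pinned to move vertically through rows $i-1,\dots,0$ of column~$1$ (their column-$0$ neighbors are saturated by the initial vertical run) and cannot jump to column~$2$ until columns~$0$ and~$1$ are exhausted --- and then handle the wraparound at the top and bottom of the columns, the small case $i=1$ (where $(i-1,\cdot)=(0,\cdot)$), and the degenerate case $c=2$ by separate routine arguments.
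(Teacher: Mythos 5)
Your plan starts from the same key observation as the paper: since $(i,0)$ is saturated, the degree argument at $(i+1,0)$ forces both $\{(i+1,0),(i+2,0)\}$ and the horizontal edge $\{(i+1,0),(i+1,1)\}$ into $P$, and the exchange of the adjacent parallel horizontal pair at rows $i$ and $i+1$ for the two vertical edges is exactly the paper's final move. The difference is the order of operations: the paper first pins down the entire structure of $P$ in the first two columns (the run down column $0$, the sweep up column $1$ with the forced detour through $(r-1,0)$ and $(r-2,0)$, a zigzag from the bottom, and a U-turn crossing at row $i+1$), after which the swapped path is visibly a GG path with the same endpoint; you swap first via Claim~\ref{cm:swap} and then must dispose of the path-plus-cycle outcome. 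Handling the Hamiltonian-path outcome by induction on $m$ (with $m=0$ as base case) is a legitimate restructuring.

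The genuine gap is in your elimination of case (b), which is the crux. First, the assertion that ``each vertex $(i+2,0),(i+3,0),\dots$ has its column-$1$ neighbor already occupied'' is not available when you invoke it: it only becomes true after you have forced the whole predecessor chain of $(i+1,1)$ in column $1$, namely $(i+2,1),(i+3,1),\dots,(r-2,1)$ (each predecessor is forced by excluding the column-$2$ neighbor, which would need a second edge between columns $1$ and $2$, and the same-row column-$0$ vertex, which occurs later on the path); you defer exactly this to your ``main obstacle'' paragraph, so as written the step assumes what it must prove. Second, the contradiction you point to does not occur where you say: $(r-1,0)$ can never be the stuck vertex, because $(0,0)$ is saturated from the outset, so the edges $\{(r-1,0),(r-2,0)\}$ and $\{(r-1,0),(r-1,1)\}$ are forced into $P$ before any of this analysis begins. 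Consequently your picture of a clean ``sweep of all of column $1$'' is wrong --- the path must detour $(0,1)\to(r-1,1)\to(r-1,0)\to(r-2,0)\to(r-2,1)$ --- and the forced downward run in column $0$ terminates at row $r-3$, since rows $r-2$ and $r-1$ of column $0$ are already used. The correct contradiction is that $(r-3,0)$ is left needing a second edge with all of $(r-2,0)$, $(r-3,1)$ (and, when $i=r-4$, also $(i,0)$) saturated, and it cannot be an endpoint because the endpoint lies in the last column. With the chains forced in that order your case (b) does collapse, so the plan is repairable, but the argument as written both presupposes the column-$1$ chain and lands the contradiction at a vertex where it cannot happen.
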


\begin{proof}
This case proceeds largely as in Claim \ref{cm:PisGG}.  First we note that $P$ must continue as $(0, 0), (1, 0), ..., (i-1, 0),$ $(i, 0),(i, 1), (i-1, 1), (i-2, 1), ..., (1, 0), (r-1, 1)$ (see left picture of Figure \ref{fig:PisGG2}).

\begin{figure}[t!]
    \centering
        \begin{subfigure}[t]{0.45\textwidth}
        \centering
        \begin{tikzpicture}[scale=0.6]
        \foreach \x in {0,1} {
            \foreach \y in {0,1, 2, 4, 5, 6, 8, 9} {
                \vertex (\x\y) at (\x, \y) {};
            }
        }  
        
        \foreach \x in {0,1} {
            \foreach \y in {3, 7} {
                \node (\x\y) at (\x, \y) {$\vdots$};
            }
        }

        \draw (0, 9) to (0, 8);
        \draw (0, 6) to (0, 5);
        \draw (0, 7.5) to (0, 8);
        \draw (0, 6) to (0, 6.5);

        \draw (1, 9) to (1, 8);
        \draw (1, 6) to (1, 5);
        \draw (1, 7.5) to (1, 8);
        \draw (1, 6) to (1, 6.5);
        
        \draw (0, 5) to (1, 5);
        
        \draw[->] (1, 9) to (1, 9.5);
        \draw (1, -0.5) to (1, 0);
        
        \node (l1) at (-1.2, 9) {$(0, 0)$};
        \node (l2) at (-1.2, 5) {$(i, 0)$};
        \node (l3) at (-1.5, 0) {$(r-1, 0)$};
       
        \end{tikzpicture}
        \end{subfigure}
        \hspace{5mm}
         \begin{subfigure}[t]{0.45\textwidth}
        \centering
         \begin{tikzpicture}[scale=0.6]
        \foreach \x in {0,1} {
            \foreach \y in {0,1, 2, 4, 5, 6, 8, 9} {
                \vertex (\x\y) at (\x, \y) {};
            }
        }  
        
        \foreach \x in {0,1} {
            \foreach \y in {3, 7} {
                \node (\x\y) at (\x, \y) {$\vdots$};
            }
        }

        \draw (0, 9) to (0, 8);
        \draw (0, 6) to (0, 5);
        \draw (0, 7.5) to (0, 8);
        \draw (0, 6) to (0, 6.5);

        \draw (1, 9) to (1, 8);
        \draw (1, 6) to (1, 5);
        \draw (1, 7.5) to (1, 8);
        \draw (1, 6) to (1, 6.5);
        
        \draw (0, 5) to (1, 5);
        
        \draw[->] (1, 9) to (1, 9.5);
        \draw (1, -0.5) to (1, 0);
        
        \node (l1) at (-1.2, 9) {$(0, 0)$};
        \node (l2) at (-1.2, 5) {$(i,0)$};
        \node (l3) at (-1.5, 0) {$(r-1,0)$};
        
        \draw (1, 0) to (0, 0);
        \draw (0, 0) to (0, 1);
        \draw (0, 1) to (1, 1);
        \draw (1, 1) to (1, 2);
        \draw (1, 2) to (0, 2);
        
        \draw(0, 2) to (0, 2.5);
        \draw (0, 3.5) to (0, 4);
        \draw (0, 4) to (1, 4);
        
        \draw[->] (1, 4) to (1.5, 4);
       
        \end{tikzpicture}
        \end{subfigure}

    \caption{Illustration of the proof of Claim \ref{cm:PisGG2}.}
    \label{fig:PisGG2}
\end{figure}
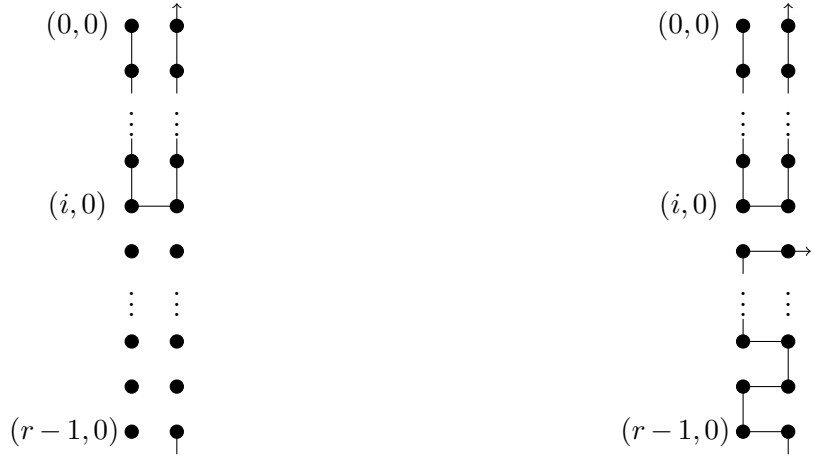

We can see that $P$ must continue from $(r-1,1)$ to $(r-1,0)$ (so that $(r-1,0)$ has degree 2) and then to $(r-2,0).$  As in Claim \ref{cm:PisGG}, at $(r-2, 0),$ $P$ must continue to $(r-2,1)$, and continue alternating between the first and second column.  Ultimately $P$ must proceed as illustrated in the right picture of Figure \ref{fig:PisGG2}.  Since $P$ uses exactly $2m+1$ horizontal edges between the first two columns, the horizontal edges must be in in the last $2m+1$ rows: $r-1, r-2, r-3, ..., r-2m-1.$  $P$ leaves the second column at the second horizontal edge, at row $r-2m$.  Note that we can replace the part of $P$ in the first two columns using a GG path with two fewer edges: delete the edges from $(i,0)$ to $(i,1)$ and from $( i+1,0)$ to $(i+1,1)$, and replace them with edges from $(i,0)$ to $(i+1,0)$ and from $( i,1)$ to $( i+1,1).$
\end{proof}

The only remaining case we must consider is if the first edge of $P$ is horizontal.  Again appealing to cylindrical symmetry, we assume that $P$ starts $(0, 0), (0,1), (1, 1)$.

\begin{cm}
\label{cm:PisGG3}
Suppose that $P$ contains the sequence $(0, 0), (0,1), (1, 1)$.  Then  $x \in A_{r, c, m}$.
\end{cm}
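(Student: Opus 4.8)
The plan is to pin down, via degree-counting, exactly how $P$ can move through the first two columns once its opening moves are $(0,0),(0,1),(1,1)$, and then to argue that every such configuration is either literally a GG path or can be converted into a GG path with at most as many horizontal edges; in the latter case Proposition~\ref{prop:AG}(1) gives $x\in A_{r,c,m-1}\subseteq A_{r,c,m}$. The argument closely parallels Claims~\ref{cm:PisGG} and~\ref{cm:PisGG2}.

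First I would establish the forced start. The vertex $(0,0)$ is the degree-$1$ endpoint of $P$, and its only incident edge is $(0,0)(0,1)$; hence neither $(1,0)$ nor $(r-1,0)$ is joined to $(0,0)$ in $P$. Now consider the vertex following $(1,1)$ on $P$. If it were anything other than $(1,0)$, then $(1,1)$ would be saturated, and $(1,0)$ could only use its edge to $(2,0)$, giving it degree at most $1$ -- impossible, since $(1,0)\ne v$ (it lies in column $0\ne c-1$). So $P$ continues $(0,0),(0,1),(1,1),(1,0)$, and then $(1,0)$, already incident to $(1,1)$ and forbidden to use $(0,0)$, is forced on to $(2,0)$. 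Thus $P$ begins $(0,0),(0,1),(1,1),(1,0),(2,0),\dots$, having used horizontal edges in rows $0$ and $1$.

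Next I would carry out a short case analysis on how $P$ proceeds from $(2,0)$ (continue down column~$0$, or take the horizontal edge in row~$2$), iterating the same forced-degree reasoning used in Claim~\ref{cm:PisGG}: in the first two columns every vertex other than $0$ and $v$ must attain degree $2$, and no second degree-$1$ vertex can appear in column~$0$. This forces $P$, inside the first two columns, to be an alternating zigzag through a contiguous block of rows, exactly of the shapes appearing in Figures~\ref{fig:ForcedGG}, \ref{fig:ForcedGG2}, \ref{fig:PisGG} and~\ref{fig:PisGG2}. When the zigzag runs cleanly through all of column~$0$, $P$ is a GG path (this is the degenerate $r=2m+1$ branch of Definition~\ref{defn:GG}, with no leading vertical edges), so $x\in A_{r,c,m}$ by definition. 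When instead $P$ makes a ``turnaround'' before exhausting column~$0$, I would take the pair of adjacent horizontal edges straddling the turnaround, delete it, and insert the opposite pair of vertical edges, as in Figure~\ref{fig:ForcedGG2}; by Claim~\ref{cm:swap} this yields a GG path using two fewer horizontal edges, hence $x\in A_{r,c,m-1}\subseteq A_{r,c,m}$.

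The one point requiring care -- and the main obstacle -- is ensuring these swaps produce a genuine Hamiltonian path and not the disjoint union of a path and a cycle. In particular the naive swap at rows $0$ and $1$ merely recreates the already-present vertical edge $(0,1)(1,1)$, splitting off a trivial component, so one must instead perform the swap at the \emph{first} row where $P$ deviates from the clean zigzag, where Claim~\ref{cm:swap} (or the direct reconnection argument already used in Claim~\ref{cm:PisGG2}) applies cleanly. Once the location of the swap is chosen correctly, the verification is identical to the reconnection steps already carried out several times in this appendix, so the remaining details are routine.
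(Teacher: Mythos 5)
Your overall strategy is the same as the paper's: force the opening moves by degree counting, trace out the unique structure of $P$ in the first two columns, and then convert that structure into a GG path using two fewer horizontal edges, so that $x\in A_{r,c,m-1}\subseteq A_{r,c,m}$. Your forced-start argument ($(0,0),(0,1),(1,1),(1,0),(2,0)$) is correct and matches the paper. However, the step you yourself flag as the delicate one is where your proposal breaks. Once the path deviates from the zigzag at some even row $i=2m$ (two consecutive vertical edges in column $0$), degree counting forces more than ``a zigzag through a contiguous block of rows'': the unique column~$1$--column~$2$ edge is forced to be $(i,1)(i,2)$, and the path is then forced straight down column $0$ from $(i,0)$ to $(r-1,0)$, across the horizontal edge in row $r-1$, and straight up column $1$ back to $(i,1)$. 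Consequently the horizontal edges between the first two columns sit in rows $0,1,\dots,i-1$ and $r-1$, and inside the zigzag the vertical edge between rows $j$ and $j+1$ is already present in column $1$ when $j$ is even and in column $0$ when $j$ is odd. Hence \emph{every} swap of an adjacent pair of horizontal edges taken from rows $0,\dots,i-1$ --- including the one ``at the first row where $P$ deviates from the clean zigzag,'' i.e.\ the pair in rows $i-2,i-1$ --- recreates an already-present vertical edge (here $(i-2,1)(i-1,1)$) and splits off a $2$-cycle, which is exactly the failure mode you correctly identified for rows $0$ and $1$. So the swap location you prescribe does not ``apply cleanly''; the step fails.

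The only adjacent pair for which the swap is clean is the wraparound pair, and this is what the paper uses: delete the horizontal edges $(0,0)(0,1)$ (row $0$) and $(r-1,0)(r-1,1)$ (row $r-1$) and insert the wraparound vertical edges $(0,0)(r-1,0)$ and $(0,1)(r-1,1)$. Neither inserted edge is present in $P$, and a direct trace shows the result is precisely a GG path of the second type in Definition~\ref{defn:GG}: it starts at $(0,0)$, takes $r-2(m-1)-1$ vertical edges upward (wrapping) to $(2m-1,0)$, alternates through rows $2m-1,\dots,1$, finishes column $1$ via the wrap from $(0,1)$ to $(r-1,1)$ down to $(i,1)$, and exits at $(i,1)(i,2)$, using $2(m-1)+1$ horizontal edges between the first two columns and the same tail as $P$ afterwards. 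With that replacement your conclusion $x\in A_{r,c,m-1}\subseteq A_{r,c,m}$ goes through; without it, the conversion step of your argument is not justified.
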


\begin{proof}
As in our previous claims, we trace out what must happen with $P$.  We note that $P$ must continue from $(1, 1)$ to $(1, 0)$ to $(2, 0)$ (as otherwise $(1,0)$ cannot have degree 2).  Similarly, if $P$ proceeds from $(2,0)$ to $(2,1)$, it must continue $(3,1)$ to $(3,0)$ to $(4,0).$  More generally, $P$ will alternate between the first and second column until either 1) $P$ visits every vertex in the first column or 2) $P$ contains at least two sequential vertical edges in the first column.  The former case can only happen if $r=2m+1$, in which case $P$ is definitionally a GG path, and so we restrict our attention to the latter case.

In this case, $P$ alternates between the first and second column until it reaches some sequence $(i-1,0), (i,0), (i+1,0)$ as shown below.  Consider the vertex $(i,1)$, indicated in red in the left picture of Figure \ref{fig:PisGG3}.  This vertex must have degree two, and so $P$ must contain the sequence $( i-1,1), (i,1), (i,2)$, indicated with dashed lines in the figure.

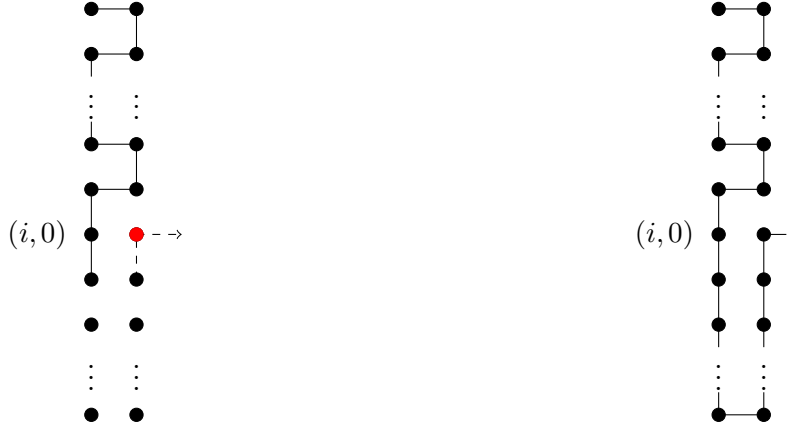
\begin{figure}[t!]
    \centering
        \begin{subfigure}[t]{0.45\textwidth}
        \centering
        \begin{tikzpicture}[scale=0.6]
        \foreach \x in {0,1} {
            \foreach \y in {0, 2, 3, 4, 5, 6, 8, 9} {
                \vertex (\x\y) at (\x, \y) {};
            }
        }  
        
        \foreach \x in {0,1} {
            \foreach \y in {1, 7} {
                \node (\x\y) at (\x, \y) {$\vdots$};
            }
        }  
        
        \draw (0, 9) to (1, 9) to (1, 8) to (0, 8) to (0, 7.5);
        \draw (0, 6.5) to (0, 6) to (1, 6) to (1, 5) to (0, 5) to (0, 4) to (0, 3);
        
        \draw[dashed] (1, 3) to (1, 4);
        \draw[dashed, ->] (1, 4) to (2, 4);

        \node (l2) at (-1.2, 4) {$(i,0)$};

        \vertex[red] (r2) at (1, 4) {};
        \end{tikzpicture}
        \end{subfigure}
        \hspace{5mm}
         \begin{subfigure}[t]{0.45\textwidth}
        \centering
         \begin{tikzpicture}[scale=0.6]
        \foreach \x in {0,1} {
            \foreach \y in {0, 2, 3, 4, 5, 6, 8, 9} {
                \vertex (\x\y) at (\x, \y) {};
            }
        }  
        
        \foreach \x in {0,1} {
            \foreach \y in {1, 7} {
                \node (\x\y) at (\x, \y) {$\vdots$};
            }
        }  
        
        \draw (0, 9) to (1, 9) to (1, 8) to (0, 8) to (0, 7.5);
        \draw (0, 6.5) to (0, 6) to (1, 6) to (1, 5) to (0, 5) to (0, 4) to (0, 3) to (0, 2) to (0, 1.5);
        \draw (0, 0.5) to (0, 0) to (1, 0) to (1, 0.5);
        \draw (1, 1.5) to (1, 2) to (1, 3);
        \draw (1, 3) to (1, 4);
        \draw (1, 4) to (1.5, 4);
        
        \node (l2) at (-1.2, 4) {$(i,0)$};

        \end{tikzpicture}
        \end{subfigure}

    \caption{Illustration of the proof of Claim \ref{cm:PisGG3}.}
    \label{fig:PisGG3}
\end{figure}
Provided $i+1 \neq r-1,$ there are more vertices to visit in the first and second column.  Hence $P$ cannot contain the edge from $(i-1,0)$ to $(i-1,1).$  Since both $(i-1,0)$ and $(i-1,1)$ must have degree 2, $P$ must contain edges from $(i-1,0)$ to $(i-2,0)$ and from $(i-1,1)$ to $(i-2,1)$.  Continuing this argument, we see that $P$ must contain sequential pairs of vertical edges in the first and second column, so that $P$ must proceed $(i,0), (i-1,0), (i-2,0), ..., (r-1,0), ( r-1,1), ( r-2,1), ( r-3,1), ..., ( i+1,1), ( i,1)$ as indicated in the right picture of Figure \ref{fig:PisGG3}.

Note that we can replace the part of $P$ in the first two columns using a GG path with two fewer edges: delete the edges from $(0, 0)$ to $(0,1)$ and from $(r-1,0)$ to $( r-1,1)$, and replace them with edges from $(0, 0)$ to $( r-1,0)$  and from $(0,1)$ to $(r-1,1).$
\end{proof}

The above claims complete our proof.
\end{proof}

\section{Proof of Proposition \ref{prop:AG}}
\label{app:AG}

Recall Proposition \ref{prop:AG}: 

\vspace{3mm}

\propAG*

    \begin{figure}[t!]

    \centering
        \begin{subfigure}[t]{0.3\textwidth}
        \centering
        \begin{tikzpicture}[scale=0.6]
        \foreach \x in {0,1} {
            \foreach \y in {0,1, 2, ..., 7} {
                \vertex (\x\y) at (\x, \y) {};
            }
        }

        \foreach \j in {0, 1, ..., 6} {
        \draw (0, \j) to (1, \j);}
    
        \foreach \j in {0, 2, 4, 6}{ \draw (0, \j) to (0, \j+1);}
        \foreach \j in {1, 3, 5}{ \draw (1, \j) to (1, \j+1);}
        \draw[->] (1, 0) to (1, -0.5);
        \draw (1, 7.5) to (1, 7);
        \vertex[red] (r1) at (1, 7) {};
        \end{tikzpicture}
        \end{subfigure}
        \hspace{4mm}
                \begin{subfigure}[t]{0.3\textwidth}
        \centering
        \begin{tikzpicture}[scale=0.6]
        \foreach \x in {0,1} {
            \foreach \y in {0,1, 2, ..., 7} {
                \vertex (\x\y) at (\x, \y) {};
            }
        }

        \foreach \j in {0, 1, ..., 4} {
        \draw (0, \j) to (1, \j);}
    
        \foreach \j in {0, 2, 4, 6, 5}{ \draw (0, \j) to (0, \j+1);}
        \foreach \j in {1, 3, 5, 6}{ \draw (1, \j) to (1, \j+1);}
        \draw[->] (1, 0) to (1, -0.5);
        \draw (1, 7.5) to (1, 7);
        \vertex[red] (r1) at (1, 5) {};
        \end{tikzpicture}
        \end{subfigure}
        \hspace{4mm}
                        \begin{subfigure}[t]{0.3\textwidth}
        \centering
        \begin{tikzpicture}[scale=0.6]
        \foreach \x in {0,1} {
            \foreach \y in {0,1, 2, ..., 7} {
                \vertex (\x\y) at (\x, \y) {};
            }
        }

        \foreach \j in {0} {
        \draw (0, \j) to (1, \j);}
    
        \foreach \j in {0, 1, 2, 3, 4, 5, 6}{ \draw (0, \j) to (0, \j+1);}
        \foreach \j in {1, 2, 3, 4, 5, 6}{ \draw (1, \j) to (1, \j+1);}
        \draw[->] (1, 0) to (1, -0.5);
        \draw (1, 7.5) to (1, 7);
        \vertex[red] (r1) at (1, 1) {};
        \end{tikzpicture}
        \end{subfigure}
                
    \caption{Argument for the third statement of Proposition \ref{prop:AG} when $r-(2k+1)=1.$  If $r-(2k+1)=1,$ then GG paths using $k-1$ and $0$ extra horizontal pairs respectively reach the vertex two below and two above the end of the GG path.}
    \label{fig:1left}
\end{figure}
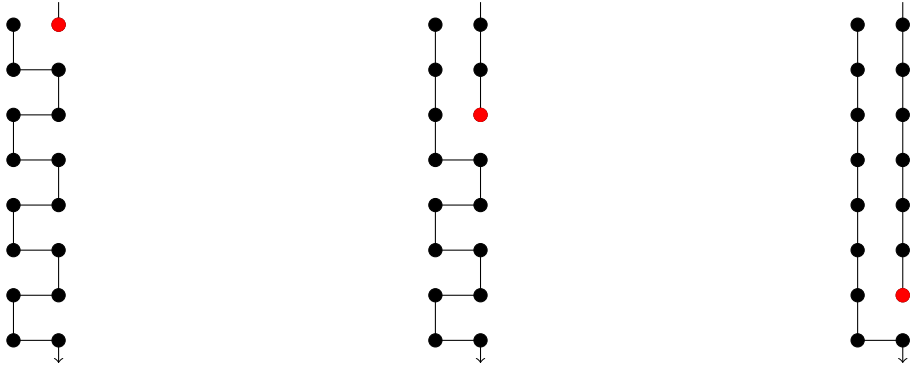

        \begin{figure}[t!]

    \centering
        \begin{subfigure}[t]{0.3\textwidth}
        \centering
        \begin{tikzpicture}[scale=0.6]
        \foreach \x in {0,1} {
            \foreach \y in {0,1, 2, ..., 8} {
                \vertex (\x\y) at (\x, \y) {};
            }
        }

        \foreach \j in {0, 1, ..., 8} {
        \draw (0, \j) to (1, \j);}
    
        \foreach \j in {0, 2, 4, 6}{ \draw (0, \j) to (0, \j+1);}
        \foreach \j in {1, 3, 5, 7}{ \draw (1, \j) to (1, \j+1);}
        \vertex[red] (r1) at (1, 0) {};
        \end{tikzpicture}
        \end{subfigure}
        \hspace{4mm}
                \begin{subfigure}[t]{0.3\textwidth}
        \centering
        \begin{tikzpicture}[scale=0.6]
        \foreach \x in {0,1} {
            \foreach \y in {0,1, 2, ..., 8} {
                \vertex (\x\y) at (\x, \y) {};
            }
        }

        \foreach \j in {0, 1, ..., 6} {
        \draw (0, \j) to (1, \j);}
    
        \foreach \j in {0, 2, 4, 6, 7}{ \draw (0, \j) to (0, \j+1);}
        \foreach \j in {1, 3, 5, 7}{ \draw (1, \j) to (1, \j+1);}
        \draw[->] (1, 0) to (1, -0.5);
        \draw (1, 8.5) to (1, 8);
        \vertex[red] (r1) at (1, 7) {};
        \end{tikzpicture}
        \end{subfigure}
        \hspace{4mm}
       \begin{subfigure}[t]{0.3\textwidth}
        \centering
        \begin{tikzpicture}[scale=0.6]
        \foreach \x in {0,1} {
            \foreach \y in {0,1, 2, ..., 8} {
                \vertex (\x\y) at (\x, \y) {};
            }
        }

        \foreach \j in {3, 4, ..., 7} {
        \draw (0, \j) to (1, \j);}
    
        \foreach \j in {0,1, 2, 4, 6}{ \draw (0, \j) to (0, \j+1);}
        \foreach \j in {0, 1, 3, 5, 7}{ \draw (1, \j) to (1, \j+1);}
        \draw[->] (0, 8) to (0, 8.5);
        \draw (0, -0.5) to (0, 0);
        \vertex[red] (r1) at (1, 2) {};
        \end{tikzpicture}
        \end{subfigure}
                
    \caption{Argument for the third statement of Proposition \ref{prop:AG} when $r-(2k+1)=0.$  If $r-(2k+1)=0,$ then GG paths using $k-1$ and $k-2$ extra horizontal pairs respectively reach the vertex two below and two above the end of the GG path.  Note that if $k=1$ (and $r=3$),  GG paths using $0$ extra horizontal edges should be used instead.}
    \label{fig:0left}
\end{figure}
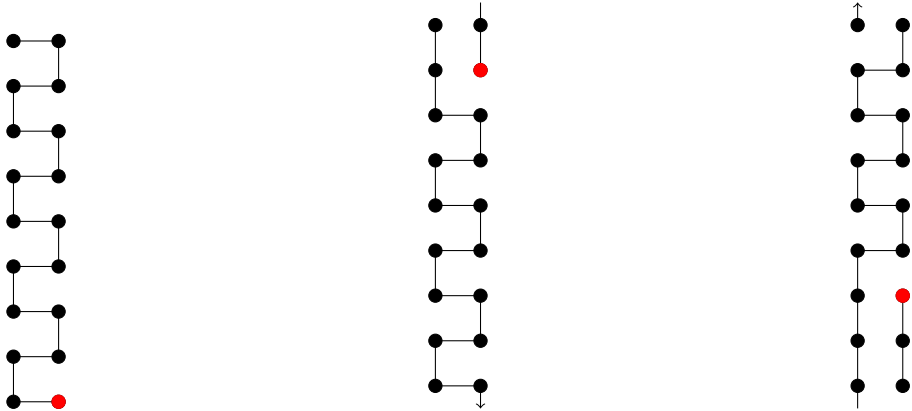
   
\begin{proof}
The first property follows by definition: vertices in $A_{r, c, m}$ are defined as using \emph{at most} $m$ extra horizontal pairs, and hence are also reachable using at most $m+1$ extra horizontal pairs.  

For the second property, it suffices to show that if $ i \in A_{r, 2, m}$, then $ i\pm 2 \mod r \in A_{r, 2, m+1}:$ For any $ j \in A_{r, c, m}$, consider a GG path $P$ reaching $(j, c-1)$ costing at most $2m+1$.  Let $(i, 1)$ be last vertex visited in the second column by $P$.  If $i\pm 2 \mod r$ are in  $A_{r, 2, m+1},$ then $ j \pm 2 \mod r \in A_{r, c, m+1}.$ 

So, suppose that $ i \in A_{r, 2, m}$ and is reachable using a GG path $P$ with $2k+1$ horizontal edges (where $k\leq m$).  If $k=0$, then $i\in \{0, 2, r-2\}$ (see, e.g., Figure \ref{fig:GG1}); but $A_{r, 2, 1}=\{0, 2, 4, r-4, r-2\}$ (see, e.g., Figures \ref{fig:GG3} and \ref{fig:GG1}).  If $k>0$ and $r-(2k+1)\geq 2,$ we can consider two GG paths of the same style as $P$ (i.e. starting vertically up if $P$ starts vertically up; starting vertically down otherwise). Considering one such path that uses $k-1$ extra horizontal pairs of edges and another such path that uses $k+1$ extra horizontal pairs of edges, we see that they reach $(i\pm 2 \mod r, 1).$  The only remaining cases are where $r-(2k+1) \in \{0, 1\}.$  Figure \ref{fig:1left} handles the case when $r-(2k+1)=1$ (appealing to cylindrical symmetry if the first edge is vertically up rather than vertically down), and Figure \ref{fig:0left} handles the case where $r-(2k+1) = 0.$  In both cases we see that $i\pm 2 \mod r \in A_{r, 2, m+1}.$

The third property summarizes the structure of GG paths beyond the first and second column: if there is a GG path to $(i, c-1)$, it can be extended to a GG path on $c+1$ columns, ending in row $i\pm 1:$ take the same path to $(i, c-1)$, add the edge to $(i, c)$, and then tour through the last column, starting vertically up (and ending at $(i+1 \mod r, c)$) or vertically down (and ending at $(i-1 \mod r, c)$).

Next, we prove property 4.  Let $m'$ be such that $c+2m'<r$ but $c+2(m'+1)\geq r.$ Note that
$$A_{r, c, m'} = \{-c-2m',  -c-2m'+2,  -c-2m'+4, ...,  c+2m'-2, c+2m'\}.$$
Now consider any $M>m'.$  Analogously, $$A_{r, c, M} = \{-c-2M,  -c-2M+2, -c-2M+4, ...,  c+2M-2,  c+2M\}.$$
The only values in $A_{r, c, M}$, but potentially not in $A_{r, c, m'}$, are those of the form $-c-2m'-2i$ and $ c+2m'+2i$ for some integer $i>0.$  We show that the latter is in $A_{r, c, m'};$ it is analogous to show for the former.   To do so, we want to find some $\lambda \in \Z, 0\leq \lambda \leq c+2m'$ such that $c+2m'+2i \equiv_r c+2m'-2\lambda.$  Doing so shows that $ c+2m'+2i$ is already included in $A_{r, c, m'}.$

First suppose that $r$ is even.  In this case, we note that, since $c\geq 2$, $$|(c+2m')-(-c-2m')| = 2c+4m'\geq c+2m'+2\geq r.$$  Thus we can subtract some integer multiple of $r$, say $\mu\times r$, from $c+2m'+2i$ so that 
\begin{align*}
    &c+2m'+2i-\mu r = c+2m' + 2i - 2\mu \f{r}{2} \equiv_r c+2m'+2i, \\ 
    &c+2m'+2i-2\mu\f{r}{2} \in \{-c-2m', c-2m'+2, ...., c+2m'\}. 
\end{align*} Setting $\lambda=\mu\f{r}{2}$ yields the desired integer, since $r$ is even.  If $r$ is odd, the proof proceeds similarly, but we must subtract a $\mu$ multiple of $2r.$ 
Note also that $c+2m'+2i- \mu\times 2r \equiv_2 c+2m'+2i$, but $ c+2m'+2i \not\equiv_2 \pm(c+2m'+1).$   Thus, if $c+2m'+2i- \mu\times 2r \in [-(c+2m'+1), c+2m'+1]$, it must also be in $[-(c+2m'), c+2m']$.
We also note that $$|(c+2m'+1)-(-c-2m'-1)| = 2c+4m'+2\geq 2r.$$  Hence there must be some integer multiple of $2r$, say $\mu\times 2r$, such that $c+2m'+2i-2\mu r$ is in $[-c-2m'-1, c+2m'+1].$  But $c+2m'+2i-2\mu r\equiv_2 c+2m',$ so we further have that $c+2m'+2i-2\mu r\in [-c-2m', c+2m'].$  Once again, $$c+2m'+2i- 2\mu r \equiv_r c+2m'+2i, \hspace{5mm} c+2m'+2i-2\mu r \in \{-c-2m', c-2m'+2, ...., c+2m'\},$$  and $\mu$ gives the desired integer.

Finally, we prove property 5. Recall that $A_{r, c, m} = \{c + 2m - 2i \mod r: 0 \leq i \leq c+2m\}$. It is useful to think of $A_{r,c,m}$ as $\{-(c+2m), -(c+2m)+2, \ldots, c+2m-2, c+2m\}$. That is, $A_{r, c, m}$ is the set of numbers between $-(c+2m)$ and $(c+2m)$ inclusive, with the same parity as $c+2m$, taken mod $r$. 

 Suppose $x = c+2m-2i \mod r$, where $0 \leq i \leq c+2m$. By property 4, we may assume that $c+2m < r$. Then 
 $$
x = 
\begin{cases}
c+2m-2i, &\text{if $0\leq c+2m-2i < r$,} \\
c+2m-2i+r, &\text{if $-r < c+2m-2i < 0$.}
\end{cases}
$$
We will now consider the two cases, and show that in each case, we have that both $x$ and $x+2$ are in $A_{r+2,c,m+1}$. Before moving on, recall that $A_{r+2,c,m+1}$ is the following set:
$$A_{r+2,c,m+1} = \{c+(2m+2) - 2i \mod r+2: \; 0 \leq i \leq c+2m+2\}.$$ 

\underline{Case 1:} $0 \leq c +2m - 2i < r$. Then we can write
$$x = c+2m-2i = c+(2m+2) - 2(i+1) \in A_{r+2,c,m+1},$$ 
and similarly
$$x+2 = c+2m-2i+2 = c+(2m+2) - 2i \in A_{r+2,c,m+1}.$$

\underline{Case 2:} $-r < c+2m-2i < 0$. In this case, we write
\begin{align*}
x &= c+2m-2i+r \\
&= c + 2m + 2 - 2(i+2) + (r+2) \\
&= c+(2m+2) - 2(i+2) \mod (r+2) \\
&\in A_{r+2,c,m+1}
\end{align*}
and similarly 
\begin{align*}
    x+2 &= c+2m-2i+r+2 \\
    &= c+(2m+2) - 2(i+1) + (r+2) \\
    &= c + (2m+2) - 2(i+1) \mod (r+2) \\
    &\in A_{r+2,c,m+1}
\end{align*}
Thus, in all cases, we have that both $x$ and $x+2$ are in $A_{r+2,c,m+1}$, which completes the proof.
\end{proof}

\section{Confirming the Conjecture of  Gerace  and Greco \cite{Ger08b, Grec07}} 
\label{app:weimplygg}

Recall our main result
\result*

We show that this result resolves Greco and Gerace's \cite{Ger08b, Grec07} conjecture.  Greco and Gerace defined $$S=\left\{y: 0\leq y < \f{n}{g_1}, \, (2y-g_1)a_1+g_1a_2 \equiv_n 0\right\}.$$   Their main result (Theorem 4.4 in  Gerace and Greco \cite{Ger08b}) states:
\begin{enumerate}
\item If $S$ is empty, the cost of the optimal tour is $2c-2$.
\item Let $y_1=\min\{S\}$ and $y_2 = \max\{S\}.$ If $y_1 \leq g_1$, the cost of the optimal tour is $c$.
\item Otherwise, let $m=\min\{y_1 - g_1, \f{n}{g_1}-y_2\}.$  There exists a tour of cost $c+2m$.
\end{enumerate}
 Greco and Gerace \cite{Ger08b, Grec07} conjectured that, in cases where $S$ is nonempty, either the tours in case 1 or 3 above are optimal. That is, for cases where $S$ is nonempty, the cost of the optimal tour is $\min\{c+2m, 2c-2\}.$
 
 We now prove that our characterization confirms this conjecture.  To prove this conjecture, we recall from Proposition \ref{prop:getLB} and Theorem \ref{thm:result} that the lower bound $c$ is optimal 
if and only if there exists an integer $y$, with $0\leq y\leq g_1$, such that $$(2y-g_1)a_1+g_1a_2 \equiv_n 0.$$  Recall also that, if $\f{n}{g_1}-1\leq g_1$, then $S$ is either empty or $\min{S}\leq \f{n}{g_1}-1\leq g_1$ so that the lower bound is optimal.

To resolve Greco and Gerace's conjecture, we show the following:
\begin{thm}
Consider an instance of two-stripe TSP where $g_1+1<\f{n}{g_1}.$  Let $$S=\left\{y: 0\leq y < \f{n}{g_1}, \, (2y-g_1)a_1+g_1a_2 \equiv_n 0\right\}.$$  Assume that $S$ is nonempty, that $y_1=\min\{S\}$ and $y_2 = \max\{S\},$ and that $y_1 > g_1$.  Finally, assume that the cost of the optimal solution to the two stripe instance is strictly between $c$ and $2c-2$. Then the cost of the optimal solution to the two-stripe instance is $c+2m^*$  if and only if $m^*=\min\{y_1-g_1, \f{n}{g_1}-y_2\}.$
\end{thm}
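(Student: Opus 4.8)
The plan is to reduce everything to a single identity: under the hypotheses of the theorem, the quantity $m^*$ appearing in Theorem~\ref{thm:result} --- the least integer $m \geq -\tfrac{c}{2}$ with $x \in \{2m+c,\,-(2m+c)\} \bmod r$, where $r=\tfrac{n}{g_1}$, $c=g_1$, and $(x,c-1)$ is the cylindrical coordinate of $-a_2$ --- equals $\min\{y_1 - g_1,\,\tfrac{n}{g_1}-y_2\}$. Once that is established, the theorem follows at once: the assumption that the optimal cost lies strictly between $c$ and $2c-2$ rules out the first and third cases of Theorem~\ref{thm:result}, so we are in its middle case, where the optimal cost is $c + 2m^*$; combined with the identity this says the optimal cost is $c + 2\min\{y_1-g_1,\tfrac{n}{g_1}-y_2\}$, which is exactly $c + 2m^*$ if and only if $m^* = \min\{y_1-g_1,\tfrac{n}{g_1}-y_2\}$.

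The first step is to re-express $S$ in terms of $x$. By Theorem~\ref{thm:reduction} we have $x a_1 \equiv_n -g_1 a_2$, and since $g_1 = \gcd(n,a_1)$ the relation $a_1 z \equiv_n 0$ holds iff $r \mid z$. Hence the defining congruence $(2y-g_1)a_1 + g_1 a_2 \equiv_n 0$ of $S$ is equivalent to $(2y - g_1 - x)a_1 \equiv_n 0$, i.e.\ to $2y \equiv_r x + c$. So $S$ is precisely the solution set of $2y \equiv_r x+c$ inside $\{0,1,\dots,r-1\}$; by Proposition~\ref{prop:solv} this is a single element when $r$ is odd, and two elements $y_1 < y_2 = y_1 + \tfrac r2$ with $0 \le y_1 < \tfrac r2$ when $r$ is even. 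In either case $x \equiv_r 2y_1 - c$.

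The second step is to substitute $x \equiv_r 2y_1 - c$ into the two congruences defining $m^*$. They become $2m \equiv_r 2(y_1 - c)$ and $2m \equiv_r -2y_1$, i.e.\ $m \equiv_{r'} y_1 - c$ and $m \equiv_{r'} -y_1$ with $r' = r/\gcd(2,r)$. For the first class, $y_1 > g_1 = c$ gives $y_1 - c \geq 1 > -\tfrac c2$, while the next smaller representative $y_1 - c - r'$ is at most $-1-c < -\tfrac c2$ (using $y_1 \le r-1$ when $r$ is odd and $y_1 \le \tfrac r2 - 1$ when $r$ is even), so its least admissible representative is $y_1 - c$. For the second class, $-y_1 < -\tfrac c2$ (again since $y_1 > c$), and the next representative $-y_1 + r'$ equals $r - y_2$ and is positive, so its least admissible representative is $r - y_2$. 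Hence $m^* = \min\{y_1 - c,\, r - y_2\} = \min\{y_1 - g_1,\, \tfrac{n}{g_1} - y_2\}$, which is the identity we wanted. (The hypothesis $g_1 + 1 < \tfrac{n}{g_1}$ need not be invoked separately: by the remark following Proposition~\ref{prop:GGFull} it is already forced by ``$S \neq \emptyset$ and $y_1 > g_1$''.)

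The main obstacle is the last step: carrying out the case analysis on the parity of $r$ cleanly and confirming in each case which integer representative of the relevant residue class is the smallest one that is still $\geq -\tfrac c2$. This is elementary but must be done carefully, particularly in the even case, where one leans on the facts that the two elements of $S$ are exactly $\tfrac r2$ apart and that $y_1 < \tfrac r2$. Everything preceding and following that computation is a direct unwinding of the definitions together with a citation of Theorem~\ref{thm:result}.
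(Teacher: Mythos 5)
Your proposal is correct, and it takes a noticeably different route from the paper's own proof. The paper (Appendix C) argues both directions semi-independently at the level of paths: it shows that if the optimum is $c+2m^*$ strictly between the bounds, then (via Corollary \ref{cor:newrows}) the underlying GG path ends at row $\pm(c+2m^*)$, computes that endpoint by summing the edge lengths $\pm a_1,\pm a_2$ mod $n$, and deduces that $m^*+g_1$ or $\tfrac{n}{g_1}-m^*$ lies in $S$, giving $\min\{y_1-g_1,\tfrac{n}{g_1}-y_2\}\le m^*$; it then separately re-sketches the Gerace--Greco construction of a tour of cost $c+2\min\{y_1-g_1,\tfrac{n}{g_1}-y_2\}$ to get the reverse inequality. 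You instead lean entirely on Theorem \ref{thm:result} for both optimality and achievability, and reduce the theorem to the single arithmetic identity that the $m^*$ of Theorem \ref{thm:result} equals $\min\{y_1-g_1,\tfrac{n}{g_1}-y_2\}$: rewriting the congruence defining $S$ as $2y\equiv_r x+c$ via $xa_1\equiv_n -g_1a_2$, using Proposition \ref{prop:solv} to describe $S$ (one solution for $r$ odd, two solutions $r'$ apart with $y_1<r/2$ for $r$ even), and then identifying the least representatives $\ge -\tfrac c2$ of the classes $m\equiv_{r'}y_1-c$ and $m\equiv_{r'}-y_1$ as $y_1-c$ and $r-y_2$ respectively, using $y_1>c$. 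I checked the parity case analysis and the bounds ($y_1-c\ge 1$, $y_1-c-r'\le -1-c<-\tfrac c2$, $-y_1<-\tfrac c2$, $r-y_2\ge 1$), and they are sound; your parenthetical that $g_1+1<\tfrac n{g_1}$ is implied by $S\ne\emptyset$ and $y_1>g_1$ is also right (indeed $g_1<y_1<\tfrac n{g_1}$ gives it directly). What your route buys is brevity and a clean correspondence between Gerace--Greco's set $S$ and the congruences the paper's algorithm actually solves in Section \ref{sec:alg}, with no need to recount GG-path edges or rebuild tours; what the paper's route buys is an argument phrased in the original Gerace--Greco language with the achieving tours made explicit, at the cost of redoing endpoint algebra that Theorem \ref{thm:result} already encapsulates.
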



\begin{proof}

\begin{cm}\label{cm:GGequiv}
Suppose that the optimal solution to the two-stripe  is $c+2m^*$ with $c<c+2m^* <2c-2$.  Then $\min\{y_1-g_1, \f{n}{g_1}-y_2\}\leq m^*$. 
\end{cm}

Recall from Corollary \ref{cor:newrows} that, if the cost of the optimal solution is $c+2m^*$ with $c<c+2m^* <2c-2$ (i.e., if the optimal solution is a GG path plus a $(0-(-a_2))$ edge), then the corresponding GG path ends in the last column at row $c+2m^* \mod r$ or at row $-(c+2m^*) \mod r$.

\noindent {\bf Case 1: The optimal tour corresponds to a GG path ending at row index $c+2m^* \mod r.$}

First, suppose that the optimal tour corresponds to a GG path ending at row index $c+2m^*.$  The corresponding GG path is then a path that starts going vertically up (as in the right of Figure \ref{fig:GG3full}), using $2m+1$ horizontal edges between the first two columns, and, in every subsequent column, moving up  and ending at $-a_2.$  This path uses $g_1-1+2m^*$ edges of length $\pm a_2$ with $2m^*$ of those edges alternating between the first two columns.  Since it uses $n-1$ total edges, and all other edges are vertically up (i.e. of length $-a_1$), we have that it uses:
\begin{itemize}
\item $m^*$ edges of length $-a_2$
\item $g_1-1+m^*$ edges of length $a_2$
\item $n-1-(g_1-1+2m^*)=n-g_1-2m^*$ edges of length $-a_1.$ 
\end{itemize}
Since it ends at $-a_2$, we know that
$$-a_2 \equiv_n -m^*a_2 + (g_1-1+m^*)a_2 + (n-g_1-2m^*)(-a_1) \equiv_n (g_1-1)a_2 + (g_1+2m^*)a_1.$$  Rearranging, we see that $$2m^*a_1 \equiv_n -g_1a_2 -g_1a_1.$$  Now let $y=m^*+g_1\geq 0$  Then
\begin{align*}
(2y-g_1)a_1 + g_1 a_2 &= (2(m^*+g_1)-g_1)a_1+g_1a_2\\
&=2m^*a_1 +g_1a_1 + g_1a_2 \\
&\equiv_n -g_1a_2-g_1a_1 + g_1a_1 + g_1a_2 \\
&= 0
\end{align*}
Hence, $y\in S$ provided $y<\f{n}{g_1}.$  To see that $y<\f{n}{g_1}$ suppose not. Let $y'=y-\f{n}{g_1}$.  By assumption, $y'\geq 0.$ Moreover 
$$y' = y-\f{n}{g_1} = m^* + g_1 - \f{n}{g_1} <m^*,$$ since we assume  $g_1+1<\f{n}{g_1}.$  Moreover, since we assume $c+2m^* <2c-2,$ we have that $m^* < c = g_1.$  Thus $0\leq y' < \f{n}{g_1}.$  Finally, 
$$y' a_1 = (y-\f{n}{g_1})a_1 \equiv_n ya_1,$$ so that $y'\in S$ and $\min\{S\} < g_1.$  Then, however, the optimal solution must cost $c$ (by, e.g., Proposition \ref{prop:GGFull}).  Thus we cannot have $y\geq \f{n}{g_1}$ and must have $y\in S$.  Since $y_1=\min{S}$, $y_1\leq y.$  Finally, 
$$\min\{y_1-g_1, \f{n}{g_1}-y_2\}\leq y_1-g_1 \leq y-g_1 = m^*.$$

\noindent {\bf Case 2: The optimal tour corresponds to a GG path ending at row index $-(c+2m^*) \mod r.$}

This case proceeds almost identically to Case 1.  If the GG path ends at row index  $-(c+2m^*) \mod r,$ then the only difference is that it uses $(n-g_1-2m^*)$ edges of length $a_1$.  Since it ends at $-a_2$, we know that
$$-a_2 \equiv_n -m^*a_2 + (g_1-1+m^*)a_2 + (n-g_1-2m^*)(a_1) \equiv_n (g_1-1)a_2 - (g_1+2m^*)a_1.$$  Rearranging, we see that $$-2m^*a_1 \equiv_n -g_1a_2 +g_1a_1.$$  Now let $y=\f{n}{g_1}-m^*$. Then
\begin{align*}
(2y-g_1)a_1 + g_1 a_2 &= (2(\f{n}{g_1}-m^*)-g_1)a_1+g_1a_2\\
 &\equiv_n -2m^*a_1 -g_1a_1+g_1a_2 \\
  &\equiv_n  -g_1a_2 +g_1a_1 -g_1a_1+g_1a_2 \\
&= 0
\end{align*}

Moreover $y<\f{n}{g_1}$ since $m^*\geq 1.$  Since  $c+2m^* <2c-2$ we again have that $m^* < c = g_1$ and, further,  since $g_1<\f{n}{g_1},$ we have that $m^* < \f{n}{g_1}.$  Thus $0\leq y <\f{n}{g_1}$ and again $y\in S$.  Since $y_2=\max{S}$, $y_2\geq y.$  Finally, 
$$\min\{y_1-g_1, \f{n}{g_1}-y_2\}\leq \f{n}{g_1}-y_2 \leq\f{n}{g_1}-y = m^*.$$

These two cases resolve Claim \ref{cm:GGequiv}.  To finish proving the theorem, we want to show the following

\begin{cm}\label{cm:GGequiv2}
Suppose that $m=\min\{y_1-g_1, \f{n}{g_1}-y_2\}$ with $m> 0.$  Then the cost of the optimal two-stripe solution is at most $c+2m.$
\end{cm}

To prove this claim, we need only construct a Hamiltonian tour of cost $c+2m.$  This is done in Theorem 4.4 of  Gerace and Greco \cite{Ger08b} using GG paths.  For completeness, we sketch the proof here. We have two cases that parallel the previous two cases.

\noindent {\bf Case 1: $m=y_1-g_1.$}
Suppose that $m=y_1-g_1.$  Consider a GG path where all edges are vertically up (of length $-a_1$) of cost $(c-1)+2m.$  This is, by construction, a Hamiltonian path.  It gives rise to a Hamiltonian tour of cost $c+2m$ provided it ends at $-a_2.$  As in Case 1 of Claim \ref{cm:GGequiv}, we can calculate that it ends at $$(g_1-1)a_2 + (g_1+2m)a_1 \mod n.$$  Since $y_1\in S$, we have that $$g_1a_2 + (2y_1-g_1)a_1 \equiv_n 0.$$  Using the fact that $m=y_1-g_1$, we get:
\begin{align*}
(g_1-1)a_2 + (g_1+2m)a_1 &= (g_1-1)a_2 + g_1a_1 + 2ma_1 \\
&= (g_1-1)a_2 + g_1a_1 +2(y_1-g_1)a_1 \\
&= (g_1-1)a_2 + (2y_1-g_1)a_1 \\
&= -a_2 + g_1a_2 + (2y_1-g_1)a_1 \\
&\equiv_n -a_2,
\end{align*}
using the fact that $g_1a_2 + (2y_1-g_1)a_1 \equiv_n 0$  in the last line.  Thus this GG path gives rise to a Hamiltonian tour of cost $c+2m$.

\noindent {\bf Case 2: $m=\f{n}{g_1}-y_2.$}
Suppose that $m=\f{n}{g_1}-y_2.$  Consider a GG path where all edges are vertically down (of length $a_1$) of cost $(c-1)+2m.$  This is, by construction, a Hamiltonian path.  It gives rise to a Hamiltonian tour of cost $c+2m$ provided it ends at $-a_2.$  As in Case 2 of Claim \ref{cm:GGequiv}, we can calculate that it ends at $$(g_1-1)a_2 - (g_1+2m)a_1 \mod n.$$  Since $y_2\in S$, we have that $$g_1a_2 + (2y_2-g_1)a_1 \equiv_n 0.$$  Using the fact that $m=\f{n}{g_1}-y_2$, we get:
\begin{align*}
(g_1-1)a_2 - (g_1+2m)a_1 &= (g_1-1)a_2 - (g_1+2\f{n}{g_1}-2y_2)a_1 \\
&\equiv_n (g_1-1)a_2 + (2y_2-g_1)a_1\\
&= -a_2 + g_1 a_2 +(2y_2-g_1)a_1\\
&\equiv_n -a_2,
\end{align*}
using the fact that $g_1a_2 + (2y_2-g_1)a_1 \equiv_n 0.$  in the last line.  Thus this GG path gives rise to a Hamiltonian tour of cost $c+2m$.

\hfill
\end{proof}
\end{APPENDICES}

\bibliography{bibliog} 
\bibliographystyle{abbrv}
\end{document}